\DeclarePairedDelimiter{\abs}{\lvert}{\rvert}
\DeclarePairedDelimiter{\norm}{\lVert}{\rVert}
\DeclarePairedDelimiter{\ceil}{\lceil}{\rceil}
\newcommand{\triplevert}{|\mkern-2mu|\mkern-2mu|}
\newcommand{\ket}[1]{|#1\rangle}
\newcommand{\bra}[1]{\langle#1|}
\newcommand{\Tr}{\mbox{\rm Tr}}
\newcommand{\Trho}{\mbox{\rm Tr}_\rho}
\def\calH{{\cal H}}
\DeclareMathOperator{\poly}{poly}
\newcommand{\beq}{\begin{equation}}
\newcommand{\eeq}{\end{equation}}
\newcommand{\Es}[1]{\textsc{E}_{#1}}
\newcommand{\classfont}{\mathrm}
\newcommand{\F}{\mathbb{F}}
\newcommand{\NP}{\classfont{NP}}
\newcommand{\PSPACE}{\classfont{PSPACE}}
\newcommand{\EXP}{\classfont{EXP}}
\newcommand{\NEXP}{\classfont{NEXP}}
\newcommand{\IP}{\classfont{IP}}
\newcommand{\MIP}{\classfont{MIP}}
\newcommand{\MIPstar}{\classfont{MIP}^*}
\newcommand{\MIPer}{\classfont{MIP}^{\classfont{er}}}
\newcommand{\MIPns}{\classfont{MIP}^{\classfont{ns}}}
\newcommand{\QIP}{\classfont{QIP}}
\newcommand{\QMS}{\classfont{QMIP}^*}
\newcommand{\parityMIP}{\oplus\classfont{MIP}}
\newcommand{\parityMIPstar}{\parityMIP^{\ast}}
\newcommand{\PCP}{\textsc{PCP}}
\newcommand{\cons}{\textsc{con}}
\newcommand{\inc}{\textsc{inc}}
\newcommand{\ML}{\mathrm{ML}}
\newcommand{\calA}{\mathcal{A}}
\newcommand{\calB}{\mathcal{B}}
\newcommand{\calC}{\mathcal{C}}
\newcommand{\calK}{\mathcal{K}}
\newcommand{\calP}{\mathcal{P}}
\newcommand{\Fp}{\mathbb{F}}
\newcommand{\Ftwo}{\mathbb{F}_2}
\newcommand{\N}{\mathbb{N}}
\newcommand{\C}{\mathbb{C}}
\newcommand{\vx}{{\vct{x}}}
\newcommand{\vy}{{\vct{y}}}
\DeclareMathOperator{\Id}{Id}
\newcommand{\eps}{\varepsilon}
\newcommand{\vct}[1]{\bm{#1}}
\newtheoremstyle{problem}
  {}{}{\setlength{\parskip}{4pt}\setlength{\parindent}{0pt}}{}{\bfseries}{.}{\newline}
  {\thmname{#1}\thmnumber{~#2}\thmnote{: #3}}
\newtheorem{theorem}{Theorem}
\newtheorem{proposition}[theorem]{Proposition}
\newtheorem{lemma}[theorem]{Lemma}
\newtheorem{claim}[theorem]{Claim}
\newtheorem{fact}[theorem]{Fact}
\newtheorem{corollary}[theorem]{Corollary}
\newtheorem{definition}[theorem]{Definition}
\theoremstyle{problem}
\newtheorem{problem}{Problem}
\begin{document}

\title{A multi-prover interactive proof for NEXP \\ sound against entangled provers}
\author{Tsuyoshi Ito\footnote{NEC Laboratories America, Inc., USA\@.
Email: tsuyoshi@nec-labs.com. Supported in part by ARO/NSA grant W911NF-09-1-0569.
Also supported by grants from NSERC, CIFAR, QuantumWorks, MITACS, CFI, and ORF received while this author was a postdoctoral fellow at the Institute for Quantum Computing and David R.~Cheriton School of Computer Science, University of Waterloo, Canada.}
\and Thomas Vidick\footnote{Computer
Science and Artificial Intelligence Laboratory, Massachusetts
Institute of Technology, USA\@. Email: vidick@csail.mit.edu. Supported by the National Science Foundation under Grant No. 0844626.
Part of this work was done while this author was a graduate student in the Computer Science department at UC Berkeley, as well as during visits to the Perimeter Institute in Waterloo, Canada and NEC Labs America.
}}
\date{}
\maketitle

\begin{abstract}
We prove a strong limitation on the ability of entangled provers to collude in a multiplayer game. Our main result is the first nontrivial lower bound on the class~$\MIPstar$ of languages having multi-prover interactive proofs with entangled provers; namely~$\MIPstar$ contains~$\NEXP$, the class of languages decidable in non-deterministic exponential time. While Babai, Fortnow, and Lund (\emph{Computational Complexity}\ 1991) proved the celebrated equality~$\MIP = \NEXP$ in the absence of entanglement, ever since the introduction of the class~$\MIPstar$ it was open whether shared entanglement between the provers could weaken or strengthen the computational power of multi-prover interactive proofs. Our result shows that it does not weaken their computational power: $\MIP\subseteq\MIPstar$. 

At the heart of our result is a proof that Babai, Fortnow, and Lund's \emph{multilinearity test} is sound even in the presence of entanglement between the provers, and our analysis of this test could be of independent interest. As a byproduct we show that the correlations produced by any entangled strategy which succeeds in the multilinearity test with high probability can always be closely approximated using shared randomness alone.
\end{abstract}

\section{Introduction}

%\subsection{Background}

Multiprover interactive proof systems~\cite{BenGolKilWig88STOC} are at the heart of much of the recent history of complexity theory, and the celebrated characterization $\MIP = \NEXP$~\cite{BabForLun91CC} is one of the cornerstones on which the PCP theorem~\cite{AroSaf98JACM,AroLunMotSudSze98JACM} was built. While the key assumption on the multiple provers in an interactive proof system is that they are not allowed to communicate, traditionally this has been taken to mean that their only distributed resource was shared randomness. In a quantum universe, however, it is natural to relax this assumption and allow the provers to share \emph{entanglement}. While still not allowing them to communicate, this increases their ability to collude against the verifier by exploiting the nonlocal correlations allowed by entanglement. The corresponding complexity class $\MIPstar$ was introduced in~\cite{CHTW04}, raising a fundamental question: \emph{what is the computational complexity of entangled provers?}

Even before their modern re-formulation in the language of multiplayer games, starting with the work of Bell in the 1960s~\cite{Bell:64a} the strength of the nonlocal correlations that could be obtained from performing local measurements on entangled particles has been intensely investigated through the use of \emph{Bell inequalities} (upper bounds on the strength of classical correlations) and \emph{Tsirelson inequalities} (upper bounds on the strength of quantum correlations). Games, or proof systems, generalize this setup by introducing an additional layer of \emph{interaction}: in this new context, we think of the experimenter (the verifier) as interacting with the physical devices (the provers) through the specific choice of settings (questions) that he makes, and the outcomes (answers) that he observes. The arbitrary state and measurements that are actually made inside the devices are reflected in the provers' freedom in choosing their strategy. The fundamental observation that quantum mechanics violates certain Bell inequalities translates into the fact that there exists interactive proof systems in which entangled provers can have a strictly higher success probability than could any classical, non-entangled provers. 

A dramatic demonstration of this possibility is given by the Magic Square game~\cite{MerminMS,PeresMS}, a simple one-round game for which the maximum success probability of classical provers is $8/9$, but there exists a \emph{perfect} winning strategy for entangled provers. Cleve, H\o yer, Toner, and Watrous~\cite{CHTW04} were the first to draw complexity-theoretic consequences from such \emph{non-local} properties of entanglement. They study the class $\parityMIP$ of languages having two-prover interactive proofs in which there is a single round of interaction, each of the provers is restricted to answering a single bit, and the verifier only bases his accept/reject decision on the parity of the two bits that he received. While it follows from work of H\aa stad~\cite{Hastad01} that this class equals $\NEXP$ (and is thus as powerful as the whole of $\MIP$) for an appropriate setting of completeness and soundness parameters, Cleve et al.\ show that the corresponding entangled-prover class $\parityMIPstar$ \emph{collapses} to $\EXP$ for any choice of completeness and soundness parameters that are separated by an inverse polynomial gap.\footnote{This was later improved~\cite{Weh06STACS} to the inclusion of $\parityMIPstar$ in the class of two-message single-prover interactive proofs $\QIP(2)\subseteq \PSPACE$~\cite{JUW09}.} 

\medskip

Despite intense efforts, for a long time little more was known, and prior to our work the best lower bound on $\MIPstar$ resulted from the trivial observation that multiple entangled provers are at least as powerful as a single prover, hence $\IP = \PSPACE \subseteq \MIPstar$, where the first equality is due to~\cite{LunForKarNis92JACM,Sha92}.\footnote{It was recently shown that quantum messages are no more powerful than classical messages in \emph{single-prover} interactive proof systems~\cite{JJUW11}: $\QIP=\PSPACE$. That result, however, has no direct relationship with our work: in our setting the messages remain classical; rather the ``quantumness'' manifests itself in the presence of \emph{entanglement} between the provers, which is a notion that only arises when more than one prover is present.} The main difficulty in improving this trivial lower bound is the following: while the PCP theorem gives us a variety of two-prover interactive proof systems for $\NEXP$-complete problems, there is no a priori reason (see e.g.\ the Magic Square game, which has a similar structure to that of basic proof systems for MAX-$3$-XOR, or the aforementioned collapse of $\parityMIPstar$) that they should remain sound in the presence of entanglement.
Indeed, if one considers provers allowed to reproduce any distribution that is \emph{no-signaling},\footnote{A collection of distributions on the provers' answers, one for every tuple of questions, is no-signaling if, for any such distribution, its marginal on any subset of the provers is independent of the questions to the remaining provers.} then it follows from a linear-programming formulation\footnote{This formulation was first observed by Daniel Preda.} of the problem that the corresponding class $\MIPns \subseteq \EXP$ --- in fact, for the case of two-prover single-round proof systems it was even shown in~\cite{Ito10} that $\MIPns(2,1)=\PSPACE$. Entanglement, however, does not allow the provers to reproduce the full set of no-signaling strategies, and these results leave the complexity of the class $\MIPstar$ completely open. 

The fact that entanglement, as a shared resource, is poorly understood is also reflected in the complete absence of reasonable \emph{upper bounds} on the complexity class $\MIPstar$: while the inclusion $\MIP\subseteq\NEXP$ is straightforward, we do not know of any limits on the \emph{dimension} of entanglement that may be useful to the provers in a given interactive proof system, and as a result their maximum success probability is not even known to be computable (see~\cite{SW08,DLTW08,NPA08NJP} for more on this aspect).

Since existing protocols may no longer be sound in the presence of entanglement between the provers, previous work has focused on finding ways to \emph{modify} a given protocol in a way that would make it \emph{entanglement resistant}; that is, honest provers (in the case of a YES-instance) can convince the verifier without shared entanglement while dishonest provers (in the case of a NO-instance) cannot convince the verifier with high probability even with shared entanglement. This was the route taken in~\cite{KKMTV11,yao:tsirelson,ItoKM09}, which introduced techniques to limit the provers' use of their entanglement. They proved non-trivial lower bounds on variants of the class $\MIPstar$, but with error bounds that are weaker than the standard definitions allow for. These relatively weak bounds came as a result of the ``rounding'' technique developed in these works: by adding additional constraints to the protocol, one ensures that optimal entangled strategies are in a sense close to classical, un-entangled strategies. This closeness, however, was shown using a rounding procedure that had a certain ``local'' flavor, inducing a large loss in the quality of the approximation.\footnote{See the ``almost-commuting implies nearly-commuting'' conjecture in~\cite{KKMTV11} for more on this aspect.}

In addition,~\cite{ItoKM09}, based on~\cite{KKMTV11}, showed that $\PSPACE$ has two-prover \emph{one-round} interactive proofs with entangled provers, with perfect completeness and exponentially small soundness error. Prior to our work, this was the best lower bound known on single-round multi-prover interactive proof systems with entanglement. 

\paragraph{Additional related work.} Given the apparent difficulty of proving good lower bounds on the power of multi-prover interactive proof systems with entangled provers, researchers have studied a variety of related models. Maybe the most natural 
extension of $\MIPstar$ consists in giving the verifier more power by allowing him to run in quantum polynomial-time, and exchange quantum messages with the provers. The resulting class is called $\QMS$ (the $Q$ stands for ``quantum verifier'', while the $^*$ stands for ``entangled provers''), and it was formally introduced in~\cite{KobMat03JCSS}, where it was shown that $\QMS$ contains $\MIPstar$ (indeed, the verifier can always force classical communication by systematically measuring the provers' answers in the computational basis).
Recently Reichardt et~al.~\cite{ReiUngVaz-1209.0448} showed that~$\QMS=\MIPstar$ (the possibility of which had been suggested earlier in~\cite{BFK10}).
Ben-Or et al.~\cite{BenHP08} introduced a model in which the verifier is quantum and the provers are allowed communication but no entanglement, and showed that the resulting class contains $\NEXP$. Other works attempt to characterize the power of $\MIPstar$ systems using \emph{tensor norms}~\cite{RapTS07,JungePPVW10}; so far however such norms have either led to computable, but very imprecise, approximations, or have remained (to the best of our knowledge) intractable.   

\subsection{Results}

Let $\MIPstar(k,m,c,s)$ be the class of languages that can be decided by an $m$-round interactive proof system with $k$ (possibly entangled) provers and with completeness $c$ and soundness error $s$.\footnote{We refer to Section~\ref{sec:prelim-mip} for a more complete definition of the class $\MIPstar$.}
%Let~$\poly$ be the set of functions~$f\colon\N\to\N$ such that, given~$1^n$, the string~$1^{f(n)}$ can be computed in polynomial time.
%We write~$\poly$ instead of~$m$ when we denote the union of the class over all~$m\in\poly$.
Our main result is the following.

\begin{theorem}\label{thm:nexp_main}
All languages in $\NEXP$ have a three-prover poly-round interactive proof system with perfect completeness and exponentially small soundness error against entangled provers.  That is, for every~$q\in\poly$, it holds that
\[
  \NEXP \subseteq \MIPstar(3,\poly,1,2^{-q}).
\]
\end{theorem}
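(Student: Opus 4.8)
The plan is to revisit the proof of Babai, Fortnow, and Lund that $\MIP=\NEXP$ and to verify that each of its ingredients survives the introduction of entanglement. Recall that a $\NEXP$ instance reduces to a succinctly-presented $3$CNF formula $\phi$ on $2^{\poly}$ variables, and that $\phi$ is satisfiable if and only if there is a \emph{multilinear} polynomial $g\colon\F^m\to\F$ --- the low-degree extension of a satisfying assignment, with $m=\poly$ and $\abs{\F}=2^{\poly}$ --- for which a certain polynomial identity holds; this identity can in turn be verified by a $\poly(m)$-round sum-check that makes only $\poly(m)$ point queries to $g$. On a YES-instance the three honest provers simply share the truth table of such a $g$ (no entanglement needed) and answer every query consistently, which gives perfect completeness.

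The protocol uses three provers and, in a single poly-round interaction, interleaves two components: (i) the Babai--Fortnow--Lund \emph{multilinearity test}, in which a prover is asked to evaluate its claimed function on random axis-parallel lines and the verifier checks that each restriction is affine; and (ii) the sum-check above, where the point evaluations of $g$ needed along the way --- and in particular the one required at the end --- are supplied by a designated ``oracle'' prover who is simultaneously subject to the multilinearity test. The third prover serves to perform the consistency checks (the same question sent to two provers, whose answers must agree) that force the provers to commit to a single function, and to carry out a consistency check against the oracle prover at each round of the sum-check, so that no round can exploit entanglement built up in earlier rounds.

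The crux of the argument --- and what I expect to be the main obstacle --- is the soundness of the multilinearity test against entangled provers. The goal is to show that \emph{any} entangled strategy succeeding in this test with probability $1-\eps$ is, in an appropriate norm on the shared state, within $\poly(\eps)$ of a strategy in which the provers first jointly measure a single multilinear function $g$ via a global POVM $\{M_g\}$ on the shared state and thereafter answer every point query according to $g$; equivalently, the induced correlations are $\poly(\eps)$-close to ones produced by shared randomness alone. I would set up the point-evaluation measurement operators of each prover and argue, by induction on the $m$ coordinates, that they approximately pairwise commute and are approximately consistent (across points and across provers), using the fact that the affineness enforced along the $i$-th axis is exactly what lets one ``linearize'' the $i$-th coordinate and push the approximation from $i-1$ coordinates up to $i$. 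The delicate point is quantitative: a naive inductive step loses a multiplicative constant per coordinate and hence inflates the error by $2^{\Omega(m)}$ (this is the ``local versus global'' loss mentioned in the introduction), so a more careful, self-improving argument that amortizes the loss over the coordinates is needed to keep the error polynomial in $\eps$ and in $m/\abs{\F}$, and it is here that the bulk of the technical work lies.

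Once the multilinearity test has pinned the provers (up to small error) to essentially classical strategies determined by genuine multilinear functions, the remainder reduces to the classical analysis: conditioned on the outcome of the global measurement the interaction is precisely the Babai--Fortnow--Lund verifier talking to a classical prover holding a fixed multilinear $g$, the per-round oracle consistency checks prevent the multi-round structure of the sum-check from giving any extra power, and a NO-instance's winning probability is therefore bounded by the probability that the extracted $g$ passes the classical sum-check, namely $\poly(m)/\abs{\F}$, plus the polynomial error incurred in the test analysis. Choosing $\abs{\F}=2^{\poly}$ large enough (as a function of $q$) drives the total below $2^{-q}$, yielding $\NEXP\subseteq\MIPstar(3,\poly,1,2^{-q})$ for every $q\in\poly$.
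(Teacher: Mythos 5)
Your overall architecture matches the paper's: a BFL-style protocol in which two provers answer point queries to a claimed multilinear function, a third runs the sum-check, and the heart of the matter is showing that entangled provers passing the multilinearity test are close to provers who first measure a single POVM $\{V^g\}$ over multilinear functions and then answer classically; your inductive, ``self-improving'' analysis of that test is exactly the route the paper takes (its self-improvement and pasting lemmas). However, your final step contains a genuine gap: you claim that choosing $\abs{\F}=2^{\poly}$ large enough as a function of $q$ drives the soundness error below $2^{-q}$. This conflates the two error sources. The $\poly(m)/\abs{\F}$ term from the classical sum-check can indeed be made exponentially small, but the error coming from the entangled analysis of the multilinearity test cannot be shrunk by a parameter choice: it is of the form $\poly(\eps)$ (in the paper, $C\eps^{c}$ for a small universal constant $c<1$, with additional $\poly(n)$ factors from the induction over coordinates), where $1-\eps$ is the provers' unknown success probability, and the analysis moreover only applies under the precondition that $\eps$ is at most inverse-polynomial in $n$ (success probability within $1/\poly$ of $1$), since otherwise the inductive losses swamp the conclusion. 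Consequently the single-shot protocol only yields soundness error $1-1/\poly$ against entangled provers — no choice of field size changes this — and the contradiction one derives on a NO-instance is of the form ``success probability $\ge 1-1/\poly$ forces $1-O(\sqrt{\eps^{c}})$ below the classical soundness bound,'' which is exactly how the paper argues.

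The missing ingredient is amplification: the paper obtains $\NEXP\subseteq\MIPstar(3,\poly,1,2^{-q})$ by \emph{sequentially repeating} the base protocol polynomially many times (sequential repetition is sound against entangled provers because, conditioned on any transcript and residual shared state, each repetition is again a valid single-shot strategy, so the acceptance probabilities multiply), keeping three provers and polynomially many rounds. A secondary point: in the paper the reduction of the exponentially many constraints to a single implicit-input test (the AND test via small-bias spaces, needed because the verifier cannot send exponentially long random combinations) itself has constant soundness error $5/8+\poly/\abs{\F}$, so even the classical part of the base protocol is not exponentially sound as you assume; this particular loss could be avoided with a different combining trick, but the binding constraint remains the entangled multilinearity analysis, and without a repetition (or other amplification) step your proof does not reach $2^{-q}$.
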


Theorem~\ref{thm:nexp_main} resolves a long-standing open question~\cite{KobMat03JCSS}, showing that entanglement does not weaken the power of multi-prover interactive proof systems: together with the inclusion $\MIP\subseteq \NEXP$, it implies that $\MIP \subseteq \MIPstar$. We note that the proof system in Theorem~\ref{thm:nexp_main} does not require honest provers to use any entanglement in order to achieve perfect completeness in the case of a YES-instance.
In other words, if we denote by~$\MIPer$ the class of languages having entanglement resistant multi-prover interactive proof systems with bounded error, our proof of~Theorem~\ref{thm:nexp_main} shows that~$\NEXP\subseteq\MIPer$. Because~$\MIPer\subseteq\MIP$ by definition, this implies $\MIPer = \NEXP$.

The interactive proof system used in the proof of Theorem~\ref{thm:nexp_main} uses three provers
and a polynomial number of rounds of interaction.
We do not know if the number of provers can be reduced; however if one is willing to increase it by one then the amount of interaction required can be reduced to a single round, i.e.\ one message from the verifier to each prover, and one message from each prover to the verifier. Indeed, our proof system has the additional property of being  \emph{non-adaptive}: the verifier can select his questions for all the rounds before interacting with any of the provers. It is shown in~\cite{Ito-parallel} that a non-adaptive entanglement-resistant protocol may be parallelized to a single round of interaction at the cost of adding an extra prover. Applying this result to Theorem~\ref{thm:nexp_main} gives the following corollary.

\begin{corollary}\label{cor:mip-one-round}
  All languages in $\NEXP$ have a four-prover one-round interactive proof system with perfect completeness and soundness error against entangled provers bounded away from $1$ by an inverse polynomial, that is: 
  \[
    \NEXP \subseteq \MIPstar(4,1,1,1-1/\poly).
  \]
\end{corollary}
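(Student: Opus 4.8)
The plan is to obtain Corollary~\ref{cor:mip-one-round} directly from Theorem~\ref{thm:nexp_main} together with the generic parallelization result of~\cite{Ito-parallel}: that theorem converts any \emph{non-adaptive} entanglement-resistant $k$-prover interactive proof system into a \emph{one-round} entanglement-resistant proof system with $k+1$ provers, at the price of a polynomial degradation in the soundness gap. Thus there are essentially three things to do: (i) check that the proof system underlying Theorem~\ref{thm:nexp_main} satisfies the hypotheses of~\cite{Ito-parallel}; (ii) apply the transformation; and (iii) track the resulting parameters.

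For (i), I would observe that the protocol constructed in the proof of Theorem~\ref{thm:nexp_main} is non-adaptive: the verifier draws all of his questions --- to all three provers and over all of the polynomially many rounds --- from his private randomness before any interaction begins, and uses the provers' answers only in his final accept/reject decision. Being assembled from the multilinearity test together with a sumcheck-type PCP layer, all of whose queries are sampled in advance, the protocol plainly has this form. Moreover Theorem~\ref{thm:nexp_main} is entanglement-resistant in exactly the sense required: on YES-instances honest provers succeed with probability~$1$ \emph{without any shared entanglement}, while on NO-instances every entangled strategy is rejected with high probability. These are precisely the hypotheses of~\cite{Ito-parallel}.

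For (ii) and (iii): applying the parallelization of~\cite{Ito-parallel} to the three-prover, poly-round, non-adaptive system of Theorem~\ref{thm:nexp_main} produces a four-prover one-round system. Perfect completeness is preserved, since the transformation leaves the honest provers honest (and entanglement-free). The soundness error degrades: parallelization replaces the round-by-round structure by a single random consistency check that detects a cheating prover with only inverse-polynomial probability, so a soundness error of $s = 2^{-q}$ in the original protocol becomes a soundness error of at most $1 - (1-s)/\poly = 1 - 1/\poly$ in the one-round protocol. As $2^{-q}$ is exponentially small this is simply $1 - 1/\poly$, i.e.\ bounded away from~$1$ by an inverse polynomial, and we conclude $\NEXP\subseteq\MIPstar(4,1,1,1-1/\poly)$. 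The only genuine obstacle here is this quantitative loss --- one cannot hope to retain exponentially small soundness error for a one-round system by this route, since the parallelization step inherently costs the inverse-polynomial degradation --- whereas verifying the non-adaptivity hypothesis is routine once the protocol of Theorem~\ref{thm:nexp_main} is spelled out.
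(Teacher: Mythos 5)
Your proposal is correct and follows essentially the same route as the paper: Theorem~\ref{thm:nexp_main} yields a non-adaptive, entanglement-resistant three-prover poly-round protocol (honest provers needing no entanglement), and the parallelization result of~\cite{Ito-parallel} converts it into a four-prover one-round protocol with perfect completeness and soundness error $1-1/\poly$. The parameter accounting you give, including the unavoidable inverse-polynomial loss in the soundness gap, matches what the paper asserts.
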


Prior results on the complexity of multi-prover interactive proofs with entangled provers have often been stated using the languages of \emph{games}~\cite{CHTW04,KKMTV11,KRT10}. The main difference, in terms of computational complexity, is in the way the input size is measured. In the case of games the input is an explicit description of the game, including a list of all possible questions and valid answers, while in the setting of proof systems the messages may be described implicitly: it is their \emph{length} that is polynomial in the input size. 

Because of this difference in scaling, our results do not immediately imply any NP-hardness result in the setting of multi-player games with entangled players. Nevertheless, by adapting the proof of Theorem~\ref{thm:nexp_main} and using the PCP theorem one can show the following. There is a constant $\kappa>1$ and a procedure that, given as input an arbitrary $3$-SAT formula with $n$ variables and $m = \poly(n)$ clauses, runs in time  $2^{O(\log^\kappa n)}$ and produces an explicit description of a three-player game of size $S=2^{O(\log^\kappa n)}$ (i.e. the number of rounds of interaction and the total number of questions and answers that can be sent and received is at most $S$). The game has the property that, if the $3$-SAT formula was satisfiable, then there is a perfect strategy for the players, which does not require any entanglement. If, however, the $3$-SAT formula was not satisfiable, then there is no strategy for the players, even using entanglement, that succeeds with probability greater than $1/2$. 

If one could show the above with constant $\kappa=1$ then it would follow that finding a constant-factor approximation to the maximum success probability of three entangled players in a game with polynomially many rounds and questions is $\NP$-hard; our result is limited to obtaining some possibly large $\kappa>1$. The main point, however, is that the hardness of approximation is up to \emph{constant} factors. This is in contrast to all previous results which were limited to hardness of approximation up to factors approaching $1$ very quickly as the input size grew (even after arbitrary sequential or even parallel repetition).\footnote{Cleve, Gavinsky, and Jain~\cite{CleGavJai09QIC} obtained a constant-factor hardness result for games with constant answer size, but in which the number of questions sent by the verifier is \emph{exponential}.}

At the heart of the proof of Theorem~\ref{thm:nexp_main} is a soundness analysis of Babai, Fortnow, and Lund's \emph{multilinearity test} in the presence of entanglement between the provers: we show that it is in a sense ``immune'' to the strong non-local correlations that entangled provers may in general afford. We believe that this analysis should be of wider interest, and we explain the test and the main ideas behind its analysis in the presence of entanglement in Section~\ref{sec:multilin-game} below. We first briefly outline the overall structure of our proof system in Section~\ref{sec:proof-outline}. It is very similar to the one introduced by Babai, Fortnow, and Lund~\cite{BabForLun91CC} to prove $\NEXP\subseteq \MIP$; our contribution consists in proving its soundness against entangled provers.

\subsection{Proof outline}\label{sec:proof-outline}

Our interactive proof system verifies membership in a specific $\NEXP$-complete language, \emph{succinct $3$-colorability} (see Problems~1 and~2 in Section~\ref{sec:nexpproblems} for a definition). We give a three-prover, poly-round interactive protocol for it that has perfect completeness and soundness error bounded away from $1$ by an inverse-polynomial in the input size. (Theorem~\ref{thm:nexp_main} is obtained by sequentially repeating this interactive proof system.)
We emphasize that the proof system we use is not new, as it is essentially the same as the one introduced in~\cite{BabForLun91CC}.
We nevertheless outline it because there is a small difference in how the ``oracle'' in~\cite{BabForLun91CC} is simulated by provers, which is the reason our protocol, unlike the one in~\cite{BabForLun91CC}, requires more than two provers.

Simplifying a little bit (we refer the reader to Section~\ref{sec:protocol} for details),
the verifier in our protocol is given as input two integers $n,p$ in unary (think of $p$ as much larger than $n$, but still polynomial),
a description of a finite field $\Fp$ of size~$p$, and a low-degree polynomial $f:(\Fp^n)^2 \times (\Fp)^2 \to \Fp$.
His goal is to verify whether there exists a multilinear function $g:\Fp^n\to\Fp$ such that $f(\vx,\vy,g(\vx),g(\vy))=0$ for all $\vx,\vy \in \{0,1\}^n\subset \Fp^n$. If this is the case then the input is a YES-instance, whereas if for all functions $g$ that are ``close'' to multilinear functions at least one of the constraints $f(\vx,\vy,g(\vx),g(\vy))=0$ is not satisfied then it is a NO-instance. The difficulty, of course, is that there are exponentially many constraints to verify, and \emph{all} must be satisfied for the instance to be a YES-instance.

The protocol is divided into two distinct parts, which only weakly interact with each other. In the first part of the protocol, the verifier performs a polynomial-round \emph{low-degree sum-check test} with a single prover, say the last prover (see Lemma~\ref{lemma:summation-test} for an explicit formulation). This test is based on ideas already introduced by Lund, Fortnow, Karloff, and Nisan~\cite{LunForKarNis92JACM} and can be used to verify that a low-degree function defined over $\Fp^k$ vanishes on all of $\{0,1\}^k$.
We will apply it to the low-degree function $h:(\Fp^n)^2\to \Fp$ defined by $h(\vx,\vy) = f(\vx,\vy,g(\vx),g(\vy))$.
An important point for us is that, in the LFKN protocol, the verifier eventually only needs to evaluate $h$ at a \emph{single} point $(\vx,\vy)\in (\Fp^n)^2$ chosen uniformly at random.
Of course, the verifier only knows $f$, not $g$, and therefore the verifier asks the two remaining provers the values~$g(\vx)$ and~$g(\vy)$.

However, note that here the function $g$ is arbitrary (we are trying to verify its existence), \emph{except that it has to be multilinear}.
The goal of the second part of the protocol is to ensure that it is indeed chosen according to some multilinear function.
Therefore, the verifier will sometimes perform a certain ``multilinearity test'' with the three provers,
which enforces that, however the provers answer their queries, it must be according to a function that is close to a multilinear function.
The two tests will be indistinguishable from the point of view of the provers because the marginal distribution on the question to each prover is uniform over $\Fp^n$ in both cases.

\medskip

Completeness of the protocol is easy to verify, and in the case of a YES-instance honest provers do not need any entanglement to be accepted with probability $1$.  To prove soundness, assuming four entangled provers succeed with probability that is polynomially close to $1$, we wish to conclude that the instance given as input to the verifier must be a YES-instance.

Note that provers successful in the overall protocol must, in particular, succeed with high probability in the multilinearity test. The key step in the analysis consists in showing the following: Any three entangled provers that succeed in the multilinearity test with high probability are ``indistinguishable'' from \emph{classical} provers who use shared randomness to jointly sample a multilinear function $g$, and then answer question $\vx$ with $g(\vx)$. This step is the one that requires the most work, and we explain it in more detail in the next section. (In particular, we will clarify what is meant by ``indistinguishable''.)

Assuming this informal statement holds, it is not too hard to conclude the analysis of the protocol. Indeed, having replaced 
two out of the three provers by classical provers, there is only a single ``quantum'' prover left, the one used to perform the sum-check test in the first part of the protocol.
But entanglement cannot be useful to a single prover, and hence we may also assume that this last prover behaves classically. Since all provers are now classical, we have reduced our analysis to the classical setting and can appeal to the results in~\cite{BabForLun91CC} to conclude. We refer to Section~\ref{sec:protocol} for a more detailed presentation and soundness analysis of the protocol.

\subsection{The multilinearity game}\label{sec:multilin-game}

The key step in the proof of Theorem~\ref{thm:nexp_main} is the analysis of the multilinearity test of~\cite{BabForLun91CC}, which generalizes the celebrated \emph{linearity test} of Blum, Luby, and Rubinfeld~\cite{BLR93} and is essential in constructing a protocol for $\NEXP$ that has messages of polynomial length.\footnote{One can devise a protocol based on the linearity test alone, but it requires the verifier to send messages with exponential length to the provers. Such use of the linearity test was already key in establishing the early result $\NP \subseteq \PCP(\poly,1)$; see e.g.\ Theorem 2.1.10 in~\cite{AroLunMotSudSze98JACM}.} The test can be formulated as a game played between the verifier and three players. The game is parametrized by a finite field $\Fp$ and an integer $n$. In the game, the verifier performs either of the following with probability $1/2$ each:
\begin{itemize}
\item
  \emph{Consistency test.}
  The verifier chooses~$\vct{x}\in\Fp^n$ uniformly at random
  and sends the same question~$\vct{x}$ to all three players.
  He expects each of them to answer with an element of $\Fp$,
  and accepts if and only if all the answers are equal.
\item
  \emph{Linearity test.}
  The verifier chooses~$i\in\{1,\dots,n\}$, $\vct{x}\in\Fp^n$ and~$y_i,z_i\in\Fp$ uniformly at random,
  and sets~$y_j=z_j=x_j$ for every~$j\in\{1,\dots,n\}\setminus\{i\}$.
  He sends~$\vct{x},\vct{y},\vct{z}$ to the three players,
  receives $a,b,c\in\Fp$,
  and accepts if and only if
  \[
    \frac{b-a}{y_i-x_i}=\frac{c-b}{z_i-y_i}=\frac{c-a}{z_i-x_i}.
  \]
\end{itemize}

Babai, Fortnow, and Lund show that, if any three \emph{deterministic} players are accepted by the verifier with probability at least $1-\eps$ in this game, then the functions they each apply to their questions in order to determine their respective answers are close to a single \emph{multilinear} function $g:\Fp^n\to\Fp$ (see Theorem~4.16 in~\cite{BabForLun91CC} for an analysis of a variant of the test over the integers). That is, for all but at most a fraction roughly $O(n^2\eps)$ (provided $|\Fp|$ is large enough) of $\vx\in\Fp^n$, the players' answer to question $\vx$ is precisely $g(\vx)$. 

A major hurdle in proving a similar statement in case the players are allowed to use quantum mechanics already arises in \emph{formulating} the statement to be proven: even in the case of players restricting their use of entanglement as shared randomness, what meaning should one ascribe to their strategies being ``close to multilinear''? Indeed, it could be that the answer of each player to a fixed question, when taken in isolation, is uniformly random: the whole substance of the strategy is in the \emph{correlations} between the answers of different players. 
This difficulty is usually set aside by ``fixing the randomness''. Entanglement, however, cannot be ``fixed'', and this forces us to face even the presumably simpler case of randomized strategies head on. We show that the following is an appropriate formulation of Babai et al.'s multilinearity test in the general setting of entangled (or even just randomized) players (see Theorem~\ref{thm:lintestclose} for a precise statement).

\begin{theorem}[Informal]\label{thm:multilin-inf}
Suppose that three entangled players who share a permutation-invariant state $\ket{\Psi}$ succeed in the multilinearity game with probability $1-\eps$ where each player uses the set of measurements $\{A_\vx^a\}_{a\in \Fp}$ to determine his answer to the verifier's question $\vx\in\Fp^n$.

Then there exists a \emph{single} measurement $\{V^g\}$, independent of any question and with outcomes in the set of all multilinear functions $g:\Fp^n\to \Fp$, such that, in the multilinearity game, each player's action is \emph{indistinguishable} from that of player whom, upon receiving his question~$\vx$, would
\begin{enumerate}
\item Measure his share of $\ket{\Psi}$ with $\{V^g\}$, obtaining a multilinear function~$g$ as an outcome,
\item Answer his question $\vx$ with $g(\vx)$.
\end{enumerate}
Moreover, the multilinear functions used by the three players are identical with high probability.
\end{theorem}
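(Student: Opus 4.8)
The plan is to run the classical argument of Babai, Fortnow and Lund --- first extract a ``line-wise affine'' structure from the test, then paste the lines together into a single global multilinear function --- but to carry it out entirely at the level of the provers' measurement operators acting on the shared state $\ket\Psi$, since, as noted above, there is no shared randomness to condition on. Everything is organized around the seminorms on operators induced by the test's two question distributions, and around a notion of \emph{consistency distance} between a question-independent measurement $\{M^g\}$ on one prover's space and the strategy $\{A_\vx^a\}$, roughly $\Exs{\vx}{\sum_{a\neq g(\vx)}\bra\Psi M^g\tensor A_\vx^a\tensor\Id\ket\Psi}$; the structural fact I need about it is an approximate triangle inequality, and this is what will keep the error accumulation polynomial rather than exponential in $n$.

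\textbf{Step 1 (symmetrization and the line-wise structure).} By permutation-invariance of $\ket\Psi$ I may assume all three provers use the same family $\{A_\vx^a\}$. Expanding the acceptance probability of the consistency test gives the pairwise self-consistency relation $\Exs{\vx}{\sum_a\bra\Psi A_\vx^a\tensor A_\vx^a\tensor\Id\ket\Psi}\ge 1-O(\eps)$ (using that $A_\vx^a\tensor A_\vx^a\tensor A_\vx^a\preceq A_\vx^a\tensor A_\vx^a\tensor\Id$ as positive operators); its standard consequence, used repeatedly, is that at a cost of $O(\sqrt\eps)$ in the relevant seminorm a measurement operator may be moved from one prover's subsystem to another, so that a question-independent measurement consistent with one prover's answers is consistent with another prover's answers on the same question. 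Expanding the linearity test and using this to bring the three provers' operators onto a single space, I obtain that for a uniformly random axis direction $i$ and random axis-parallel line $\ell$ the operators $\{A_\vx^a:\vx\in\ell\}$ satisfy the affine relation on $\ket\Psi$ up to $O(\eps)$, hence up to $O(n\eps)$ for any \emph{fixed} direction. The genuinely new point relative to the classical proof is that a quantum measurement cannot be literally restricted to a line, so what I extract is an operator statement on $\ket\Psi$: the sequential measurement that reads off the prover's values at two fixed points of $\ell$ and interpolates linearly is well-defined up to $O(n\eps)$ --- its well-definedness being exactly the approximate-commutation consequence of self-consistency --- and is consistent with $\{A_\vx^a\}$ along $\ell$. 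This yields, for each direction $i$, a ``resolution'' of the prover's measurements into affine pieces along direction-$i$ lines.

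\textbf{Step 2 (pasting --- the main obstacle).} I construct $\{V^g\}$ by a hybrid over the $n$ coordinates: measurements $\{V_k^g\}$ for $k=0,1,\dots,n$, whose outcome $g$ ranges over functions that are multilinear in the first $k$ coordinates, with $V_0$ the strategy itself and $V_n=\{V^g\}$. The passage from $V_{k-1}$ to $V_k$ post-composes with the resolution along direction-$k$ lines furnished by Step~1, thereby linearizing the $k$-th coordinate, and must cost only $O(n\eps)$ in consistency distance. This is the step I expect to require the most care: the line-wise affinity of Step~1 is a property of the \emph{original} strategy, so to invoke it at stage $k$ I need $V_{k-1}$ to still be consistent with $\{A_\vx^a\}$ on \emph{every} question, and propagating this invariant through all $n$ stages --- while the measurement is being repeatedly modified --- is precisely where a careless argument would lose a factor exponential in $n$. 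Controlling it relies on the approximate triangle inequality of Step~1 together with the fact that being ``multilinear in the first $k$ coordinates'' is preserved by the operations performed, so that the errors telescope to the advertised $O(n^2\eps)$.

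\textbf{Step 3 (conclusion and agreement of the functions).} The endpoint $\{V^g\}$ is a single question-independent measurement with outcomes the multilinear functions $g:\Fp^n\to\Fp$, and it is consistent with the strategy in the precise sense that replacing a prover's action on question $\vx$ by ``measure $\{V^g\}$, answer $g(\vx)$'' perturbs the joint behaviour of the players by at most $\poly(n)\eps$; this is the content of Theorem~\ref{thm:lintestclose} and is what ``indistinguishable'' means. For the last claim, apply the construction to two of the provers to obtain $\{V^g\}$ on one and $\{(V')^{g'}\}$ on another; Step~1 and the consistency test then force $g(\vx)=g'(\vx)$ with probability $1-\poly(n)\eps$ over a uniformly random $\vx$ and the joint outcome $(g,g')$ of $V\tensor V'$ on $\ket\Psi$, and since two distinct multilinear functions over $\Fp^n$ have total degree at most $n$ and so agree on at most an $n/p$ fraction of points, while $p$ is taken polynomially larger than $n$, the Schwartz--Zippel bound upgrades this to $g=g'$ with high probability.
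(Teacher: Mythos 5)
There is a genuine gap, and it sits exactly where you flag ``the main obstacle'': Step~2. You assert that the hybrid $V_0,\dots,V_n$ can be driven through all $n$ coordinates with errors that \emph{telescope additively} to $O(n^2\eps)$, using an approximate triangle inequality for the consistency distance and preservation of partial multilinearity. In the entangled setting this is precisely what fails. Each pasting step requires transferring operators between provers' registers and comparing sequentially composed (sub-)measurements, and every such manipulation is controlled by Cauchy--Schwarz/gentle-measurement bounds that cost a fixed \emph{root} of the current inconsistency, not an additive $O(n\eps)$: the recursion has the shape $\delta_{k+1}=O(\delta_k^{c})$ for a universal $c<1$ (cf.\ item~1 of Lemma~\ref{lem:lemma2}), so iterating it $n$ times from $\delta_0=O(\eps)$ gives $\eps^{c^n}$, which is vacuous already for $n$ of order $\log\log(1/\eps)$ --- this is the exponential blow-up the paper warns about. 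The induction is only rescued by a quantum analogue of Babai--Fortnow--Lund's \emph{self-improvement} lemma (Lemma~\ref{lem:lemma1}): after each pasting one actively corrects the family, by solving a convex program that, for each $\vx$, projects the current operators onto those \emph{exactly} consistent with $A_\vx$, and then uses invariance along axis-parallel lines plus expansion of the hypercube to show the optimizer is nearly independent of $\vx_{<k}$; the improved family has inconsistency $O(\eps^{1/16})$ with $A$, depending only on $\eps$ and not on the accumulated error. Your proposal contains no such self-correction mechanism, and in the quantum case it cannot be dispensed with: unlike deterministic functions, which agree or disagree outright at each point, measurement operators can drift by arbitrarily small rotations that accumulate across stages, so a ``reasonably good'' stage-$k$ object is not automatically ``very good'' and must be repaired before the next pasting.

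A second, related omission: the construction cannot maintain \emph{complete} measurements $V_k$ consistent with $A$ at every question, which is the invariant your Step~2 relies on. Mislabeled portions of the space cannot be recovered, so the paper works with \emph{sub}-measurements that shrink at each stage, and must separately prove that the total weight $\Trho(V)$ remains $1-O(\eps^{c_0})$ (item~2 of Theorem~\ref{thm:lintest}, items~2--3 of Proposition~\ref{prop:mainprop}); without that control, consistency with $A$ is trivially satisfiable (e.g.\ by the zero sub-measurement) and the conclusion is empty. Relatedly, your advertised final bound $O(n^2\eps)$ is the classical one; the quantum argument only yields $O(\eps^{c})$ for a small universal constant $c$, under the standing assumptions $\eps\le n^{-2/c_0}$ and $p\ge n^4\eps^{-1/2}$. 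Your Step~1 (interpolation along lines via two sequential measurements, matching Claim~\ref{claim:lines}) and Step~3 (consistency plus Schwartz--Zippel to force agreement of the two provers' multilinear functions) are essentially the paper's, but the heart of the theorem is the error-control in Step~2, and as written it would not go through.
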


In case the players are classical, but may use shared randomness, the theorem makes the following simple statement: players successful in the multilinearity game are ``indistinguishable'' from players who would first look up their random string, based on that alone select a multilinear function $g$, and finally answer their respective questions $\vx_i$ with $g(\vx_i)$. While such a statement is a direct corollary of Babai, Fortnow, and Lund's analysis, our contribution is to prove it without first ``fixing the randomness'' --- and to show that it also holds for the case of players using entanglement. 

\paragraph{An appropriate notion of distance on entangled-prover strategies.} Crucial to the applicability of Theorem~\ref{thm:multilin-inf} is the precise notion of ``indistinguishability'' used. Indeed, while there is no hope of making statements on the players' measurements or their shared entangled state themselves (since the verifier has no direct access to them throughout the protocol), one still needs to use a notion that is strong enough to be meaningful even when the multilinearity game is executed as a building block in the larger protocol explained in the previous section. 

The measure we use is based on the notion of \emph{consistency} between two measurements, and it may be useful to introduce it here in a simplified setting (precise definitions are given in Section~\ref{sec:prelim-notation}). Let $\{A^i\}_{i\in I}$ and $\{B^i\}_{i\in I}$ be two quantum measurements of the same dimension and indexed by the same set of outcomes: $A^i,B^i \geq 0$ for all $i\in I$, and $\sum_i A^i = \sum_i B^i = \Id$. Let $\ket{\Psi}$ be a bipartite state that is invariant under permutation of its two subsystems, and $\rho$ its reduced state on either. We say that $A$ and $B$ are $\eps$-\emph{consistent} if the following holds:
\beq\label{eq:cons-intro-0}
 \cons(A,B) \,:=\, \sum_{i} \,\bra{\Psi} A^i \otimes B^i \ket{\Psi} \,\geq\, 1-\eps.
\eeq
This definition has an operational interpretation: the two measurements $A$ and $B$, when performed on the two subsystems of $\ket{\Psi}$, give the same outcome except with probability $\eps$ . The key fact about consistent measurements is the following. Suppose that $A$ and $A$, $B$ and $B$, and $A$ and $B$ are all $\eps$-consistent. Then $A$ and $B$ are \emph{indistinguishable} in the sense that 
\beq\label{eq:cons-intro}
\sum_i \,\big\| \sqrt{A^i} \rho \sqrt{A^i} - \sqrt{B^i} \rho \sqrt{B^i} \big\|_1 \,=\,O(\sqrt{\eps}).
\eeq
This last expression corresponds to a more familiar notion of closeness of two measurements: they are close if the post-measurement states resulting from applying either are close in trace distance. The fact that~\eqref{eq:cons-intro-0} essentially implies~\eqref{eq:cons-intro} relies on Winter's ``gentle measurement'' lemma~\cite[Lemma~9]{Winter99IEEEIT} (see also Aaronson's ``almost as good as new'' lemma~\cite[Lemma~2.2]{Aar05}), a key tool in our analysis.

In this paper we will consider two measurements to be close whenever they are consistent, having the assurance that this notion of closeness implies the more traditional one expressed by~\eqref{eq:cons-intro}. In particular, it is not hard to verify that~\eqref{eq:cons-intro} implies that either measurement may be ``replaced'' by the other even in a wider context; see the proof of Claim~\ref{claim:pi} in Section~\ref{sec:protocol} for more details on how this can be done. The advantage of using this measure is that constraints on the consistency of measurements arise naturally from the analysis of the multilinearity game, and it is a notion that is very convenient to work with. 

\paragraph{Analysis of the multilinearity game: rounding entangled strategies.} 
Theorem~\ref{thm:multilin-inf} states that success in the multilinearity game forces even entangled players to make a trivial use of their entanglement: since the measurement $\{V^g\}$ is independent of their respective questions, they might as well perform it before the game starts, in which case they are not using their entanglement at all. Hence the theorem implies that entangled players are no more powerful than classical players in that game. A key insight of our work, however, is to avoid any attempt to prove such a statement \emph{directly}. Instead, our proof technique consists in progressively manipulating the players' strategies themselves, without \emph{explicitly} trying to relate them to a classical strategy. 

Our goal is to show how the measurement $\{V^g\}$ can be extracted from the initial set of measurements $\{A_\vx^a\}$ which depend on~$\vx\in\Fp^n$.\footnote{While we do give an explicit, inductive algorithmic procedure showing how $\{V^g\}$ can be constructed, this is not necessary: the point is only in proving its \emph{existence}.} 
More precisely, we show how, starting from the original measurements $\{A_{\vct{x}}^a\big\}$, one may remove the dependence of $\{A_{\vct{x}}^a\}$ on $\vct{x}\in\Fp^n$ one coordinate at a time --- eventually reaching the measurement $\{V^g\}$. Towards this we construct a sequence of measurements $\{B_{x_{k+1},\ldots,x_n}^g\big\}_g$, for $k=1,\ldots,n$, with outcomes $g$ in the set of multilinear functions $\Fp^{k}\to \Fp$. Each of these measurements has the following key property: the respective strategies corresponding to (i) measuring according to $\{A_{\vct{x}}^a\}$ and answering $a$ or (ii) measuring according to $\{B_{x_{k+1},\ldots,x_n}^g\}$ and answering $g(x_1,\ldots,x_{k})$ are \emph{consistent}, in the sense described in Eq.~\eqref{eq:cons-intro-0}: two distinct players using either strategy will obtain the same answer with high probability (provided they started with the same question). 

This sequence of measurements is defined by induction, and we only explain the one-dimensional case here. Our construction is intuitive: $\{B^g\}$ corresponds to measuring using $\{A_{x_1}^a\}$ twice, \emph{in succession}, using two randomly chosen values of $x_1$, and returning the unique linear function $g$ which interpolates between the two outcomes obtained. This can be interpreted as a quantum analogue of the reconstruction procedure already used in the linearity test of Blum, Luby, and Rubinfeld: to recover a linear function it suffices to evaluate it at two random points, and then interpolate. The construction of the measurements $\{B_{x_{k+1},\ldots,x_n}^g\}$ for the one dimensional case is given in Claim~\ref{claim:lines}, and in the general case in Lemma~\ref{lem:lemma2}, which states a quantum analogue of Babai et al.'s ``pasting lemma''~\cite[Lemma~5.11]{BabForLun91CC}. 

An additional hurdle arises as a result of the induction: the quality of the approximation between the original measurements $\{A_\vx^a\}$ and the constructed measurements $\{B_{x_{k+1},\ldots,x_n}^g\}$ blows up \emph{exponentially} with $k$. In order to control this error, one has to perform an additional step of \emph{self-improvement}. This step was a key innovation in the work of Babai, Fortnow, and Lund, and extending it to the setting of entangled strategies requires substantially more work. While for the case of deterministic strategies Babai et al.\ were able to show, using the expansion properties of the hypercube, that any ``reasonably good'' $k$-linear approximation $g$ at any point in the induction was automatically ``extremely good'', in our case we need to actively update the measurements through a self-correction procedure, obtaining the ``improved'' measurements as the optimum of a certain convex optimization problem. The need for such active correction is not a limitation of our approach, but rather reflects a fundamental difference between the quantum and the classical, deterministic settings: while two binary-valued functions either fully agree or fully disagree at any point, two quantum measurements can produce outcomes according to distinct but arbitrarily close distributions (think of one of the measurements as being obtained from the other by a small perturbation, such as an arbitrarily small rotation). It is this kind of ``error'' that needs to be corrected, and we explain our method to do so in more detail in Section~\ref{sec:lemma1}. 

\subsection{Discussion and open questions}\label{sec:discussion}

Improving the parameters in Theorem~\ref{thm:nexp_main} and Corollary~\ref{cor:mip-one-round} is an open problem.  For example, it might be possible to reduce the number of provers to two, and the number of rounds of interaction to one, while still preserving exponentially small soundness error, resulting in the inclusion~$\NEXP\subseteq\MIPstar(2,1,1,2^{-q})$ for every polynomial~$q$. This  would be an analogue of the known containment~$\NEXP\subseteq\MIP(2,1,1,2^{-q})$~\cite{FeiLov92STOC}.  Our overall protocol for~$\NEXP$ requires three provers, and four provers if we would like to parallelize it by using~\cite{Ito-parallel}. We leave the problem of reducing the number of provers for future work. It may also be possible to improve the soundness guarantees in Corollary~\ref{cor:mip-one-round} by using the parallel repetition techniques from~\cite{KV11parallel}, but we have not explored this possibility.

In comparison to the PCP theorem, there are important parameters which are not explicit in Theorem~\ref{thm:nexp_main} and Corollary~\ref{cor:mip-one-round}: the amount of randomness used by the verifier and the total answer length.  In our constructions, both of them are just bounded by a polynomial in the input length for~$\NEXP$, and they are poly-logarithmic for the scaled-down version corresponding to verification of languages in~$\NP$.  If these numbers are respectively reduced to a logarithm and a constant for~$\NP$ with a constant soundness, the result will be an analogue of the PCP theorem in presence of entanglement. Obtaining such a result may require extending our analysis of the multilinearity test to the more powerful \emph{low-degree} tests that were key to establishing the ``scaled-down'' version of the PCP theorem. 

Honest provers in our protocol do not need entanglement in order to achieve completeness $1$ in the case of a YES-instance. It remains open whether entanglement can have any positive use in this context: 
% in terms of the complexity class~$\MIPstar$, or equivalently
is~$\MIPstar$ strictly larger than~$\MIP=\NEXP$?

\paragraph{Organization of the paper.} After giving some necessary preliminaries, Section~\ref{sec:protocol} describes the protocol used to prove Theorem~\ref{thm:nexp_main}, and shows how the theorem follows from a claim about the multilinearity game in the presence of entangled provers. Section~\ref{sec:multilin} introduces a more technical claim about the analysis of the multilinearity game, which is suitable to a proof by induction on the number~$n$ of variables in the verifier's questions in the game. The actual analysis is given in Section~\ref{sec:mainind}. 

\paragraph{Acknowledgments.}
Tsuyoshi Ito thanks John Watrous for helpful discussions.
Thomas Vidick thanks Umesh Vazirani for many inspiring discussions throughout the time that this work was being carried out, and in particular for first suggesting to adapt Babai et al.'s multilinearity test to the entangled-prover setting.
The authors also thank Scott Aaronson, Dmitry Gavinsky, Oded Regev, and an anonymous referee for helpful suggestions.

\section{Preliminaries}\label{sec:prelim}

In the remainder of the paper we assume that the reader is familiar
with computational complexity theory~\cite{Goldreich08,AroBar09},
as well as with basic notions
in quantum information~\cite{NieChu01,KitSheVya02}
such as density matrices, POVM measurements, quantum channels, and the trace distance.
For more on quantum computational complexity we refer the reader to
a recent survey by Watrous~\cite{Watrous09-survey}.

\subsection{Notation}\label{sec:prelim-notation}

For a field~$\Fp$, a linear function $g\colon\Fp\to\Fp$ is a function such that there exists $a,b\in\Fp$, $g(x)=ax+b$.
A multilinear function $g\colon\Fp^k\to\Fp$ is a function that is linear in each of its coordinates.
$\ML(\Fp^k,\Fp)$ will denote the set of all multilinear functions from $\Fp^k$ to~$\Fp$.
We will denote tuples using bold symbols such as~$\vx$ and~$\vct{b}$.
Given a tuple~$\vx=(x_1,\ldots,x_n)$ and~$k\in[n]$, we let $\vx_{\leq k} := (x_1,\ldots,x_k)$, $\vx_{> k} := (x_{k+1},\ldots,x_n)$ and~$\vx_{\neg k} := (x_1,\ldots, x_{k-1},x_{k+1},\ldots,x_n)$.

Given a positive matrix $\rho$ and an arbitrary matrix $A$, we let $\Tr_\rho(A):=\Tr(A\rho)$. In case $\rho$ is a matrix on the tensor product of two Hilbert spaces $\mathcal{H}$ and $\mathcal{H}'$, and $A$ is a matrix acting on $\mathcal{H}$, we will sometimes abuse notation and write $\Tr_\rho(A)$ for $\Tr_\rho(A\otimes \Id_{\mathcal{H}'})$. If $\ket{\Psi} \in \mathcal{H}^{\otimes k} \otimes \mathcal{H}'$ is a state that is invariant under permutation of the first $k$ registers, we will often abuse notation further and use the symbol $\rho$ to denote the reduced density of $\ket{\Psi}$  on either of the first $k$ registers, or even any pair of registers among the first $k$, etc. Hence any expression of the form $\Trho(A \otimes B)$ should really be read as 
$$\bra{\Psi} A\otimes B \otimes \Id_\mathcal{H} \otimes \cdots \otimes \Id_{\mathcal{H}} \otimes \Id_{\mathcal{H}'} \ket{\Psi},$$
where the position of $A$ and $B$ among the first $k$ registers is immaterial by permutation-invariance. For any $\rho\geq 0$, we let 
$$\|A\|_\rho^2 \,:= \,\Tr\big(AA^\dagger \rho),$$
and observe that $A\mapsto \|A\|_\rho$ is a semi-norm (it is definite if $\rho$ is invertible). It satisfies the following Cauchy-Schwarz inequality: for any $A,B$,
$$ \Trho\big( AB^\dagger \big) \,\leq\, \|A\|_\rho\,\|B\|_\rho.$$

\paragraph{Measurements.} In this paper, a measurement is a collection of non-negative matrices $\{P^a\}_{a\in A}$ such that $\sum_a P^a = \Id$ (this is usually called a \emph{Positive Operator-Valued Measure}, or POVM). The set $A$ is the set of \emph{outcomes} of the measurement; outcomes will always appear as superscripts. The measurement is said \emph{projective} if $P^a$ is a projector, i.e.\ $(P^a)^2 = P^a$, for every $a$.  A \emph{sub}-measurement is a collection of non-negative matrices $\{P^a\}_{a\in A}$ such that $\sum_a P^a \leq \Id$. For integers $0\leq k \leq n$ we will also consider families of sub-measurements, indexed by $x \in \Fp^{n-k}$ and with outcomes in the set $\ML(\Fp^{k},\Fp)$. Such a family $P=\{ P_{\vx_{\geq k}}^g \}$ will be called a \emph{family of sub-measurements of arity $k$} (the parameter $n$ will often be left implicit). A family of sub-measurements of arity $n$ is thus a single sub-measurement with outcomes in $\ML(\Fp^n,\Fp)$. Given a family of sub-measurements $P=\{ P_{\vx_{\geq k}}^g \}$ of arity $k$, we will often use the notation
$$ P_{\vx_{\geq k}} \,:=\, \sum_g P_{\vx_{\geq k}}^g\qquad\text{and}\qquad P_{\vx_{\geq \ell}}\,:=\, \Es{x_k,\ldots,x_{\ell-1}} P_{\vx_{\geq k}}$$ 
for any $k\leq\ell\leq n$, where the expectation is taken with respect to the uniform distribution on $\Fp^{\ell-k}$. Given two families of sub-measurements $P$ and $Q$ with arities $k\leq \ell$ respectively, we define their \emph{consistency}
$$\cons(P,Q):= \Es{\vx\in\Fp^n} \sum_{f,g:\, g_{|\vx_{k\cdots\ell-1}}=f} \Trho\big( P_{\vx_{\geq k}}^f \otimes Q_{\vx_{\geq \ell}}^g \big), $$
where $g_{|\vx_{k\cdots\ell-1}}$ is the $(n-\ell)$-linear function obtained by restricting $g$'s $(\ell-k)$ first variables to $\vx_{k\cdots\ell-1}$, and their \emph{inconsistency}
$$\inc(P,Q):= \Es{\vx\in\Fp^n} \sum_{f,g:\, g_{|\vx_{k\cdots\ell-1}\neq f}} \Trho\big( P_{\vx_{\geq k}}^f \otimes Q_{\vx_{\geq \ell}}^g \big), $$
where $\rho$ is a density matrix which will always be clear from the context. If $k>\ell$ then we define $\cons(P,Q) := \cons(Q,P)$ and $\inc(P,Q):=\inc(Q,P)$. We will also use shorthands $\cons(P)=\cons(P,P)$ and $\inc(P) = \inc(P,P)$. Note that if $P$ is a complete family of measurements, i.e.\ $\sum_f P_{\vx_{\geq k}}^f =\Id$ for every $\vx_{\geq k}$, then 
$$\cons(P,Q)+\inc(P,Q)\,=\, \Es{\vx} \sum_g \Trho\big(Q_{\vx_{\geq \ell}}^g \big) \,=\,\Trho(Q),$$
 which equals $1$ if $Q$ is also complete.

\subsection{Multi-prover interactive proofs}\label{sec:prelim-mip}

In this section we define the complexity classes that this work is concerned with: multi-prover interactive proof systems ($\MIP$ systems) and multi-prover interactive proof systems \emph{with entanglement} ($\MIPstar$ systems). 

Let~$k(n)$ be an integer, denoting the number of provers, and~$m(n)$ an integer denoting the number of rounds. Both~$k(n)$ and~$m(n)$ are from the set of polynomially bounded, polynomial-time computable functions in the input size $|x|$, denoted by $\poly$.
Further, $c$ and $s$ denote polynomial-time computable functions
 of the input size into $[0,1]$ corresponding to completeness acceptance probability and
soundness error. For notational convenience in what follows  we will omit the arguments of these functions.

\paragraph{Multi-prover interactive proof systems ($\MIP$ systems):}

Let~$k,m,l\in\poly$.
A~$k$-prover interactive proof system consists of a verifier~$V$ and~$k$ provers~$P_1,\dots,P_k$. The verifier is a probabilistic polynomial-time Turing machine, and the provers are computationally unbounded. Each of them has a read-only input tape and a private work tape. Each prover has a communication tape. The verifier has a random tape. The verifier also has~$k$ communication tapes, one for each prover, each of which is~$l$ bits long.

The input tape for every party contains the same input string~$x$. The protocol consists of~$m(\abs{x})$ rounds. In each round, first the verifier runs for a polynomial amount of time, updating the work and communication tapes.  After that, the content of the~$i$th communication tape is sent to the~$i$th prover for each~$i=1,\dots,k(\abs{x})$.  Each prover reads this string, updates the content of his own work tape, and decides a reply to the verifier.  The reply from the $i$th prover is written in the~$i$th communication tape, and this round completes.  After~$m(\abs{x})$ rounds of interaction, the verifier produces a special output bit, designating acceptance or rejection.  The operations by provers are instantaneous and do not have to be even computable; the provers are assumed to be able to ``compute'' any function.

For simplicity, we assume that each message between the verifier and the provers in each round is exactly~$l$ bits long for the purpose of a formal definition, but it is not hard to modify the definition to incorporate the more general case which does not satisfy this assumption.
Formally, a \emph{strategy} for~${P_1,\dots,P_k}$ in a $k$-prover~$m$-round interactive proof system consists of the length~$l'\in\N$ of a work tape, and~$km$ mappings~$f_{ij}\colon\{0,1\}^l\times\{0,1\}^{l'}\to\{0,1\}^l\times\{0,1\}^{l'}$ for~$1\le i\le k$ and~$1\le j\le m$.
Each mapping~$f_{ij}$ specifies the operation which prover~$i$ performs in round~$j$:
$f_{ij}(q,w)=(q',w')$ means that if the message from the verifier in this round is~$q$ and the work tape contains string~$w$ before the operation by the prover,
then the message to the verifier in this round is~$q'$ and the work tape contains string~$w$ after the operation.

\begin{definition}
Let~$k,m\colon\N\to\N$, and let~$c,s\colon\N\to[0,1]$ such that~$c(n)>s(n)$ for all~$n\in\N$.
A language $L$ is in ${\MIP(k, m, c, s)}$ if and only if there exists an $m$-round polynomial-time verifier $V$
for a $k$-prover interactive proof system such that, for every input $x$:
\begin{description}
\item[\textnormal{(Completeness)}] if ${x \in L}$, there exists a strategy for provers~${P_1, \dots,
P_{k}}$ such that the interaction protocol of $V$ with $(P_1,\ldots,P_k)$ results in the verifier accepting with probability at least $c$,
 \item[\textnormal{(Soundness)}] if ${x \not\in L}$, for any strategy for provers~${P'_1, \ldots, P'_{k}}$, the probability that the interaction protocol of $V$ with $(P_1,\ldots,P_k)$ results in the verifier accepting is at most $s$.
\end{description}
\label{Definition: MIP(k,m,c,s)}
\end{definition}

In this formulation, the provers are deterministic, but this is not a limitation because it is well-known that the power of the model does not change if we allow the provers to share a random source.

If some of the parameters $k$, $m$, $c$, and~$s$ are sets of functions instead of single functions,
the class is interpreted to be the union over all choices in the sets.
For example,
\[
  \MIP(4,1,1,1-1/\poly)=\bigcup_{f\in\poly}\MIP(4,1,1,1-1/f).
\]
We denote~$\MIP(\poly,\poly,2/3,1/3)$ simply by~$\MIP$.

\paragraph{Multi-prover interactive proof systems with entanglement ($\MIPstar$ systems):}
First introduced in~\cite{CHTW04}, $\MIPstar$ systems are defined analogously to $\MIP$ systems. The only difference is that now the provers are allowed to be \emph{quantum}, while the verifier (and communication) remains bounded in classical probabilistic polynomial-time. This implies that the provers may share an arbitrary entangled state~$\ket{\Psi}$ among themselves before the protocol starts and that each prover may use his part of the entangled state to determine his reply to the verifier. In each round, the provers individually receive the messages from the verifier in a message register, perform a quantum operation on this register together with their share of the entangled state, measure the message register in the computational basis, and send back the outcome to the verifier.

Formally, an \emph{entangled strategy} for~${P_1,\dots,P_k}$ in a $k$-prover~$m$-round interactive proof system with entanglement consists of the length~$l'\in\N$ of a work tape, $km$ quantum channels~$\Phi_{ij}$ from a quantum register of~$l+l'$ qubits to itself for~$1\le i\le k$ and~$1\le j\le m$, and the initial quantum state~$\ket\Psi$ of the work tape, which is a~$kl'$-qubit state.
Each channel~$\Phi_{ij}$ specifies the operation which prover~$i$ performs in round~$j$:
the first~$l$ qubits in the state correspond to the message from and to the verifier, and the last~$l'$ qubits represent the content of the work tape.
After the prover's operation, the first~$l$ qubits are measured in the computational basis and sent to the verifier.

\begin{definition}
A language $L$ is in ${\MIPstar(k, m, c, s)}$ if and only if there exists an $m$-round polynomial-time verifier $V$
for $k$-prover interactive proof systems such that, for every input $x$:
\begin{description}
\item[\textnormal{(Completeness)}] if ${x \in L}$, there exists an entangled strategy for provers ${P_1, \ldots,
P_{k}}$ such that the interaction protocol of $V$ with $(P_1,\ldots,P_k)$ results in the verifier accepting with probability at least $c$,
 \item[\textnormal{(Soundness)}] if ${x \not\in L}$, for any entangled strategy for provers ${P'_1, \ldots, P'_{k}}$, the probability that the interaction protocol of $V$ with $(P_1,\ldots,P_k)$ results in the verifier accepting is at most $s$.
\end{description}
\label{Definition: MIPstar(k,m,c,s)}
\end{definition}

In certain cases, we can simplify part of the definition of entangled strategies.
Suppose that the verifier interacts with certain prover~$P_i$ only once;
i.e., the verifier is guaranteed to send~$P_i$ the empty string (or a fixed string) in rounds other than round~$j$,
and is guaranteed to ignore the reply from~$P_i$ in rounds other than round~$j$.
In this case, instead of specifying~$m$ quantum channels to describe the behavior of~$P_i$ in the~$m$ rounds,
we may just specify measurements~$A_q=(A_q^r)$ for each message~$q$ from the verifier,
where the outcome of each measurement gives a reply to the verifier.%
\footnote{Any classical post-processing by the prover can be incorporated as part of the description of his measurement.}
Since all the interactive proof systems considered in this paper have the property that the verifier interacts with each prover only once except for one prover,
we use this simplified formulation in many places.

Note that we do not assume any upper bound on the size~$l'$ of the work tape used by each prover (in particular, we do not assume that~$l'\in\poly$; the model with this restriction is considered in~\cite{KobMat03JCSS}).
However, we do assume that they only use a finite-dimensional Hilbert space.
A more general definition is commuting-operator provers, considered by Tsirelson~\cite{Tsirelson80LMP} in the context of Bell inequalities and later in~\cite{SW08,DLTW08,NPA08NJP,yao:tsirelson}.
Although we expect that our results remain valid with minor modifications to the proofs even if dishonest provers are allowed to use arbitrary commuting-operator strategies, we have not explored this possibility.

\paragraph{Symmetry.} We will make an important use of symmetry in the protocols that we introduce. It will be a useful simplifying assumption in two respects: first it lets one assume that the set of measurements used by all provers is the same. Second, and most important, it implies that the provers' shared entangled state is also permutation-invariant. 

\begin{definition} Let $(P_1,\ldots,P_k,\ket{\Psi})$ be a $k$-prover strategy.\footnote{We think of $P_i$ as an arbitrary representation of the set of all quantum channels applied by prover $i$ throughout the protocol.} We say that this strategy is symmetric, or permutation-invariant, if $P_1 = \cdots = P_k$ and $\ket{\Psi}$ is invariant with respect to any permutation of the  subsystems corresponding to each prover.
\end{definition}

The following simple lemma (which already appears in~\cite[Lemma~4]{KKMTV11}) shows that one can always assume  without loss of generality that if a game has a certain symmetry then there is an optimal strategy for the provers which reflects that symmetry.

\begin{lemma}\label{lem:symmetry1}
Suppose an $\MIPstar$ proof system is given such that the protocol treats provers $P_1,\ldots,P_k$ symmetrically (i.e.\ the protocol is invariant under permutation of their questions and corresponding inverse-permutation of their answers). 
Then given
any strategy $P_1,\ldots,P_k$ with entangled state $\ket{\Psi}$ that succeeds with probability $p$, there exists a strategy $P'_1,\ldots, P'_k$ with entangled state $\ket{\Psi'}$ and success probability $p$ such that $P'_{1}=\cdots = P'_{k}$ and $\ket{\Psi'}$ is permutation-invariant. 
\end{lemma}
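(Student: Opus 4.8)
The plan is to build the symmetric strategy by averaging the original strategy over all permutations of the provers, using an additional classical register to record which permutation was chosen. Concretely, let $S_k$ denote the symmetric group, and for $\pi\in S_k$ let $W_\pi$ be the unitary that permutes the $k$ prover-subsystems of $\ket{\Psi}$ according to $\pi$. First I would introduce $k$ auxiliary registers, one held by each prover, jointly initialized (via shared randomness, which the provers may simulate by measuring a maximally entangled state in the computational basis, or simply by appending a classical register to $\ket{\Psi}$) in the uniformly mixed state over $\{(\pi(1),\ldots,\pi(k)):\pi\in S_k\}$; prover $i$ holds the $i$-th coordinate, i.e.\ the label $\pi^{-1}(i)$ telling him ``which original prover to impersonate.'' The new shared state is then $\ket{\Psi'} = \frac{1}{\sqrt{k!}}\sum_{\pi\in S_k} \big(W_\pi\ket{\Psi}\big)\otimes \ket{\pi}$, where $\ket{\pi}$ is distributed among the $k$ auxiliary registers as above. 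The new strategy $P'_i$ for prover $i$ is: read the auxiliary register to obtain a label $j = \pi^{-1}(i)$, then apply the original channel $P_j$ of prover $j$ (acting on the appropriate subsystem, which after the permutation $W_\pi$ is exactly the subsystem prover $i$ now holds).

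The two things to check are (i) that $\ket{\Psi'}$ is permutation-invariant and $P'_1=\cdots=P'_k$, and (ii) that the success probability is unchanged. For (i): permuting the provers' subsystems of $\ket{\Psi'}$ by some $\sigma\in S_k$ sends the $\pi$-term to $W_\sigma W_\pi \ket{\Psi}\otimes\ket{\sigma\pi} = W_{\sigma\pi}\ket{\Psi}\otimes\ket{\sigma\pi}$, and since $\pi\mapsto\sigma\pi$ is a bijection of $S_k$ the sum is invariant; and $P'_i$ is manifestly the ``same'' channel for every $i$ — ``look up your label, then impersonate that prover'' — so the strategy is symmetric in the sense of the definition. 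For (ii): condition on the classical value $\pi$ in the auxiliary registers. Conditioned on $\pi$, the shared state is $W_\pi\ket{\Psi}$ and prover $i$ runs $P_{\pi^{-1}(i)}$, so the joint behavior is exactly that of the original strategy $(P_1,\ldots,P_k,\ket{\Psi})$ but with the provers' \emph{roles relabeled} by $\pi$. Because the protocol treats the provers symmetrically — the verifier's question distribution is permutation-invariant and its acceptance predicate is invariant under simultaneously permuting questions and answers by $\pi$ — relabeling the provers by $\pi$ does not change the acceptance probability: it equals $p$ for every fixed $\pi$. Averaging over $\pi$ (with the uniform distribution built into $\ket{\Psi'}$) keeps it at $p$.

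I do not expect any serious obstacle here; the lemma is essentially bookkeeping. The one point that requires a little care is the claim that adding a classical register to the entangled state, and allowing each prover to read his share of it, is legitimate within the model of Definition~\ref{Definition: MIPstar(k,m,c,s)} — i.e.\ that the auxiliary labels can be made part of $\ket{\Psi'}$ rather than given as extra external shared randomness. This is immediate because one may purify the shared classical distribution into the entangled state itself (appending $\frac{1}{\sqrt{k!}}\sum_\pi\ket{\pi}_{A_1}\cdots\ket{\pi}_{A_k}$ and having each prover include the reading of $A_i$ as the first step of his quantum channel), and the reduced state on any single register is still classical so no prover gains information about the others beyond the common label — but in any case the model already permits an arbitrary shared $\ket{\Psi'}$ of unbounded dimension, so this causes no difficulty. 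The other routine verification is simply that ``prover $i$ applies $P_j$ to the right subsystem'': after the permutation $W_\pi$, the subsystem originally designated for prover $j=\pi^{-1}(i)$ is physically located in prover $i$'s hands, which is exactly the alignment needed for $P_j$ to act on it as intended. With these observations the proof is complete.
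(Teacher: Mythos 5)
Your proposal is correct and follows essentially the same route as the paper's own proof: superpose the permuted states $W_\pi\ket{\Psi}$ over all $\pi\in S_k$ together with a classical label register telling each prover which original prover to impersonate, and invoke the protocol's permutation symmetry to see the success probability is preserved. The only cosmetic differences are your explicit normalization and conditioning-on-$\pi$ argument (the paper also pads registers so all provers' subsystems have equal dimension, a detail worth keeping in mind for permutation-invariance to be well defined).
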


\begin{proof}
By appropriately padding
with extra qubits, assume that all $k$ registers of $\ket{\Psi}$ have the same dimension. Define strategies
$P'_1,\ldots,P'_k$ as follows: the provers share the entangled state
$\ket{\Psi'}=\sum_{\sigma\in\mathfrak{S}_{k}}
\ket{\sigma(1)}\otimes \cdots \otimes \ket{\sigma(k)}\otimes\ket{\Psi^{\sigma}}$, where the register containing $\ket{\sigma(i)}$ is given to prover $i$ and $\ket{\Psi^{\sigma}}$ is obtained from $\ket{\Psi}$ by permuting its registers according to $\sigma$. For $1\leq i \leq k$ prover $i$ measures the register containing $\ket{\sigma(i)}$ and behaves as in the strategy $P_{\sigma(i)}$. By the assumed symmetry of the protocol this new strategy has the same success probability $p$, and $\ket{\Psi'}$ has the required symmetry properties.
\end{proof}

The following claim states a trivial but useful fact about symmetric one-round strategies.

\begin{claim}\label{claim:prelim_symmetryuse} Let $(P_1,\ldots,P_k,\ket{\Psi})$ be a symmetric one-round strategy, and for every $i\in\{1,\ldots,k\}$, $\{A_i^a\}_a$ a measurement for the $i$-th prover in that strategy. Then for every permutation $\sigma$ on $\{1,\ldots,k\}$, and every $(a_1,\ldots,a_k)$,
$$ \bra{\Psi} A_1^{a_1} \otimes \cdots \otimes A_k^{a_k} \ket{\Psi} \,=\, \bra{\Psi} A_{\sigma(1)}^{a_{\sigma(1)}} \otimes \cdots \otimes A_{\sigma(k)}^{a_{\sigma(k)}} \ket{\Psi}.$$
\end{claim}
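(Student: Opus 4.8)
The claim to prove is Claim~\ref{claim:prelim_symmetryuse}: for a symmetric one-round strategy, expectation values of tensor products of the provers' measurement operators are invariant under simultaneously permuting which prover each operator is applied to and which outcome is recorded.

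The plan is to reduce the statement to the defining property of a permutation-invariant state together with the fact that all provers use the same measurement. First I would recall that, by the definition of a symmetric (permutation-invariant) strategy, we have $P_1 = \cdots = P_k$, so there is a single measurement family $\{A^a\}_a$ with $A_i^a = A^a$ for every prover $i$; and $\ket{\Psi}$ is invariant under the natural action of the symmetric group $\mathfrak{S}_k$ on the $k$ prover registers, i.e. for every $\sigma$ there is a unitary $U_\sigma$ implementing the permutation of registers with $U_\sigma\ket{\Psi} = \ket{\Psi}$. Then I would write
\[
  \bra{\Psi} A_1^{a_1} \otimes \cdots \otimes A_k^{a_k} \ket{\Psi}
  \;=\; \bra{\Psi}\, U_\sigma^\dagger\, (A_1^{a_1}\otimes\cdots\otimes A_k^{a_k})\, U_\sigma \,\ket{\Psi},
\]
using $U_\sigma\ket{\Psi}=\ket{\Psi}$, and then observe that conjugating a tensor product of single-register operators by the register-permutation unitary $U_\sigma$ just relabels which register each factor sits in, so
\[
  U_\sigma^\dagger\,(A_1^{a_1}\otimes\cdots\otimes A_k^{a_k})\,U_\sigma
  \;=\; A_{\sigma^{-1}(1)}^{a_{\sigma^{-1}(1)}}\otimes\cdots\otimes A_{\sigma^{-1}(k)}^{a_{\sigma^{-1}(k)}}.
\]
(Care must be taken with the direction of the permutation: conjugation by $U_\sigma$ sends the operator in register $i$ to register $\sigma(i)$, or $\sigma^{-1}$ depending on conventions; since the statement quantifies over all permutations this is immaterial, but I would fix a convention explicitly to keep the bookkeeping honest.) Finally, because all $A_i^a$ are the same operator $A^a$, the operator on the right is literally $A_{\sigma(1)}^{a_{\sigma(1)}}\otimes\cdots\otimes A_{\sigma(k)}^{a_{\sigma(k)}}$ once the outcome labels are matched up, giving the claimed equality.

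This is a genuinely routine fact, so there is no real obstacle; the only place one must be slightly careful is the consistent treatment of the two simultaneous relabelings — of the registers (via $U_\sigma$) and of the outcome indices $a_i$ — so that the permutation applied to the operators and the permutation applied to the answer tuple are compatible. I would state once and for all that "symmetric" means invariance under the representation $\sigma\mapsto U_\sigma$ on $\calH^{\otimes k}\otimes\calH'$ that permutes the first $k$ factors and acts trivially on the ancilla $\calH'$ (consistent with the abuse of notation already set up in Section~\ref{sec:prelim-notation} for the reduced density $\rho$), and then the computation above is a two-line manipulation. No auxiliary lemmas beyond the definition of symmetry are needed.
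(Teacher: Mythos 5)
Your proof is correct and is exactly the argument the paper implicitly relies on: the paper states Claim~\ref{claim:prelim_symmetryuse} without proof as a triviality, and conjugating the tensor product by the register-permutation unitary $U_\sigma$ that fixes $\ket{\Psi}$, then using the quantification over all $\sigma$ to absorb the $\sigma$-versus-$\sigma^{-1}$ bookkeeping, is the intended one-liner. One minor remark: your final appeal to ``all $A_i^a$ are the same operator $A^a$'' is unnecessary (and in the paper's applications the $A_i$ correspond to different questions, so they need not coincide); the conjugation step alone, which only needs the registers to be identified and $\ket{\Psi}$ to be permutation-invariant, already gives the claim.
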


\subsection{$\NEXP$-complete problems}\label{sec:nexpproblems}

We will use the following $\NEXP$-complete problem, whose~$\NEXP$-completeness was shown by Papadimitriou and Yannakakis~\cite{PapYan86IC}:

\begin{problem}[Succinct 3-colorability]
  \textit{Instance.}
  An integer~$n\in\N$ in unary
  and a Boolean circuit~$C$ for a function~$\{0,1\}^n\times\{0,1\}^n\to\{0,1\}$
  which represents the adjacency matrix of a graph on~$2^n$ vertices.

  \textit{Question.}
  Is the graph represented by~$C$ 3-colorable?
\end{problem}

Using the standard technique of arithmetization
(see e.g.\ Proposition~3.1 and Lemma~7.1 of Ref.~\cite{BabForLun91CC}), one can show that the following problem is also $\NEXP$-complete.

\begin{problem}[Succinct 3-colorability, arithmetized version] \label{problem:ora-3-sat-arith}
  \textit{Instance.}
  Integers~$r,n\in\N$ in unary
  and an arithmetic expression%
  \footnote{An \emph{arithmetic expression} is a rooted tree whose internal nodes represent either addition or multiplication
    and whose leaves represent either variables or an integer constant.
    The size of an arithmetic expression is the number of nodes
    plus the sum of the number of bits required to represent the integer for each constant node.}
  for a polynomial~$f(\alpha,\vct{z},\vct{b}_1,\vct{b}_2,\allowbreak a_1,a_2)$,
  where~$\vct{z}$ represents~$r$ variables
  and each of~$\vct{b}_1,\vct{b}_2$ represents~$n$ variables.

  \textit{Yes-promise.}
  If~$\F$ is a field with more than two elements and~$\alpha\in\F\setminus\{0,1\}$,
  then there exists a mapping~$A\colon\{0,1\}^n\to\{0,1,\alpha\}$ such that
  for all~$\vct{z}\in\{0,1\}^r$ and all~$\vct{b}_1,\vct{b}_2\in\{0,1\}^n$, it holds that
  \begin{equation}
    f(\alpha,\vct{z},\vct{b}_1,\vct{b}_2,A(\vct{b}_1),A(\vct{b}_2))=0.
    \label{eq:arithmetized-constraint}
  \end{equation}

  \textit{No-promise.}
  If~$\F$ is a field with more than two elements and~$\alpha\in\F\setminus\{0,1\}$,
  then for every mapping~$A\colon\{0,1\}^n\to \F$,
  there exist~$\vct{z}\in\{0,1\}^r$ and~$\vct{b}_1,\vct{b}_2\in\{0,1\}^n$
  such that Eq.~(\ref{eq:arithmetized-constraint}) is not satisfied.
\end{problem}

We note that the degree of the polynomial $f$ represented by the arithmetic expression can be at most the size of the arithmetic expression, and is therefore bounded by the input size. 

\subsection{Summation test}\label{sec:summation-test}

Let~$\Fp$ be a finite field of characteristic two.%
\footnote{The restriction to fields of characteristic two
  arises from the use of Theorem~\ref{theorem:small-bias}
  in Appendix~\ref{appendix:and-test}.}
If~$\abs{\Fp}=2^k$,
an encoding scheme of elements in~$\Fp$ is specified by~$k$
and an irreducible polynomial~$f(t)$ over~$\Ftwo$ of degree~$k$.
In particular, if~$\abs{\Fp}=2^{2\cdot3^e}$,
then it is known that~$f(t)=t^{2\cdot3^e}+t^{3^e}+1$ is irreducible over~$\Ftwo$,
and this specifies an encoding scheme for~$\Fp$
(see Appendix~G.3 of Goldreich~\cite{Goldreich08}).%
\footnote{Alternatively, we can use a deterministic polynomial-time algorithm to find an irreducible polynomial
  of a specified degree over~$\Ftwo$ by Shoup~\cite{Shoup90MCOM}.}

Consider the following promise problem, which has both an explicit and an implicit input.

\begin{problem}[Summation Test Problem]
  \textit{Explicit input.}
  Integers~$m,d\in\N$ in unary, and an encoding scheme of a finite field~$\Fp$
  of characteristic two.

  \textit{Implicit input.}
  A mapping~$h\colon \Fp^m\to \Fp$.

  \textit{Promise.}
  The given encoding scheme is valid,
  and the mapping~$h\colon \Fp^m\to \Fp$ is a polynomial function of degree at most~$d$ in each variable.

  \textit{Question.}
  Is
  \begin{equation}
    \sum_{\vct{x}\in\{0,1\}^m}h(\vct{x})=0 \quad \text{(in~$\Fp$)?}
    \label{eq:summation-test}
  \end{equation}
\end{problem}

In a (single-prover) interactive proof system for a problem with an implicit input,
the implicit input is given to the verifier as an oracle.%
\footnote{In Ref.~\cite{BabForLun91CC}, the authors refer to the interactive proof system for the Summation Test Problem
  as an ``interactive oracle-protocol,''
  viewing the mapping~$h$ as an exponentially long certificate string which is given to the verifier as an oracle.
  However, for our purposes it will be more convenient to treat~$h$ as part of the input.}
The following variant of the summation test of Lund, Fortnow, Karloff, and Nisan~\cite{LunForKarNis92JACM}
is a special case of Lemma~3.5 in Ref.~\cite{BabForLun91CC}.

\begin{lemma}[Summation test~\cite{BabForLun91CC}] \label{lemma:summation-test}
  Suppose that $\abs{\Fp}\geq 2dm$. Then there exists a single-prover interactive proof system for the Summation Test Problem with perfect completeness and soundness error at most~$dm/\abs{\Fp}$.
  Moreover, in this interactive proof system, the verifier behaves as follows.
  First he chooses~$\vct{q}\in \Fp^m$ uniformly at random.
  Then he interacts with the prover.
  At the same time, he reads the value~$h(\vct{q})$ from the implicit input.
  Finally he accepts or rejects depending on~$\vct{q}$, $h(\vct{q})$, and the interaction with the prover.%
  \footnote{In particular, this implies that the verifier reads only one value~$h(\vct{q})$ from the implicit input
    and the position~$\vct{q}\in \Fp^m$ to read is chosen uniformly in~$\Fp^m$.
    Together with the soundness guarantee,
    this in turn implies that if the implicit input is~$\delta$-close to a polynomial function~$\tilde{h}$ of degree at most~$d$ in each variable
    and~$\tilde{h}$ fails to satisfy the equation~(\ref{eq:summation-test}),
    then the verifier accepts with probability at most~$\delta+dm/\abs{\Fp}$
    no matter what the prover does.}
\end{lemma}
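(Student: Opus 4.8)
The plan is to instantiate the sum-check protocol of Lund, Fortnow, Karloff, and Nisan~\cite{LunForKarNis92JACM} and to track its error carefully. The verifier holds $m,d$ in unary, an encoding scheme for $\Fp$, and oracle access to the polynomial $h\colon\Fp^m\to\Fp$ (of degree at most $d$ in each variable), and wants to decide whether $\sum_{\vx\in\{0,1\}^m}h(\vx)=0$. The interaction runs for $m$ rounds; the verifier maintains a running target $v_0:=0\in\Fp$, and in round $i$ expects the prover to send the $d+1$ coefficients of a univariate polynomial $g_i$, the honest value being
\[
  g_i(t)\;=\;\sum_{x_{i+1},\dots,x_m\in\{0,1\}}h(q_1,\dots,q_{i-1},t,x_{i+1},\dots,x_m).
\]
The verifier rejects unless $g_i(0)+g_i(1)=v_{i-1}$; otherwise he draws $q_i\in\Fp$ uniformly at random, sets $v_i:=g_i(q_i)$, and proceeds. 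After round $m$ he queries the oracle for $h(q_1,\dots,q_m)$ and accepts if and only if it equals $v_m$. All messages have length $\poly(m,d,\log\abs{\Fp})$, so the verifier runs in polynomial time with a single oracle query.

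Perfect completeness is a telescoping computation: if $\sum_{\vx}h(\vx)=0$ and the prover sends the honest $g_i$, then $g_i(0)+g_i(1)=\sum_{x_i,\dots,x_m}h(q_1,\dots,q_{i-1},x_i,\dots,x_m)$ equals $v_{i-1}$ for every $i\geq 1$ (and equals $0$ for $i=1$), while $g_m(q_m)=h(q_1,\dots,q_m)$; so every check passes with certainty. For the ``moreover'' clause, note that each $q_i$ is uniform in $\Fp$ and independent of the transcript preceding it, so the verifier may equivalently sample the whole tuple $\vct{q}=(q_1,\dots,q_m)\in\Fp^m$ uniformly at random before the interaction begins and then run the protocol with these fixed values; this changes no joint distribution, and since the sole oracle query is to $h(\vct{q})$ at the end, it yields exactly the claimed form of the verifier.

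For soundness, suppose $\sum_{\vx\in\{0,1\}^m}h(\vx)\neq 0$, fix an arbitrary cheating prover, and let $\hat g_i$ denote the degree-$\leq d$ polynomial it actually sends in round $i$ (possibly depending on $q_1,\dots,q_{i-1}$) and $g_i$ the honest polynomial anchored at those same $q_j$'s; call round $i$ \emph{bad} if $\hat g_i\neq g_i$. If the verifier does not reject at round $1$'s check then $\hat g_1(0)+\hat g_1(1)=0\neq g_1(0)+g_1(1)$, so round $1$ is bad. The inductive heart of the argument is that a bad round propagates: if round $i$ is bad then $\hat g_i$ and $g_i$ are distinct polynomials of degree at most $d$, hence agree on at most $d$ points, so with probability at least $1-d/\abs{\Fp}$ over $q_i$ we get $v_i=\hat g_i(q_i)\neq g_i(q_i)$; since $g_i(q_i)=g_{i+1}(0)+g_{i+1}(1)$ by telescoping (and $g_m(q_m)=h(\vct{q})$), in this event the round-$(i{+}1)$ check forces either rejection or round $i{+}1$ bad, while for $i=m$ it forces the final oracle check to reject. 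A union bound over the $m$ random choices $q_1,\dots,q_m$ then shows the verifier rejects with probability at least $1-dm/\abs{\Fp}$, which is at most $1/2$ under the hypothesis $\abs{\Fp}\geq 2dm$. The main obstacle is purely the bookkeeping of this soundness argument — correctly propagating a ``false target'' through the rounds at a cost of only $d/\abs{\Fp}$ each — and there is no conceptual subtlety; in particular no quantum issue arises, since the summation test is a single-prover protocol.
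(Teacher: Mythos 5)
Your proof is correct and follows essentially the same route as the paper, which does not reprove this lemma but imports it as a special case of Lemma~3.5 of Babai--Fortnow--Lund (i.e.\ the standard LFKN sum-check analysis you reproduce), and your observation that the verifier's coins $q_1,\dots,q_m$ can be pre-sampled without changing the joint distribution correctly yields the ``moreover'' clause about a single uniformly chosen oracle query $h(\vct{q})$. The only cosmetic points are that mid-round rejections should be deferred to the end of the interaction so the verifier literally interacts first and decides last, and that the hypothesis $\abs{\Fp}\ge 2dm$ is not used in the argument itself but only makes the $dm/\abs{\Fp}$ soundness bound nontrivial.
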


To apply the summation test to Problem~\ref{problem:ora-3-sat-arith},
we have to consider exponentially many constraints instead of one.

\begin{problem}[AND Test Problem]
  \textit{Explicit input.}
  Integers~$k,d\in\N$ in unary, and an encoding scheme
  of a finite field~$\Fp$ of characteristic two.

  \textit{Implicit input.}
  A mapping~$h\colon \Fp^k\to \Fp$.

  \textit{Promise.}
  The given encoding scheme is valid,
  and the mapping~$h\colon \Fp^k\to \Fp$ is a polynomial function of degree at most~$d$ in each variable.

  \textit{Question.}
  Is~$h(\vct{i})=0$ (in~$\Fp$) for all~$\vct{i}\in\{0,1\}^k$?
\end{problem}

The idea for the following corollary is already explained in Section~7.1
of Ref.~\cite{BabForLun91CC}.
We will give a proof in Appendix~\ref{appendix:and-test}
for the sake of completeness.

\begin{corollary} \label{corollary:and-test}
  There exists a polynomial~$q\colon\N\times\N\to\N$ for which the following holds.
 There exists a single-prover interactive proof system for the AND Test Problem with perfect completeness and soundness error at most~$5/8+q(k,d)/\abs{\Fp}$.
  Moreover, in this interactive proof system, the verifier behaves as follows.
  First he chooses~$\vct{i}\in \Fp^k$ uniformly and independently at random.
  Then he interacts with the prover.
  At the same time, he reads the value~$h(\vct{i})$ from the implicit input.
  Finally he accepts or rejects depending on~$\vct{i}$, $h(\vct{i})$, and the interaction with the prover.
\end{corollary}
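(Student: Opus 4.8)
The plan is to reduce the AND Test Problem to the Summation Test Problem by combining the exponentially many constraints $h(\vct{i})=0$, $\vct{i}\in\{0,1\}^k$, into a single summation constraint through a random linear combination. Concretely, the verifier picks a random small-bias-type weighting of the points of the Boolean cube and asks the prover to convince him that $\sum_{\vct{i}\in\{0,1\}^k} \lambda(\vct{i})\,h(\vct{i}) = 0$ for the chosen weights $\lambda$. If all constraints hold, this sum is $0$ regardless of the weights (perfect completeness); if at least one constraint fails, then for a suitable distribution on $\lambda$ the sum is nonzero with constant probability, which is where the $5/8$ term in the soundness bound will come from. The technical device for this is exactly Theorem~\ref{theorem:small-bias} referenced in Appendix~\ref{appendix:and-test}: one uses a $\delta$-biased distribution over $\Ftwo^{2^k}$ (or, more practically, over the coefficient vector of a low-degree polynomial that realizes the weights) so that $\lambda$ has a succinct description of size $\poly(k)$ yet the inner product with any fixed nonzero target vector is nonzero with probability close to $1/2$. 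This is why the field is required to have characteristic two.

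First I would make the weights $\lambda$ themselves a low-degree polynomial in the variable $\vct{i}$, say $\lambda(\vct{i}) = \prod_{j} \ell_j(i_j)$ or a sum of few such products, so that the product $\lambda\cdot h$ is still a polynomial of degree at most $d' = d + O(1)$ in each variable, with $d'$ bounded by a polynomial in $k,d$. Then $H(\vct{x}) := \lambda(\vct{x})\,h(\vct{x})$ satisfies the promise of the Summation Test Problem with parameters $m=k$ and $d'$, and $\sum_{\vct{x}\in\{0,1\}^k}H(\vct{x})=0$ iff the random linear combination of the $h(\vct{i})$'s vanishes. Next I would invoke Lemma~\ref{lemma:summation-test}: since the verifier there reads $H$ at a single uniformly random point $\vct{q}\in\Fp^k$, and since $H(\vct{q}) = \lambda(\vct{q})h(\vct{q})$ with $\lambda(\vct{q})$ computable by the verifier directly, the verifier only needs to read the value $h(\vct{q})$ from the implicit input, at a uniformly random position, as required by the statement of the corollary. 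Combining the soundness error $d'k/|\Fp|$ from Lemma~\ref{lemma:summation-test} with the constant-probability failure of the random linear combination gives an overall soundness error of the form $5/8 + q(k,d)/|\Fp|$ for an appropriate polynomial $q$, where the $5/8$ absorbs both the $(1/2+\delta)$-type bound from the small-bias sampling and any union-bound slack, and $q(k,d)/|\Fp|$ collects $d'k/|\Fp|$ together with the bias parameter $\delta$ expressed as a function of the field size.

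The main obstacle I expect is the soundness argument for the random linear combination: one must show that if $h$ is not identically zero on $\{0,1\}^k$ then, over the verifier's choice of the succinctly-described weighting polynomial $\lambda$, the value $\sum_{\vct{i}} \lambda(\vct{i}) h(\vct{i})$ is nonzero with probability at least roughly $1/2$, \emph{and} that this remains true when the verifier subsequently runs the (imperfect) summation test on $H$ rather than having direct access to the sum. This requires the small-bias property to be quoted in exactly the right form — namely that the coefficient vector of $\lambda$, when drawn from the $\delta$-biased source, has nonzero inner product with the fixed nonzero vector $(h(\vct{i}))_{\vct{i}\in\{0,1\}^k}\in\Fp^{2^k}$ (after an appropriate $\Ftwo$-linear reinterpretation of $\Fp$) with probability at least $1/2-\delta$ — and then carefully tracking how the two error sources compose; the additive constant in the final bound is deliberately loose ($5/8$ rather than something closer to $1/2$) precisely to give room for this composition. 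The remaining steps — verifying that $\lambda\cdot h$ still has degree at most $d'$ in each variable, that $d'$ and the description length of $\lambda$ are polynomially bounded, and that the field-size hypothesis $|\Fp|\geq 2d'k$ of Lemma~\ref{lemma:summation-test} can be assumed — are routine and I would relegate them to the appendix proof.
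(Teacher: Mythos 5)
Your overall skeleton matches the paper's: randomly combine the $2^k$ constraints into a single summation, send the (succinctly described) randomness to the prover, and invoke Lemma~\ref{lemma:summation-test}, with the verifier computing the weight at the queried point himself and reading only $h(\vct{q})$ at a uniformly random $\vct{q}$. But the crux of the argument is exactly the point you leave unresolved, and as written the two mechanisms you invoke do not fit together. The small-bias generator of Theorem~\ref{theorem:small-bias} produces its weights via a polynomial-time \emph{Boolean circuit} $C(K,M,\vct{i},\vct{j})$; it does not hand you the weight $\zeta^{(\vct{j})}_{\vct{i}}$ as a low-degree polynomial in $\vct{i}$ that the verifier can evaluate and multiply into $h$. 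This is why the paper must arithmetize the generator (Corollary~\ref{corollary:small-bias-arith}), which expresses the weight only as an exponential sum $\zeta^{(\vct{j})}_{\vct{i}}=\sum_{\vct{l}\in\{0,1\}^t}f(\vct{i},\vct{j},\vct{l})$ over auxiliary Boolean variables, so the summation test is run in $k+t$ variables on $h_{\vct{j}}(\vct{i},\vct{l})=f(\vct{i},\vct{j},\vct{l})h(\vct{i})$ with degree $d+d'$, not in $k$ variables on $\lambda(\vct{i})h(\vct{i})$ with degree $d+O(1)$ as you assume. Conversely, the concrete weighting you actually write down, $\lambda(\vct{i})=\prod_j\ell_j(i_j)$ with random affine $\ell_j$, is not an instance of the $\varepsilon$-biased construction, and the ``inner product nonzero with probability $\ge 1/2-\delta$'' bound (Proposition~\ref{proposition:small-bias}) you lean on has no justification for it. So neither candidate weighting is correctly instantiated: the small-bias one lacks the arithmetization step that makes it usable inside the summation test, and the product-form one lacks any soundness analysis.

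It is worth noting that the product-form route can be completed, but by a different argument than the one you give: writing $\sum_{\vct{i}\in\{0,1\}^k}h(\vct{i})\prod_j(a_ji_j+b_j)=\sum_{\vct{i}}h(\vct{i})\prod_{j:i_j=1}(a_j+b_j)\prod_{j:i_j=0}b_j$ and viewing it as a multilinear polynomial of total degree $k$ in the $2k$ random coefficients, the Schwartz--Zippel lemma (Lemma~\ref{lem:sz}) shows it vanishes with probability at most $k/\abs{\Fp}$ whenever some $h(\vct{i})\neq 0$. That yields a protocol in $k$ variables with degree $d+1$ per variable and soundness $O(dk/\abs{\Fp})$ --- strictly stronger than the $5/8+q(k,d)/\abs{\Fp}$ of the corollary, and with no small-bias machinery at all. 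If you intend that route, drop Theorem~\ref{theorem:small-bias} entirely and carry out the Schwartz--Zippel analysis over the random coefficients; if you intend the paper's route, you must add the arithmetization of the generator and account for the extra variables $\vct{l}$ and the degree $d'$ in the final error term. As submitted, the proposal sits between the two and neither soundness argument goes through.
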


%---------------------------------
%					Protocol 
%---------------------------------

\section{A proof system for Succinct $3$-Colorability}\label{sec:protocol}

In this section we prove~Theorem~\ref{thm:nexp_main}, assuming the soundness of the multilinearity game (see Theorem~\ref{thm:lintestclose} below),
which will be proved in Sections~\ref{sec:multilin} and~\ref{sec:mainind}.
We first describe a three-prover poly-round proof system for the NEXP-complete Succinct $3$-satisfiability problem, Problem~\ref{problem:ora-3-sat-arith}, in Section~\ref{sec:3sat-protocol}. In Section~\ref{sec:3sat-comp} we show that the protocol has perfect completeness with classical provers, and in Section~\ref{sec:3sat-sound} we show that it has soundness error at most~$1-1/\poly$ with entangled provers. Theorem~\ref{thm:nexp_main} is then obtained by repeating this protocol sequentially.

\subsection{Description of the protocol}\label{sec:3sat-protocol}

We construct a three-prover poly-round proof system for Problem~\ref{problem:ora-3-sat-arith}. Our protocol follows that of~\cite{BabForLun91CC} for the Oracle-3-Satisfiability problem very closely.
In the protocol or~\cite{BabForLun91CC}, the verifier makes three queries to the oracle which answers a Boolean value.
Because our problem is Succinct 3-Colorability instead of Oracle-3-Satisfiability, the verifier would make two queries to the oracle which answers a ternary value.
We replace these two queries to the oracle by queries to two distinct provers.

Label the provers as~$P_1,P_2,P_3$. The protocol will be symmetric under any permutation of the three provers.
Let~$(r,n,f)$ be an instance of Problem~\ref{problem:ora-3-sat-arith}, as described in Section~\ref{sec:nexpproblems}. 
Let~$d_f$ be the maximum degree of~$f$ in any one variable.
Let~$m=r+2n$ and~$d=2d_f$.
Let~$0<c_0<1$ be a constant defined later
(in Theorem~\ref{thm:lintestclose}),
and~$p$ be the smallest number of the form~$p=2^{2\cdot 3^e}$
such that~$p>\max\{8q(m,d),n^{1/c_0+4}\}$,
where~$q$ is the polynomial appearing
in the statement of Corollary~\ref{corollary:and-test}.
Let~$\Fp$ be the finite field of size~$p$.
As was noted in Section~\ref{sec:summation-test},
an explicit encoding scheme for~$\Fp$ is known in this case.
In the protocol, all arithmetic operations in~$\Fp$ are performed using this encoding scheme.

In the protocol, each prover~$P_i$ is told explicitly to play one of the following two roles:
\begin{itemize}
\item
  \emph{Lookup prover}: $P_i$ receives an element of~$\Fp^n$, and responds with an element of~$\Fp$.
  In this case, the interaction between the verifier and~$P_i$ takes only one round.
\item
  \emph{AND-test prover}: $P_i$ acts as the prover in the protocol for the AND test (Corollary~\ref{corollary:and-test}).
  In this case, the interaction between the verifier and~$P_i$ takes polynomially many rounds.
\end{itemize}

The verifier performs one of the following five tests chosen uniformly at random:
\begin{itemize}
\item
  \emph{Consistency test.}
  He tells each prover to act as a lookup prover.
  He chooses~$\vct{x}\in \Fp^n$ uniformly at random
  and sends the same question~$\vct{x}$ to provers~$P_1,P_2,P_3$.
  He expects each prover to answer with an element of $\Fp$,
  and accepts if and only if all the answers are equal.
\item
  \emph{Linearity test.}
  He tells each prover to act as a lookup prover.
  He chooses~$i\in\{1,\dots,n\}$, $\vct{x}\in \Fp^n$ and~$y_i\neq z_i\in \Fp\backslash\{x_i\}$ uniformly at random,
  and sets~$y_j=z_j=x_j$ for every~$j\in\{1,\dots,n\}\setminus\{i\}$.
  He sends~$\vct{x}$ to~$P_1$, $\vct{y}$ to~$P_2$, and~$\vct{z}$ to~$P_3$.
  He receives $a,b,c\in\Fp$ from these three provers,
  and accepts if and only if
  \[
    \frac{b-a}{y_i-x_i}=\frac{c-b}{z_i-y_i}=\frac{c-a}{z_i-x_i}.
  \]
\item
  \emph{AND test with~$P_3$ as the AND-test prover.}
  He tells provers~$P_1$ and~$P_2$ to act as lookup provers,
  and~$P_3$ to act as an AND-test prover.
  He chooses~$\alpha\in\Fp\setminus\{0,1\}$ in some canonical way;
  e.g.\ set~$\alpha=t$ when~$\Fp$ is viewed as~$\Ftwo[t]/(t^{2\cdot3^e}+t^{3^e}+1)$.
  Then, the verifier simulates the interactive proof system from Corollary~\ref{corollary:and-test} with the explicit input~$(m,d)$ and prover $P_3$.
  When the verifier in Corollary~\ref{corollary:and-test}
  tries to read the value~$h(\vct{z},\vct{b}_1,\vct{b}_2)$ in the implicit input,
  where~$\vct{z}\in \Fp^r$ and~$\vct{b}_1,\vct{b}_2\in \Fp^n$,
  our verifier simulates this by sending~$\vct{b}_1$ to~$P_1$ and~$\vct{b}_2$ to~$P_2$.
  Upon obtaining answers~$a_1,a_2$ to his queries from these two provers,
  he evaluates~$f(\alpha,\vct{z},\allowbreak\vct{b}_1,\vct{b}_2,\allowbreak a_1,a_2)$ and uses the result as the value of~$h(\vct{z},\vct{b}_1,\vct{b}_2)$.
\item
  \emph{AND test with~$P_1$ as the AND-test prover.}
  The same as above, with~$P_1$ and~$P_3$ swapped.
\item
  \emph{AND test with~$P_2$ as the AND-test prover.}
  The same as above, with~$P_2$ and~$P_3$ swapped.
\end{itemize}

Note that each prover is asked a question~$\vct{x}\in \Fp^n$ distributed uniformly at random
except when he is told to act as an AND-test prover.

\subsection{Completeness}\label{sec:3sat-comp}

Let~$(r,n,f)$ be a yes-instance of Problem~\ref{problem:ora-3-sat-arith}.
Then there exists a mapping~$A\colon\{0,1\}^n\to\Fp$ such that Eq.~(\ref{eq:arithmetized-constraint}) is satisfied
for all~$\vct{z}\in\{0,1\}^r$ and all~$\vct{b}_1,\vct{b}_2\in\{0,1\}^n$ simultaneously.
Let~$g$ be the unique extension of~$A$ to a multilinear function~$g\colon \Fp^n\to \Fp$.
Each lookup prover answers~$g(\vct{b})$ on question~$\vct{b}\in \Fp^n$.
Then it is clear that this deterministic strategy is accepted with certainty in the consistency test and the linearity test.
In the AND test, note that the value of~$h(\vct{z},\vct{b}_1,\vct{b}_2)$
which the verifier uses is given by
\[
  h(\vct{z},\vct{b}_1,\vct{b}_2)
  =
  f(\alpha,\vct{z},\vct{b}_1,\vct{b}_2,g(\vct{b}_1),g(\vct{b}_2)),
\]
which is a polynomial in~$\vct{z},\vct{b}_1,\vct{b}_2$
of degree at most~$2d_f=d$ in each variable.
Therefore, the promise in Corollary~\ref{corollary:and-test} is satisfied
and the AND-test prover has a strategy which makes the verifier accept with certainty.

\subsection{Soundness}\label{sec:3sat-sound}

The soundness analysis is divided in two parts. First we analyze the consistency and linearity tests, which only involve the questions in~$\Fp^n$, and show that success in those tests implies the following. (We refer the reader to Section~\ref{sec:prelim} for some relevant notation and definitions.)

\begin{theorem}\label{thm:lintestclose}
There exist positive universal constants $c_0<1$, $c<1$, and~$C>1$ such that the following holds.
Let~$n\ge1$ be an integer.
Let~$\Fp$ be a finite field, and $(\ket{\Psi},\{A_{\vx}^a\})$ a (symmetric, projective) strategy for the provers in the three-player multilinearity game in~$n$ variables over~$\Fp$ (as defined in Definition~\ref{def:multilin-strategy} below) that passes both the consistency and the linearity tests with probability at least~$1-\eps$. Assume furthermore that $p:=\abs{\Fp}\geq n^4\eps^{-1/2}$ and $\eps \leq n^{-2/c_0}$. Then there exists a sub-measurement~$\{V^g\}$, indexed by multilinear $g\in\ML(\Fp^n, \Fp)$, such that
\beq\label{eq:mlclose}
 \Es{\vx} \sum_a \Tr_\rho\bigl((A_\vx^a - \sqrt{V_\vx^a})^2\bigr) \, \leq \, C\,\eps^{c},
 \eeq
where for every $\vx\in\Fp^n$ and $a\in\Fp$ we defined $V_\vx^a := \sum_{g:\,g(\vx)=a} V^g$.
\end{theorem}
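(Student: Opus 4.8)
The plan is to prove the statement by induction on $n$, following the skeleton of Babai, Fortnow, and Lund's multilinearity analysis but carrying a quantum (consistency-based) version of each step. The base case $n=1$ is handled by a direct reconstruction argument: I would define a linear-valued measurement $\{B^g\}_{g\in\ML(\Fp,\Fp)}$ by measuring the state with $\{A_{x_1}^a\}$ for a uniformly random $x_1$, then with $\{A_{x_1'}^a\}$ for an independent uniform $x_1'$, and outputting the unique affine $g$ with $g(x_1)=a$, $g(x_1')=a'$. The consistency test tells us that applying $\{A_\vx^a\}$ on two copies of the state gives matching answers with probability $\ge 1-2\eps$, and the linearity test tells us that the three evaluated points are collinear with probability $\ge 1-2\eps$; combining these with Winter's gentle-measurement lemma (to argue the first measurement barely disturbs the state before the second is applied) shows that $\{A_{x_1}^a\}$ and the ``answer with $g(x_1)$'' strategy derived from $\{B^g\}$ are $O(\eps)$-consistent. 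This is essentially Claim~\ref{claim:lines}.

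For the inductive step I would assume the theorem for $n-1$ variables and build a family of sub-measurements of arity $k$ for $k=1,\dots,n$, interpolating between the original $\{A_\vx^a\}$ (arity $0$) and the desired $\{V^g\}$ (arity $n$). The construction at stage $k$ uses the one-dimensional reconstruction of the previous paragraph in the $k$-th coordinate, applied to the stage-$(k-1)$ sub-measurement, and glues together the pieces via the quantum analogue of BFL's pasting lemma (Lemma~\ref{lem:lemma2}): from a family that is consistent ``coordinate by coordinate'' one extracts a single multilinear-valued family whose answers are globally consistent with the original. The key invariant maintained along the induction is that the arity-$k$ family $\{B_{\vx_{>k}}^g\}$ is $\eps_k$-consistent with $\{A_\vx^a\}$ in the sense of~\eqref{eq:cons-intro-0}, i.e.\ playing $g(\vx_{\le k})$ versus playing $a$ yields matching answers with probability $\ge 1-\eps_k$; the conclusion~\eqref{eq:mlclose} then follows at $k=n$ by translating consistency of $\{A_\vx^a\}$ with itself, $\{V_\vx^a\}$ with itself, and the two with each other into the trace-norm bound via the gentle-measurement argument sketched around~\eqref{eq:cons-intro} (and recorded in Claim~\ref{claim:pi}).

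The main obstacle, exactly as flagged in Section~\ref{sec:multilin-game}, is controlling the blow-up of $\eps_k$: the naive pasting argument loses a constant (or worse, a factor polynomial in $n$) per coordinate, so after $n$ coordinates one would have $\eps_n$ exponentially larger than $\eps$, which is useless. To defeat this I would, at each stage, not simply take the reconstructed family but instead replace it by the solution of a convex optimization problem — the ``self-improvement'' / self-correction step — which finds the sub-measurement of the correct arity that is most consistent with the original $\{A_\vx^a\}$, and then argue (using the large field size $p \ge n^4 \eps^{-1/2}$, together with Schwartz–Zippel-type bounds to say a low-degree function agreeing with a multilinear one on enough of $\Fp^n$ must equal it) that this optimum is already $\operatorname{poly}(n)\cdot\eps$-consistent rather than merely $2^k \eps$-consistent. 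This is the quantum replacement for BFL's use of hypercube expansion, and it is the step where the hypotheses $p\ge n^4\eps^{-1/2}$ and $\eps\le n^{-2/c_0}$ are consumed; getting the bookkeeping to close so that the final error is $C\eps^c$ for universal $c<1$ (rather than $c$ degrading with $n$) is the delicate part and is carried out in Sections~\ref{sec:multilin} and~\ref{sec:mainind}.
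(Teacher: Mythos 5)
Your proposal is correct and follows essentially the same route as the paper: the arity-by-arity induction with Claim~\ref{claim:lines} as the base case, the pasting step of Lemma~\ref{lem:lemma2} combined with the convex-program self-improvement of Lemma~\ref{lem:lemma1} to prevent exponential error blow-up, and a final conversion of the consistency between $\{V^g\}$ and $\{A_\vx^a\}$ (Theorem~\ref{thm:lintest}) into the bound~\eqref{eq:mlclose} via a Cauchy--Schwarz/gentle-measurement argument. The only small inaccuracy is that the paper does not \emph{replace} hypercube expansion by the convex program but uses both inside the self-improvement step (the expansion argument of Claim~\ref{claim:expand} shows the optimizer is nearly independent of the first coordinates), which does not change the structure of your argument.
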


The proof of Theorem~\ref{thm:lintestclose} is our main technical contribution, and it is given in Sections~\ref{sec:multilin} and~\ref{sec:mainind}. Assuming the theorem, we prove that our proof system has soundness error at most~$1-n^{-2/c_0}/5$,
provided~$n$ is larger than an absolute constant depending on~$c$, $c_0$, and~$C$.

Let~$(r,n,f)$ be a no-instance.
Toward contradiction, suppose that the provers have a symmetric\footnote{Lemma~\ref{lem:symmetry1} shows that we may assume this holds without loss of generality.} entangled strategy~$S$ whose acceptance probability is at least~$1-\varepsilon/5$,
where~$\varepsilon=n^{-2/c_0}$.
Let~$\ket\Psi\in\calP_1\otimes\calP_2\otimes\calP_3$ be the state used in the strategy~$S$.
Let~$(A_{\vct{x}}^a)_{a\in \Fp}$ be the projective measurements used by each of the three provers in the strategy~$S$
upon question~$x\in\Fp^n$ when he acts as a lookup prover.

The verifier can be viewed as playing the multilinearity game with probability~$2/5$
and performing something else, namely the AND test, with probability~$3/5$.
Therefore, the strategy~$S$ has winning probability at least~$1-\varepsilon/2$
in the multilinearity test.
Because~$\abs{\Fp}=p>n^{1/c_0+4}=n^4\eps^{-1/2}$,
Theorem~\ref{thm:lintestclose} implies that
there exists a sub-measurement~$\big\{V^g\big\}_{g\in\ML(\Fp^n,\Fp)}$ such that inequality~\eqref{eq:mlclose} holds,
where~$\rho$ is the reduced state of~$\ket\Psi\bra\Psi$ on~$\calP_1$.
For every $\vx\in\Fp^n$ and $a\in\Fp$, let
\[
  V_{\vct{x}}^a=\sum_{\substack{g\in\ML(\Fp^n,\Fp) \\ g(\vct{x})=a}}V^g.
\]

For~$0\le i\le 2$, let~$S_i$ be the entangled strategy obtained from~$S$ by replacing the measurement for the first~$i$ provers~$P_1,\dots,P_i$ for question~$\vct{x}\in\Fp^n$ by~$V_{\vct{x}}^a$.\footnote{Since $V$ is a sub-measurement, the $V_\vx^a$ may not sum to identity. In that case we introduce an additional outcome ``fail'', corresponding to the element $\Id - \sum_a V_\vx^a$. Whenever a prover obtains that outcome he aborts the protocol.}
Note~$S_0=S$.

Let~$V'$ be the verifier who performs one of the consistency test, the linearity test,
and the AND test with~$P_3$ as the AND-test prover each with probability~$1/3$.
Note that when interacting with~$V'$, provers~$P_1$ and~$P_2$ are always told to act as lookup provers.
For~$0\le i\le2$, let~$p_i$ be the probability that the strategy~$S_i$ is accepted by~$V'$.

By definition,~$p_0\ge1-\varepsilon/3$.
We prove the following.

\begin{claim}\label{claim:pi} For $i=1,2$, it holds that~$\abs{p_{i-1}-p_i}\le\sqrt{C\varepsilon^c}$.
\end{claim}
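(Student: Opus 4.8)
The plan is to bound $|p_{i-1} - p_i|$ by exhibiting a coupling between the strategies $S_{i-1}$ and $S_i$, which differ only in that the $i$-th prover switches from using the measurement $\{A_\vx^a\}$ to using the sub-measurement $\{V_\vx^a\}$ on question $\vx$. Since the verifier $V'$ treats $P_1$ and $P_2$ as lookup provers, the only effect of this change is on the distribution of the $i$-th prover's answer to a lookup question $\vx$; everything else (the other provers' measurements, the shared state, the verifier's tests) is held fixed. So the difference in acceptance probabilities is controlled by how far the post-measurement behaviour of $\{A_\vx^a\}$ is from that of $\{V_\vx^a\}$, uniformly averaged over $\vx$.

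The key quantitative input is Theorem~\ref{thm:lintestclose}, which gives $\Es{\vx}\sum_a \Tr_\rho\bigl((A_\vx^a - \sqrt{V_\vx^a})^2\bigr) \le C\eps^c$ (here $\eps$ plays the role of $\varepsilon/2$, absorbed into the constant). First I would argue that this ``square-root closeness'' bound lets us replace $\{A_\vx^a\}$ by $\{\sqrt{V_\vx^a}\}$ inside the full Bell-type expression computing the acceptance probability. Concretely, the acceptance probability of $S_{i-1}$ under $V'$ can be written as $\Es{\text{question, test}}\sum_{\text{answers}} \bra{\Psi}\,(\cdots \otimes A_\vx^a \otimes \cdots)\,\ket{\Psi}$ for an appropriate positive operator in place of $A_\vx^a$ on the $i$-th factor (and fixed measurement operators on the others). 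Swapping $A_\vx^a$ for $\sqrt{V_\vx^a}$ changes the state $A_\vx^a \otimes \Id \ket{\Psi}$ to $\sqrt{V_\vx^a}\otimes \Id\ket{\Psi}$, and the difference of these two vectors has squared norm $\Tr_\rho\bigl((A_\vx^a - \sqrt{V_\vx^a})^2\bigr)$. A Cauchy--Schwarz argument (using $\|A\|_\rho$ and the fact that the other operators, being measurement elements, are bounded by $\Id$), followed by a second Cauchy--Schwarz over the sum on $a$ and the expectation over $\vx$, converts the $\ell_1$-type error between the two acceptance probabilities into $O\bigl((\Es{\vx}\sum_a \Tr_\rho((A_\vx^a - \sqrt{V_\vx^a})^2))^{1/2}\bigr) = O(\sqrt{C\eps^c})$. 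This is exactly where the bound $\sqrt{C\varepsilon^c}$ in the claim comes from.

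I expect the main obstacle to be bookkeeping rather than a deep idea: one has to be careful that $\{V_\vx^a\}$ is only a \emph{sub}-measurement, so there is a ``fail'' outcome $\Id - \sum_a V_\vx^a$ on which the $i$-th prover aborts (this can only decrease acceptance, and the discrepancy is already accounted for in the $a$-sum since $\sum_a \Tr_\rho\bigl((A_\vx^a - \sqrt{V_\vx^a})^2\bigr)$ controls the total deficit); and one must verify that the substitution commutes cleanly with the verifier's tests and with the operators applied by the other two provers (which it does, since $V'$ never asks $P_i$ to be an AND-test prover for $i = 1, 2$, so the $i$-th prover's action is genuinely a single measurement on a single question). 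I would also invoke the symmetry of the state and strategy (Claim~\ref{claim:prelim_symmetryuse} and Definition of symmetric strategy) so that it does not matter which of the first two registers we are modifying. Assembling these pieces, we get $|p_{i-1}-p_i| \le \sqrt{C\varepsilon^c}$ for $i = 1, 2$, as claimed.
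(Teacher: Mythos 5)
Your proposal is correct and follows essentially the same route as the paper: the paper likewise bounds $\abs{p_{i-1}-p_i}$ by the trace distance between the two post-measurement states (question register included), controls that distance by a Cauchy--Schwarz/gentle-measurement bound (Lemma~\ref{lemma:gentle}) in terms of $\Es{\vx}\sum_a\Tr_\rho\bigl((A_\vx^a-\sqrt{V_\vx^a})^2\bigr)$, and concludes using convexity together with Theorem~\ref{thm:lintestclose}. The sub-measurement ``fail'' outcome and the uniform marginal on $P_i$'s question are handled exactly as you indicate.
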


\begin{proof}
The only difference between strategies~$S_{i-1}$ and~$S_i$
is the measurements used by prover~$P_i$.
We call the message from the verifier to~$P_i$ as register~$\calA$,
and call everything other than~$\calA$ and the private space~$\calP_i$ for prover~$P_i$
as register~$\calB$.
Register~$\calA$ is classical, but we treat it as a quantum register which always contains a state in the computational basis.
Let~$\sigma$ be the global state before prover~$P_i$ performs his measurement,
and~$\sigma_A$ (resp.\ $\sigma_V$) be the global state after prover~$P_i$ performs the measurement $A_\vx$ (resp.\ $V$) on his share of the state, and then discards the post-measurement state.
Since the marginal distribution on the question to~$P_i$ is uniform, the state~$\sigma$ has the following form:
\[
  \sigma= \Es{\vct{x}\in \Fp^n}\ket{\vct{x}}\bra{\vct{x}}_{\calA}\otimes\sigma_{\vct{x}}^{\calP_i\calB},
\]
where~$\Tr_\calB\sigma_{\vct{x}}^{\calP_i\calB}=\sigma^{\calP_i}=\rho$ is independent of~$\vct{x}$. We want to bound~$(1/2)\norm{\sigma_W-\sigma_M}_1$, where
\begin{align*}
  \sigma_W
  &=
  \Tr_{\calP_i}\biggl[
  \Es{\vct{x}\in \Fp^n}\ket{\vct{x}}\bra{\vct{x}}_{\calA}\otimes
  \sum_{a\in \Fp}\ket{a}\bra{a}_{\calC}\otimes
  (A_{\vct{x}}^a\otimes I_{\calB})\sigma_{\vct{x}}^{\calP_i\calB}(A_{\vct{x}}^a\otimes I_{\calB})
  \biggr], \\
  \sigma_M
  &=
  \Tr_{\calP_i}\biggl[
  \Es{\vct{x}\in \Fp^n}\ket{\vct{x}}\bra{\vct{x}}_{\calA}\otimes
  \sum_{a\in \Fp}\ket{a}\bra{a}_{\calC}\otimes
  (\sqrt{V_{\vct{x}}^a}\otimes I_{\calB})\sigma_{\vct{x}}^{\calP_i\calB}(\sqrt{V_{\vct{x}}^a}\otimes I_{\calB})
  \biggr],
\end{align*}
and $\calC$ denotes the register used for prover $P_i$'s answers. For~$\vct{x}\in \Fp^n$, define isometries~$U_{\vct{x}},V_{\vct{x}}\colon\calP_i\otimes\calB\to\calP_i\otimes\calB\otimes\calC$ by
\begin{align*}
  U_{\vct{x}}
  &=
  \sum_{a\in \Fp}A_{\vct{x}}^a\otimes I_\calB\otimes\ket{a}_\calC, \\
  V_{\vct{x}}
  &=
  \sum_{a\in \Fp}\sqrt{V_{\vct{x}}^a}\otimes I_\calB\otimes\ket{a}_\calC.
\end{align*}
Then,
\begin{align*}
  &
  \norm{\sigma_W-\sigma_M}_1 \\
  &\le
  \norm[\bigg]{
    \Es{\vct{x}\in \Fp^n}\ket{\vct{x}}\bra{\vct{x}}_{\calA}\otimes
    \sum_{a\in \Fp}\ket{a}\bra{a}_{\calC}\otimes
    \bigl(
      (A_{\vct{x}}^a\otimes I_{\calB})\sigma_{\vct{x}}^{\calP_i\calB}(A_{\vct{x}}^a\otimes I_{\calB})
      -
      (\sqrt{V_{\vct{x}}^a}\otimes I_{\calB})\sigma_{\vct{x}}^{\calP_i\calB}(\sqrt{V_{\vct{x}}^a}\otimes I_{\calB})
    \bigr)
  }_1 \\
  &\le
  \Es{\vct{x}\in \Fp^n}
  \norm[\bigg]{
    \sum_{a\in \Fp}\ket{a}\bra{a}_{\calC}\otimes
    \bigl(
      (A_{\vct{x}}^a\otimes I_{\calB})\sigma_{\vct{x}}^{\calP_i\calB}(A_{\vct{x}}^a\otimes I_{\calB})
      -
      (\sqrt{V_{\vct{x}}^a}\otimes I_{\calB})\sigma_{\vct{x}}^{\calP_i\calB}(\sqrt{V_{\vct{x}}^a}\otimes I_{\calB})
    \bigr)
  }_1 \\
  &\le
  2\Es{\vct{x}\in \Fp^n}
  \sqrt{\sum_{a\in \Fp}    \Tr\bigl((A_{\vct{x}}^a-\sqrt{V_{\vct{x}}^a})^2 \rho\bigr)}
  \\
  &\le
  2\sqrt{
    \Es{\vct{x}\in \Fp^n}\sum_{a\in \Fp}
    \Tr\bigl((A_{\vct{x}}^a-\sqrt{V_{\vct{x}}^a})^2 \rho\bigr)
  } \\
  &\le
  2\sqrt{C\varepsilon^{c}},
\end{align*}
where the third inequality is by Lemma~\ref{lemma:gentle}, the fourth is by convexity and the last by~\eqref{eq:mlclose}.
Therefore, we have that~$\abs{p_{i-1}-p_i}\le(1/2)\norm{\sigma_W-\sigma_M}_1\le\sqrt{C\varepsilon^c}$ as claimed.
\end{proof}

By the triangle inequality, Claim~\ref{claim:pi} implies that~$\abs{p_0-p_2}\le 2\sqrt{C\varepsilon^c}$,
and therefore
\[
  p_2
  \ge p_0-2\sqrt{C\varepsilon^c}
  \ge 1-\varepsilon/3-2\sqrt{C\varepsilon^c}
  \ge1-3\sqrt{C\varepsilon^c},
\]
where the last inequality uses~$c\le1$ and $C\geq 1$.

Note that when the provers using strategy~$S_2$ interact with~$V'$,
both provers~$P_1$ and~$P_2$ can be implemented
so that they measure the prior entanglement without looking at their questions.
Since~$P_3$ is the only prover who might measure the prior entanglement after looking at his question,
strategy~$S_2$ can be implemented using shared randomness alone.

If~$P_1$ and~$P_2$ choose different multilinear functions,
then the provers pass in the consistency test with probability at most~$n/\abs{\Fp}\le1/6$
by the Schwartz-Zippel lemma~\cite{Schwartz80JACM,Zippel79} (see Lemma~\ref{lem:sz} in Appendix~\ref{sec:aux-lemmas} for a statement).
In strategy~$S_2$, they pass in the consistency test with probability at least~$1-15\sqrt{C\varepsilon^c}$.
Therefore, provers~$P_1$ and~$P_2$ choose the same multilinear function with probability at least~$1-15\sqrt{C\varepsilon^c}/(1-1/6)=1-18\sqrt{C\varepsilon^c}$.
This implies that if an oracle chooses a multilinear function in the same way as prover~$P_1$ and uses it for the two queries,
the distribution on their answers will differ by at most~$18\sqrt{C\varepsilon^c}$ in statistical distance.
Therefore, this oracle (which always implements a multilinear function) together with prover~$P_3$
is accepted in the interactive proof system of Corollary~\ref{corollary:and-test}
with probability at least~$1-15\sqrt{C\varepsilon^c}-18\sqrt{C\varepsilon^c}=1-33\sqrt{C\varepsilon^c}$.

Because~$(r,n,f)$ is a no-instance of Problem~\ref{problem:ora-3-sat-arith}
and~$\abs{\Fp}=p>8q(m,d)$,
the acceptance probability in the interactive proof system of Corollary~\ref{corollary:and-test}
is less than~$3/4$.
Comparing this with the lower bound in the previous paragraph, we obtain
\[
  1-33\sqrt{C\varepsilon^c}<\frac34,
\]
which implies
\[
  \varepsilon>\frac{1}{(132^2\cdot C)^{1/c}},
\]
contradicting the definition~$\eps = n^{-2/c_0}$ as soon as~$n$ is large enough. 
Since we obtained this contradiction
from the assumption that there exists an entangled strategy with acceptance probability at least~$1-\varepsilon/5$,
we have proved the claimed soundness guarantee against entangled provers.

%-------------------------------------
%			Multilinearity 
%-------------------------------------

\section{The multilinearity game}\label{sec:multilin}

In this section we analyze the combination of the consistency test and the linearity test described in Section~\ref{sec:protocol} as a stand-alone game played between a referee and $r\geq 3$ players,  which we call the \emph{$r$-player multilinearity game in~$n$ variables over~$\Fp$}. The game is parametrized by an integer $n$ and a finite field $\Fp$ of arbitrary size $p=|\Fp|$ (which is not necessarily a prime), and it is performed with $r$ players $P_1,\ldots,P_{r}$ treated symmetrically. The referee performs either of the following two tests with probability $1/2$ each:
\begin{itemize}
\item
  \emph{Consistency test.}
  The referee chooses~$\vct{x}\in\Fp^n$ uniformly at random
  and sends the same question~$\vct{x}$ to all players~$P_1,\ldots,P_{r}$.
  He expects each player to answer with an element of $\Fp$,
  and accepts if and only if all the answers are equal.
\item
  \emph{Linearity test.}
  The referee chooses~$i\in\{1,\dots,n\}$, $\vct{x}\in\Fp^n$ and~$y_i\neq z_i\in\Fp\backslash\{x_i\}$ uniformly at random,
  and sets~$y_j=z_j=x_j$ for every~$j\in\{1,\dots,n\}\setminus\{i\}$.
  He sends~$\vct{x},\vct{y},\vct{z}$ to three out of the $r$ players chosen at random,
  receives $a,b,c\in\Fp$,
  and accepts if and only if
  \[
    \frac{b-a}{y_i-x_i}=\frac{c-b}{z_i-y_i}=\frac{c-a}{z_i-x_i}.
  \]
\end{itemize}

We now define explicitly what we mean by a \emph{strategy} for the players in the multilinearity game. 

\begin{definition}\label{def:multilin-strategy}
A \emph{strategy for the players in the $r$-player multilinearity game in~$n$ variables over~$\Fp$} is given by the following. Finite-dimensional Hilbert spaces $\calP_1,\ldots,\calP_r$, a state $\ket{\Psi}\in\calP_1\otimes\cdots\otimes \calP_r$, and for every $i\in [r]$ and $\vx\in\Fp^n$ a measurement $\{(A_i)_\vx^a\}_{a\in \Fp}$ on $\calP_i$. It is understood that, upon receiving question $\vct{x}_i \in \Fp^n$, player~$P_i$ measures register $P_i$ corresponding to his share of $\ket{\Psi}$ using the measurement $\{(A_i)_{\vx_i}^a\}_{a\in \Fp}$, sending the outcome $a$ back to the verifier as his answer.

We will say that a strategy is \emph{symmetric} if $\calP_1\simeq\cdots\simeq\calP_r$, $(A_1)_\vx^a = \cdots = (A_r)_\vx^a$ for every $\vx$ and $a$ (in which case we will simply call the resulting measurement $\{A_\vx^a\}$), and $\ket{\Psi}$ is invariant with respect to arbitrary permutation of the registers $P_1,\ldots,P_r$.

Finally, a strategy will be called \emph{projective} if all measurements $\{(A_i)_{\vx}^a\}_{a\in \Fp}$ are projective.
\end{definition}

In case a strategy is symmetric, we will often abuse notation and use the symbol $\rho$ to denote the reduced density of $\ket{\Psi}$ on any $\bigotimes_{i\in S} \calP_i$, for $S\subseteq[r]$, without specifying explicitly which registers are understood: by symmetry only the number of registers matters, and this will always be clear in context.

The main result of this section is the following. We refer to Section~\ref{sec:prelim-notation} for definitions of the quantities appearing in the theorem, and to Lemma~\ref{lem:symmetry1} for a proof that the symmetry assumption made in the theorem is without loss of generality. 

\begin{theorem}\label{thm:lintest} There exists universal constants $0<c_0<1$, $C_0>1$ such that the following holds. 
Let $(\ket{\Psi},\{A_{\vx}^a\}_a)$ be a permutation-invariant projective strategy for $r\geq 3$ players in the $r$-player multilinearity game in~$n$ variables over~$\Fp$ with success probability at least~$1-\eps/2$. Assume furthermore that $p=|\Fp|\geq n^4\eps^{-1/2}$ and $\eps \leq n^{-2/c_0}$. Then there exists a sub-measurement~$\{V^g\}_{g\in\ML(\Fp^n,\Fp)}$, indexed by multilinear $g:\Fp^{n}\to \Fp$, such that
\begin{enumerate}
\item $V$ is consistent with $A$: $\inc(V,A) \leq C_0\,\eps^{c_0}$,
\item $\Trho(V)\geq 1 - C_0\,\eps^{c_0}$.
\end{enumerate}
\end{theorem}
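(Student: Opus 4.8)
The plan is to prove the theorem by induction on the number $n$ of variables, building up the desired sub-measurement $\{V^g\}$ one coordinate at a time. For the base case $n=1$, a strategy successful in the linearity test over $\Fp^1$ must be close to a strategy that answers according to a linear function $g:\Fp\to\Fp$; the natural construction, as indicated in the introduction, is to form $\{B^g\}$ by measuring with $\{A_x^a\}$ at two independently random points $x, x'$ of $\Fp$ and letting $g$ be the unique linear interpolant of the two outcomes. One then uses the consistency test, the linearity test, and Winter's gentle measurement lemma (Lemma~\ref{lemma:gentle}) to show that $\{B^g\}$ is consistent with $\{A_x^a\}$ in the sense of Eq.~\eqref{eq:cons-intro-0} — this is the content of the one-dimensional claim (``Claim~\ref{claim:lines}'' in the paper's terminology). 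For the inductive step, suppose we have a family of sub-measurements of arity $k$ on the first $k$ coordinates; we want to ``fold in'' the $(k{+}1)$-st coordinate, applying the one-dimensional reconstruction argument on coordinate $k+1$ and then pasting — this is where the quantum analogue of Babai et al.'s pasting lemma (Lemma~\ref{lem:lemma2}) enters.

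The key steps, in order, would be: (i) formalize the one-dimensional reconstruction and prove the base case, bounding $\inc(B,A)$ in terms of $\eps$ using the two tests and gentle measurement; (ii) state and prove the pasting lemma, which takes a family of sub-measurements of arity $k$ that is suitably consistent with $A$ and produces one of arity $k+1$; (iii) iterate (ii) from $k=1$ to $k=n$, tracking how the consistency error accumulates; (iv) since the error blows up exponentially with $k$ under naive iteration, insert a \emph{self-improvement / self-correction} step between successive applications of the pasting lemma, in which the current family of sub-measurements is replaced by the optimizer of a convex program that simultaneously maximizes consistency with $A$ and total weight, thereby ``resetting'' the error to a controlled level; (v) assemble these pieces, using the hypotheses $p \geq n^4\eps^{-1/2}$ (to control Schwartz–Zippel–type losses when interpolating over $\Fp$ and to ensure the field is large enough for the low-degree arguments) and $\eps \leq n^{-2/c_0}$ (to ensure the accumulated error stays below the threshold where each inductive step remains valid), and conclude with the two claimed bounds $\inc(V,A)\le C_0\eps^{c_0}$ and $\Trho(V)\ge 1-C_0\eps^{c_0}$ by taking $c_0$ small enough to absorb the per-step polynomial-in-$n$ factors.

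The main obstacle, and the step requiring genuinely new ideas beyond a transcription of~\cite{BabForLun91CC}, is step (iv), the self-improvement of the measurements. In the classical deterministic setting a ``reasonably good'' $k$-linear approximation at any stage is automatically extremely good, because two $\{0,1\}$-valued functions either agree everywhere or differ on a constant fraction of inputs — there is a gap. Quantum measurements have no such gap: $\{B^g\}$ and $\{A_\vx^a\}$ can produce outcome distributions that are distinct but arbitrarily close (e.g.\ related by a tiny rotation), and this ``soft'' error is precisely what must be driven down. The remedy is to actively correct the measurements by solving a convex optimization problem whose optimum is provably both highly consistent with $A$ and of large weight, and then argue — using the expansion/mixing properties of the Boolean hypercube together with the tests — that the optimum inherits a much better error bound than the input. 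Getting the quantitative bookkeeping of this correction to interact correctly with the pasting lemma across all $n$ inductive steps, while keeping the exponents $c_0, C_0$ universal, is the delicate heart of the argument; the details are carried out in Sections~\ref{sec:multilin} and~\ref{sec:mainind}.
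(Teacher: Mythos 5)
Your outline follows essentially the same route as the paper's own proof: induction on the arity of the sub-measurements, with Claim~\ref{claim:lines} as the base case, the pasting lemma (Lemma~\ref{lem:lemma2}) to fold in one more coordinate, and a self-improvement step (Lemma~\ref{lem:lemma1}) inserted between pastings to reset the error, realized via a convex program plus hypercube expansion exactly as in Proposition~\ref{prop:mainprop} (the paper's program minimizes the distance to the current sub-measurement subject to a \emph{hard} consistency constraint with $A$, rather than jointly maximizing consistency and weight, but its role is the one you describe). The only piece left implicit in your sketch is how the completeness bound $\Trho(V)\ge 1-C_0\eps^{c_0}$ is propagated over the $n$ steps; the paper does this through items~2 and~3 of Proposition~\ref{prop:mainprop}, comparing $\cons(P,V)$ with $\cons(P,T)$ for arbitrary families $P$ and finally taking $P=A$.
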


The two items in the conclusion of the theorem intuitively state the following. Suppose that one of the players in the multilinearity game was to receive a question $\vx\in\Fp^n$, measure his share of the entangled state $\ket{\Psi}$ according to the projective measurement $\big\{A_\vx^a\big\}$, and answer the outcome he obtains (as he would in the original game). Now, suppose further that another player, upon receiving the same question $\vx\in\Fp^n$, instead of measuring her own share of $\ket{\Psi}$ according to $\big\{A_\vx^a\big\}$, was to perform the measurement $\{V^g, \Id-V\}$, where $V=\sum_g V^g$ (which is independent of $\vx$!). If she obtains the last outcome then she aborts the experiment. If, however, she obtains an outcome $g\in\ML(\Fp^n,\Fp)$, then she answers her question $\vx$ with $g(\vx)$. Item~1. above states that, on average over the choice of $\vx$, the probability that both players eventually produce different outcomes (conditioned on the second player not aborting) is at most $O(\eps^{c_0})$. Item~2. guarantees that, in the hypothetical scenario we just described, the second player does not abort too often: the probability that she obtains the outcome ``$\Id-V''$ is at most $C_0\,\eps^{c_0}$.

We will show that Theorem~\ref{thm:lintest} implies Theorem~\ref{thm:lintestclose} in Section~\ref{sec:pflintestclose}, while Theorem~\ref{thm:lintest} will be proved in Section~\ref{sec:mainind}. In the following section we prove a weaker version of the multilinearity test, the ``linearity test'', which implies Theorem~\ref{thm:lintest} for $n=1$. 

\subsection{Preliminary analysis: the linearity test}

Let $(\ket{\Psi},\{A_{\vx}^a\}_a)$ be a symmetric projective strategy for the players in the multilinearity game, as defined in Definition~\ref{def:multilin-strategy}. The following relations translate the assumption that the players succeed in the consistency test with probability $1-\eps$, and in the linearity test with probability $1-\eps$. 
\begin{align}
&\quad\,\,\Es{\vx} \sum_{a}\, \Tr_\rho\bigl( A_\vx^a \otimes A_\vx^{a} \bigr) \, \geq\, 1-\eps, \label{eq:cons}\\
\forall i\in[n],\quad&\underset{x_i\neq x'_i\neq x''_i,\vx_{\neg i}}{\textsc{E}} \sum_{\frac{a'-a}{x'_i-x_i} = \frac{a''-a'}{x''_i-x'_i}= \frac{a''-a}{x''_i-x_i}} \Tr_\rho\bigl( A_{x_i,\vx_{\neg i}}^a \otimes A_{x'_i,\vx_{\neg i}}^{a'} \otimes A_{x''_i,\vx_{\neg i}}^{a''} \bigr)  \,\geq\, 1-n\eps \,\geq\, 1-\sqrt{\eps},\label{eq:lin}
\end{align}
where all expectations are taken under the uniform distribution over the sets in which their indices range, and the last inequality follows from our assumption that $n \leq \eps^{-c_0/2} \leq \eps^{-1/2}$. 

The following claim proves the ``linearity'' part of the multilinearity test, thereby establishing the base case for the induction that will be performed in Section~\ref{sec:mainind}. It also illustrates some of the key techniques, in terms of the manipulation of measurement operators, that will be used throughout the paper. (The interested reader may thus wish to gain good familiarity with the proof of the claim before moving on to later sections, in which proofs will not always be as detailed.)

\begin{claim}\label{claim:lines} Let $i\in[n]$, and $\eps \geq p^{-1}$. Suppose that $(\ket{\Psi},\{A_\vx^a\})$ is a (symmetric, projective) strategy passing the consistency test with probability at least $1-\eps$, and the linearity test in the $i$-th direction with probability at least $1-\sqrt{\eps}$. Then there exists a family of measurements $\big\{B_{\vx_{\neg i}}^{\ell}\big\}_{\ell\in \ML(\Fp,\Fp)}$ of arity $1$ such that
\beq\label{eq:a-b-close}
\Es{\vx} \sum_a \Big\| A_\vx^a - \sum_{\ell:\,\ell(x_i)=a} B_{\vx_{\neg i}}^{\ell} \Big\|_\rho^2 \,=\, O\bigl(\sqrt{\eps}\bigr). 
\eeq
\end{claim}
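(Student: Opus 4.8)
The goal is to build, for each $\vx_{\neg i}$, a single measurement $\{B_{\vx_{\neg i}}^\ell\}$ over linear functions $\ell\colon\Fp\to\Fp$, which simultaneously ``explains'' the family $\{A_{(x_i,\vx_{\neg i})}^a\}$ for all choices of $x_i$. The natural construction, as flagged in the introduction, is the quantum analogue of the Blum--Luby--Rubinfeld reconstruction: sample two independent points $x_i,x_i'\in\Fp$, measure $\{A_{(x_i,\vx_{\neg i})}^a\}$ then $\{A_{(x_i',\vx_{\neg i})}^{a'}\}$ \emph{in succession} on the same register, and output the unique affine function $\ell$ with $\ell(x_i)=a$, $\ell(x_i')=a'$. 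Formally I would set
\[
  B_{\vx_{\neg i}}^\ell \,:=\, \Es{x_i\neq x_i'} \sqrt{A_{(x_i,\vx_{\neg i})}^{\ell(x_i)}}\, A_{(x_i',\vx_{\neg i})}^{\ell(x_i')}\, \sqrt{A_{(x_i,\vx_{\neg i})}^{\ell(x_i)}},
\]
so that each $B_{\vx_{\neg i}}^\ell\geq 0$ and $\sum_\ell B_{\vx_{\neg i}}^\ell = \Es{x_i\neq x_i'} \sqrt{A_{(x_i,\vx_{\neg i})}^{\ell(x_i)}}\,\Id\,\sqrt{A_{(x_i,\vx_{\neg i})}^{\ell(x_i)}} = \Id$ using completeness of $\{A\}$ and the fact that, for fixed $x_i\neq x_i'$ and fixed $a=\ell(x_i)$, summing over $a'$ ranges $\ell$ over all linear functions through $(x_i,a)$. (A small wrinkle: a ``linear'' $\ell$ is really affine $ax+b$; I should make sure the $p\geq 1$ point count is handled so that every pair $(a,a')$ determines exactly one such $\ell$, which is immediate over a field once $x_i\neq x_i'$.)

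\textbf{Key estimates.} The work is in proving \eqref{eq:a-b-close}. Define $B_{\vx_{\neg i}}^{\ell,x_i} := \sqrt{A_{(x_i,\vx_{\neg i})}^{\ell(x_i)}}\, A_{(x_i',\vx_{\neg i})}^{\ell(x_i')}\, \sqrt{A_{(x_i,\vx_{\neg i})}^{\ell(x_i)}}$ before averaging over $x_i$. The triangle inequality in $\|\cdot\|_\rho$ reduces the quantity to control to two pieces: (i) the ``self-consistency'' error showing $\sum_{\ell:\ell(x_i)=a} B^{\ell,x_i}_{\vx_{\neg i}}$ is close to $A_{(x_i,\vx_{\neg i})}^a$ for a \emph{fixed} $x_i$, and (ii) the error of averaging over $x_i$ versus not. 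For (i), note $\sum_{\ell:\ell(x_i)=a} B^{\ell,x_i}_{\vx_{\neg i}} = \sqrt{A_{(x_i,\vx_{\neg i})}^a}\,\bigl(\Es{x_i'}\sum_{a'}A_{(x_i',\vx_{\neg i})}^{a'}\bigr)\sqrt{A_{(x_i,\vx_{\neg i})}^a} = A_{(x_i,\vx_{\neg i})}^a$ exactly, since $\{A_{(x_i',\vx_{\neg i})}^{a'}\}_{a'}$ is complete — so (i) is free and the content is entirely in (ii), i.e.\ replacing the outcome $\ell$ obtained at point $x_i$ by the one obtained by interpolation through two points. This is where the \emph{linearity test hypothesis} \eqref{eq:lin} enters: the three-point test says that the triple of answers at $x_i,x_i',x_i''$ is collinear with probability $1-\sqrt\eps$, which is exactly what licenses the claim that the $\ell$ reconstructed from $(x_i',x_i'')$ almost always passes through $(x_i,a)$. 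The estimate I expect to carry out is: using Cauchy--Schwarz for $\|\cdot\|_\rho$ and the gentle-measurement / ``almost as good as new'' lemma to move the $\sqrt{A}$ factors around, bound $\Es{\vx}\sum_a \bigl\|A_\vx^a - \sum_{\ell:\ell(x_i)=a}B_{\vx_{\neg i}}^\ell\bigr\|_\rho^2$ by $O(\inc)$ of a pair of reconstructed families, and then bound that inconsistency directly by the linearity-test success probability, i.e.\ by $O(\sqrt\eps)$.

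\textbf{Main obstacle.} The delicate part is step (ii): the operators $\sqrt{A_{(x_i,\vx_{\neg i})}^a}$ for different $x_i$ do not commute, so I cannot naively ``fix the randomness'' in $x_i$ and must instead push all manipulations through the semi-norm $\|\cdot\|_\rho$ and the Cauchy--Schwarz inequality $\Trho(AB^\dagger)\leq \|A\|_\rho\|B\|_\rho$ stated in the preliminaries, together with the consistency relation \eqref{eq:cons} to absorb cross terms of the form $\Trho(A_\vx^a\otimes A_\vx^{a})\approx \Trho(A_\vx^a)$. Concretely, expanding $\|A_\vx^a - \sum_{\ell:\ell(x_i)=a}B^\ell\|_\rho^2$ produces a term $\Trho\bigl((A_\vx^a)^2\bigr)$ (handled by projectivity: $(A_\vx^a)^2=A_\vx^a$), a cross term, and $\Trho\bigl((\sum_{\ell:\ell(x_i)=a}B^\ell)^2\bigr)$; the cross term and the last term must be matched up to $O(\sqrt\eps)$ using repeated applications of gentle measurement (Lemma~\ref{lemma:gentle}) to insert/remove $\sqrt{A}$ factors at a cost controlled by the consistency and linearity errors. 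I expect the bookkeeping of these error terms — keeping every application of Cauchy--Schwarz and gentle measurement charged against either \eqref{eq:cons} or \eqref{eq:lin} so the total is $O(\sqrt\eps)$ rather than something worse — to be the real technical heart of the claim, and it is precisely the template that later sections will reuse in the inductive ``pasting'' step. Since the base case needs no self-correction (the $B$'s are automatically a genuine measurement here, because we only interpolate one coordinate), this claim is strictly easier than the general inductive step, and I would present it in full detail as the warm-up it is advertised to be.
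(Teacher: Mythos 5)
Your construction is the paper's own: since the strategy is projective, $\sqrt{A_\vx^a}=A_\vx^a$, so your $B_{\vx_{\neg i}}^\ell$ coincides with the paper's $\Es{x_i\neq x_i'}A_{x_i,\vx_{\neg i}}^{\ell(x_i)}A_{x_i',\vx_{\neg i}}^{\ell(x_i')}A_{x_i,\vx_{\neg i}}^{\ell(x_i)}$, and your plan --- expand the $\|\cdot\|_\rho$ square, handle the diagonal terms by projectivity and completeness, and lower-bound the cross term $\Es{\vx}\sum_{a,\ell:\,\ell(x_i)=a}\Trho\bigl(A_\vx^a B_{\vx_{\neg i}}^\ell\bigr)$ by trading products for tensor products via consistency and gentle measurement before invoking the linearity test --- is exactly the paper's argument. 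Just be aware that the ``bookkeeping'' you defer is essentially the whole proof there (the product-to-tensor swap is Lemma~\ref{lem:consmu}, removing an inserted $A_{x_i',\vx_{\neg i}}^{a'}$-conjugation on a third register is Claim~\ref{claim:gentle}, and only then does the three-register expression reduce to the linearity-test probability), so your proposal follows the same route in outline but stops short of executing its decisive estimate.
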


%We will usually interpret the outcomes $(u,v)$ of the measurement $\{B_{\vx_{\neg i}}^{u,v}\}$ as defining the equation of a ``line'' $\ell:\Fp\to\Fp$, $\ell(x) = ux+v$, and write $B_{\vx_{\neg i}}^{\ell}$ instead of $B_{\vx_{\neg i}}^{u,v}$. 
We will often use the notation $B_\vx^a := \sum_{\ell:\,\ell(x_i)=a} B_{\vx_{\neg i}}^\ell$, leaving the dependence on $i$ implicit. We note for future use that the bound~\eqref{eq:a-b-close} implies that
$$\cons(A,B)\,\geq\, 1-O(\sqrt{\eps})\qquad\text{and}\qquad \cons(B) \,\geq\, 1-O(\sqrt{\eps}).$$
These inequalities can be deduced directly from~\eqref{eq:a-b-close}, but they will also be apparent from the proof of Claim~\ref{claim:lines}, which we now give. 

\begin{proof}
For any $\ell\in\ML(\Fp,\Fp)$, define 
$$B_{\vx_{\neg i}}^{\ell} \,:=\, \Es{x_i\neq x'_i} \, A_{x_i,\vx_{\neg i}}^{\ell(x_i)} A_{x'_i,\vx_{\neg i}}^{\ell(x'_i)} A_{x_i,\vx_{\neg i}}^{\ell(x_i)}.$$
Then $\big\{B_{\vx_{\neg i}}^{\ell} \big\}_{\ell}$ is a well-defined measurement: each operator is non-negative, and since for fixed $x_i\neq x'_i$, as $\ell$ ranges over $\ML(\Fp^2,\Fp)$ both $\ell(x_i)$ and $\ell(x'_i)$ independently range over $\Fp$, they sum to $\sum_a (A_\vx^a)^2 = \Id$ since, by assumption, for every $\vx$ and $a$ the measurement operator $A_\vx^a$ is a projector. Using the definition of $\|\cdot\|_\rho$, we can expand
\begin{align}
\Es{\vx} \sum_a \Big\| A_\vx^a - \sum_{\ell:\ell(x_i)=a} B_{\vx_{\neg i}}^{\ell} \Big\|_\rho^2 &= \Es{\vx} \sum_a \Trho\big( A_\vx^a \big) + \Es{\vx}\sum_{\substack{\ell,\ell'\\\ell(x_i)=\ell'(x_i)}} \Trho\big( B_{\vx_{\neg i}}^{\ell}   B_{\vx_{\neg i}}^{\ell'}\big)\notag \\
&\qquad- 2\,\Es{\vx}\sum_{a,\ell:\,\ell(x_i)=a} \Trho\big( A_\vx^a  B_{\vx_{\neg i}}^{\ell} \big).\label{eq:lines-0}
\end{align}
We first lower bound the last term above. Applying Lemma~\ref{lem:consmu} from Appendix~\ref{sec:consistency-lemmas} with $T_\vx^h = A_\vx^a$ and $Z_\vx^h = B_\vx^a$, we get
\beq\label{eq:lines-0b}
\Big| \Es{\vx}\sum_{a,\ell:\,\ell(x_i)=a} \Trho\big( A_\vx^a  B_{\vx_{\neg i}}^{\ell} \big) - \Es{\vx}\sum_{a,\ell:\,\ell(x_i)=a} \Trho\big( A_\vx^a \otimes B_{\vx_{\neg i}}^{\ell} \big) \Big| \,=\, O\big(\inc(A)^{1/2}\big)\,=\, O\big(\sqrt{\eps}\big)
\eeq
by~\eqref{eq:cons}, hence it will suffice to show a lower bound on $\Es{\vx}\sum_{a,\ell:\,\ell(x_i)=a} \Tr_\rho\bigl( A_\vx^a \otimes B_{\vx_{\neg i}}^{\ell}\bigr)$.  Using the definition of $B_{\vx_{\neg i}}^{\ell}$, we have
\begin{align}
\Es{\vx}\sum_{a,\ell:\,\ell(x_i)=a} &\Tr_\rho\bigl( A_\vx^a \otimes B_{\vx_{\neg i}}^{\ell} \bigr)\notag\\ &= \Es{\vx,x'_i\neq x''_i}\sum_{a,\ell:\,\ell(x_i)=a} \Tr_\rho\bigl( A_\vx^a \otimes A_{x'_i,\vx_{\neg i}}^{\ell(x'_i)} A_{x''_i,\vx_{\neg i}}^{\ell(x''_i)} A_{x'_i,\vx_{\neg i}}^{\ell(x'_i)} \bigr)\notag\\
&= \Es{\vx,x'_i\neq x''_i}\sum_{a,\ell:\,\ell(x_i)=a} \sum_{a'} \Tr_\rho\bigl( A_\vx^a \otimes A_{x'_i,\vx_{\neg i}}^{\ell(x'_i)} A_{x''_i,\vx_{\neg i}}^{\ell(x''_i)} A_{x'_i,\vx_{\neg i}}^{\ell(x'_i)} \otimes A_{x'_i,\vx_{\neg i}}^{a'}\bigr)\notag\\
&\leq \Es{\vx,x'_i\neq x''_i}\sum_{a,\ell:\,\ell(x_i)=a}  \Tr_\rho\bigl( A_\vx^a \otimes A_{x'_i,\vx_{\neg i}}^{\ell(x'_i)} A_{x''_i,\vx_{\neg i}}^{\ell(x''_i)} A_{x'_i,\vx_{\neg i}}^{\ell(x'_i)} \otimes A_{x'_i,\vx_{\neg i}}^{\ell(x'_i)}\bigr) + \eps \notag\\
&\leq \Es{\vx,x'_i\neq x''_i}\sum_{a,\ell:\ell(x_i)=a}   \sum_{a'} \Tr_\rho\bigl( A_\vx^a \otimes A_{x'_i,\vx_{\neg i}}^{a'} A_{x''_i,\vx_{\neg i}}^{\ell(x''_i)} A_{x'_i,\vx_{\neg i}}^{a'} \otimes A_{x'_i,\vx_{\neg i}}^{\ell(x'_i)}\bigr) + \eps,\label{eq:lines-1}
\end{align}
where the first equality simply uses that the $A_{x'_i,\vx_{\neg i}}^{a'}$ sum to identity over $a'$, the first inequality uses~\eqref{eq:cons} on the last two registers (together with $A_\vx^a \leq {\Id}$), and the last is by positivity. Let $\sigma:=\rho^{(3)}$ be the reduced density of $\ket{\Psi}$ on any $3$ of the provers, and apply Claim~\ref{claim:gentle} to the POVM $\{A_\vx^a\}_a$ for every $\vx$. Eq.~\eqref{eq:cons} implies that this POVM is consistent, hence 
$$ \Es{\vx} \Big\| \sum_a \bigl( A_\vx^a \otimes {\Id}\bigr)\,\rho^{(2)}\,\bigl(A_\vx^a \otimes {\Id}\bigr) - \rho^{(2)} \Big\|_1 \,=\, O\bigl(\sqrt{\eps}\bigr), $$
where we used that the $A_\vx^a$ are projectors.
Hence
\begin{align*}
\Es{\vx,x'_i\neq x''_i}&\sum_{a,\ell:\ell(x_i)=a}   \Big|\sum_{a'} \Tr_\rho\bigl( A_\vx^a \otimes \bigl(A_{x'_i,\vx_{\neg i}}^{a'} A_{x''_i,\vx_{\neg i}}^{\ell(x''_i)} A_{x'_i,\vx_{\neg i}}^{a'} - A_{x''_i,\vx_{\neg i}}^{\ell(x''_i)}\bigr) \otimes A_{x'_i,\vx_{\neg i}}^{\ell(x'_i)}\bigr)\Big|\\
&= \Es{\vx,x'_i\neq x''_i}\sum_{a,\ell:\,\ell(x_i)=a} \Big|\Tr\Bigl( \bigl(A_\vx^a \otimes A_{x''_i,\vx_{\neg i}}^{\ell(x''_i)} \otimes  A_{x'_i,\vx_{\neg i}}^{\ell(x'_i)} \bigr) \cdot \\
&\hskip2cm \Bigl(\sum_{a'} \bigl( {\Id} \otimes A_{x'_i,\vx_{\neg i}}^{a'} \otimes {\Id} \bigr) \,\rho\,\bigl( {\Id} \otimes A_{x'_i,\vx_{\neg i}}^{a'} \otimes {\Id} \bigr) - \rho \Bigr)\Bigr)\Big|\\
& \leq \Es{\vx,x'_i} \Big\|  \sum_{a'} A_{x'_i,\vx_{\neg i}}^{a'} \rho  A_{x'_i,\vx_{\neg i}}^{a'} - \rho \Big\|_1 \,=\,O\bigl(\sqrt{\eps}\bigr),
\end{align*}
where for the inequality we used that for every $\vx$ and $x_i\neq x''_i$, $\sum_{a,\ell:\ell(x_i)=a} A_\vx^a \otimes A_{x''_i,\vx_{\neg i}}^{\ell(x''_i)} \otimes  A_{x'_i,\vx_{\neg i}}^{\ell(x'_i)} \leq {\Id}$, and monotonicity of the trace distance. Combining this last bound with~\eqref{eq:lines-1}, we obtain
\begin{align*}
\Es{\vx}\sum_{a,\ell:\,\ell(x_i)=a} \Tr_\rho\bigl( A_\vx^a \otimes B_{\vx_{\neg i}}^{\ell} \bigr) &= \Es{\vx,x'_i\neq x''_i}\sum_{a,\ell:\,\ell(x_i)=a} \Tr_\rho\bigl( A_\vx^a \otimes A_{x''_i,\vx_{\neg i}}^{\ell(x''_i)} \otimes A_{x'_i,\vx_{\neg i}}^{\ell(x'_i)}\bigr) + O\bigl( \sqrt{\eps}\bigr)\\
&= \Es{\vx,x'_i\neq x''_i}\sum_{\ell} \Tr_\rho\bigl( A_\vx^{\ell(x_i)} \otimes A_{x''_i,\vx_{\neg i}}^{\ell(x''_i)} \otimes A_{x'_i,\vx_{\neg i}}^{\ell(x'_i)}\bigr) + O\bigl( \sqrt{\eps}\bigr).
\end{align*}
If $x_i=x'_i$ or $x_i=x''_i$, the last summation above evaluates to $1$. Hence the expectation is at least as large as the probability that the $\{A_\vx^a\}$ pass the linearity test along the $i$-th coordinate, which is at least $1-\sqrt{\eps}$ by~\eqref{eq:lin}, hence 
$$\Es{\vx}\sum_{a,\ell:\,\ell(x_i)=a} \Tr_\rho\bigl( A_\vx^a \otimes B_{\vx_{\neg i}}^{\ell} \bigr) \,\geq 1-O\bigl(\sqrt{\eps}\bigr).$$
Combining this inequality with~\eqref{eq:lines-0b} and using that the first two terms in~\eqref{eq:lines-0} are at most $1$ each proves the claim. 
\end{proof}

\subsection{Proof of Theorem~\ref{thm:lintestclose}}\label{sec:pflintestclose}

In this section we show how Theorem~\ref{thm:lintestclose}, which is the result we need in order to analyze the overall protocol from Section~\ref{sec:protocol}, follows from Theorem~\ref{thm:lintest}. Theorem~\ref{thm:lintest} is proved in Section~\ref{sec:mainind}.

\begin{proof}[Proof of Theorem~\ref{thm:lintestclose}]
Let $\big\{V^g\big\}_{g\in\ML(\Fp^n,\Fp)}$ be the sub-measurement guaranteed by Theorem~\ref{thm:lintest}. Expanding
\begin{align}
 \Es{\vx} \sum_a \Tr_\rho\bigl((A_\vx^a - \sqrt{V_\vx^a})^2\bigr) &= \Es{\vx} \sum_a \Bigl(\Tr_\rho\bigl( (A_\vx^a)^2\bigr) + \Tr_\rho\bigl( V_\vx^a\bigr) - 2\,\Tr_\rho\bigl(A_\vx^a \sqrt{V_\vx^a} \bigr)\Bigr)\notag\\
 &\leq 2-2\,\Es{\vx}\sum_a\Tr_\rho\bigl(A_\vx^a \sqrt{V_\vx^a} \bigr),\label{eq:corlin-2}
 \end{align}
 it will suffice to show that this last expectation is close to $1$. By applying Lemma~\ref{lem:consmu} from Appendix~\ref{sec:consistency-lemmas} with $T_\vx^h = A_\vx^a$ and $Z_\vx^h = \sqrt{V_\vx^a}$ we obtain that
$$
\Big|\Es{\vx}\sum_a\Tr_\rho\bigl(A_\vx^a \sqrt{V_\vx^a} \bigr) - \Es{\vx}\sum_a\Tr_\rho\bigl(A_\vx^a \otimes \sqrt{V_\vx^a} \bigr) \Big| \,=\, O\big(\inc(A)^{1/2}\big)\,=\, O\big(\sqrt{\eps}\big)
$$
by~\eqref{eq:cons}. Hence to upper-bound the right-hand-side of~\eqref{eq:corlin-2} it suffices to lower-bound
\begin{align*}
 \Es{\vx}\sum_a\Tr_\rho\bigl(A_\vx^a\otimes \sqrt{V_\vx^a}\bigr) &\geq \Es{\vx}\sum_a\Tr_\rho\bigl(A_\vx^a\otimes V_\vx^a\bigr) \\
&\geq  1   - C_0\eps^{c_0} - \inc(V,A)  \\
&\geq 1-2C_0\eps^{c_0},
\end{align*}
where the second inequality uses item~2 from Theorem~\ref{thm:lintest} and the definition of $\inc(V,A)$, and the last inequality follows from item~1.
Combined with~\eqref{eq:corlin-2}, this proves Theorem~\ref{thm:lintestclose}.
\end{proof}

%------------------------
%		Main proof
%------------------------

\section{Soundness analysis of the multilinearity game}\label{sec:mainind}

In this section we prove our main result on the analysis of the multilinearity game in the presence of entanglement between the provers, Theorem~\ref{thm:lintest}. The proof proceeds by induction, and the key inductive step is summed up in the following proposition. (We refer to section~\ref{sec:prelim-notation} for a definition of the quantities that appear in the proposition.)

\begin{proposition}\label{prop:mainprop} There exists a universal constant $0<c_1<1/2$ such that the following holds. 
Suppose that~$(\ket{\Psi},\{A_\vx^a\}_a)$ is a symmetric projective strategy for the players in the $3$-player multilinearity game in $n$ variables over $\Fp$ that is accepted with probability at least~$1-\varepsilon$ in both the linearity test and the consistency test, for some $\varepsilon>0$. Let $p:=|\Fp|$ and $\delta>0$, and assume that $n^{-8/c_1^2} \geq \delta\geq \sqrt{n}\eps^{1/8} \geq n p^{-1/4}$. Let $1\leq k\leq n-1$ and $T$ be a given family of sub-measurements of arity $k$ such that $\inc(T,A)\leq\delta$. Then there exists a family of sub-measurements $V$ of arity $k+1$ such that 
\begin{enumerate}
\item $\inc(V,A) = O(\eps^{c_1})$,
\item For any family of sub-measurements $P$ of arity at least $k+1$, 
$$\big|\cons(P,V) - \cons(P,T)\big| \,=\, O\big(\delta^{c_1}+\inc(P,A)^{1/2}\big),$$
\item For any family of sub-measurements $P$, of arbitrary arity, 
$$\big|\cons(P,V)-\cons(P,T)\big| \,=\, O\big(\delta^{c_1}+\big|\cons(T,T)-\Trho(T)\big|^{1/2}\big).$$
\end{enumerate}
\end{proposition}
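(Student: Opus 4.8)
The plan is to construct the arity-$(k+1)$ family $V$ from the arity-$k$ family $T$ by a two-stage procedure: first an \emph{interpolation} (or ``pasting'') step that promotes $T$ to a candidate arity-$(k+1)$ family using the linearity measurements $\{B_{\vx_{\neg(k+1)}}^\ell\}$ from Claim~\ref{claim:lines} in the $(k+1)$-st direction, and then a \emph{self-improvement} step that replaces the candidate by the optimizer of a suitable convex program, thereby controlling the error blow-up. Concretely, I would first recall the measurement $\{B^\ell\}_{\ell\in\ML(\Fp,\Fp)}$ of arity $1$ constructed in Claim~\ref{claim:lines} for the direction $i=k+1$, which satisfies $\cons(A,B)\ge 1-O(\sqrt\eps)$ and $\cons(B)\ge 1-O(\sqrt\eps)$. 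The natural candidate $\tilde V$ of arity $k+1$ is obtained by ``gluing'' $T$ (which handles coordinates $1,\dots,k$) with $B$ (which handles coordinate $k+1$): roughly, $\tilde V_{\vx_{>k+1}}^{g} := \Es{x_{k+1}} (\text{something built from } T_{\vx_{>k}} \text{ and } B_{\vx_{\neg(k+1)}})$, where the outcome $g\in\ML(\Fp^{k+1},\Fp)$ is reconstructed from the $k$-linear function output by $T$ at two (or more) values of $x_{k+1}$ together with the linear function output by $B$. The consistency of $T$ with $A$ (hypothesis $\inc(T,A)\le\delta$) and of $B$ with $A$ (from Claim~\ref{claim:lines}) lets me relate $\tilde V$ to $A$, but with an error that degrades — this is exactly where self-improvement is needed.

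For the self-improvement step I would follow the strategy outlined in Section~\ref{sec:multilin-game}: define $V$ as the solution of a convex optimization problem over families of sub-measurements of arity $k+1$, minimizing (a proxy for) $\inc(V,A)$ subject to $\Trho(V)$ being not too small, and then argue that (i) $\tilde V$ is a feasible point with objective value $O(\delta^{c_1})$ or so, giving the existence of a good optimizer, and (ii) the optimizer $V$ automatically satisfies the \emph{stronger} bound $\inc(V,A)=O(\eps^{c_1})$ of item~1, using the linearity and consistency tests — this is the quantum analogue of Babai et al.'s observation that a ``reasonably good'' approximation is automatically ``extremely good''. The key technical tools here are Winter's gentle measurement lemma (to pass between consistency and trace-distance closeness, as in the proof of Claim~\ref{claim:lines}), the Cauchy–Schwarz inequality in the seminorm $\|\cdot\|_\rho$, and Lemma~\ref{lem:consmu} to exchange $\otimes$-correlations with products of operators on a single register.

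For items~2 and~3 I would argue as follows. Item~2 asks that for any $P$ of arity $\ge k+1$, the quantities $\cons(P,V)$ and $\cons(P,T)$ are $O(\delta^{c_1}+\inc(P,A)^{1/2})$-close. The idea is that both $V$ and $T$ are consistent with $A$ (with errors $O(\eps^{c_1})$ and $\delta$ respectively), and $P$ has inconsistency $\inc(P,A)$ with $A$; chaining these consistencies through $A$ via the triangle-inequality-type bounds for consistency (again using gentle measurement and Cauchy–Schwarz) shows that measuring with $P$ versus $T$ versus $V$ and comparing against $A$ all yield nearly the same thing, with the stated error. Item~3 is the analogous statement when $P$ has arbitrary (possibly smaller) arity, in which case one cannot route through $A$ as cleanly because the outcome of $P$ may be a restriction of a lower-arity object; instead one picks up an extra term measuring how far $T$ itself is from being ``idempotent'', namely $|\cons(T,T)-\Trho(T)|$, which quantifies the internal self-consistency of $T$ — this term appears because restricting $V$'s outcome back down to arity $k$ should essentially recover $T$, and the defect in that recovery is governed by $T$'s self-consistency.

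The main obstacle I expect is the self-improvement step: setting up the convex program so that (a) $\tilde V$ is genuinely feasible with a controlled objective, and (b) the optimizer provably achieves the improved exponent $c_1$ rather than merely inheriting the weaker $\delta^{c_1}$-type bound. This requires a delicate argument that combines the \emph{global} structure of the linearity test in direction $k+1$ with the already-established consistency properties, and it is the place where the exponents $c_1$ (and ultimately $c_0$) get pinned down. A secondary difficulty is bookkeeping the many seminorm/trace-distance conversions so that each costs only a square root and the errors compose to $O(\delta^{c_1})$ and $O(\eps^{c_1})$ as claimed, rather than accumulating uncontrollably across the $n$ inductive steps.
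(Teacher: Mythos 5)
Your two-stage architecture --- paste $T$ up to arity $k+1$ by interpolating along one more coordinate, then self-improve the result via a convex program --- is exactly the route the paper takes: the proposition is proved by composing a pasting lemma (Lemma~\ref{lem:lemma2}), which yields a family with $\inc(\cdot,A)=O(\delta^{c_2})$, with a self-improvement lemma (Lemma~\ref{lem:lemma1}), which boosts this to $O(\eps^{1/16})$. The genuine gap is in your self-improvement step. The program you propose (minimize a proxy for $\inc(V,A)$ subject to $\Trho(V)$ not being too small) cannot deliver the improvement: its optimum is only known to be at most the value of your feasible point, i.e.\ $O(\delta^{c_1})$, and the claim that the optimizer ``automatically'' achieves $O(\eps^{c_1})$ is precisely what has to be proved --- as stated the argument is circular. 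The paper's program is different, and the difference is the crux: for each $\vx$ it seeks operators $\hat{S}_\vx^g$ that are \emph{exactly} subordinate to $A$ (constraint $\sum_{g:\,g(\vx_{\le k})=a}\hat{S}_\vx^g(\hat{S}_\vx^g)^\dagger\le A_\vx^a$) and minimizes the distance $\Es{\vx}\sum_g\|\hat{S}_\vx^g-\sqrt{R_{\vx_{\geq k}}^g}\|_\rho^2$ to the pasted family. Consistency with $A$ is then built into feasibility; the real work, absent from your sketch, is to remove the dependence of $\hat{S}_\vx^g$ on $\vx_{<k}$: one perturbs the optimizer toward specially constructed feasible solutions built from the line measurements $B$, uses first-order optimality to conclude near-invariance of $\hat{S}_\vx^g$ along every axis-parallel line in directions $i<k$, and then invokes the expansion of the hypercube (Claim~\ref{claim:expand}) to upgrade this to global near-independence of $\vx_{<k}$, so that averaging over $\vx_{<k}$ defines the improved family at cost $O(n\eps^{1/4})$. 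Without this mechanism the $\eps$-level (rather than $\delta$-level) bound of item~1 does not follow and $c_1$ cannot be pinned down.

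A secondary set of missing ingredients sits in the pasting step, and these are what make items~2 and~3 (the ``$V$ is not too small'' guarantees) provable rather than vacuous. Consistency of $T$ with $A$ and of $B$ with $A$ is not enough: the paper first pre-processes $T$ by repeated pinching with the line measurements to obtain approximate commutation with $B$ (Claim~\ref{claim:tm-cons}), then builds $V$ from products of the form $\tilde{T}\,T^{g_{|x_k}}\,\tilde{T}\,T^{g_{|y_k}}\,\tilde{T}\cdots$, where $\tilde{T}$ is a truncated pseudo-inverse of $T_{\vx_{>k}}$ introduced specifically to prevent the pasted family from shrinking; moreover the self-improvement lemma must hand the pasting lemma a factorization property ($T^g$ of the form $(\Es{\vx_{<k}}\hat{S}_\vx^g)(\Es{\vx_{<k}}\hat{S}_\vx^g)^\dagger$ with $\hat{S}$ close to $\sqrt{T^g}$), which is consumed in the pasting analysis. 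Your heuristic for why the term $|\cons(T,T)-\Trho(T)|$ enters item~3 is in the right spirit, but the ``chaining through $A$'' you describe for items~2 and~3 only restates the lemma-level conclusions; it does not substitute for these constructions.
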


We first show that Theorem~\ref{thm:lintest} follows from Proposition~\ref{prop:mainprop}.

\begin{proof}[Proof of Theorem~\ref{thm:lintest}]
Starting from $V_0 = A$, let $V_1,\ldots,V_n$ be the sequence of measurements of increasing arity $1,\ldots,n$ given by Proposition~\ref{prop:mainprop}. By item~1, for every $i\in[n]$ we have $\inc(V_i,A) \leq C_1\eps^{c_1}$ for some universal constant $C_1$. Applying item~2 to $P=V_i$ and $V=V_i,V_{i-1},\ldots,V_0$, an easy induction shows that
$$ \big|\cons(V_i,V_i) - \cons(V_i,A) \big| \,=\, O\big(i\,\big(\eps^{{c_1}^2} + \eps^{c_1/2}\big)\big).$$
Hence using item~1. and $\inc(V,A)+\cons(V,A) = \Trho(V)$, since $A$ is a complete family of measurements, we also get
$$ \big|\cons(V_i,V_i) - \Trho(V_i) \big| \,=\, O\big(i\,\eps^{{c_1}^2}\big),$$
where we used $c_1 < 1/2$. Applying item~3 with $P=A$, an immediate induction then gives
$$\big|\cons(V_n,A) - \cons(A,A) \big| \,=\, O\big(n\sqrt{n}\,\eps^{{c_1}^2/2}\big).$$
But $\cons(A,A) \geq 1-\eps$ by~\eqref{eq:cons}, and using $\Trho(V_n) = \cons(V_n,A)+\inc(V_n,A)$ once more the theorem is proved for an appropriate choice of the constants $c_0,C_0$.  
\end{proof}

The proof of Proposition~\ref{prop:mainprop} itself proceeds by induction, and is based on two lemmas. The first is a quantum analogue of the ``self-improvement lemma''~\cite[Lemma~5.10]{BabForLun91CC}. It shows that, if a family of sub-measurements $\{R_{\vx_{\geq k}}^g\}$ is weakly consistent with $\{A_\vx^a\}$, \emph{and} it passes the consistency and linearity tests with high probability, then there exists an ``improved'' family of sub-measurements $\{T_{\vx_{\geq k}}^g\}$ that are highly consistent with $\{A_\vx^a\}$. (Item~3 in the conclusion of the lemma is not ultimately needed, but is required to combine Lemma~\ref{lem:lemma1} with Lemma~\ref{lem:lemma2} in the proof of Proposition~\ref{prop:mainprop}.) 

\begin{lemma}[Self-improvement lemma]\label{lem:lemma1} Let $(\ket{\Psi},\{A_\vx^a\}_a)$ be a (symmetric, projective) strategy for $3$ players in the multilinearity game, and $n^{-8} \geq \delta\geq \sqrt{n}\eps^{1/8} \geq 1/p$ such that the following hold:
\begin{enumerate}
\item The strategy $(\ket{\Psi},\{A_\vx^a\}_a)$ is accepted with probability at least~$1-\varepsilon/2$ in the multilinearity game,
\item There exists a family of sub-measurements $R$ of arity $k$ such that $\inc(R,A)\leq \delta$. 
 \end{enumerate}
Then there exists a family of sub-measurements $T$ of arity $k$, together with, for every $\vx\in\Fp^n$, a family of matrices $\{\hat{S}_{\vx}^g\big\}_g$, indexed by $g\in\ML(\Fp^{k-1},\Fp)$, such that the following hold: 
\begin{enumerate}
\item\label{claim2:1} $\inc(T,A) = O(\eps^{1/16})$,
\item\label{claim2:3} For any family of sub-measurements $P$, of arbitrary arity, $\big|\cons(P,R)-\cons(P,T)\big| = O\big(\sqrt{\delta}\big)$,
\item\label{claim2:2} For every $\vx$ and $a$, $\sum_{g:g(\vx_{<k})=a} \hat{S}_\vx^g \big( \hat{S}_\vx^g\big)^\dagger \leq A_\vx^a$, and for every $\vx_{\geq k}$ and $g$, $T_{\vx_{\geq k}}^g = \big(\Es{\vx_{<k}} \hat{S}_\vx^g\big)\big(\Es{\vx_{<k}} \hat{S}_\vx^g\big)^\dagger$ and 
$$\Es{\vx} \sum_g\, \Big\| \hat{S}_\vx^g - \sqrt{T_{\vx_{\geq k}}^g} \Big\|_\rho^2 \,\leq\, \delta.$$
\end{enumerate}
\end{lemma}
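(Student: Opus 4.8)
The plan is to follow the strategy of Babai--Fortnow--Lund's self-improvement lemma~\cite[Lemma~5.10]{BabForLun91CC}, but carried out at the level of measurement operators rather than functions. The starting point is the family $R$ of arity $k$ which is only weakly consistent with $A$ (inconsistency $\le \delta$). First I would extract, for each $\vx\in\Fp^n$, a candidate ``self-corrected'' operator-valued object: intuitively, to guess the value $g(\vx_{<k})$ that $R$ would assign, one re-samples the first $k-1$ coordinates at fresh random points, queries $A$ there, and interpolates the unique $(k-1)$-linear function through those evaluations. This is exactly the line/plane reconstruction idea already used in Claim~\ref{claim:lines} and in the ``pasting lemma'' (Lemma~\ref{lem:lemma2}), but now the reconstruction is driven by $R$'s outcome $g$ restricted to its last coordinate, giving the matrices $\hat S_\vx^g$ indexed by $g\in\ML(\Fp^{k-1},\Fp)$. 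Concretely I expect $\hat S_\vx^g$ to be built as a product of the form $A^{a}_{\vx}\big(\text{interpolation of }A\text{ at re-randomized points}\big)$ composed with the relevant block of $R$, arranged so that the first property in item~\ref{claim2:2}, $\sum_{g:g(\vx_{<k})=a}\hat S_\vx^g(\hat S_\vx^g)^\dagger\le A_\vx^a$, holds by construction (each $\hat S_\vx^g$ factors through the projector $A_\vx^{g(\vx_{<k})}$ on one side).

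Second, I would define the improved family $T$ by averaging over the first $k$ coordinates and squaring: $T_{\vx_{\ge k}}^g:=\big(\Es{\vx_{<k}}\hat S_\vx^g\big)\big(\Es{\vx_{<k}}\hat S_\vx^g\big)^\dagger$, which is manifestly a family of sub-measurements of arity $k$ (non-negative, and the sum over $g$ is bounded by $\Id$ using the first part of item~\ref{claim2:2} together with $\sum_a A_\vx^a=\Id$ and convexity/Cauchy--Schwarz in $\|\cdot\|_\rho$). The quantitative content of item~\ref{claim2:2}, namely $\Es{\vx}\sum_g\|\hat S_\vx^g-\sqrt{T_{\vx_{\ge k}}^g}\|_\rho^2\le\delta$, should follow from a Cauchy--Schwarz / variance bound: $\hat S_\vx^g$ concentrates around its average over $\vx_{<k}$ because the reconstruction procedure is designed so that the fresh randomization makes $\hat S_\vx^g$ nearly independent of $\vx_{<k}$ up to an error governed by the linearity-test success probability $1-\eps/2$ and the hypothesis $\delta\ge\sqrt n\,\eps^{1/8}$. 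This is the place where the self-correction is doing real work: one uses the linearity relations~\eqref{eq:lin} in each of the first $k-1$ directions together with the gentle-measurement lemma (Lemma~\ref{lemma:gentle}, via Claim~\ref{claim:gentle}) to commute $A$-operators past $\rho$ at the cost of $O(\sqrt\eps)$ each, and the fact that there are at most $n$ directions keeps the total loss at $O(n\sqrt\eps)\le\delta$.

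Third, item~\ref{claim2:1}: $T$ is highly consistent with $A$. The point of the re-randomization is that, unlike $R$, the reconstructed operator agrees with $A_\vx^{g(\vx_{<k})}$ by construction on a ``good'' fraction of re-randomizations, and the linearity test (passed with probability $1-\eps/2$) upgrades ``$R$ is $\delta$-consistent'' to ``$T$ is $O(\eps^{1/16})$-consistent'' --- this is the characteristic amplification $\delta\to\poly(\eps)$ of the BFL self-improvement step, here with exponent $1/16$ reflecting the chain of square-root losses (two gentle-measurement applications, one Cauchy--Schwarz, and the $\eps^{1/8}$ appearing in the hypothesis on $\delta$). I would bound $\inc(T,A)=\Es{\vx}\sum_{g,a:\,g(\vx_{<k})\ne a}\Trho(T_{\vx_{\ge k}}^g\otimes A_\vx^a)$ by first replacing $\sqrt{T_{\vx_{\ge k}}^g}$ by $\hat S_\vx^g$ using item~\ref{claim2:2} (paying $O(\sqrt\delta)$), then using the first part of item~\ref{claim2:2} to see that on the left register $\hat S_\vx^g$ is ``supported on'' $A_\vx^{g(\vx_{<k})}$, and finally invoking consistency of $A$ on the two registers~\eqref{eq:cons} to force $a=g(\vx_{<k})$ except with probability $O(\sqrt\eps)$; the dominant term is the linearity-test error, giving $O(\eps^{1/16})$ after collecting losses. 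For item~\ref{claim2:3}, stability of consistency, I would expand $\cons(P,T)-\cons(P,R)$, use item~\ref{claim2:2} to pass between $T$ and the $\hat S$'s (and hence implicitly between $T$ and $R$, since $\hat S$ is built from $R$ and $A$ with only $O(\sqrt\delta)$-scale perturbations), and bound cross terms by Cauchy--Schwarz in $\|\cdot\|_\rho$ together with $\inc(R,A)\le\delta$ and $\inc(P,A)$ terms that are absorbed into the $O(\sqrt\delta)$.

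\textbf{Main obstacle.} The hard part will be executing the operator-level self-correction cleanly: defining $\hat S_\vx^g$ so that \emph{all three} of (i) the ``supported below $A_\vx^a$'' inequality, (ii) the $\delta$-closeness of $\hat S$ to $\sqrt T$, and (iii) the $\eps^{1/16}$-consistency of $T$ with $A$ hold simultaneously, while keeping track of the non-commutativity (products of projectors $A$ at different points do not commute, so every rearrangement costs a gentle-measurement/Cauchy--Schwarz error). In the classical BFL proof this is a one-line majority-vote argument; the genuinely new work is that here one must phrase ``majority vote'' as a carefully ordered product of measurement operators and verify that the accumulated errors still collapse to a single $\poly(\eps)$ bound under the stated relations among $n,\eps,\delta,p$.
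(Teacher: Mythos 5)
Your high-level skeleton matches the paper's: construct pointwise operators $\hat S_\vx^g$ dominated by $A_\vx^{g(\vx_{<k})}$, set $T_{\vx_{\geq k}}^g=\big(\Es{\vx_{<k}}\hat S_\vx^g\big)\big(\Es{\vx_{<k}}\hat S_\vx^g\big)^\dagger$, and then verify the three items. But the mechanism you propose for defining $\hat S$ --- an explicit product of $A$-projectors, re-randomized interpolation, and a block of $R$ --- is exactly the step the paper deliberately avoids, and I do not see how your version closes the gap. In the paper, $\hat S$ is \emph{not} given by a formula: it is the optimizer of the convex program~\eqref{eq:conv1} (minimize $\Es{\vx}\sum_g\|\hat S_\vx^g-\sqrt{R_{\vx_{\geq k}}^g}\|_\rho^2$ subject to domination by $A$), and the crucial fact that $\hat S_\vx^g$ barely depends on $\vx_{<k}$ is proved \emph{variationally}: one builds competing feasible solutions using the line measurements $B$ of Claim~\ref{claim:lines} (Claim~\ref{claim:almostopt}), uses first-order optimality over the convex feasible set to force $\hat S$ to be close to them with error $O(\eps^{1/4})$ (Claim~\ref{claim:closebasis}), and then upgrades invariance along axis-parallel lines to global invariance via hypercube expansion (Claims~\ref{claim:existz} and~\ref{claim:expand}). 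The explicit candidate $A_\vx^{g(\vx_{<k})}\sqrt{R_{\vx_{\geq k}}^g}$ appears in the paper only to upper-bound the optimum (Claim~\ref{claim:distdelta}); it is not claimed, and is not obviously true, that it is nearly independent of $\vx_{<k}$.

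The concrete danger in your route is quantitative, and it is the heart of the lemma. Any rearrangement argument that manipulates an $\hat S$ built directly out of $R$ will pay errors through $\inc(R,A)\le\delta$, i.e.\ terms of order $\sqrt{\delta}$, in the bound on $\inc(T,A)$; but item~\ref{claim2:1} must be $O(\eps^{1/16})$ with \emph{no} $\delta$-dependence, since this is precisely what stops the error from compounding across the $n$ rounds of the induction in Proposition~\ref{prop:mainprop}. Your sketch asserts this amplification ``should follow'' from consistency of $A$ and a Cauchy--Schwarz/variance bound, but never isolates where the $\delta$-dependence is shed; the convex-program optimality condition is what does that in the paper. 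Two further unaddressed points: reconstructing a $(k-1)$-linear function by ``interpolating $A$ at re-randomized points'' needs exponentially many evaluations (or a coordinate-by-coordinate argument, which then requires the expansion step you omit to pass from per-direction invariance to invariance in all of $\vx_{<k}$); and your claimed bound $O(n\sqrt{\eps})\le\delta$ for item~\ref{claim2:2} is asserted rather than derived from a concentration statement about your specific construction. As written, the proposal is a plausible plan that stalls exactly at the lemma's main difficulty, which your own ``main obstacle'' paragraph acknowledges but does not resolve.
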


The second lemma is an analogue of the ``pasting lemma''~\cite[Lemma~5.11]{BabForLun91CC}. It shows how, starting from a family of sub-measurements $T$ of arity $k$ that is consistent with $A$, one may construct a family of sub-measurements $V$ of increased arity $k+1$ that is still somewhat consistent with $A$, as expressed in item~1 below. Items~2 and~3 are important to ensure that the new sub-measurement $V$ is not ``too incomplete'', which would render item~1 trivial. 

\begin{lemma}[Pasting lemma]\label{lem:lemma2} There exists a universal constant $0<c_2<1$ such that the following holds. Let $\eps,\delta>0$ be such that $np^{-1}\leq \eps\leq \delta^2$. Let $(\ket{\Psi},\{A_\vx^a\}_a)$ be a (symmetric, projective) strategy for $3$ players that is accepted with probability at least~$1-\varepsilon/2$ in the multilinearity game. Let $1\leq k\leq n-1$ and $T$ a family of sub-measurements of arity $k$ such that $\inc(T,A)\leq \delta$, and $T$ satisfies item 3. in the conclusion of Lemma~\ref{lem:lemma1}. Then there exists a family of sub-measurements $V$ of arity $k+1$ such that
  \begin{enumerate}
  \item $V$ is consistent with $A$: $\inc(V,A) = O\big(\delta^{c_2}\big)$,
 \item For any family of sub-measurements $P$ of arity at least $k+1$, 
$$\big|\cons(P,V) - \cons(P,T)\big| \,=\, O(\delta^{c_2} + \inc(P,A)^{1/2}),$$
\item For any family of sub-measurements $P$, of arbitrary arity, 
$$\big|\cons(P,V)-\cons(P,T)\big| \,=\, O\big(\delta^{c_2}+\big|\cons(T,T)-\Tr_\rho(T)\big|^{1/2}\big).$$
  \end{enumerate}
  \end{lemma}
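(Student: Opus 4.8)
The statement to prove is the ``Pasting lemma'' (Lemma~\ref{lem:lemma2}).

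\medskip

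\textbf{Overall approach.} The plan is to mimic, at the level of measurement operators, the classical ``pasting'' construction of Babai, Fortnow, and Lund: given a family $T$ of sub-measurements of arity $k$ that is consistent with $A$, we build a family $V$ of arity $k+1$ by combining $T$ with the $1$-dimensional interpolation construction that already appeared in Claim~\ref{claim:lines} along the $(k+1)$-st coordinate. Concretely, for fixed $\vx_{>k+1}$ I would define $V_{\vx_{>k+1}}^g$ by first measuring with the arity-$k$ family $T_{\vx_{\geq k+1}}$ at a random value $x_{k+1}$ to obtain a multilinear $h:\Fp^k\to\Fp$, then ``re-measuring'' with $T$ at a second random value $x_{k+1}'$ to obtain $h':\Fp^k\to\Fp$, and finally outputting the unique multilinear $g:\Fp^{k+1}\to\Fp$ whose restrictions at $x_{k+1}$ and $x_{k+1}'$ are $h$ and $h'$ respectively (when such a $g$ exists; otherwise output a fail symbol). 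Just as in Claim~\ref{claim:lines}, the sandwiched product $\sqrt{T^{h}}\,(\text{something})\,\sqrt{T^{h}}$, or rather the operator obtained from the factorization $T_{\vx_{\geq k}}^g = (\E_{\vx_{<k}}\hat S_\vx^g)(\E_{\vx_{<k}}\hat S_\vx^g)^\dagger$ supplied by item~3 of Lemma~\ref{lem:lemma1}, will give a genuine sub-measurement. The role of item~3 of Lemma~\ref{lem:lemma1} is precisely this: it gives us an explicit ``square-root-like'' factorization of $T$ through operators $\hat S$ that are themselves dominated by the projectors $A_\vx^a$, which is what lets the consistency estimates go through.

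\medskip

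\textbf{Key steps, in order.}
\emph{(i)} Set up notation and define $V$ as above, checking it is a legitimate family of sub-measurements of arity $k+1$ (non-negativity is automatic from the sandwiched form; sub-normalization follows because $T$ is a sub-measurement and the interpolation map is injective on the relevant pairs).
\emph{(ii)} Prove item~1, $\inc(V,A)=O(\delta^{c_2})$. This is the heart of the argument. One unfolds $\inc(V,A)$ into a sum over $\vx$, over the two fresh coordinate values $x_{k+1},x_{k+1}'$, and over outcomes; then one uses (a) consistency of $A$ with itself, Eq.~\eqref{eq:cons}, to pass between $\otimes$ and product forms (via Lemma~\ref{lem:consmu} / Claim~\ref{claim:gentle}), (b) the hypothesis $\inc(T,A)\leq\delta$ to replace a $T$-measurement by an $A$-measurement at a cost $O(\delta)$, (c) the linearity test success \eqref{eq:lin} along coordinate $k+1$ to say that the $A$-answers at $x_{k+1}$, $x_{k+1}'$ and a third point are collinear with probability $1-O(\sqrt\eps)$, so that interpolation followed by evaluation at a random point agrees with the direct $A$-answer, and (d) the Schwartz--Zippel lemma (Lemma~\ref{lem:sz}) to control the probability that two distinct multilinear functions of arity $k+1$ agree at a random point, which handles the ``$g(\vx)\neq a$'' contributions. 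The errors compound multiplicatively through these replacements, which is why one ends up with $\delta^{c_2}$ for some constant $c_2<1$ rather than $O(\delta)$; tracking this exponent carefully is the bookkeeping part.
\emph{(iii)} Prove items~2 and~3, the ``$\cons(P,V)\approx\cons(P,T)$'' estimates. Here the point is that $V$ is obtained from $T$ by one extra measurement step on a \emph{fresh} coordinate, so for a test family $P$ of arity $\geq k+1$ the extra measurement is ``absorbed'' by consistency of $P$ with $A$ (hence the $\inc(P,A)^{1/2}$ term in item~2), while for $P$ of arbitrary arity one instead pays a term controlled by how far $T$ is from being a genuine (complete) family, measured by $|\cons(T,T)-\Trho(T)|$ (item~3). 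Both are again applications of the gentle-measurement lemma together with Cauchy--Schwarz in the $\|\cdot\|_\rho$ semi-norm, in the same spirit as the manipulations in the proof of Claim~\ref{claim:lines} and in Section~\ref{sec:pflintestclose}.

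\medskip

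\textbf{Main obstacle.} I expect the hard part to be item~1, and specifically the interplay between the linearity test (which controls collinearity of \emph{three} $A$-answers along one coordinate but says nothing directly about the arity-$k$ family $T$) and the consistency of $T$ with $A$ (which lets us swap $T$ for $A$ but only one measurement at a time, incurring a $\delta$ loss each time). Because the construction of $V$ involves \emph{two} nested $T$-measurements plus the interpolation step, each swap and each application of the gentle-measurement lemma degrades the error, and one has to arrange the sequence of approximations so that the total blow-up is a fixed power of $\delta$ independent of $k$ and $n$ — this is exactly the phenomenon flagged in the introduction as requiring ``active correction'' and the reason Lemma~\ref{lem:lemma1} is invoked beforehand. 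A secondary subtlety is ensuring that the fail outcome (when no interpolating multilinear $g$ exists, or when $T$ is sub-normalized) is handled so that it only \emph{helps} item~1 (it never causes a spurious inconsistency) while being small enough not to make items~2--3 vacuous; this is what the hypothesis that $T$ satisfies item~3 of Lemma~\ref{lem:lemma1}, together with the quantitative relation $np^{-1}\leq\eps\leq\delta^2$, is there to guarantee.
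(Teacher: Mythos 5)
Your core construction is the right one in spirit --- the paper's $V$ is indeed built by applying $T$ at two random values of the fresh coordinate and interpolating, in direct analogy with Claim~\ref{claim:lines} --- but as written the plan has a genuine gap: it supplies no mechanism to prevent the pasted family from shrinking. If $T$ is only a sub-measurement, the naive double sandwich (whether you use $\sqrt{T^h}$ or the $\hat S$ factorization from item~3 of Lemma~\ref{lem:lemma1}) loses an amount of mass that is \emph{not} controlled by $\delta$; it is controlled by how far $T$ is from acting like a (complete, projective) measurement, and this is exactly what makes items~2 and~3 nontrivial. You flag this worry at the end but assert that the hypotheses ($T$ satisfying item~3 of Lemma~\ref{lem:lemma1}, $np^{-1}\le\eps\le\delta^2$) ``are there to guarantee'' it --- they are not. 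The paper has to introduce an explicit new object: the pseudo-inverse $\tilde T_{\vx_{>k}} = \big(\sum_{r=0}^{R}(\Id-T_{\vx_{>k}})^r\big)^{1/2}$ of~\eqref{eq:def-tildet}, and defines $V$ as a sandwich of the renormalized operators $W^h = \tilde T\, T^h\, \tilde T$; the resulting bound $\Es{\vx_{>k}}\Trho\big((\Id-W_{\vx_{>k}})T_{\vx_{>k}}(\Id-W_{\vx_{>k}})\big)=O(\eta)$ in~\eqref{eq:ww-t} is precisely what yields the $\inc(P,A)^{1/2}$ term in item~2 and the $|\cons(T,T)-\Trho(T)|^{1/2}$ term in item~3. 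Without this (or an equivalent normalization), your step~(iii) does not go through: gentle measurement plus Cauchy--Schwarz cannot recover the mass that the double application of a strict sub-measurement discards.

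A second missing ingredient is the pre-processing of Section~\ref{sec:pasting-preproc} (Claim~\ref{claim:tm-cons}): before pasting, the paper replaces $T$ by a family $Q$ obtained by iterating a pinching map with the line measurements $B$, using potential-function arguments (Facts~\ref{fact:tm-f0}--\ref{fact:tm-f3}) to find a stopping time at which powers of $Q_{\vx_{>k}}$ approximately commute with $B$. This approximate commutation (propagated to $\tilde T$ in~\eqref{eq:tilde-com}) is what lets the consistency computations of Claims~\ref{claim:t-s-cons}--\ref{claim:item2} move the $B$ and $A$ operators past the nested $T$-factors without an uncontrolled error blow-up; the paper explicitly notes it could not prove this commutation for $T$ itself. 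Your sketch of item~1 (swap $T$ for $A$ at cost $\delta$, use linearity of $A$, Schwartz--Zippel) is the right skeleton, but without the commutation step and the renormalization the sequence of swaps you describe does not close, so the proposal as it stands does not constitute a proof.
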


Proposition~\ref{prop:mainprop} follows almost immediately by combining the two lemmas.

\begin{proof}[Proof of Proposition~\ref{prop:mainprop}]
Let $T$ be the family of sub-measurements given in the statement of the proposition. First apply Lemma~\ref{lem:lemma2} to $T$, obtaining a family of sub-measurements $R$ (called $V$ in the lemma) of arity $k+1$ such that items~1,~2 and~3 in the conclusion of the lemma hold. Next apply Lemma~\ref{lem:lemma1} to $R$, obtaining a family of sub-measurements $V$ of arity $k+1$ (called $T$ in the lemma) such that items~1 and~2 hold, where given our assumption $\inc(T,A)\leq\delta$ and item~1 from Lemma~\ref{lem:lemma2} the bound in item~2 is $O(\delta^{c_2/2})$. Item~1 from Lemma~\ref{lem:lemma1} implies item~1 in the proposition (provided $c_1$ is chosen small enough), and item~2 (resp. item~3) follows from combining item~2 from Lemma~\ref{lem:lemma1} with item~2 (resp. item~3) from Lemma~\ref{lem:lemma2}. 
\end{proof}

%------------------------------
%		Self-Improvement Lemma
%------------------------------

\subsection{The self-improvement lemma}\label{sec:lemma1}

In this section we prove Lemma~\ref{lem:lemma1}. Before proceeding with the details, we give some intuition and a high-level overview of how we will proceed. 

Consider the following simplified situation in $n=2$ dimensions. Although we will eventually require $p$ to be a large power of $2$, for the purposes of this overview it is sufficient to think about the case $p=2$, so that the players' answers are simply bits. For every $\vx\in \Fp^2$ we are given a two-outcome projective measurement $(A_\vx^0,A_\vx^1)$: picture two orthogonal ``planes'' of dimension $d/2$ each, where $d$ is the dimension of either players' private space and can be arbitrarily large. Our goal is to find a global ``refinement'' of these planes: a single measurement $\{T^g\}$, with outcomes in the set of bilinear functions $g:\Fp^2\to\Fp$, such that at every $\vx$ the approximation $A_\vx^a \approx_\eps \sum_{g:\,g(\vx)=a} T^g$ holds.\footnote{At this point we are being vague as to how the approximation is measured --- it will eventually be expressed solely in terms of the consistency between the two measurements.} In order to achieve this, we make two additional assumptions:
\begin{enumerate}
\item There exists another measurement $\{R^g\}$ which achieves an approximation of weaker quality, up to some $\delta \gg \eps$, than the one we are looking for,
\item The $\{A_\vx^a\}$ are very close to \emph{linear}: for every axis-parallel line $(x_1,\cdot)$ (resp. $(\cdot,x_2)$) there is a measurement $\{B_{x_1}^\ell\}_\ell$ (resp. $\{B_{x_2}^\ell\}_\ell$) with outcomes in the set of linear functions $\ell:\Fp\to \Fp$ such that $A^a_{(x_1,x_2)} \approx_\eps \sum_{\ell:\,\ell(x_2)=a} B_{x_1}^\ell$ (resp. $A^a_{(x_1,x_2)} \approx_\eps \sum_{\ell:\,\ell(x_1)=a} B_{x_2}^\ell$).
\end{enumerate}
The goal is to use the high quality of the approximation along lines to improve the quality of the overall ``bilinear'' approximation. Let's trust that an ideal measurement $\{T^g\}$, achieving an approximation of order $\eps$, exists, and think of $\{R^g\}$ as an adversarially ``corrupted'' version of $\{T^g\}$. There are two main ways in which $\{T^g\}$ can be corrupted: the first is by applying an arbitrary (but not too large) rotation on the whole space. The second is by ``mislabeling'' some of the measurement elements: e.g.\ for some $g$, a subspace of the space on which the ideal operator $T^g$ projects could have been labeled as a subspace of $R^{g'}$ for some $g'\neq g$. Note that the first type of error is unique to the quantum setting, and did not arise in the setting of Babai et al.'s ``self-improvement'' lemma~\cite{BabForLun91CC}. Indeed,  while quantum measurements are subject to arbitrarily small perturbations that may add up over time, nothing short of flipping the output of a binary function will suffice to corrupt it.

We devise a procedure which recovers from the first type of perturbation, but not the second. This appears unavoidable: if some components of the measurement $\{R^g\}$ are mis-labeled (say by completely re-shuffling the part of each measurement element that falls in a small-dimensional subspace of the whole space), there is no generic way to recover the corresponding ideal measurement elements. This is the main reason why the measurements we construct ``shrink'' at every step of the induction, and we have to work with sub-measurements instead: any ``mislabeled'' portions of space will have to be ignored. Since we cannot recover from such errors, it is crucial that they do not add up to too much throughout the whole induction process. 

\medskip

To correct the first type of error, we introduce the following procedure:
\begin{enumerate}
\item For every $\vx$, find the measurement $\{S_\vx^g\}_g$ which is closest to $\{R^g\}$ while being \emph{perfectly} consistent with $\{A_\vx^a\}$: that is, $\sum_{g:g(\vx)=a} S_\vx^g = A_\vx^a$. This is possible only because the elements $S_\vx^g$ are allowed to depend on $\vx$. We define the $\{S_\vx^g\}$ as the optimum solution to a specific convex program (see~\eqref{eq:conv1} below). Intuitively, $S_\vx^g$ is obtained as the ``projection'' of $R^g$ on the subspace $A_\vx^{g(\vx)}$.  
\item  Show that $\{S_\vx^g\}_g$ in fact only depends on $\vx$ up to some error depending on $\eps$ only (and not $\delta$), so that defining $T^g := \Es{\vx} S_\vx^g$ leads to the consistent measurement we are looking for. 
\end{enumerate}
The second step is crucial: why would the $\{S_\vx^g\}$ be (almost) independent of $\vx$? Here the linearity relations satisfied by the $\{A_\vx^a\}$ come into play. Using the perfect consistency of $S$ and $A$, together with the linearity of $A$, we are able to conclude that the $\{S_\vx^g\}$ should not vary too much \emph{along any axis-parallel line}. That is, $S_{(x_1,x_2)}^g \approx_\eps S_{(x_1,x'_2)}^g$ for any $x_1$ and $x_2,x'_2$ (and similarly in the other direction). This step depends on the specific optimization problem that was introduced in order to define $\{S_\vx^g\}_g$ (see~\eqref{eq:conv1} below). 
This invariance along axis-parallel lines can then be combined with the (reasonably) good expansion properties of the hypercube to conclude that the $\{S_\vx^g\}$ are in fact globally invariant, leading to the ``corrected'' measurement $\{T^g\}$. (We note that the fact that invariance along axis-parallel lines implies global invariance was already used in~\cite{BabForLun91CC}.)

\medskip

We proceed with the details. In the following section we introduce the optimization procedure that is used to define the operators $\big\{S_{\vx}^g\big\}_g$. In Section~\ref{sec:inv-s} we show that the $\{S_\vx^g\}$ are close to being independent of $\vx$, leading to the definition of the family of sub-measurements $\{T_{\vx_{\geq k}}^g\}$. In Section~\ref{sec:pfl2} we show that $T$ satisfies the conclusions of Lemma~\ref{lem:lemma1}.

\subsubsection{A convex optimization problem}

Let $\big\{R_{\vx_{\geq k}}^g\big\}_g$ be the family of sub-measurements promised in the assumptions of Lemma~\ref{lem:lemma1}. Let $\{\hat{S}_\vx^g\}_{g}$, where $\vx\in\Fp^n$ and $g\in\ML(\Fp^{k-1},\Fp)$, be an optimal solution to the following convex optimization problem: 
\begin{center}
 \centerline{\underline{Convex program for self-improvement}}\vskip-4mm
\begin{align}
&\omega\,:=\,\min \, \Es{\vx} \sum_g \Big\|\hat{S}_\vx^g - \sqrt{R_{\vx_{\geq k}}^g}\Big\|_{\rho}^2\label{eq:conv1}\\
& \forall\vx,a,\, \sum_{g:g(\vx_{\leq k})=a}  \hat{S}_\vx^g (\hat{S}_\vx^g)^\dagger \leq A_{\vx}^a,\notag
\end{align}
\end{center}
where $\sqrt{R_{\vx_{\geq k}}^g}$ is the positive square root of $R_{\vx_{\geq k}}^g$. Let $S_\vx^g := \hat{S}_\vx^g \big(\hat{S}_\vx^g)^\dagger$.\footnote{We will usually use a hat, as in $\hat{S}$, to denote matrices which we think of as factorizations of positive semidefinite matrices, but are not necessarily positive themselves. In general, the relation between $\hat{X}$ and $X$ will always be that $X = \hat{X}\hat{X}^\dagger$.} Our first claim shows that the optimum of~\eqref{eq:conv1} is bounded as a function of the inconsistency of $R$ and $A$. 

\begin{claim}\label{claim:distdelta}
Suppose that the $\big\{R_{\vx_{\geq k}}^g\big\}_g$ satisfy the assumptions of Lemma~\ref{lem:lemma1}. Then the optimum $\omega$ of~\eqref{eq:conv1} is at most $\inc(A,R)+O\big(\sqrt{\eps}\big)$.
\end{claim}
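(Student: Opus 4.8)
The plan is to exhibit a single feasible point of the convex program~\eqref{eq:conv1} whose objective value is at most $\inc(A,R)+O(\sqrt\eps)$; since $\omega$ is by definition the minimum over all feasible points, this proves the claim. The candidate — a quantum analogue of projecting the reconstructed value onto the block that $\{A_\vx^a\}$ assigns to it — is $\hat S_\vx^g := A_\vx^{g(\vx_{<k})}\sqrt{R_{\vx_{\geq k}}^g}$. Feasibility is immediate: since $A_\vx^a$ is a projector and $\{R_{\vx_{\geq k}}^g\}_g$ is a sub-measurement,
\[
 \sum_{g:\,g(\vx_{<k})=a}\hat S_\vx^g(\hat S_\vx^g)^\dagger \,=\, A_\vx^a\Bigl(\sum_{g:\,g(\vx_{<k})=a}R_{\vx_{\geq k}}^g\Bigr)A_\vx^a \,\leq\, A_\vx^a\Bigl(\sum_g R_{\vx_{\geq k}}^g\Bigr)A_\vx^a \,\leq\, A_\vx^a .
\]

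For the objective, abbreviate $A:=A_\vx^{g(\vx_{<k})}$ and $R^g:=R_{\vx_{\geq k}}^g$; using that $A$ and $\sqrt{R^g}$ are Hermitian and $A^2=A$, one gets $\|\hat S_\vx^g-\sqrt{R^g}\|_\rho^2=\Trho\bigl((\Id-A)R^g(\Id-A)\bigr)$. The heart of the matter is to transfer this ``one-copy'' quantity to the ``two-copy'' quantity $\Trho\bigl((\Id-A)\otimes R^g\bigr)$, at total cost $O(\sqrt\eps)$; this is exactly what the consistency of the \emph{projective} family $\{A_\vx^a\}$ with itself buys, since the consistency test gives $\inc(A)\leq\eps$ by~\eqref{eq:cons}. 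Concretely, one should first collapse $\sum_g$ into a sum over the value $a=g(\vx_{<k})\in\Fp$, writing $R_\vx^a:=\sum_{g:\,g(\vx_{<k})=a}R_{\vx_{\geq k}}^g$ (so $0\leq R_\vx^a\leq\Id$), and then move the two projectors $\Id-A_\vx^a$ from the first copy of $\ket\Psi$ onto the second one at a time — invoking Claim~\ref{claim:gentle}/Lemma~\ref{lem:consmu} and Cauchy--Schwarz over the now only $|\Fp|$-element index set, and using $A_\vx^a(\Id-A_\vx^a)=0$ to discard a cross-term. This produces $\Es{\vx}\sum_g\|\hat S_\vx^g-\sqrt{R^g}\|_\rho^2 = \Es{\vx}\sum_g\Trho\bigl((\Id-A_\vx^{g(\vx_{<k})})\otimes R_{\vx_{\geq k}}^g\bigr)+O(\sqrt\eps)+O(\sqrt{\eps\,W})$, where $W$ denotes the left-hand side itself (the self-referential error arises because one of the transferred projectors sandwiches $R^g$); $O(\sqrt{\eps W})$ is absorbed by AM--GM with a suitable weight using that $\eps$ is small, and along the way one also meets an error $O(\sqrt{\eps\,\inc(A,R)})=O(\sqrt\eps)$ since $\inc(A,R)=\inc(R,A)\leq\delta\leq1$. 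Finally, completeness of $\{A_\vx^a\}$ and the definition of $\inc$ give $\Es{\vx}\sum_g\Trho\bigl((\Id-A_\vx^{g(\vx_{<k})})\otimes R_{\vx_{\geq k}}^g\bigr)=\Trho(R)-\cons(A,R)=\inc(A,R)$, hence $\omega\leq W\leq\inc(A,R)+O(\sqrt\eps)$.

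The step I expect to be the main obstacle is this transfer across the two copies of $\ket\Psi$. Unlike the single-operator situation handled in the proof of Claim~\ref{claim:lines}, here $R^g$ is sandwiched between two copies of the projector $\Id-A$, so the projectors must be moved one at a time, each move tracked against the vanishing cross-term coming from $A_\vx^a(\Id-A_\vx^a)=0$ and against the self-referential $O(\sqrt{\eps W})$ error; most importantly, the exponentially large family of multilinear outcomes $g$ must be regrouped by their value $a=g(\vx_{<k})\in\Fp$ \emph{before} any Cauchy--Schwarz step, since otherwise the $O(\sqrt\eps)$ error would pick up a spurious factor $|\ML(\Fp^{k-1},\Fp)|$. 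Establishing feasibility and the final identification of the main term with $\inc(A,R)$ are routine.
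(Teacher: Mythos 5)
Your proposal is correct and follows essentially the same route as the paper: the same feasible point $\hat S_\vx^g = A_\vx^{g(\vx_{<k})}\sqrt{R_{\vx_{\geq k}}^g}$, feasibility from $A_\vx^a$ being a projector and $R$ a sub-measurement, and the objective bounded by transferring the $A$-operators onto a second copy of $\ket\Psi$ via Lemma~\ref{lem:consmu} at cost $O(\inc(A)^{1/2})=O(\sqrt\eps)$, then identifying the two-copy quantity with $\Trho(R)-\cons(A,R)=\inc(A,R)$. The only difference is bookkeeping: the paper expands the square and applies Lemma~\ref{lem:consmu} once to the cross term and once to $\hat S\hat S^\dagger$, rather than moving the projectors one at a time with a self-referential error term.
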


\begin{proof} We construct a feasible solution achieving the claimed value. 
Let $\hat{S}_\vx^g := A_{\vx}^{g(\vx_{<k})} \sqrt{R_{\vx_{\geq k}}^g}$. Then by definition $\{\hat{S}_\vx^g\}$ is  a feasible solution to~\eqref{eq:conv1}. To upper-bound its value, we first evaluate 
\begin{align*}
\Es{\vx}\sum_g \Big(\Tr_\rho\big(\hat{S}_\vx^g \sqrt{R_{\vx_{\geq k}}^g}\big) - \Tr_\rho\big(R_{\vx_{\geq k}}^g\big)\Big) &= \Es{\vx}\sum_g \Tr_\rho\big(  \big(A_{\vx}^{g(\vx_{<k})}-\Id\big) R_{\vx_{\geq k}}^g\big)\\
&= \Es{\vx}\sum_a \Trho\Big( A_\vx^a \Big( \sum_{g:g(\vx)\neq a} R_{\vx_{\geq k}}^g\Big)\Big)\\
&= \Es{\vx}\sum_g \Trho\big( R_{\vx_{\geq k}}^g \otimes A_\vx^{g(\vx_{<k})}\big) + O\big(\inc(A)^{1/2}\big), 
\end{align*}
where the second equality uses that $\sum_a A_\vx^a = \Id$ for every $\vx$, and the last follows from an application of Lemma~\ref{lem:consmu}. A similar calculation shows that
\begin{align*}
\Es{\vx}\sum_g \Tr_\rho\big(\hat{S}_\vx^g \big(\hat{S}_\vx^g\big)^\dagger \big) &= \Es{\vx}\sum_g \Tr_\rho\big(R_{\vx_{\geq k}}^g \otimes A_\vx^{g(\vx_{<k})}\big) + O\big(\inc(A)^{1/2}\big).
\end{align*}
To conclude, expand $\big\|\hat{S}_\vx^g - \sqrt{R_{\vx_{\geq k}}^g}\big\|_{\rho}^2$ and use
$$ \Es{\vx}\sum_g \Trho\big( R_{\vx_{\geq k}}^g \otimes A_\vx^{g(\vx_{<k})}\big) = \Trho(R) - \inc(A,R)$$
by definition, together with the bound $\inc(A)\leq\eps$ from~\eqref{eq:cons}.
\end{proof}

\subsubsection{Constructing a family of sub-measurements independent of $\vx_{< k}$}\label{sec:inv-s}

As a first step in showing that any optimal solution to~\eqref{eq:conv1} must be close to one that does not depend on $\vx_{<k}$, we show that such an optimal solution must be close to another feasible solution which is furthermore close to being invariant along the direction of any axis-parallel line in direction $i<k$. Precisely, we have the following. 

\begin{claim}\label{claim:almostopt} Assume $p^{-1} \leq \eps$. 
For every $i<k$ there exists a feasible solution $\big\{ \hat{Z}_\vx^g \big\}_g$ to~\eqref{eq:conv1}, with objective value  at most $\omega + O\big(\eps^{1/4}\big)$, such that 
$$ \Es{\vx}\sum_g  \big\| \hat{Z}_\vx^g - \Es{x'_i} \hat{Z}_{\vx_{\neg i},x'_i}^g \big\|_\rho^2 \,=\, O\big(\sqrt{\eps}\big).$$
\end{claim}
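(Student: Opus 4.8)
The plan is to construct the feasible solution $\{\hat Z_\vx^g\}$ explicitly, by ``re-randomizing the $i$-th coordinate'' of $\{\hat S_\vx^g\}$ and transporting outcomes consistently along axis-$i$ lines. Since $i<k$, direction $i$ is one of the coordinates on which the outcomes $g\in\ML(\Fp^{k-1},\Fp)$ depend, so for $g$ and $\vx_{\neg i}$ fixed we may let $\ell_g\in\ML(\Fp,\Fp)$ denote the restriction of $g$ to the axis-$i$ line through $\vx$ (this depends only on $\vx_{\neg i}$ and $g$, not on $x_i$). Recall from Claim~\ref{claim:lines} the line measurements $\{B_{\vx_{\neg i}}^\ell\}_{\ell\in\ML(\Fp,\Fp)}$, which are intrinsically independent of $x_i$, satisfy $\sum_\ell B_{\vx_{\neg i}}^\ell=\Id$, and for which $\cons(A,B),\cons(B)\ge 1-O(\sqrt\eps)$ (equivalently, writing $B_\vx^a=\sum_{\ell:\ell(x_i)=a}B_{\vx_{\neg i}}^\ell$, $\Es{\vx}\sum_a\|A_\vx^a-B_\vx^a\|_\rho^2=O(\sqrt\eps)$ by~\eqref{eq:a-b-close}). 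I would set
\[
  \hat Z_\vx^g \,:=\, A_\vx^{g(\vx_{<k})}\;\sqrt{B_{\vx_{\neg i}}^{\,\ell_g}}\;\Big(\Es{x'_i}\hat S_{\vx_{\neg i},x'_i}^g\Big),
\]
where the ``inner'' operator $\sqrt{B_{\vx_{\neg i}}^{\ell_g}}\,\Es{x'_i}\hat S_{\vx_{\neg i},x'_i}^g$ depends only on $\vx_{\neg i}$, and the front factor $A_\vx^{g(\vx_{<k})}$ is present only to restore exact feasibility for~\eqref{eq:conv1}.

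Two of the three required properties are then straightforward. For \emph{feasibility}, group the constraint sum over $\{g:g(\vx_{<k})=a\}$ according to the value $\ell=\ell_g$ (note $\ell(x_i)=a$); using operator convexity of $X\mapsto\Phi(XX^\dagger)$ for completely positive $\Phi$, the bound $\sum_{g:\ell_g=\ell}\hat S_{\vx_{\neg i},x'_i}^g(\hat S_{\vx_{\neg i},x'_i}^g)^\dagger\le A_{\vx_{\neg i},x'_i}^{\ell(x'_i)}\le\Id$ (which follows from the feasibility of $\hat S$ at the point $(\vx_{\neg i},x'_i)$, since $\ell_g=\ell$ implies $g((\vx_{\neg i},x'_i)_{<k})=\ell(x'_i)$), and $\sum_\ell B_{\vx_{\neg i}}^\ell=\Id$, one gets $\sum_{g:g(\vx_{<k})=a}\hat Z_\vx^g(\hat Z_\vx^g)^\dagger\le A_\vx^a B_\vx^a A_\vx^a\le A_\vx^a$, using $B_\vx^a\le\Id$ and that $A_\vx^a$ is a projector. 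For \emph{near-invariance along direction $i$}, the only factor in $\hat Z_\vx^g$ depending on $x_i$ is $A_\vx^{g(\vx_{<k})}=A_{\vx_{\neg i},x_i}^{\ell_g(x_i)}$; since this operator multiplies a $B_{\vx_{\neg i}}^{\ell_g}$-supported operator, $\cons(A,B)\ge1-O(\sqrt\eps)$ (together with $\cons(B)$, used via the consistency lemmas of the appendix, e.g.\ Lemma~\ref{lem:consmu}) shows that $A_{\vx_{\neg i},x_i}^{\ell_g(x_i)}$ acts as the identity on that operator up to $O(\sqrt\eps)$ in $\|\cdot\|_\rho^2$-mean, and the same holds for $\Es{x''_i}A_{\vx_{\neg i},x''_i}^{\ell_g(x''_i)}$; subtracting gives $\Es{\vx}\sum_g\|\hat Z_\vx^g-\Es{x'_i}\hat Z_{\vx_{\neg i},x'_i}^g\|_\rho^2=O(\sqrt\eps)$.

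The main work, and the step I expect to be the main obstacle, is \emph{near-optimality}: that the objective value of $\hat Z$ in~\eqref{eq:conv1} is at most $\omega+O(\eps^{1/4})$. Because the objective is a fixed quadratic minimized at $\hat S$, expanding the difference of objective values and applying Cauchy--Schwarz together with the bound $\omega\le\inc(A,R)+O(\sqrt\eps)=O(1)$ from Claim~\ref{claim:distdelta}, it suffices to prove $\Es{\vx}\sum_g\|\hat Z_\vx^g-\hat S_\vx^g\|_\rho^2=O(\sqrt\eps)$. The crucial point is that this has to be established \emph{without} assuming that $\hat S$ is itself nearly $x_i$-invariant --- that conclusion is exactly what Lemma~\ref{lem:lemma1} is ultimately extracting, so using it here would be circular --- which is precisely why the transport is routed through the line measurement $B$ rather than through $\hat S$ directly (the naive choice $\hat Z_\vx^g=A_\vx^{g(\vx_{<k})}\Es{x'_i}\hat S_{\vx_{\neg i},x'_i}^g$, without the $\sqrt{B}$ factor, only yields error $O(\delta)$, which is far too weak). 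I would argue in steps: (i) drop the front $A_\vx^{g(\vx_{<k})}$ at cost $O(\sqrt\eps)$, as in the invariance bound; (ii) use $\hat S_\vx^g(\hat S_\vx^g)^\dagger\le A_\vx^{g(\vx_{<k})}$ and $\cons(A,B)$ to replace $\hat S_\vx^g$ by $\sqrt{B_{\vx_{\neg i}}^{\ell_g}}\hat S_\vx^g$ up to $O(\sqrt\eps)$; and (iii) bound $\Es{\vx}\sum_g\|\Es{x'_i}\sqrt{B_{\vx_{\neg i}}^{\ell_g}}\,(\hat S_{\vx_{\neg i},x'_i}^g-\hat S_{\vx_{\neg i},x_i}^g)\|_\rho^2$ --- here the gain comes from the fact that, although $\hat S_{\vx_{\neg i},x'_i}^g$ is supported only within $A_{\vx_{\neg i},x'_i}^{\ell_g(x'_i)}$ and may spread across many line outcomes $\ell'$ consistent with $\ell_g$ at $x'_i$, any such $\ell'\neq\ell_g$ agrees with $\ell_g$ at most at a single point $x'_i\in\Fp$ (two distinct affine functions of $x'_i$ agree in at most one place), so after averaging over $x'_i$ and using the near-orthogonality $\cons(B)\ge1-O(\sqrt\eps)$, together with $p=|\Fp|\ge n^4\eps^{-1/2}$, these ``wrong'' contributions total only $O(1/p+\sqrt\eps)=O(\sqrt\eps)$, leaving $\sqrt{B_{\vx_{\neg i}}^{\ell_g}}$ acting essentially as the identity on the relevant average. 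Combining (i)--(iii) by the triangle inequality and convexity of $\|\cdot\|_\rho^2$ yields $\Es{\vx}\sum_g\|\hat Z_\vx^g-\hat S_\vx^g\|_\rho^2=O(\sqrt\eps)$ and hence the claim; it is the careful bookkeeping of the error terms in step (iii), and the precise way the large field size enters there, that carries essentially all the difficulty.
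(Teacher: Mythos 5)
Your construction of $\hat Z_\vx^g$ is essentially the paper's (the paper takes $\hat Z_\vx^g = A_\vx^{g(\vx_{\le k})} B_{\vx_{\neg i}}^{g_{|\ell_i(\vx)}}\big(\Es{x_i}\hat S_\vx^g\big)$; your $\sqrt{B}$ versus $B$ is immaterial), and your feasibility and near-invariance arguments are in line with what the paper does. The genuine gap is in your near-optimality step: you reduce it to $\Es{\vx}\sum_g\|\hat Z_\vx^g-\hat S_\vx^g\|_\rho^2=O(\sqrt\eps)$, and your step (iii) only controls the ``wrong-line'' components of $\hat S$ (those lying in blocks $B^{\ell'}$ with $\ell'\neq\ell_g$), via the one-intersection-point argument. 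That argument says nothing about how $\hat S_{\vx_{\neg i},x_i}^g$ varies with $x_i$ \emph{inside} the correct block $B^{\ell_g}$: feasibility of $\hat S$ together with consistency of $A$ and $B$ is perfectly compatible with $\hat S_{\vx_{\neg i},x_i}^g = B^{\ell_g}_{\vx_{\neg i}}C^g_{x_i}$ for contractions $C^g_{x_i}$ that vary wildly with $x_i$; the only a priori bound on this variation comes from closeness to the $x_i$-independent target $\sqrt{R^g_{\vx_{\geq k}}}$, which gives $O(\omega)=O(\delta)$, exactly the ``far too weak'' bound you dismiss. In fact the statement $\hat Z\approx\hat S$ (equivalently, near $x_i$-invariance of $\hat S$ after $B$-localization) is precisely the content of Claim~\ref{claim:closebasis}, which in the paper is deduced \emph{from} Claim~\ref{claim:almostopt} together with the first-order optimality (convex-combination, $t\to 0$) argument, and even then only at scale $O(\eps^{1/4})$; so your route is circular as well as quantitatively too strong.

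The paper avoids any pointwise comparison with $\hat S$. Since $i<k$, the target $\sqrt{R^g_{\vx_{\geq k}}}$ does not depend on $x_i$, so after replacing $\hat S_\vx^g$ by $B_\vx^{g(\vx_{\le k})}\hat S_\vx^g$ (cost $O(\sqrt\eps)$ by~\eqref{eq:a-b-close} and $A_\vx^{g(\vx_{\le k})}\hat S_\vx^g=\hat S_\vx^g$), one may simply \emph{average over $x_i$} and invoke convexity of $\|\cdot-\sqrt{R^g_{\vx_{\geq k}}}\|_\rho^2$: the averaged operators $\tilde Y^g_{\vx_{\neg i}}=\Es{x_i}B_\vx^{g(\vx_{\le k})}\hat S_\vx^g$ have objective value at most $\omega+O(\eps^{1/4})$ with no need to know how $\hat S$ varies in $x_i$. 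Then the wrong-line terms are stripped off (this is where your one-intersection-point and $\cons(B)$ argument belongs, costing $O(1/p+\eps^{1/4})$), giving $\hat Y^g_{\vx_{\neg i}}=B^{g_{|\ell_i(\vx)}}_{\vx_{\neg i}}\Es{x_i}\hat S_\vx^g$, and the front factor $A_\vx^{g(\vx_{\le k})}$ is reattached for feasibility at cost $O(\sqrt\eps)$. Near-invariance of $\hat Z$ is then measured against the $x_i$-independent $\hat Y$ (hence against $\hat Z$'s own $x_i$-average), never against $\hat S$. If you restructure your near-optimality step along these lines — convexity in the $x_i$-average rather than closeness to $\hat S$ — the rest of your write-up goes through.
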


\begin{proof}
Let $\{\hat{S}_\vx^g\}$ be an optimal solution to~\eqref{eq:conv1}, and for any $i<k$ let 
$$\hat{Y}_{\vx_{\neg i}}^g\,:=\, B_{\vx_{\neg i}}^{g_{|\ell_i(\vx)}} \big(\Es{x_i}\,\hat{S}_\vx^g\big),$$
where $\ell_i(\vx)$ is the line going through $\vx$ and parallel to the $i$-th axis, and $\{B_{\vx_{\neg i}}^\ell\}_{\ell}$ is the ``lines'' family of measurements introduced in Claim~\ref{claim:lines}. We first claim that the $\hat{Y}_{\vx_{\neg i}}^g$, while not strictly feasible, achieve an objective value in~\eqref{eq:conv1} of at most $\omega + O(\eps^{1/4})$. 

\medskip

Towards proving this, we first show that $B_\vx^{g(\vx_{\leq k})}\hat{S}_\vx^g$ is close to $\hat{S}_\vx^g$. Recall the definition of $B_\vx^a = \sum_{\ell:\, \ell(x_i)=a} B_{\vx_{\neg i}}^{\ell}$. Using the fact that, since $\{S_\vx^g\}$ is feasible, $A_\vx^{g(\vx_{\leq k})} \hat{S}_\vx^g = \hat{S}_\vx^g$, we get
\begin{align}
\Es{\vx}\sum_g \big\|B_\vx^{g(\vx_{\leq k})}\hat{S}_\vx^g - \hat{S}_\vx^g\big\|_\rho^2 
&= \Es{\vx}\sum_g \Trho\big(  \big(B_\vx^{g(\vx_{\leq k})}- A_\vx^{g(\vx_{\leq k})}\big) S_\vx^g  \big(B_\vx^{g(\vx_{\leq k})}- A_\vx^{g(\vx_{\leq k})}\big) \big)\notag\\
&\leq \Es{\vx}\sum_a \big\|B_\vx^{a}- A_\vx^{a}\big\|_\rho^2\notag\\
& = O\big(\sqrt{\eps}\big)\label{eq:almostopt-0}
\end{align}
by Claim~\ref{claim:lines}. Using the triangle inequality and convexity, the following (not necessarily feasible) operators
$$\tilde{Y}_{\vx_{\neg i}}^g\,:=\,\Es{x_i}\, B_\vx^{g(\vx_{\leq k})} \hat{S}_\vx^g$$
also achieve a value $\omega + O(\sqrt{\eps})$ in~\eqref{eq:conv1}. 

\medskip

Next we show that the $\tilde{Y}_{\vx_{\neg i}}^g$ are close to the $\hat{Y}_{\vx_{\neg i}}^g := B_{\vx_{\neg i}}^{g_{|\ell_i(\vx)}}\Es{x_i}\hat{S}_\vx^g$. From the definition,
\begin{align*}
\tilde{Y}_{\vx_{\neg i}}^g \,=\, B_{\vx_{\neg i}}^{g_{|\ell_i(\vx)}}\,\big(\Es{x_i}\,\hat{S}_\vx^g \big)+ \Es{x_i}\sum_{\substack{\ell:\,\ell(x_i)=g(\vx_{\leq k})\\ \ell\neq g_{|\ell_i(\vx)}} } B_{\vx_{\neg i}}^{\ell}\hat{S}_\vx^g.
\end{align*}
The norm of the second term can be expanded as follows:
\begin{align*}
\Es{\vx_{\neg i}} \sum_g&  \Big\| \Es{x_i}\sum_{\substack{\ell:\,\ell(x_i)=g(\vx_{\leq k})\\ \ell\neq g_{|\ell_i(\vx)}} } B_{\vx_{\neg i}}^{\ell}\hat{S}_\vx^g \Big\|_\rho^2\\
&=\Es{\vx_{\neg i}} \sum_g \Es{x_i,y_i} \sum_{\substack{\ell:\,\ell(x_i)=g(\vx_{\leq k})\\ \ell\neq g_{|\ell_i(\vx)}} } \sum_{\substack{\ell':\,\ell'(y_i)=g(\vx_{\leq k})\\ \ell'\neq g_{|\ell_i(\vx)}} } \Trho\big(B_{\vx_{\neg i}}^{\ell}\hat{S}_{\vx_{\neg i},x_i}^g(\hat{S}_{\vx_{\neg i},y_i}^g)^\dagger B_{\vx_{\neg i}}^{\ell'} \big).
\end{align*}
Eq.~\eqref{eq:consmu-2} from Lemma~\ref{lem:consmu} shows that the contribution of all terms such that $\ell\neq \ell'$ is at most $O\big(\sqrt{\inc(B)}\big) = O\big(\eps^{1/4}\big)$ by Claim~\ref{claim:lines}. But the only possibility for $\ell=\ell'$ is that also $x_i=y_i$, since two distinct linear functions on $\Fp$ intersect in at most one point. Hence we have that
\begin{align*}
\Es{\vx_{\neg i}} \sum_g \Big\| \Es{x_i}\sum_{\substack{\ell:\,\ell(x_i)=g(\vx_{\leq k})\\ \ell\neq g_{|\ell_i(\vx)}} } B_{\vx_{\neg i}}^{\ell}\hat{S}_\vx^g \Big\|_\rho^2&= \Es{\vx_{\neg i}} \sum_g \frac{1}{p} \Es{x_i}\sum_{\substack{\ell:\,\ell(x_i)=g(\vx_{\leq k})\\ \ell\neq g_{|\ell_i(\vx)}} }  \Trho\big(B_{\vx_{\neg i}}^{\ell}S_\vx^g B_{\vx_{\neg i}}^{\ell} \big) + O\big(\eps^{1/4}\big)\\
&\leq \frac{4}{p} + O\big(\eps^{1/4}\big).
\end{align*}
Given our assumption on $p$, this implies 
$$ \Es{\vx_{\neg i}} \sum_g \big\| \tilde{Y}_{\vx_{\neg i}}^g-\hat{Y}_{\vx_{\neg i}}^g\big\|_\rho^2 \,=\, O\big(\eps^{1/4}\big),$$
and hence the $\hat{Y}_{\vx_{\neg i}}^g$, while still not necessarily feasible, achieve an objective value in~\eqref{eq:conv1} of $\omega + O\big(\eps^{1/4}\big)$. 

\medskip

Finally, define $\hat{Z}_{\vx}^g:= A_\vx^{g(\vx_{\leq k})} B_{\vx_{\neg i}}^{g_{|\ell_i(\vx)}} \,\big(\Es{x_i}\,\hat{S}_\vx^g\big)$. Then the $\big\{\hat{Z}_{\vx}^g\big\}$ are feasible in~\eqref{eq:conv1}, and the fact that 
\beq\label{eq:almostopt-1}
\Es{\vx} \sum_g \big\| \hat{Z}_{\vx}^g - \hat{Y}_{\vx_{\neg i}}^g \big\|_\rho^2 \,=\, O\big(\sqrt{\eps}\big)
\eeq
follows from arguments similar to those used in the proof of Claim~\ref{claim:distdelta}. Hence the $\big\{\hat{Z}_{\vx}^g\big\}$ are a feasible solution to~\eqref{eq:conv1} with objective value at most $\omega + O\big(\eps^{1/4}\big)$. Finally, by convexity~\eqref{eq:almostopt-1} implies that 
$$\Es{\vx_{\neg i}} \sum_g \big\| \Es{x_i} \hat{Z}_{\vx}^g - \hat{Y}_{\vx_{\neg i}}^g \big\|_\rho^2 \,=\, O\big(\sqrt{\eps}\big),$$
which together with the triangle inequality and~\eqref{eq:almostopt-1} shows that the $\{\hat{Z}_\vx^g\}$ are close to their expectation on any axis-parallel line in the $i$-th direction, proving the claim.  
\end{proof}

Using convexity of $X\to \|X-A\|_{\rho}^2$ for fixed $A$, the following follows from Claims~\ref{claim:distdelta} and~\ref{claim:almostopt}.

\begin{claim}\label{claim:closebasis}
Let $\big\{ \hat{S}_\vx^g \big\}$ be an optimal solution to~\eqref{eq:conv1}. Then
$$\Es{\vx,i<k} \sum_g \| \hat{S}_\vx^g - \Es{x'_i}\hat{S}_{\vx_{\neg i}x'_i}^g\|_{\rho}^2 = O\big(\eps^{1/4}\big).$$ 
\end{claim}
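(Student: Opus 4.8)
The plan is to observe that the objective of the convex program~\eqref{eq:conv1} is exactly a squared distance, and then to invoke the standard fact that near-optimal feasible points of such a program are close to the optimum. Write $\hat{R}:=(\sqrt{R_{\vx_{\geq k}}^g})_{\vx,g}$ and, for a tuple of matrices $\hat{X}=(\hat{X}_\vx^g)_{\vx,g}$, set $\|\hat{X}\|^2:=\Es{\vx}\sum_g\|\hat{X}_\vx^g\|_\rho^2$; then the objective equals $\|\hat{S}-\hat{R}\|^2$, where $\|\cdot\|$ is the seminorm attached to the semi-inner product $\langle\hat{X},\hat{Y}\rangle:=\Es{\vx}\sum_g\Trho(\hat{X}_\vx^g(\hat{Y}_\vx^g)^\dagger)$. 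The feasible region of~\eqref{eq:conv1} is convex: $\hat{X}\mapsto\hat{X}\hat{X}^\dagger$ is convex in the positive semidefinite order, since for $\hat{X}=t\hat{X}_1+(1-t)\hat{X}_2$ one has $t\hat{X}_1\hat{X}_1^\dagger+(1-t)\hat{X}_2\hat{X}_2^\dagger-\hat{X}\hat{X}^\dagger=t(1-t)(\hat{X}_1-\hat{X}_2)(\hat{X}_1-\hat{X}_2)^\dagger\geq 0$, so each constraint $\sum_{g:g(\vx_{\leq k})=a}\hat{S}_\vx^g(\hat{S}_\vx^g)^\dagger\leq A_\vx^a$ cuts out a convex set. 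From these two facts I get the ``near-optimal implies close'' principle: if $\hat{Z}$ is feasible with objective value at most $\omega+\eta$, then the midpoint $\tfrac12(\hat{S}+\hat{Z})$ is feasible, so the parallelogram identity gives $\tfrac12\|\hat{S}-\hat{Z}\|^2=\|\hat{S}-\hat{R}\|^2+\|\hat{Z}-\hat{R}\|^2-2\|\tfrac12(\hat{S}+\hat{Z})-\hat{R}\|^2\leq\omega+(\omega+\eta)-2\omega=\eta$, i.e.\ $\|\hat{S}-\hat{Z}\|^2\leq 2\eta$.

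Next I fix $i<k$ and apply this with $\hat{Z}=\{\hat{Z}_\vx^g\}$ the feasible solution furnished by Claim~\ref{claim:almostopt}, which has objective value $\omega+O(\eps^{1/4})$ and satisfies $\Es{\vx}\sum_g\|\hat{Z}_\vx^g-\Es{x'_i}\hat{Z}_{\vx_{\neg i}x'_i}^g\|_\rho^2=O(\sqrt{\eps})$. Taking $\eta=O(\eps^{1/4})$ above yields $\Es{\vx}\sum_g\|\hat{S}_\vx^g-\hat{Z}_\vx^g\|_\rho^2=O(\eps^{1/4})$; relabelling the $i$-th coordinate by a fresh uniform $x'_i$ (which leaves the law of $\vx$ unchanged) and using convexity of $X\mapsto\|X\|_\rho^2$ to move $\Es{x'_i}$ inside, this also gives $\Es{\vx}\sum_g\|\Es{x'_i}(\hat{S}_{\vx_{\neg i}x'_i}^g-\hat{Z}_{\vx_{\neg i}x'_i}^g)\|_\rho^2=O(\eps^{1/4})$.

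To conclude I would decompose
$$\hat{S}_\vx^g-\Es{x'_i}\hat{S}_{\vx_{\neg i}x'_i}^g\;=\;\big(\hat{S}_\vx^g-\hat{Z}_\vx^g\big)+\big(\hat{Z}_\vx^g-\Es{x'_i}\hat{Z}_{\vx_{\neg i}x'_i}^g\big)+\Es{x'_i}\big(\hat{Z}_{\vx_{\neg i}x'_i}^g-\hat{S}_{\vx_{\neg i}x'_i}^g\big),$$
apply the triangle inequality in the norm $\|\cdot\|$ to the three summands (bounded by $O(\eps^{1/8})$, $O(\eps^{1/4})$ and $O(\eps^{1/8})$ respectively, by the three displayed estimates), and square, obtaining $\Es{\vx}\sum_g\|\hat{S}_\vx^g-\Es{x'_i}\hat{S}_{\vx_{\neg i}x'_i}^g\|_\rho^2=O(\eps^{1/4})$. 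Since this bound is uniform in $i$, averaging over $i$ uniform in $\{1,\dots,k-1\}$ gives the claim.

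There is no genuinely hard step here; the substance is already contained in Claims~\ref{claim:distdelta} and~\ref{claim:almostopt}. The only points to watch are (i) confirming that the feasible region of~\eqref{eq:conv1} is convex, as sketched above, so that the ``near-optimal implies close'' argument applies, and (ii) noting that, since $\rho$ may be singular, $\|\cdot\|_\rho$ is merely a seminorm — but the parallelogram identity, and hence the whole argument, use only the underlying semi-inner product, so this causes no difficulty.
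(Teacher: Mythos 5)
Your proof is correct and follows essentially the same route as the paper: you compare the optimal solution $\hat{S}$ with the near-optimal feasible solution $\hat{Z}$ furnished by Claim~\ref{claim:almostopt}, exploiting convexity of the feasible set of~\eqref{eq:conv1} together with optimality of $\hat{S}$, and then conclude by the triangle inequality combined with the closeness of $\hat{Z}$ to its average along axis-parallel lines in the $i$-th direction. The only (harmless) difference is how the bound $\Es{\vx}\sum_g\|\hat{S}_\vx^g-\hat{Z}_\vx^g\|_\rho^2=O(\eps^{1/4})$ is extracted: you use the midpoint together with the parallelogram identity for the semi-inner product, whereas the paper considers the convex combination $(1-t)\hat{S}+t\hat{Z}$ and lets $t\to 0$ to obtain a first-order optimality inequality; both arguments are valid and give the same estimate.
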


\begin{proof} We show that the two solutions constructed to~\eqref{eq:conv1}, $\big\{\hat{S}_\vx^g\big\}$ and $\big\{\hat{Z}_\vx^g\big\}$ from Claim~\ref{claim:almostopt}, must be close:\footnote{Note that $\hat{Z}_\vx^g$ implicitly depends on $i$, and the following equation is measuring the distance on average over the $k-1$ different constructions of $\hat{Z}_\vx^g$ obtained for all $1\leq i<k$.}
 \beq\label{eq:cb-2}
\Es{\vx,i<k} \sum_g \big\|\hat{Z}_{\vx}^g - \hat{S}_{\vx}^g \big\|_\rho^2 \,=\, O\big( \eps^{1/4}\big).
\eeq
The claim then follows by using the triangle inequality to combine this bound with the fact, proved in Claim~\ref{claim:almostopt}, that the $\hat{Z}_\vx^g$ themselves are close to their expectation along any axis-parallel line in the $i$-th direction. Hence it suffices to prove~\eqref{eq:cb-2}. Since the feasible set of~\eqref{eq:conv1} is convex, for any $0\leq t\leq 1$ the elements $\{(1-t) \hat{S}_\vx^g + t \hat{Z}_\vx^g\}$ also constitute a feasible solution. By optimality of $\big\{\hat{S}_\vx^g\big\}$, the resulting objective value must be at least $\omega$: for every $0\leq t \leq 1$, 
\begin{align*}
\Es{\vx} \sum_g \Big\|\hat{S}_\vx^g - \sqrt{R_{\vx_{\geq k}}^g} \Big\|_\rho^2 &\leq \Es{\vx} \sum_g \Big\|(1-t)\hat{S}_\vx^g+t\hat{Z}_\vx^g - \sqrt{R_{\vx_{\geq k}}^g} \Big\|_\rho^2\\
&= t^2 \,\Es{\vx} \sum_g \Big\| \hat{Z}_{\vx}^g-\hat{S}_\vx^g  \Big\|_\rho^2 + \Es{\vx} \sum_g \Big\|\hat{S}_\vx^g - \sqrt{R_{\vx_{\geq k}}^g} \Big\|_\rho^2 \\
&\qquad +2\,t \,\Es{\vx} \sum_g \Trho\Big( \big( \hat{Z}_{\vx}^g-\hat{S}_\vx^g \big)\big(  \hat{S}_\vx^g - \sqrt{R_{\vx_{\geq k}}^g} \big)^\dagger\Big).
\end{align*}
Using the known objective values, re-arranging and making $t\to 0$, we obtain that 
$$ \Es{\vx} \sum_g \Trho\Big( \big(\hat{Z}_{\vx}^g -  \hat{S}_\vx^g\big)\big( \sqrt{R_{\vx_{\geq k}}^g}  - \hat{S}_\vx^g\big)^\dagger\Big) = O\big(\eps^{1/4}\big).$$
Hence
\begin{align*}
 \Es{\vx} \sum_g \big\|\hat{S}_\vx^g - \hat{Z}_{\vx}^g\big\|_\rho^2 &=  \Es{\vx} \sum_g \Big( \Big\| \hat{Z}_{\vx}^g - \sqrt{R_{\vx_{\geq k}}^g}\Big\|_\rho^2 -  \Big\| \hat{S}_\vx^g - \sqrt{R_{\vx_{\geq k}}^g}\Big\|_\rho^2 \\
&\qquad\qquad+ 2\,\Trho\Big(\big( \hat{Z}_{\vx}^g -  \hat{S}_\vx^g\big)\big( \sqrt{R_{\vx_{\geq k}}^g}  - \hat{S}_\vx^g\big)^\dagger\Big)\Big)\\
&= O\big(\eps^{1/4}\big),
\end{align*}
proving~\eqref{eq:cb-2}. 
\end{proof}

Claim~\ref{claim:closebasis} shows that the $\{\hat{S}_\vx^g\}_g$ do not vary much along any axis-parallel line in the $i$-th direction. Using the expansion properties of the hypercube, we can deduce that the $\{\hat{S}_\vx^g\}_g$ are close (in the squared $\|\cdot\|_\rho$ norm) to a single operator, independent of the first $(k-1)$ coordinates.

\begin{claim}\label{claim:existz} For every $\vx_{\geq k}$ and $g$, let $\hat{T}_{\vx_{\geq k}}^g:= \Es{\vx_{<k}} \hat{S}_\vx^g$. Then 
$$\Es{\vx} \,\sum_g\, \big\|\hat{S}_\vx^g - \hat{T}_{\vx_{\geq k}}^g\big\|_{\rho}^2 \,=\, O\big(n\eps^{1/4}\big).$$ 
\end{claim}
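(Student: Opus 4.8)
The plan is to exploit the fact, established in Claim~\ref{claim:closebasis}, that the operators $\hat{S}_\vx^g$ are nearly invariant along every axis-parallel line in a direction $i<k$, and upgrade this coordinate-wise invariance to full invariance over the first $k-1$ coordinates. This is the same ``local-to-global'' step used by Babai, Fortnow, and Lund, and it rests on the (reasonably good) edge-expansion of the $(k-1)$-dimensional grid $\Fp^{k-1}$: if a function on the grid changes little across each axis-parallel edge, then it is close to its global average. The only wrinkle here is that our objects are matrices measured in the seminorm $\|\cdot\|_\rho$, rather than real numbers; but since $\|\cdot\|_\rho$ arises from the positive semidefinite inner product $(X,Y)\mapsto \Trho(XY^\dagger)$, the same Poincar\'e-type inequality applies verbatim.

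First I would set up the averaging telescope. Fix $\vx_{\geq k}$ and $g$, and write $\hat{T}_{\vx_{\geq k}}^g = \Es{\vx_{<k}} \hat{S}_\vx^g$. For each coordinate $i<k$ introduce the partial average $\hat{S}^{g,(i)}_{\vx}$ obtained from $\hat{S}_\vx^g$ by averaging over $x_1,\dots,x_i$ (so $\hat{S}^{g,(0)} = \hat{S}^g$ and $\hat{S}^{g,(k-1)} = \hat{T}^g$). By the triangle inequality in $\|\cdot\|_\rho$,
\[
\Es{\vx} \sum_g \big\|\hat{S}_\vx^g - \hat{T}_{\vx_{\geq k}}^g\big\|_\rho^2 \,\leq\, (k-1) \sum_{i=1}^{k-1} \Es{\vx}\sum_g \big\|\hat{S}^{g,(i-1)}_\vx - \hat{S}^{g,(i)}_\vx\big\|_\rho^2,
\]
so it suffices to bound each term $\Es{\vx}\sum_g \|\hat{S}^{g,(i-1)}_\vx - \Es{x_i}\hat{S}^{g,(i-1)}_\vx\|_\rho^2$ by $O(\eps^{1/4})$ (then summing and using $k\leq n$ gives the factor $n$, actually $O(n^2\eps^{1/4})$, but see the remark below). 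Since averaging over $x_1,\dots,x_{i-1}$ is a contraction in $\|\cdot\|_\rho$ (convexity of $X\mapsto\|X\|_\rho^2$), the term for coordinate $i$ is at most $\Es{\vx}\sum_g \|\hat{S}_\vx^g - \Es{x_i}\hat{S}^g_{\vx_{\neg i} x_i}\|_\rho^2$, which is $O(\eps^{1/4})$ on average over $i<k$ by Claim~\ref{claim:closebasis}. To get the bound for \emph{every} $i$ (not just on average) with only a mild loss, I would instead invoke the grid Poincar\'e inequality directly: for any vector-valued function $F$ on $\Fp^{k-1}$, $\Es{\vx}\|F(\vx) - \Es{\vct{y}}F(\vct{y})\|^2 \leq \sum_{i<k}\Es{\vx}\|F(\vx) - \Es{x_i}F(\vx_{\neg i},x_i)\|^2$ (this is just the tensorization of the one-dimensional variance bound, with constant $1$). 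Applying this with $F(\vx_{<k}) = \hat{S}_\vx^g$ for each fixed $\vx_{\geq k},g$, summing over $g$ and averaging over $\vx_{\geq k}$, the right-hand side becomes $(k-1)\cdot\Es{\vx,i<k}\sum_g\|\hat{S}_\vx^g - \Es{x'_i}\hat{S}^g_{\vx_{\neg i}x'_i}\|_\rho^2 = (k-1)\cdot O(\eps^{1/4})$ by Claim~\ref{claim:closebasis}. Since $k-1 \leq n$, this yields the claimed $O(n\eps^{1/4})$.

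The one point that deserves care — and which I expect to be the main (minor) obstacle — is justifying the Poincar\'e inequality in the seminorm setting and making sure the averaging operations genuinely commute with it. Concretely: is it true that $\Es{\vx}\|F(\vx) - \Es{\vct{y}}F(\vct{y})\|_\rho^2 \le \sum_i \Es{\vx}\|F(\vx)-\Es{x_i}F(\vx_{\neg i},x_i)\|_\rho^2$? This holds because $\|\cdot\|_\rho$ comes from the semi-inner product $\langle X,Y\rangle_\rho := \Re\Trho(XY^\dagger)$, which is positive semidefinite; on the quotient Hilbert space obtained by modding out the null space of $\rho$, $F$ becomes an honest Hilbert-space-valued function on $\Fp^{k-1}$, and the standard tensorized variance inequality (equivalently, the fact that the averaging projections onto different coordinates are commuting orthogonal projections summing to at least the projection onto constants) applies directly. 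The estimate $\Es{x}\|F(x) - \Es{x'}F(x')\|^2 = \Es{x}\|F(x)\|^2 - \|\Es{x}F(x)\|^2 \le \Es{x,x'}\|F(x)-F(x')\|^2$ in one variable, combined with the sub-additivity across coordinates, gives exactly what we need, and no genuinely new idea beyond Claim~\ref{claim:closebasis} is required. I would present this cleanly by first noting the semi-inner-product structure, then stating the grid Poincar\'e inequality as a one-line lemma, then applying it.
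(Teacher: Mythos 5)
Your proposal is correct and takes essentially the same route as the paper: the paper deduces the claim in one line from Claim~\ref{claim:closebasis} together with its Expansion lemma (Claim~\ref{claim:expand} in Appendix~A), which is precisely the grid Poincar\'e inequality you formulate, proved there via the spectral gap $\lambda_1\geq 1/(2n)$ of the grid Laplacian. Your tensorized (Efron--Stein) derivation of that inequality, applied after quotienting by the null space of the semi-inner product $\Trho(XY^\dagger)$, is an equally valid and slightly more elementary substitute --- it gives constant $1$ per coordinate rather than the factor $2n$, matches the hypothesis of Claim~\ref{claim:closebasis} (distance to the line-average) directly, and avoids the condition $0\leq A_\vx\leq \Id$ stated in Claim~\ref{claim:expand}, which the $\hat{S}_\vx^g$ need not satisfy and which the spectral argument does not actually use.
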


\begin{proof} This is a direct consequence of the expansion properties of the hypercube, as expressed in Claim~\ref{claim:expand}. 
\end{proof}

\subsubsection{Proof of Lemma~\ref{lem:lemma2}}\label{sec:pfl2}

We conclude the proof of Lemma~\ref{lem:lemma2} by showing that the non-negative operators 
$$T_{\vx_{\geq k}}^g\,:=\, \hat{T}_{\vx_{\geq k}}^g\big(\hat{T}_{\vx_{\geq k}}^g\big)^\dagger,$$
where for any $\vx_{\geq k}$ and $g$ the matrix $\hat{T}_{\vx_{\geq k}}^g$ is defined in Claim~\ref{claim:existz} in the previous section, satisfy the conclusions of the lemma. First note that item~\ref{claim2:2} follows directly from Claim~\ref{claim:existz}, so it will suffice to verify that items~\ref{claim2:1} and~\ref{claim2:3} hold. Regarding item~\ref{claim2:1}, we can bound
\begin{align*}
\inc(T,A)&= \Es{\vx} \sum_{g,a\neq g(\vx_{<k})} \Trho\big( T_{\vx_{\geq k}}^g \otimes A_\vx^a \big) \\
&= \Es{\vx} \sum_{g,a\neq g(\vx_{<k})} \Trho\big( S_{\vx}^g \otimes A_\vx^a \big) + O\big(\sqrt{n}\eps^{1/8}\big)\\
&\leq \Es{\vx} \sum_{a\neq b} \Trho\big(A_\vx^a \otimes A_\vx^b\big)  + O\big(\sqrt{n}\eps^{1/8}\big)\\
&= O\big(\sqrt{n}\eps^{1/8}\big),
\end{align*}
where the second equality follows from Cauchy-Schwarz and Claim~\ref{claim:existz}, the inequality follows from the fact that the $\hat{S}_\vx^g$ are a feasible solution to~\eqref{eq:conv1}, and the last uses self-consistency of $A$ as in~\eqref{eq:cons}. 

Item~\ref{claim2:3} is proved in a similar way. Let $P$ be a family of sub-measurements of arity $\ell$, and assume that $\ell \leq k$, the other case being treated symmetrically. By definition,
\begin{align*}
\big|\cons(P,T)-\cons(P,R)\big|
&= \Es{\vx} \sum_{f,g:\, g_{|\vx_{\ell\cdots k-1}}=f} \Trho\big( P_{\vx_{\geq l}}^f \otimes \big(T_{\vx_{\geq k}}^g-R_{\vx_{\geq k}}^g \big)\big)\\
&\leq \Big( \Es{\vx} \sum_{f,g:\, g_{|\vx_{\ell\cdots k-1}}=f} \Trho\big(P_{\vx_{\geq l}}^f \otimes \big( \hat{T}_{\vx_{\geq k}}^g-\sqrt{R_{\vx_{\geq k}}^g}\big)\big(\hat{T}_{\vx_{\geq k}}^g-\sqrt{R_{\vx_{\geq k}}^g}\big)^\dagger\big)\Big)^{1/2}\\
&\hskip0.7cm\cdot \Big(\Es{\vx} \sum_{f,g:\, g_{|\vx_{\ell\cdots k-1}}=f} \Trho\big(P_{\vx_{\geq l}}^f \otimes \big( \hat{T}_{\vx_{\geq k}}^g+\sqrt{R_{\vx_{\geq k}}^g}\big)\big(\hat{T}_{\vx_{\geq k}}^g+\sqrt{R_{\vx_{\geq k}}^g}\big)^\dagger \big)\Big)^{1/2}\\
&\leq \sqrt{2} \Big( \Es{\vx} \sum_g \big\|\hat{T}_{\vx_{\geq k}}^g-\sqrt{R_{\vx_{\geq k}}^g}\big\|_\rho^2 \Big)^{1/2}\\
&= O\big( \inc(R,A)^{1/2} + \sqrt{n}\eps^{1/8}\big),
\end{align*}
where the first inequality is by Cauchy-Schwarz, the second uses that $\sum_f P_{\vx_{\geq l}}^f \leq \Id$ for every $\vx_{\geq l}$, and the last follows from the bounds proved in Claim~\ref{claim:distdelta} and Claim~\ref{claim:existz}.

%------------------------------
%		Pasting Lemma
%------------------------------

\subsection{The pasting lemma}\label{sec:lemma2}

In this section we prove Lemma~\ref{lem:lemma2}. Let $T$ be the family of sub-measurements whose existence is promised in the lemma's assumptions. For every $\vx$, let $\big\{\hat{S}_{\vx}^h\big\}_h$ and $\big\{T_{\vx_{\geq k}}^h\big\}_h$ be as in item~\ref{claim2:2} of Lemma~\ref{lem:lemma1}. Let $\delta$ be such that 
\beq\label{eq:deltadef}
\max\Big\{ \inc(T,B),\,\inc(T,A),\,\Es{\vx}\sum_h \Big\|\hat{S}_\vx^h - \sqrt{T_{\vx_{\geq k}}^h}\Big\|_\rho^2\Big\} \,\leq\, \delta,
\eeq
where here $\{B_{\vx_{\neg k}}^\ell\}_\ell$ are the ``lines'' measurements in the $k$-th direction, as defined in Claim~\ref{claim:lines}. Note that Claim~\ref{claim:lines} implies that $\inc(T,B)\leq\inc(T,A)+O(\eps^{1/4})$, which justifies including $\inc(T,B)$ in~\eqref{eq:deltadef}.

Our goal is to define a new family of sub-measurements $V$, depending on one less coordinate of $\vx$ than $T$, but such that $V$ is still consistent with $A$, and moreover $V$ is not ``too small'', as measured by items~2 and~3 in the lemma. The main idea is to define $\{V_{\vx_{>k}}^g\}$ as (roughly) corresponding to the sequential application of $\{T_{\vx_{\geq k}}^h\}$ twice, for two random choices of $x_k$. This will produce two $(k-1)$-multilinear functions $h$ and $h'$, from which a $k$-multilinear function $g$ can be recovered by interpolation. This is essentially the same method as was used to define the ``line'' operators $B$ from the ``point'' operators $A$ in Claim~\ref{claim:lines}. Here the main additional difficulty is that we are starting with a family of sub-measurements, instead of complete, projective measurements as was the case in Claim~\ref{claim:lines}. 

This section is organized as follows. We start with some preliminary observations in Section~\ref{sec:pasting-preproc}. The family of sub-measurements $V$ is defined in Section~\ref{sec:pasting-def}. Item 1 in the conclusion of Lemma~\ref{lem:lemma2} is proved in Section~\ref{sec:pasting-cons}, and items 2 and 3 are proved in Section~\ref{sec:pasting-cons2}. 

\subsubsection{Pre-processing}\label{sec:pasting-preproc}

In this section we prove a preliminary claim, Claim~\ref{claim:tm-cons} below, which lets us modify the family of sub-measurements $T$ into another family $Q$ that has useful properties. The important property is item 3. in the claim, which establishes a form of commutation between $Q$ and the ``line'' measurements $B$. Intuitively, that such a property would hold for $Q$ equal to $T$ should follow from the consistency between the families of sub-measurements defined by $T$ and $B$: consistent measurements are ``compatible'', and by the gentle measurement lemma (cf. Lemma~\ref{lemma:gentle}) the order in which they are performed does not matter. However, we could not show directly that item 3 below holds for the family of sub-measurements $T$ itself; hence we need to modify it slightly. 

\begin{claim}\label{claim:tm-cons} 
Let $T$ be the family of sub-measurements satisfying the assumptions of Lemma~\ref{lem:lemma2}, and $\delta$ be as in~\eqref{eq:deltadef}. There exists a family of sub-measurements $\{Q_{\vx_{\geq k}}^h\}$ such that the following hold:
\begin{enumerate}
\item $\Trho(Q) \geq \Trho(T) - O(\delta^{c_4})$,
\item For every $\vx_{\geq k}$ and $h$, $Q_{\vx_{\geq k}}^h = B_{\vx_{\geq k}}^h \tilde{Q}_{\vx_{\geq k}}^h B_{\vx_{\geq k}}^h$ for some family of sub-measurements $\{\tilde{Q}_{\vx_{\geq k}}^h\}$ (and in particular $\inc(Q,A) = O(\eps^{1/2})$),
\item Let $Q_{\vx_{>k}} = \Es{x_k} \sum_h Q_{\vx_{\geq k}}^h$. For any $r\geq 1$,
$$ \Es{\vx} \Trho\Big( \Big( (Q_{\vx_{>k}})^{r} - \sum_\ell B_{\vx_{\neg k}}^{\ell} (Q_{\vx_{> k}})^r B_{\vx_{\neg k}}^{\ell}  \Big)^2\Big)\,=\, O\big( r^2\delta^{c_4}\big),$$
\end{enumerate}
where $c_4>0$ is a universal constant. 
\end{claim}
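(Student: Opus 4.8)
The plan is to obtain $Q$ from $T$ by ``sandwiching'' each $T_{\vx_{\geq k}}^h$ between the line measurements $B$ in direction $k$, in a way that forces the desired commutation-type property (item~3) while only losing a controlled amount of weight (item~1). Concretely, I would first set $\hat{Q}_{\vx_{\geq k}}^h := B_{\vx_{\neg k}}^{h} \, \hat{S}_{\vx}^{h'} $ for a suitable matching of the outcome labels --- more precisely, since $T$ has arity $k$ and the line measurements $B$ in direction $k$ have outcomes in $\ML(\Fp,\Fp)$, I would relabel so that an outcome $h\in\ML(\Fp^k,\Fp)$ of $Q$ is produced by first applying the arity-$k$ object $T$ (via its factorization $\hat{S}$ from item~\ref{claim2:2} of Lemma~\ref{lem:lemma1}) and then the line measurement $B_{\vx_{\neg k}}$, keeping only the branch in which the line outcome $\ell$ is consistent with $h$ restricted to the $k$-th coordinate. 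Then $Q_{\vx_{\geq k}}^h := \hat{Q}_{\vx_{\geq k}}^h (\hat{Q}_{\vx_{\geq k}}^h)^\dagger$, which gives item~2 with $\tilde Q$ read off from the factorization, and the bound $\inc(Q,A)=O(\eps^{1/2})$ follows because $B$ is $O(\sqrt\eps)$-consistent with $A$ by Claim~\ref{claim:lines} and $Q$ is, by construction, perfectly consistent (outcome-wise) with $B$.

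For item~1, the loss $\Trho(Q)\geq\Trho(T)-O(\delta^{c_4})$ should follow from the gentle measurement lemma (Lemma~\ref{lemma:gentle}): since $\inc(T,B)\leq\delta$ by~\eqref{eq:deltadef}, measuring with $B$ after having (essentially) measured with $T$ disturbs the state only by $O(\sqrt\delta)$ in trace distance, so the total surviving weight drops by at most $O(\sqrt\delta)$; I would set $c_4$ small enough (e.g.\ $c_4\leq 1/2$) to absorb all the $\eps$- and $\delta$-dependent error terms that appear. I also need to use item~\ref{claim2:2} of Lemma~\ref{lem:lemma1}, namely that $\hat S_\vx^h$ is $\delta$-close in $\|\cdot\|_\rho^2$ to $\sqrt{T_{\vx_{\geq k}}^h}$, to pass between the factorization $\hat S$ and the operator $T$ itself; this is routine using Cauchy--Schwarz in the $\|\cdot\|_\rho$ seminorm, as in Claim~\ref{claim:distdelta}.

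The main obstacle --- and the reason $Q$ is introduced rather than using $T$ directly --- is item~3: the ``near self-correction'' identity $\Es{\vx}\Trho\big((\,(Q_{\vx_{>k}})^r - \sum_\ell B_{\vx_{\neg k}}^\ell (Q_{\vx_{>k}})^r B_{\vx_{\neg k}}^\ell\,)^2\big)=O(r^2\delta^{c_4})$. The plan here is an induction on $r$. For $r=1$, since $Q_{\vx_{>k}}=\Es{x_k}\sum_h B_{\vx_{\neg k}}^{\ell(h)} \tilde Q B_{\vx_{\neg k}}^{\ell(h)}$ already has the line operators on the outside, the key point is that $\sum_\ell B_{\vx_{\neg k}}^\ell (\cdot) B_{\vx_{\neg k}}^\ell$ acts almost as the identity on operators of this form, which follows because the $B_{\vx_{\neg k}}^\ell$ are (approximately) projective and consistent --- this is exactly the kind of manipulation done in the proof of Claim~\ref{claim:lines} via Lemma~\ref{lem:consmu} and Claim~\ref{claim:gentle}. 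For the inductive step I would write $(Q_{\vx_{>k}})^{r+1} = (Q_{\vx_{>k}})^r Q_{\vx_{>k}}$, insert $\sum_\ell B^\ell(\cdot)B^\ell$ using the $r=1$ case and the inductive hypothesis, and control the cross terms with Cauchy--Schwarz in $\|\cdot\|_\rho$, paying an additive $O(\delta^{c_4})$ at each step so the total is $O(r^2\delta^{c_4})$ (the quadratic dependence on $r$ coming from summing the linear errors). Keeping the error bookkeeping clean through this induction --- and making sure the approximate-projectivity and consistency estimates for $B$ compose correctly when iterated --- is where the real work lies, but each individual estimate is of the same type already established in Claim~\ref{claim:lines} and Appendix~\ref{sec:consistency-lemmas}.
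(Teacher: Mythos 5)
There is a genuine gap, and it sits exactly at the point you flag as the ``key point'' of item~3. Your plan applies the sandwich by $B$ \emph{once} and then asserts that the pinching $X\mapsto\sum_\ell B_{\vx_{\neg k}}^{\ell}X B_{\vx_{\neg k}}^{\ell}$ acts almost as the identity on operators of the resulting form because the $B$'s are approximately projective and consistent. This is not justified, and it is not a routine consequence of Claim~\ref{claim:lines}: the operator used in the sandwich is the \emph{coarse} element $B_{\vx}^{a}=\sum_{\ell:\,\ell(x_k)=a}B_{\vx_{\neg k}}^{\ell}$, so even in the idealized case where the $B_{\vx_{\neg k}}^{\ell}$ were exact, mutually orthogonal projectors one would have $\sum_\ell B^\ell\bigl(B^a X B^a\bigr)B^\ell=\sum_{\ell:\ell(x_k)=a}B^\ell X B^\ell$, which differs from $B^a X B^a$ by all the cross terms $B^\ell X B^{\ell'}$ with $\ell\neq\ell'$. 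Showing that these cross terms (and the additional errors coming from the $B$'s being only $\rho$-approximately projective and only $\delta$-consistent with $T$) are negligible is precisely the content of the claim, and a single application of the sandwich gives you no handle on it; your induction on $r$ then rests entirely on this unproven base case.

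The paper's proof supplies the missing mechanism: it \emph{iterates} the pinching map $\mathcal{E}_{\vx_{<k}}:T_{\vx_{\geq k}}^h\mapsto B_{\vx}^{h(\vx_{<k})}T_{\vx_{\geq k}}^h B_{\vx}^{h(\vx_{<k})}$ up to $M\leq\delta^{-1/4}$ times and uses a bounded, monotone potential $\Phi_i=\Es{}\sum_h\Trho\bigl(\mathcal{E}^{J-i}\bigl((\mathcal{E}^i(T_{\vx_{\geq k}}^h))^2\bigr)\bigr)$ (monotonicity coming from the operator pinching inequality $\mathcal{E}(X)^2\leq\mathcal{E}(X^2)$) to locate, by an energy-decrement argument, a stage at which one further application of the pinching changes the operator by only $O(\delta^{1/4})$ in the relevant squared $\rho$-norm (Fact~\ref{fact:tm-f0}); a second iterated construction with the averaged operator $\bigl(\Es{\vx_{<k}}B_{\vx}^{h(\vx_{<k})}\bigr)^{M'}$, $M'=\delta^{-1/16}$, together with a second potential argument (Fact~\ref{fact:tm-f1a}), is what allows the $B$ factors to be moved to the outside in the proof of the $r=1$ case (Fact~\ref{fact:tm-f3}) before the induction on $r$. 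Nothing in your proposal replaces this approximate-fixed-point selection, so the construction you describe cannot be shown to satisfy item~3 with the estimates available ($\inc(T,B)\leq\delta$, $\inc(B)=O(\eps^{1/2})$); by contrast, your treatment of items~1 and~2 (gentle measurement, reading $\tilde Q$ off the factorization) is fine and matches the easy part of the paper's argument.
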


\begin{proof} For any $\vx_{<k}$ define a ``pinching'' map 
$$\mathcal{E}_{\vx_{<k}}:\, T_{\vx_{\geq k}}^h \,\mapsto\,  B_{\vx}^{h(\vx_{<k})}T_{\vx_{\geq k}}^h B_{\vx}^{h(\vx_{<k})}.$$
Note that $\mathcal{E}_{\vx_{<k}}$ also implicitly depends on $h(\vx_{<k})$, but this dependence will always be clear from the context. Let $\mathcal{E}(\cdot):= \Es{\vx_{<k}} \mathcal{E}_{\vx_{<k}}(\cdot)$. The idea for the definition of $Q$ consists in applying the map $\mathcal{E}$ to $T$ a certain number of times, leveraging a certain stability property that will follow after sufficiently many applications. 

Let $M$ be an integer to be fixed later, and for every $\vx_{\geq k}$ and $h$ let $R_{\vx_{\geq k}}^h := \mathcal{E}^M(T_{\vx_{\geq k}}^h)$, where $\mathcal{E}^M$ denotes the sequential composition of $\mathcal{E}$ with itself $M$ times. Using the Schwarz-Zippel lemma (Lemma~\ref{lem:sz}) it is not hard to verify that, as long as $M\geq 1$, $\inc(R,B) = O(\inc(B,B)+n/p) = O(\eps^{1/2})$. The proof of Claim~\ref{claim:tm-cons} is based on the following sequence of facts.

\begin{fact}\label{fact:tm-f0} There is a choice of $M \leq \delta^{-1/4}$ for which the following holds:
$$ \Es{\vx} \sum_h \Trho\big( \big(R_{\vx_{\geq k}}^h - \mathcal{E}_{\vx_{<k}}(R_{\vx_{\geq k}}^h)\big)^2\big)\,=\,O(\delta^{1/4}).$$
\end{fact}

\begin{proof}
Let $J_1=\delta^{-1/4}$. The proof is based on the use of the potential function  
$$ \Phi_i \,:=\, \Es{\vx_{\geq k}}\sum_h \Trho\big( \mathcal{E}^{J_1-i}\big( (\mathcal{E}^i(T_{\vx_{\geq k}}^h))^2\big)\big),$$
defined for all $0\leq i\leq J_1$. Note that $\Phi_i$ is non-negative, always at most $1$, and by the pinching inequality $(\mathcal{E}(X))^2 \leq \mathcal{E}(X^2)$ for any positive semidefinite $X$, $\Phi_i$ is non-increasing with $i$. Let $i_1$ the smallest index $i$ for which it holds that 
\beq\label{eq:tm-f0-1} \Es{\vx}\sum_h \Trho\Big(\mathcal{E}^{J_1-i}\Big( \mathcal{E}_{\vx_{< k}}\big( \big(\mathcal{E}^{i-1}(T_{\vx_{\geq k}}^h)\big)^2 \big)-  \big(\mathcal{E}_{\vx_{< k}}(\mathcal{E}^{i-1}(T_{\vx_{\geq k}}^h))\big)^2\Big)\Big) \,\leq\,\delta^{1/4}.
\eeq
Using operator convexity of the square function, this inequality not being satisfied for some $i$ implies that $\Phi_{i-1}-\Phi_{i} > \delta^{1/4}$. Since this can happen for at most $\delta^{-1/4}$ indices $i$, an $0\leq i_1\leq \delta^{1/4}$ such that~\eqref{eq:tm-f0-1} is satisfied for $i=i_1$ must exist. Using self-consistency of $B$ $(J_1-i_1)$ times, and consistency of $T$ and $B$,~\eqref{eq:tm-f0-1} is seen to imply
$$ \Es{\vx} \sum_h \Trho\big( \big( \mathcal{E}^{i_1-1}(T_{\vx_{\geq k}}^h) - \mathcal{E}_{\vx_{<k}}\big(\mathcal{E}^{i_1-1}(T_{\vx_{\geq k}}^h)\big) \big)^2\big)\,\leq\, O\big(\delta^{-1/4}\inc(B)^{1/2}+\delta^{1/4}\big).$$
To conclude, we set $M:=i_1-1$ and use $\inc(B) = O(\eps^{1/2}) \leq \delta^{1/2}$.
\end{proof}

The following is a consequence of Fact~\ref{fact:tm-f0}.

\begin{fact}\label{fact:tm-f0b}
The following holds
$$\Es{\vx_{\neg k},x_k\neq y_k} \sum_{g,\ell:\, \ell\neq g_{|\vx_{<k}}} \Trho\big( B_{\vx_{\neg k}}^\ell R_{x_k\vx_{>k}}^{g_{|x_k}}R_{y_k\vx_{>k}}^{g_{|y_k}}R_{x_k\vx_{>k}}^{g_{|x_k}} B_{\vx_{\neg k}}^\ell\big) \,=\, O\big(\inc(R,B)^{1/2}+\delta^{1/8}\big).$$
\end{fact}

\begin{proof}
By definition of $R$,
\begin{align*}
\Es{\vx_{> k},x_k\neq y_k}\sum_g &\,\Trho\big(R_{x_k\vx_{>k}}^{g_{|x_k}}R_{y_k\vx_{>k}}^{g_{|y_k}}R_{x_k\vx_{>k}}^{g_{|x_k}}\big) \\
&= \Es{\vx_{\neg k},x_k\neq y_k}\sum_g \Trho\big(R_{x_k\vx_{>k}}^{g_{|x_k}} B_{\vx_{\neg k}y_k}^{g(\vx_{<k}y_k)}R_{y_k\vx_{>k}}^{g_{|y_k}} B_{\vx_{\neg k}y_k}^{g(\vx_{<k}y_k)}R_{x_k\vx_{>k}}^{g_{|x_k}}\big)\\
&= \Es{\vx_{\neg k},x_k\neq y_k}\sum_g \Trho\big( \big(B_{\vx_{\neg k}x_k}^{g(\vx_{\leq k})} R_{x_k\vx_{>k}}^{g_{|x_k}}B_{\vx_{\neg k}x_k}^{g(\vx_{\leq k})}\big)B_{\vx_{\neg k}y_k}^{g(\vx_{<k}y_k)}R_{y_k\vx_{>k}}^{g_{|y_k}} B_{\vx_{\neg k}y_k}^{g(\vx_{<k}y_k)}R_{x_k\vx_{>k}}^{g_{|x_k}} \big) + O\big(\delta^{1/8}\big), 
\end{align*}
where the second equality follows from Fact~\ref{fact:tm-f0}. Using that, by definition, for $x_k\neq y_k$, $B_{\vx_{\neg k}x_k}^{g(\vx_{< k}x_k)}B_{\vx_{\neg k}y_k}^{g(\vx_{<k}y_k)} = B_{\vx_{\neg k}}^{g_{|\vx_{< k}}}$ and consistency of $R$ and $B$, we get
\begin{align*}
\Es{\vx_{> k},x_k\neq y_k}\sum_g &\,\Trho\big(R_{x_k\vx_{>k}}^{g_{|x_k}}R_{y_k\vx_{>k}}^{g_{|y_k}}R_{x_k\vx_{>k}}^{g_{|x_k}}\big) \\
&= \Es{\vx_{\neg k},x_k\neq y_k}\sum_g \Trho\big( R_{x_k\vx_{>k}}^{g_{|x_k}} B_{\vx_{\neg k}}^{g_{|\vx_{< k}}} R_{y_k\vx_{>k}}^{g_{|y_k}} B_{\vx_{\neg k}}^{g_{|\vx_{< k}}} R_{x_k\vx_{>k}}^{g_{|x_k}} \big) + O\big(\inc(R,B)^{1/2}+\delta^{1/8}\big)\\
&= \Es{\vx_{\neg k},x_k\neq y_k}\sum_g \Trho\big( B_{\vx_{\neg k}}^{g_{|\vx_{< k}}} R_{x_k\vx_{>k}}^{g_{|x_k}} R_{y_k\vx_{>k}}^{g_{|y_k}} R_{x_k\vx_{>k}}^{g_{|x_k}}B_{\vx_{\neg k}}^{g_{|\vx_{< k}}} \big) + O\big(\inc(R,B)^{1/2}+\delta^{1/8}\big), 
\end{align*}
where the last equality again follows (after a little work) from Fact~\ref{fact:tm-f0}.
\end{proof}

We will also use the following.

\begin{fact}\label{fact:tm-f1a} Let $\{S_{\vx_{> k}}^g\}_g$ be an arbitrary family of sub-measurements and $\mu_2>0$. There exists an $i_2 \leq \mu_2^{-1}$ such that 
$$ \Es{\vx}\sum_g \Trho\Big(\big(\mathcal{E}_{\vx_{>k}}\big(\mathcal{E}^{i_2-1}(S_{\vx_{> k}}^g)\big) -  \mathcal{E}^{i_2-1}(S_{\vx_{> k}}^g) \big)^2\Big) \,=\, O\big(\mu_2 + \mu_2^{-1}\,\inc(B)^{1/2} \Big),$$
where here we denote $\mathcal{E}(S_{\vx_{> k}}^g) = \Es{\vx_{< k}} B_{\vx_{\neg k}}^{g_{|\vy_{<k}}}S_{\vx_{> k}}^g B_{\vx_{\neg k}}^{g_{|\vy_{<k}}}$. Moreover, for all $i\geq i_2$ it holds that 
$$ \Es{\vx}\sum_g \Trho\big(\big(\mathcal{E}^{i+1}(S_{\vx_{> k}}^g) -  \mathcal{E}^i(S_{\vx_{> k}}^g) \big)^2\big) \,=\, O\big(\mu_2 + i\,\inc(B)^{1/2} \big).$$
\end{fact}

\begin{proof}
The proof is very similar to that of Fact~\ref{fact:tm-f0}, and is based on the use of the potential function  
$$ \Phi_i \,:=\, \Es{\vx}\sum_g \Trho\big( \mathcal{E}^{J_2-i}\big( (\mathcal{E}^i(S_{\vx_{> k}}^g))^2\big)\big),$$
defined for all $0\leq i\leq J_2$, where $J_2=\mu_2^{-1}$. Note that $\Phi_i$ is always at most $1$, and by the pinching inequality $\mathcal{E}(X)^2 \leq \mathcal{E}(X^2)$ for any positive semidefinite $X$, $\Phi_i$ is non-increasing with $i$. Let $i_2$ the smallest index such that $\Phi_{i_2-1}-\Phi_{i_2} \leq \mu_2$; as long as $J_2\geq \mu_2^{-1}$ such an $0\leq i_2\leq J_2$ must exist. By definition, it then holds that
$$ \Es{\vx}\sum_g \Trho\Big(\mathcal{E}^{J_2-i_2}\Big( \mathcal{E}_{\vx_{<k}}\big( \big(\mathcal{E}^{i_2-1}(S_{\vx_{> k}}^g)\big)^2 \big)-  \big(\mathcal{E}_{\vx_{<k}}(\mathcal{E}^{i_2-1}(S_{\vx_{> k}}^g))\big)^2\Big)\Big) \,\leq\,\mu_2.$$
Using self-consistency of $B$ $(J_2-i_2)$ times, we obtain 
$$ \Es{\vx}\sum_g \Trho\Big( \big(\mathcal{E}_{\vx_{<k}}\big(\mathcal{E}^{i_2-1}(S_{\vx_{> k}}^g)\big) -  \mathcal{E}^{i_2-1}(S_{\vx_{> k}}^g)\big)^2\Big) \,\leq\,\mu_2+O(J_2\,\inc(B)^{1/2}).$$
To conclude the proof, it suffices to use the operator convexity of the square function to move the expectation over $\vx_{<k}$ inside the square, and then observe that 
\begin{align*}
\Es{\vx}\sum_g \Trho\big(& \big( \mathcal{E}( \big(\mathcal{E}^{i_2-1}(S_{\vx_{> k}}^g)) - \mathcal{E}^2( \big(\mathcal{E}^{i_2-1}(S_{\vx_{> k}}^g))\big)^2\big)\\ 
&\leq \Es{\vx}\sum_g \Trho\big( \mathcal{E}(\big(  \big(\mathcal{E}^{i_2-1}(S_{\vx_{> k}}^g)) - \mathcal{E}( \big(\mathcal{E}^{i_2-1}(S_{\vx_{> k}}^g))\big)^2\big)\\
&\leq \Es{\vx}\sum_g \Trho\big( \big(  \big(\mathcal{E}^{i_2-1}(S_{\vx_{> k}}^g) - \mathcal{E}( \big(\mathcal{E}^{i_2-1}(S_{\vx_{> k}}^g))\big)^2\big) + O\big(\inc(B)^{1/2}\big),
\end{align*}
again using self-consistency of $B$.
\end{proof}

Let $M'$ be an integer to be fixed later, and for every $\vx_{\geq k}$ and $h$ define
$$Q_{\vx_{\geq k}}^h := \big(\Es{\vx_{<k}} B_{\vx}^{h(\vx_{< k})}\big)^{M'}(R_{\vx_{\geq k}}^h)\big(\Es{\vx_{<k}} B_{\vx}^{h(\vx_{< k})}\big)^{M'}.$$
Observe that, as before, as long as $M'\geq 1$ it holds that $\inc(Q,B)=O(\inc(B,B)+n/p) = O(\eps^{1/2})$. For any $g\in\ML(\Fp^k,\Fp)$, let $Q_{\vx_{>k}}^g := \Es{x_k\neq y_k} Q_{x_k\vx_{\geq k}}^{g_{|x_k}}Q_{y_k\vx_{\geq k}}^{g_{|y_k}}$. 
The following implies item 3. in Claim~\ref{claim:tm-cons}: for any $r\geq 1$,
\beq\label{eq:tm-cons-ind} \Es{\vx} \Trho\Big( \Big( (Q_{\vx_{>k}})^{r} - \sum_g \big(B_{\vx_{\neg k}}^{g_{|\vx_{<k}}} Q_{\vx_{> k}}^g B_{\vx_{\neg k}}^{g_{|\vx_{<k}}} \big)^r \Big)^2\Big)\,=\, O\big( r^2\delta^{c_3}\big),
\eeq
where $c_3>0$ is a universal constant. Eq.~\eqref{eq:tm-cons-ind} is proved by induction on $r$. The case $r=1$ is stated in the following claim.

\begin{fact}\label{fact:tm-f3}
The following holds 
\beq\label{eq:tm-f3-1}
\Es{\vx} \Trho\Big( \Big( Q_{\vx_{>k}} - \sum_g B_{\vx_{\neg k}}^{g_{|\vx_{<k}}} Q_{\vx_{> k}}^g B_{\vx_{\neg k}}^{g_{|\vx_{<k}}}  \Big)^2\Big)\,=\, O\big(\delta^{1/16}+ M'\delta^{1/8}\big).
\eeq
\end{fact}

\begin{proof} 
Fact~\ref{fact:tm-f0b} implies that $\{Q_{\vx_{>k}}^g\}$ and $B$ are $O(\delta^{1/8})$-consistent, from which it follows that 
\begin{align*}
\Es{\vx}\sum_{g,g'} \Trho\big( B_{\vx_{\neg k}}^{g_{|\vx_{<k}}} Q_{\vx_{> k}}^g B_{\vx_{\neg k}}^{g_{|\vx_{<k}}}  B_{\vx_{\neg k}}^{g'_{|\vx_{<k}}} Q_{\vx_{> k}}^{g'} B_{\vx_{\neg k}}^{g'_{|\vx_{<k}}}  \big)
&= \Es{\vx}\sum_{g,g'} \Trho\big( Q_{\vx_{> k}}^g B_{\vx_{\neg k}}^{g_{|\vx_{<k}}}  B_{\vx_{\neg k}}^{g'_{|\vx_{<k}}} Q_{\vx_{> k}}^{g'}\big) + O(\delta^{1/8})\\
&= \Es{\vx}\sum_{g,g'} \Trho\big( Q_{\vx_{> k}}^g    Q_{\vx_{> k}}^{g'}\otimes B_{\vx_{\neg k}}^{g_{|\vx_{<k}}}\otimes B_{\vx_{\neg k}}^{g'_{|\vx_{<k}}}\big) + O(\delta^{1/16}+M'\delta^{1/8})\\
&= \Es{\vx} \Trho\big( (Q_{\vx_{> k}})^2\big) + O(\delta^{1/16}+M'\delta^{1/8}).
\end{align*}
Here the second equality follows by applying Fact~\ref{fact:tm-f1a} with $\mu_2 := \delta^{1/16}$ and $S_{\vx}^g$ chosen as $Q_{\vx_{> k}}^g$ to move the term $ B_{\vx_{\neg k}}^{g_{|\vx_{<k}}}$ on the outside, and holds as long as $M'\geq \mu_2^{-1} = \delta^{-1/16}$. (The third uses consistency of $Q$ and $B$.) Expanding out the square in~\eqref{eq:tm-f3-1}, all four terms can be related up to $O(M'\delta^{1/8})$ by using similar arguments. 
\end{proof}

The induction step required to prove Eq.~\eqref{eq:tm-cons-ind} uses arguments similar to that of the proof of Fact~\ref{fact:tm-f3}, and we leave the details to the reader. Once that equation is established, choosing $M' = \delta^{-1/16}$ item 3 in Claim~\ref{claim:tm-cons} follows. Items 1 and 2 in the claim are simple consequences of the definition of $Q$ from $R$, and of $R$ from $T$; again we omit the details.
\end{proof}

\subsubsection{Construction of the pasted family of sub-measurements}\label{sec:pasting-def}

In this section and for the remainder of the proof of Lemma~\ref{lem:lemma2} we rename the family of sub-measurements $\{Q_{\vx_{\geq k}}^h\}$ constructed in the previous section into $\{T_{\vx_{\geq k}}^h\}$. The only properties of that family that we will need are those stated in Claim~\ref{claim:tm-cons}. In order to define the pasted sub-measurements $V$, we first introduce a ``pseudo-inverse'' $\tilde{T}$ as follows. As usual, let $T_{\vx_{>k}} = \Es{x_k} \sum_h T_{\vx_{\geq k}}^h$ and $\eta>0$ a small parameter to be fixed later. Define
\beq\label{eq:def-tildet}
\tilde{T}_{\vx_{>k}}\,:=\, \Big(\sum_{r=0}^R  \big(\Id- T_{\vx_{>k}}\big)^r\Big)^{1/2},
\eeq
where $R := (10/\eta)\log(1/\eta)$ is chosen so that $T_{\vx_{>k}}(1-T_{\vx_{>k}}\tilde{T}_{\vx_{>k}}^2) \leq \eta \Id$ (note that, by definition, $\tilde{T}_{\vx_{>k}}$ commutes with $T_{\vx_{>k}}$). Expanding out the series in the definition of $\tilde{T}_{\vx_{>k}}$, Item 3 from Claim~\ref{claim:tm-cons} implies that the following equation holds:
\beq\label{eq:tilde-com}
\Es{\vx} \Trho\big( (\tilde{T}_{\vx_{>k}} - \sum_\ell B_{\vx_{\neg k}}^\ell \tilde{T}_{\vx_{>k}} B_{\vx_{\neg k}}^\ell)^2 \big)\,=\, O\big( (\delta/\eta)^{c_5}\big),
\eeq
where $c_5>0$ is a sufficiently small constant. For every $\vx_{>k}$ and $g\in\ML(\Fp^k,\Fp)$, define
$$ V_{\vx_{>k}}^g \,:=\, \Big(1+\frac{R}{p}\Big)^{-1}\Es{y_k} \tilde{T}_{\vx_{>k}}\Big(\Es{x_k\neq y_k} T_{x_k\vx_{>k}}^{g_{|x_k}}\Big) \tilde{T}_{\vx_{>k}} T_{y_k\vx_{>k}}^{g_{|y_k}} \tilde{T}_{\vx_{>k}}\Big(\Es{x_k\neq y_k} T_{x_k\vx_{>k}}^{g_{|x_k}}\Big) \tilde{T}_{\vx_{>k}}.$$
The scaling factor $(1+R/p)^{-1}$ is necessary to ensure that the $\{V_{\vx_{>k}}^g\}$ sum to at most identity. It induces an extra error term in all our estimates; however our choice of $\eta = \delta^{c'}$ for some $c'>0$ will ensure that this error term is of the same order as ones that already appear; for clarity in the remainder of this section we will neglect it.

\begin{claim} The $\big\{V_{\vx_{>k}}^g\big\}_g$ form a family of sub-measurements of arity $k+1$.
\end{claim}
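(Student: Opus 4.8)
The plan is to check the two defining properties of a family of sub-measurements of arity $k+1$ directly: (i) each $V_{\vx_{>k}}^g$ is positive semidefinite, and (ii) $\sum_g V_{\vx_{>k}}^g \le \Id$ for every $\vx_{>k}$. Property (i) is routine; property (ii) is where the prefactor $(1+R/p)^{-1}$ is used, exactly as anticipated in the footnote following the definition of $V$.

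For (i), fix $\vx_{>k}$ and $g$, and abbreviate $\tilde T:=\tilde T_{\vx_{>k}}$, which is positive semidefinite by its definition~\eqref{eq:def-tildet}, together with $M_{y_k}:=\Es{x_k\ne y_k}T^{g_{|x_k}}_{x_k\vx_{>k}}\ge0$ and $N_{y_k}:=T^{g_{|y_k}}_{y_k\vx_{>k}}\ge0$ (the latter two because $T$ is a family of sub-measurements). The $y_k$-summand of $V_{\vx_{>k}}^g$ is $P_{y_k}N_{y_k}P_{y_k}$ with $P_{y_k}:=\tilde T M_{y_k}\tilde T\ge0$ (a psd operator conjugated by a Hermitian one), and $P_{y_k}N_{y_k}P_{y_k}=(P_{y_k}N_{y_k}^{1/2})(P_{y_k}N_{y_k}^{1/2})^{\dagger}\ge0$; averaging over $y_k$ and rescaling by the positive constant $(1+R/p)^{-1}$ preserves positivity.

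For (ii) it suffices, after fixing $\vx_{>k}$ and $y_k$, to bound $\sum_g P_{y_k}^{(g)}N_{y_k}^{(g)}P_{y_k}^{(g)}\le(1+R/p)\,\Id$, where $P_{y_k}^{(g)}:=\tilde T\big(\Es{x_k\ne y_k}T^{g_{|x_k}}_{x_k\vx_{>k}}\big)\tilde T$ and $N_{y_k}^{(g)}:=T^{g_{|y_k}}_{y_k\vx_{>k}}$. Writing this sum as $\sum_g X^{(g)}(X^{(g)})^{\dagger}$ with $X^{(g)}:=P_{y_k}^{(g)}(N_{y_k}^{(g)})^{1/2}$, it is equivalent to bounding $\sup_{\|v\|=1}\sum_g\|(N_{y_k}^{(g)})^{1/2}P_{y_k}^{(g)}v\|^2$ by $1+R/p$. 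I would proceed in three moves. First, parametrize a multilinear $g\in\ML(\Fp^k,\Fp)$ by a pair $(h,g')\in\ML(\Fp^{k-1},\Fp)^2$ in which $h=g_{|y_k}$ and $g'$ is the partial derivative of $g$ in the $k$-th coordinate, so that $g_{|x_k}=h+(x_k-y_k)g'$ and, crucially, for each fixed $x_k\ne y_k$ the map $g'\mapsto g_{|x_k}$ is a bijection of $\ML(\Fp^{k-1},\Fp)$. Second, expand the two averages over $x_k$ inside the $P_{y_k}^{(g)}$'s and resum over $g'$; after the change of variables the (astronomically large) sum over multilinear functions telescopes, using only the sub-measurement identities $\sum_h T^h_{y_k\vx_{>k}}=T_{y_k\vx_{>k}}\le\Id$ and $\sum_a T^a_{x_k\vx_{>k}}=T_{x_k\vx_{>k}}\le\Id$ valid for $T$ (the renamed $Q$ of Claim~\ref{claim:tm-cons}), and one is left with an expression involving only the $\tilde T$'s and the slice averages $\Es{x_k\ne y_k}T^{g_{|x_k}}_{x_k\vx_{>k}}$ applied to $v$. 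Third, control the pseudo-inverse: because $\Es{x_k\ne y_k}T^{g_{|x_k}}_{x_k\vx_{>k}}\le\tfrac{p}{p-1}\,T_{\vx_{>k}}$, the operator $\tilde T$ is only ever applied to vectors in the range of $T_{\vx_{>k}}$, where the identity $\tilde T_{\vx_{>k}}^2\,T_{\vx_{>k}}=\Id-(\Id-T_{\vx_{>k}})^{R+1}$ (a direct expansion of~\eqref{eq:def-tildet}) shows $\tilde T$ acts as an honest inverse of $T_{\vx_{>k}}$ up to the geometric tail $(\Id-T_{\vx_{>k}})^{R+1}\le\Id$, whose weight against $T_{\vx_{>k}}$ is at most $\eta$ by the choice $R=(10/\eta)\log(1/\eta)$. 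Accounting for the two sources of loss — the factor $\tfrac{p}{p-1}=1+\tfrac1{p-1}$ from excluding $x_k=y_k$, and the residual tail — is what produces the bound $\le(1+R/p)\,\Id$ that the prefactor cancels.

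The main obstacle is the third step. The pseudo-inverse $\tilde T_{\vx_{>k}}$ is built to invert the \emph{averaged} operator $T_{\vx_{>k}}=\Es{x_k}\sum_h T^h_{x_k\vx_{>k}}$, whereas $V$ multiplies it against \emph{individual} slice operators $T^{g_{|x_k}}_{x_k\vx_{>k}}$, which are only dominated by $p\,T_{\vx_{>k}}$; splitting the average over $x_k$ off from the $\tilde T$'s by convexity or Cauchy--Schwarz before invoking $\Es{x_k\ne y_k}T^{g_{|x_k}}_{x_k\vx_{>k}}\le\tfrac p{p-1}T_{\vx_{>k}}$ loses a polynomial factor in $R$ (and in $p$) and wrecks the estimate. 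Keeping the slice average intact so that it lands inside the range of $T_{\vx_{>k}}$, and tracking the geometric-tail error carefully enough to reach $\le\Id$ after rescaling rather than merely $\le(1+o(1))\Id$, is the heart of the argument; the sign-indefinite cross terms produced by expanding the square over the two copies of the $x_k$-average must be resummed (via the $(h,g')$ parametrization) rather than bounded termwise.
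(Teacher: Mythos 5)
Your check of positivity, and your identification of what needs to be proved (positivity plus $\sum_g V_{\vx_{>k}}^g\le\Id$, with the prefactor $(1+R/p)^{-1}$ doing the final work), are fine and agree with the paper. The gap is in how you propose to establish the normalization. The paper's proof hinges on one lossless step that you explicitly reject: operator convexity of $A\mapsto AXA^\dagger$ (Jensen for quadratic forms), applied with $A=\tilde{T}\big(\Es{x_k\neq y_k}T^{g_{|x_k}}_{x_k\vx_{>k}}\big)\tilde{T}$ and $X=T^{g_{|y_k}}_{y_k\vx_{>k}}$, where $\tilde{T}:=\tilde{T}_{\vx_{>k}}$. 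This gives
\[
V^g_{\vx_{>k}}\,\le\,\Big(1+\frac{R}{p}\Big)^{-1}\Es{y_k}\Es{x_k\neq y_k}\,\tilde{T}\,T^{g_{|x_k}}_{x_k\vx_{>k}}\,\tilde{T}\,T^{g_{|y_k}}_{y_k\vx_{>k}}\,\tilde{T}\,T^{g_{|x_k}}_{x_k\vx_{>k}}\,\tilde{T},
\]
i.e.\ it eliminates all cross terms in which the two copies of the $x_k$-average take different values, at no cost whatsoever — your worry that convexity "loses a polynomial factor in $R$ (and in $p$)" confuses this step with replacing a slice operator by $p\,T_{\vx_{>k}}$, which the paper never does. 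After this, for fixed $x_k\neq y_k$ the map $g\mapsto(g_{|x_k},g_{|y_k})$ is a bijection onto pairs of elements of $\ML(\Fp^{k-1},\Fp)$, so the sum over $g$ decouples: the middle sum over $g_{|y_k}$ together with the average over $y_k$ reconstructs (up to the excluded value $y_k=x_k$) $\tilde{T}T_{\vx_{>k}}\tilde{T}\le\Id$; the outer factors are handled by $(T^{h}_{x_k\vx_{>k}})^2\le T^{h}_{x_k\vx_{>k}}$ and a second use of $\tilde{T}T_{\vx_{>k}}\tilde{T}\le\Id$; and the leftover term of order $R/p$ (bounded crudely via $\tilde{T}\le R^{1/2}\Id$) is exactly what the prefactor absorbs.

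Your proposed substitute — expand both copies of the $x_k$-average and "resum" over $g$ via the $(h,g')$ parametrization so that the sum "telescopes using only the sub-measurement identities" — does not go through, and this is precisely where your plan would fail. For a cross term with $x_k\neq x'_k$ the two outer superscripts become $h+(x_k-y_k)g'$ and $h+(x'_k-y_k)g'$, two distinct, coupled bijective functions of $g'$; the resulting sum $\sum_{g'}T^{h+(x_k-y_k)g'}_{x_k\vx_{>k}}\,Y\,T^{h+(x'_k-y_k)g'}_{x'_k\vx_{>k}}$ consists of individually non-positive-semidefinite terms and is not collapsed by $\sum_h T^h_{\cdot}\le\Id$: nothing telescopes, and the only generic control (termwise Cauchy--Schwarz) is exactly the lossy bound you rule out. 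Separately, the machinery you invoke in your third step is not needed for this claim: the identity $\tilde{T}^2T_{\vx_{>k}}=\Id-(\Id-T_{\vx_{>k}})^{R+1}$ and the $\eta$-tail enter only in the later consistency estimates (e.g.\ Eq.~\eqref{eq:ww-t}); here the crude bounds $\tilde{T}\le R^{1/2}\Id$ and $\tilde{T}T_{\vx_{>k}}\tilde{T}\le\Id$ suffice, and the $R/p$ cancelled by the prefactor comes from there, not from the factor $p/(p-1)$ plus the geometric tail as you suggest.
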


\begin{proof}
It is clear that $V_{\vx_{>k}}^g \geq 0$ for every $g$. When the variable $g$ runs over $\ML(\Fp^{k},\Fp)$, for $x_k\neq y_k\in\Fp$ the restrictions $g_{|x_k}$ and $g_{|y_k}$ independently run over $\ML(\Fp^{k-1},\Fp)$. Hence, using convexity of the map $A\mapsto AXA^\dagger$ for any $A$ and $X\geq 0$,
\begin{align*}
\sum_g V_{\vx_{>k}}^g &\leq \Big(1+\frac{R}{p}\Big)^{-1} \frac{1}{p^2}\sum_{y_k\neq x_k} \sum_{h,h'} \tilde{T}_{\vx_{>k}}T_{x_k\vx_{>k}}^{h} \tilde{T}_{\vx_{>k}} T_{y_k\vx_{>k}}^{h'} \tilde{T}_{\vx_{>k}} T_{x_k\vx_{>k}}^{h}\tilde{T}_{\vx_{>k}}\\
&= \Big(1+\frac{R}{p}\Big)^{-1}\frac{1}{p}\sum_{x_k} \sum_{h} \tilde{T}_{\vx_{>k}}T_{x_k\vx_{>k}}^{h} \tilde{T}_{\vx_{>k}} T_{\vx_{>k}} \tilde{T}_{\vx_{>k}} T_{x_k\vx_{>k}}^{h}\tilde{T}_{\vx_{>k}} \\
&\qquad - \Big(1+\frac{R}{p}\Big)^{-1}\frac{1}{p^2}\sum_{x_k} \sum_{h} \tilde{T}_{\vx_{>k}}T_{x_k\vx_{>k}}^{h} \tilde{T}_{\vx_{>k}} T_{x_k\vx_{>k}} \tilde{T}_{\vx_{>k}} T_{x_k\vx_{>k}}^{h}\tilde{T}_{\vx_{>k}}\\
&\leq \Big(1+\frac{R}{p}\Big)^{-1}\Big(\Id+ \frac{R}{p}\Id\Big) \,\leq\,\Id,
\end{align*}
where to obtain the last line we used $(T_{x_k\vx_{>k}}^{h} )^2\leq T_{x_k\vx_{>k}}^{h} $ as well as $\tilde{T}_{\vx_{>k}} \leq R^{1/2}\Id$ and $\tilde{T}_{\vx_{>k}} T_{\vx_{>k}} \tilde{T}_{\vx_{>k}} \leq \Id$.
\end{proof}

\subsubsection{Consistency}\label{sec:pasting-cons}

In this section we show that the ``pasted'' sub-measurement $V$ is consistent with $A$, proving item~1 of Lemma~\ref{lem:lemma2}. It will be convenient to introduce the shorthand 
\beq\label{eq:def-ww}
W_{\vx_{\geq k}}^{h} \,:=\, \tilde{T}_{\vx_{>k}} T_{\vx_{\geq k}}^h \tilde{T}_{\vx_{>k}}.
\eeq
We also let $\delta_W:=\max(\delta,\inc(W,A))$. 

\begin{claim}\label{claim:t-s-cons} The following holds
$$ \inc(W,A)\,=\,\Es{\vx} \sum_{h,a\neq h(\vx_{<k})} \Trho\big( \tilde{T}_{\vx_{>k}} T_{\vx_{\geq k}}^h  \tilde{T}_{\vx_{>k}} \otimes A_{\vx}^a \big)\,=\,O\big((\delta/\eta)^{c_5}\big),$$
where $c_5>0$ is the constant that appears in~\eqref{eq:tilde-com}.
\end{claim}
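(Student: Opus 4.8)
The statement I want to prove is Claim~\ref{claim:t-s-cons}: that $\inc(W,A) = O((\delta/\eta)^{c_5})$, where $W_{\vx_{\geq k}}^h = \tilde{T}_{\vx_{>k}} T_{\vx_{\geq k}}^h \tilde{T}_{\vx_{>k}}$. The key observation is that the pinching-type estimate~\eqref{eq:tilde-com} already tells us that $\tilde{T}_{\vx_{>k}}$ essentially commutes with the line measurements $\{B_{\vx_{\neg k}}^\ell\}$, and item~3 of Claim~\ref{claim:tm-cons} tells us $T$ itself is built as $B \tilde{T} B$ and is consistent with $A$ up to $\delta$. So the strategy is: start from $\inc(T,A) \leq \delta$ (the assumption of Lemma~\ref{lem:lemma2}, carried over to the renamed family via item~2 of Claim~\ref{claim:tm-cons}, which gives $\inc(T,A) = O(\eps^{1/2})$ as well as $\inc(T,B)\le\delta$), and "conjugate by $\tilde{T}_{\vx_{>k}}$" while only paying the commutator error~\eqref{eq:tilde-com}.

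Concretely, I would expand
\[
\inc(W,A) = \Es{\vx}\sum_{h,a\neq h(\vx_{<k})} \Trho\big( \tilde{T}_{\vx_{>k}} T_{\vx_{\geq k}}^h \tilde{T}_{\vx_{>k}} \otimes A_\vx^a\big),
\]
and write $\tilde{T}_{\vx_{>k}} = \sum_\ell B_{\vx_{\neg k}}^\ell \tilde{T}_{\vx_{>k}} B_{\vx_{\neg k}}^\ell + E_{\vx_{>k}}$ with $\Es{\vx}\Trho(E_{\vx_{>k}}^2) = O((\delta/\eta)^{c_5})$ by~\eqref{eq:tilde-com}. Substituting this for one of the two copies of $\tilde{T}_{\vx_{>k}}$ (and bounding the resulting cross-term by Cauchy--Schwarz using $\Trho(E^2)$ small and the other factor $O(1)$) reduces the problem to bounding
\[
\Es{\vx}\sum_{\ell,h,a\neq h(\vx_{<k})} \Trho\big( B_{\vx_{\neg k}}^\ell \tilde{T}_{\vx_{>k}} B_{\vx_{\neg k}}^\ell T_{\vx_{\geq k}}^h \tilde{T}_{\vx_{>k}}\otimes A_\vx^a\big).
\]
Now I use that $T_{\vx_{\geq k}}^h$ is (by item~2 of Claim~\ref{claim:tm-cons}, after renaming) of the form $B_{\vx_{\geq k}}^h \tilde{Q}\, B_{\vx_{\geq k}}^h$, so $T_{\vx_{\geq k}}^h$ is supported on the range of $B_{\vx_{\geq k}}^h$; combined with the consistency of $B$ (from Claim~\ref{claim:lines}, $\cons(B)\ge 1-O(\sqrt\eps)$) this forces the label $\ell$ appearing through $B_{\vx_{\neg k}}^\ell$ to be consistent with $h$ on all coordinates $<k$ up to small error. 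Then I move $B_{\vx_{\neg k}}^\ell$ across $\tilde{T}_{\vx_{>k}}$ again (same commutator error) to pair it up with $A_\vx^a$ on a third copy of the state, using consistency of $A$ and $B$ (Claim~\ref{claim:lines}: $\cons(A,B)\ge 1-O(\sqrt\eps)$) to conclude that $\ell(x_{<k})=a$ except with small probability. But then $a\neq h(\vx_{<k})$ and $\ell$ consistent with $h$ and $\ell(x_{<k})=a$ are incompatible, so all such terms contribute only the accumulated error, which is $O((\delta/\eta)^{c_5}) + O(\sqrt{\eps})$; since $\eps \le \delta^2 \le (\delta/\eta)^{2}$ for the relevant range of $\eta$ this is $O((\delta/\eta)^{c_5})$.

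**Main obstacle.** The delicate point is the bookkeeping of which factor of $\tilde{T}_{\vx_{>k}}$ one commutes $B_{\vx_{\neg k}}^\ell$ past, and making sure each such commutation is charged against~\eqref{eq:tilde-com} exactly once via Cauchy--Schwarz against an $O(1)$-bounded companion term — here one needs $\tilde{T}_{\vx_{>k}} \le R^{1/2}\Id$ and $\tilde{T}_{\vx_{>k}} T_{\vx_{>k}}\tilde{T}_{\vx_{>k}}\le\Id$ to keep the companion factors bounded, together with the gentle-measurement lemma (Lemma~\ref{lemma:gentle}) to convert consistency of $B$ (resp. $A$ and $B$) into the statement that the operators approximately commute / that relabelings cost little. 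I expect the algebra to be somewhat involved but entirely parallel to the manipulations already carried out in the proof of Claim~\ref{claim:lines} and in Claim~\ref{claim:tm-cons}; the only genuinely new ingredient is~\eqref{eq:tilde-com}, and its role is exactly to let $\tilde{T}_{\vx_{>k}}$ be treated as if it were block-diagonal with respect to the line measurements.
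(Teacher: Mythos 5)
Your plan is essentially the paper's proof: the paper writes $T_{\vx_{\geq k}}^h = B_{\vx_{\geq k}}^h R_{\vx_{\geq k}}^h B_{\vx_{\geq k}}^h$ (item~2 of Claim~\ref{claim:tm-cons}), commutes those $B$-factors past $\tilde{T}_{\vx_{>k}}$ using~\eqref{eq:tilde-com}, transfers the label to a third register via the $A$--$B$ consistency of Claim~\ref{claim:lines} (through Lemma~\ref{lem:consmu}), and concludes from self-consistency of $A$ together with $\tilde{T}_{\vx_{>k}}\leq R^{1/2}\Id\leq\eta^{-1}\Id$ --- the same ingredients and the same commute-then-relabel chain you describe, just organized from the $T$-side rather than by decomposing $\tilde{T}$ as its $B$-pinching plus an error term. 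Only minor caveats: your companion Cauchy--Schwarz factors are bounded by powers of $\eta^{-1}$ rather than $O(1)$ (harmless, as in the paper), and the label bookkeeping should read $\ell(x_k)=a$ versus $\ell(x_k)=h(\vx_{<k})$, since $\ell$ is a line in the $k$-th direction.
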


\begin{proof} We have
\begin{align*}
\inc(W,A) &= \Es{\vx_{\geq k}} \sum_{h} \Trho\big( \tilde{T}_{\vx_{>k}} T_{\vx_{\geq k}}^h  \tilde{T}_{\vx_{>k}} \otimes (\Id -A_{\vx_{\geq k}}^h) \big)\\
&=  \Es{\vx} \sum_{h,a} \Trho\big( \tilde{T}_{\vx_{>k}}  B_{\vx_{\geq k}}^h R_{\vx_{\geq k}}^h B_{\vx_{\geq k}}^h \tilde{T}_{\vx_{>k}} \otimes (\Id -A_{\vx_{\geq k}}^h) \big)\\
& =\Es{\vx} \sum_{h} \Trho\big( B_{\vx_{\geq k}}^h\tilde{T}_{\vx_{>k}} R_{\vx_{\geq k}}^h \tilde{T}_{\vx_{>k}} B_{\vx_{\geq k}}^h \otimes (\Id -A_{\vx_{\geq k}}^h) \big) + O\big((\delta/\eta)^{c_5}\big)\\
&= \Es{\vx} \sum_{h} \Trho\big( \tilde{T}_{\vx_{>k}}R_{\vx_{\geq k}}^h\tilde{T}_{\vx_{>k}} \otimes (\Id -A_{\vx_{\geq k}}^h) \otimes A_{\vx_{\geq k}}^h \big) + O\big((\delta/\eta)^{c_5}\big)\\
&= O\big((\delta/\eta)^{c_5}\big),
\end{align*}
where the second equality follows from item 2 in Claim~\ref{claim:tm-cons} (and some sub-measurement $\{R_{\vx_{\geq k}}^h\}$), the third follows from~\eqref{eq:tilde-com}, the fourth uses Lemma~\ref{lem:consmu} together with consistency of $B$ and $A$ as in Claim~\ref{claim:lines}, and the last again follows from self-consistency of $A$, together with $\tilde{T}_{\vx_{>k}} \leq R^{1/2}\Id \leq \eta^{-1}\Id$ for small enough $\eta$. 
\end{proof}

\begin{claim}\label{claim:vcons} The family of sub-measurements $V$ is consistent with $A$: 
$$\inc(V,A)\,=\, O\big(\delta_W^{1/2}\big).$$
\end{claim}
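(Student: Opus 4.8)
The plan is to unfold $\inc(V,A)$, substitute the definition of $V$ in terms of the pre‑processed family $W$, and reduce everything to the consistency of $W$ with $A$ (Claim~\ref{claim:t-s-cons}) together with the self‑consistency and boundedness of the lines family $B$ in the $k$‑th direction (Claim~\ref{claim:lines}). Since $A$ is a complete family of measurements, $\inc(V,A)=\Trho(V)-\cons(V,A)=\Es{\vx}\sum_g\Trho\bigl(V_{\vx_{>k}}^g\otimes(\Id-A_\vx^{g(\vx_{\le k})})\bigr)$. First I would trade this ``point'' comparison with $A$ for a ``line'' comparison with $B$: as $g$ is linear in its last argument, $g(\vx_{\le k})=\ell_g(x_k)$ where $\ell_g\colon t\mapsto g(\vx_{<k},t)\in\ML(\Fp,\Fp)$ depends only on $g$ and $\vx_{<k}$; using $\cons(A,B)\ge1-O(\sqrt\eps)$, an application of Lemma~\ref{lem:consmu} to pass between product and tensor forms, and the fact that two distinct lines agree on at most one point (so Schwartz--Zippel lets one replace the event $\ell(x_k)\neq\ell_g(x_k)$ by $\ell\neq\ell_g$ up to $O(1/p)$), this rewrites $\inc(V,A)$ as $\Es{\vx_{\neg k}}\sum_g\Trho\bigl(V_{\vx_{>k}}^g\otimes(\Id-B_{\vx_{\neg k}}^{\ell_g})\bigr)+O(\eps^{1/4})$.

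The heart of the argument is the second step. Substituting $V_{\vx_{>k}}^g=(1+R/p)^{-1}\Es{v}(\tilde{T}_{\vx_{>k}}C_v)\,W_{v\vx_{>k}}^{g_{|v}}\,(C_v\tilde{T}_{\vx_{>k}})$ with $C_v=\Es{u\neq v}T_{u\vx_{>k}}^{g_{|u}}$ and $W_{v\vx_{>k}}^{g_{|v}}=\tilde{T}_{\vx_{>k}}T_{v\vx_{>k}}^{g_{|v}}\tilde{T}_{\vx_{>k}}$, one uses item~2 of Claim~\ref{claim:tm-cons}: each $T_{u\vx_{>k}}^{g_{|u}}$ is sandwiched by $\sum_{\ell\colon\ell(u)=\ell_g(u)}B_{\vx_{\neg k}}^\ell$, the projection onto lines through the point $(u,\ell_g(u))$, and $W_{v\vx_{>k}}^{g_{|v}}$ analogously carries the factor $\sum_{\ell\colon\ell(v)=\ell_g(v)}B_{\vx_{\neg k}}^\ell$. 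Since any two distinct evaluation points $u\neq v$ determine $\ell_g$ uniquely, collapsing a pair of adjacent such $B$‑sums --- legitimate up to $O(\eps^{1/4})$ because $\cons(B)\ge1-O(\sqrt\eps)$ and the gentle measurement lemma (Lemma~\ref{lemma:gentle}) let one treat the $B_{\vx_{\neg k}}^\ell$ as nearly orthogonal projections --- turns the inner block into $B_{\vx_{\neg k}}^{\ell_g}(\cdots)B_{\vx_{\neg k}}^{\ell_g}$. I would then group the sum over $g$ by $\ell:=\ell_g$, use $0\le B_{\vx_{\neg k}}^{\ell}\le\Id$ (hence $(B_{\vx_{\neg k}}^{\ell})^2\le B_{\vx_{\neg k}}^{\ell}$), $\sum_{\ell}B_{\vx_{\neg k}}^{\ell}=\Id$, and the fact that the residual operators sum to at most $\Id$ --- exactly the computation showing $\{V_{\vx_{>k}}^g\}$ is a sub‑measurement, so that $\tilde{T}_{\vx_{>k}}T_{\vx_{>k}}\tilde{T}_{\vx_{>k}}\le\Id$ and the normalisation $(1+R/p)^{-1}$ absorb the spectral bound $\tilde{T}_{\vx_{>k}}\le R^{1/2}\Id$ and no factor of $R$ survives --- to bound the result by $\Es{\vx_{\neg k}}\sum_\ell\Trho\bigl(B_{\vx_{\neg k}}^{\ell}\otimes(\Id-B_{\vx_{\neg k}}^{\ell})\bigr)=1-\cons(B)=O(\sqrt\eps)$, plus the errors incurred along the way.

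Finally I would collect the bookkeeping: besides the $\eps$‑order terms above, the passage from $T$ to its $B$‑sandwiched form and to $W$ costs the inconsistencies $\inc(T,A),\inc(T,B)\le\delta$ and $\inc(W,A)$, each entering through a single Cauchy--Schwarz step for $\|\cdot\|_\rho$ (separating the ``ideal'' and ``error'' parts of $V_{\vx_{>k}}^g$) and hence with a square root, while the commutation estimate~\eqref{eq:tilde-com} contributes $O((\delta/\eta)^{c_5})$, absorbed since $\eta$ is a fixed power $\delta^{c'}$ of $\delta$. Using $\delta\ge\sqrt n\eps^{1/8}\ge np^{-1/4}$ to dominate all $\eps$‑ and $p$‑dependent contributions, this gives $\inc(V,A)=O\bigl(\inc(W,A)^{1/2}+\delta^{1/2}\bigr)=O\bigl(\delta_W^{1/2}\bigr)$. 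The main obstacle is the transport performed in the second step: the coordinate at which $A$ --- and therefore the surrogate line $\ell_g$ --- is read must be moved onto the two summation coordinates $u,v$ at which $W$ is genuinely evaluated, and the only available mechanism is the $B$‑sandwiching of the pre‑processed family $T$ from Claim~\ref{claim:tm-cons}, through the fact that the line through any two reconstructed points is forced to be $\ell_g$; carrying this out while controlling the non‑commutativity of $\tilde{T}_{\vx_{>k}}$, $T$, $B$ and $A$, ensuring the large norm $R^{1/2}$ of $\tilde{T}_{\vx_{>k}}$ is always killed by the normalisation rather than surviving into the estimate, and doing it all at the level of sub‑measurements rather than projective measurements --- so that every ``equal outcome'' step must go through the consistency formalism and Cauchy--Schwarz --- is the delicate part.
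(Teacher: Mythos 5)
Your overall bookkeeping and the final bound are of the right shape, but the heart of your argument (step 2) rests on a misreading of Claim~\ref{claim:tm-cons}. Item~2 of that claim does \emph{not} say that each $T_{u\vx_{>k}}^{g_{|u}}$ is sandwiched by $\sum_{\ell:\,\ell(u)=\ell_g(u)}B_{\vx_{\neg k}}^{\ell}$ at the fixed base point $\vx_{<k}$: the sandwiching operator is $\bigl(\Es{\vy_{<k}}B_{\vy_{<k}u\vx_{>k}}^{h(\vy_{<k})}\bigr)^{M'}$, an \emph{average over base points} $\vy_{<k}$ raised to the power $M'$, built from line measurements $B_{\vy_{<k}\vx_{>k}}^{\ell}$ based at points other than the $\vx_{<k}$ where your target operator $B_{\vx_{\neg k}}^{\ell_g}$ lives. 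So the ``two evaluation points determine $\ell_g$'' collapse to $B_{\vx_{\neg k}}^{\ell_g}(\cdots)B_{\vx_{\neg k}}^{\ell_g}$ is simply not available in the form you use it; extracting a line label at the specific $\vx_{<k}$ from that averaged sandwich is essentially the content of Claim~\ref{claim:t-s-cons}, and once you have that claim there is no reason to reopen the sandwich at all. Moreover, the commutations you invoke to push $B$-labels through the $\tilde{T}_{\vx_{>k}}$ factors are only controlled in a mean-square, pinched sense by~\eqref{eq:tilde-com}, while inside $V_{\vx_{>k}}^g$ they sit in a product with further $\tilde{T}_{\vx_{>k}}$ factors of operator norm up to $R^{1/2}$ and with non-projective POVM elements $B^{\ell}$; your claimed $O(\eps^{1/4})$ cost for this collapse is asserted, not derived, and is exactly where factors of $R=\poly(1/\eta)$ threaten to enter. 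You flag this as ``the delicate part,'' but it is the whole proof.

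The paper's argument avoids all of this and is much shorter: writing $\inc(V,A)$ out with the two averaged evaluation points $x'_k,x''_k\neq y_k$, it applies Lemma~\ref{lem:cons1} twice (each application costing $O(\delta_W^{1/2})$, using $\inc(W,A),\inc(T,A)\le\delta_W$) to tensor on the operators $A_{\vx_{\neg k}x'_k}^{g(\vx_{<k},x'_k)}$ and $A_{\vx_{\neg k}x''_k}^{g(\vx_{<k},x''_k)}$ on fresh registers, and then observes that the resulting configuration --- $A$ disagreeing with $g$ at $x_k$ while agreeing with the same linear restriction of $g$ at $x'_k$ and $x''_k$ on the same axis-parallel line --- is, after Cauchy--Schwarz, a violation of the linearity test~\eqref{eq:lin}, giving $\inc(V,A)=O(\eps^{1/2}+\delta_W^{1/2})$. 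Your idea of closing with $\cons(B)$ instead of~\eqref{eq:lin} is not wrong in spirit (Claim~\ref{claim:lines} encodes the same linearity), but to make it rigorous you would still have to transfer the outcome labels of the $W$ and $T$ factors onto auxiliary registers via Lemma~\ref{lem:cons1} and $\delta_W$, i.e.\ essentially reproduce the paper's mechanism, rather than mine the internal $B$-structure of the pre-processed family.
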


\begin{proof}
By definition,
\begin{align*}
\inc(V,A) &= \Es{\vx,x'_k,x''_k\neq y_k} \sum_{g,a\neq g(\vx_{\leq k})} \Trho\big( W_{x'_k \vx_{>k}}^{g_{|x'_k}} T_{y_k\vx_{>k}}^{g_{|y_k}} W_{x''_k \vx_{>k}}^{g_{|x''_k}} \otimes A_\vx^a \big)\\
&= \Es{\vx,x'_k,x''_k\neq y_k} \sum_{g,a\neq g(\vx_{\leq k})} \Trho\big( W_{x'_k \vx_{>k}}^{g_{|x'_k}} T_{y_k\vx_{>k}}^{g_{|y_k}} W_{x''_k \vx_{>k}}^{g_{|x''_k}} \otimes A_\vx^a \otimes A_{\vx_{\neg k}x'_k}^{g(\vx_{<k},x'_k)}\big) + O(\delta_W^{1/2}) \\
&= \Es{\vx,x'_k,x''_k\neq y_k} \sum_{g,a\neq g(\vx_{\leq k})} \Trho\big( W_{x'_k \vx_{>k}}^{g_{|x'_k}} T_{y_k\vx_{>k}}^{g_{|y_k}} W_{x''_k \vx_{>k}}^{g_{|x''_k}}A_{\vx_{\neg k}x''_k}^{g(\vx_{<k},x''_k)} \otimes A_\vx^a \otimes A_{\vx_{\neg k}x'_k}^{g(\vx_{<k},x'_k)}\big) + O(\delta_W^{1/2}) \\
&= O(\eps^{1/2}+\delta_W^{1/2}),
\end{align*}
where the second and third equalities each follow from an application of Lemma~\ref{lem:cons1} and the definition of $\delta_W$, and the last follows by applying Cauchy-Schwarz and using linearity of $A$ in the $k$-th direction, as in~\eqref{eq:lin}.
\end{proof}

\subsubsection{Consistency with arbitrary sub-measurements}\label{sec:pasting-cons2}

We now show that items 2 and 3 in the conclusion of Lemma~\ref{lem:lemma2} hold. We will make use of the bound
\begin{align}\label{eq:ww-t}
\Es{\vx_{>k}} \Trho\big( \big( \Id - W_{\vx_{>k}} \big) T_{\vx_{>k}}\big( \Id - W_{\vx_{>k}} \big) \big) \,=\, O\big(\eta\big),
\end{align}
which holds by definition of $\{W_{\vx_{\geq k}}^{h}\}$ (cf.~\eqref{eq:def-ww}) and of $\tilde{T}_{\vx_{>k}}$ (cf.~\eqref{eq:def-tildet}).

\begin{claim}\label{claim:item2}
For any family of sub-measurements $P$ of arity at least $k+1$, 
$$\big|\cons(P,V) - \cons(P,T)\big| \,=\, O\big(\delta_W^{1/2} + \inc(P,A)^{1/2} + \eta^{1/2}\big).$$
\end{claim}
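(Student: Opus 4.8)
The plan is to expand the difference $\cons(P,V)-\cons(P,T)$ using the definitions and to successively replace the operators appearing in the definition of $V_{\vx_{>k}}^g$ by simpler ones, at each step controlling the error by one of the quantitative bounds already established (consistency of $W$ and $A$ from Claim~\ref{claim:t-s-cons}, the commutation relation~\eqref{eq:tilde-com} for $\tilde{T}_{\vx_{>k}}$, the near-idempotence bound~\eqref{eq:ww-t}, and the linearity and consistency properties of $T$, $B$, $A$). Since $P$ has arity $\ell\geq k+1$, in $\cons(P,V)$ the sum is over pairs $f,g$ with $g_{|\vx_{k\cdots\ell-1}}=f$, and $V_{\vx_{>k}}^g$ is the three-fold product $\tilde{T}_{\vx_{>k}}(\Es{x_k\neq y_k}T_{x_k\vx_{>k}}^{g_{|x_k}})\tilde{T}_{\vx_{>k}} T_{y_k\vx_{>k}}^{g_{|y_k}}\tilde{T}_{\vx_{>k}}(\Es{x_k\neq y_k}T_{x_k\vx_{>k}}^{g_{|x_k}})\tilde{T}_{\vx_{>k}}$ (neglecting the harmless $(1+R/p)^{-1}$ factor as the authors do).

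First I would move the outer $\tilde{T}_{\vx_{>k}}$'s through $P_{\vx_{\geq \ell}}^f$ using~\eqref{eq:tilde-com} together with consistency of $P$ and $A$ (via Claim~\ref{claim:lines}, since $\tilde{T}$ approximately commutes with the line operators $B$, and $P$ is approximately consistent with $A$ hence with $B$); this costs $O((\delta/\eta)^{c_5}+\inc(P,A)^{1/2})$ and leaves an expression of the form $\Es{\vx}\sum_{f,g:\,g_{|\ldots}=f}\Trho(P_{\vx_{\geq\ell}}^f\otimes W^{g_{|x_k}}_{x_k\vx_{>k}}\cdots)$ where now the inner operators are the $W$'s. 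Next, using the near-idempotence~\eqref{eq:ww-t} (which says $W_{\vx_{>k}}$ acts as identity on the support of $T_{\vx_{>k}}$ up to $O(\eta)$) together with the fact that, after summing over $g_{|y_k}$, the middle factor $\Es{x_k\neq y_k}\sum_{g_{|y_k}}T_{y_k\vx_{>k}}^{g_{|y_k}}$ is close to $T_{\vx_{>k}}$, I would collapse the middle of the product, reducing $V_{\vx_{>k}}^g$ to essentially $W_{x_k\vx_{>k}}^{g_{|x_k}}$ (for one random $x_k$) up to $O(\eta^{1/2}+\delta_W^{1/2})$ error; here I would use Cauchy-Schwarz to pass between $\|\cdot\|_1$-type and $\Trho$-type quantities and the Schwartz–Zippel bound to handle the $x_k=y_k$ diagonal. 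The final replacement is $W_{x_k\vx_{>k}}^{g_{|x_k}}=\tilde{T}_{\vx_{>k}}T_{x_k\vx_{>k}}^{g_{|x_k}}\tilde{T}_{\vx_{>k}}$ by $T_{x_k\vx_{>k}}^{g_{|x_k}}$ itself: again move the $\tilde{T}$'s out through $P$ using~\eqref{eq:tilde-com} and~\eqref{eq:ww-t}, at cost $O((\delta/\eta)^{c_5})$, and then average over $x_k$ to recognize $\Es{x_k}\sum_h T_{x_k\vx_{>k}}^h$-weighted expressions as $\cons(P,T)$ (using that $g_{|x_k}$ restricted further to $\vx_{k+1\cdots\ell-1}$ equals $f$ iff $g_{|\vx_{k\cdots\ell-1}}=f$, which is where arity $\ell\geq k+1$ is used). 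Collecting the errors and choosing $\eta=\delta^{c'}$ for a suitable $c'$ balances $(\delta/\eta)^{c_5}$ against $\eta^{1/2}$ and yields the stated $O(\delta_W^{1/2}+\inc(P,A)^{1/2}+\eta^{1/2})$.

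The main obstacle I expect is the bookkeeping in the middle collapse step: $V_{\vx_{>k}}^g$ is a five-fold alternating product of $\tilde{T}$'s and $T$'s with \emph{two} independent variables $x_k$ and $y_k$, and one has to be careful that replacing the middle $T_{y_k}^{g_{|y_k}}$-block by $T_{\vx_{>k}}$ (summing over $g_{|y_k}$) is compatible with the constraint $g_{|\vx_{k\cdots\ell-1}}=f$ that couples $g$ to $f$ through $P$ — one must sum over $g_{|y_k}$ freely while keeping $g_{|x_k}$ fixed, which is legitimate because $x_k\neq y_k$ makes the two restrictions independent, but the error terms from the gentle-measurement/Cauchy-Schwarz steps need to be tracked in the right order. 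A secondary subtlety is ensuring that the "approximately commutes with $B$, and $B$ is approximately consistent with $A$, hence $\tilde T$ can be pushed through $P$" chain is applied with the correct $\rho$ (the reduced state on the appropriate number of registers), so that all the $\Trho(\cdot\otimes\cdot)$ manipulations are on the same footing; this is routine given the conventions in Section~\ref{sec:prelim-notation} but must be stated cleanly. I would relegate the fully expanded computation to the same telescoping style used in the proof of Claim~\ref{claim:vcons} and Claim~\ref{claim:t-s-cons}, and remark that item~3 follows by an entirely analogous argument in which the role of $\inc(P,A)^{1/2}$ is replaced, via item~3-type reasoning, by $|\cons(T,T)-\Trho(T)|^{1/2}$.
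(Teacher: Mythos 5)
There is a genuine gap at the central step. In $\cons(P,V)$ the outcome $g$ of $V$ is coupled to the outcome $f$ of $P$ (for arity $\ell=k+1$ one simply has $f=g$), so within each term of the sum $g$ --- and hence all three restrictions $g_{|x_k}$, $g_{|y_k}$, $g_{|x'_k}$ appearing in $V_{\vx_{>k}}^g$, which is an average over $y_k$ and $x_k,x'_k\neq y_k$ of $W_{x_k\vx_{>k}}^{g_{|x_k}}T_{y_k\vx_{>k}}^{g_{|y_k}}W_{x'_k\vx_{>k}}^{g_{|x'_k}}$ --- is completely determined. Your ``collapse the middle'' step, which replaces $T_{y_k\vx_{>k}}^{g_{|y_k}}$ by the full sum $T_{y_k\vx_{>k}}$, is therefore not justified by ``$x_k\neq y_k$ makes the restrictions independent'': you are adding the many extra terms $T_{y_k\vx_{>k}}^{h}$ with $h\neq g_{|y_k}$, and showing that their total contribution is small is precisely the hard part. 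The paper performs the analogous ``freeing of labels'' for the two \emph{outer} factors, and it does so by first tagging a further register with $A_{\vx_{\neg k}x'_k}^{g(\vx_{<k}x'_k)}$ via Lemma~\ref{lem:cons1} (this is where $\inc(W,A)\leq\delta_W$ enters and the $\delta_W^{1/2}$ terms come from), then transferring the tag using self-consistency of $A$, and finally invoking Schwartz--Zippel (Lemma~\ref{lem:sz}) to argue that a wrong label $h\neq g_{|x'_k}$ can agree with the tag value $g(\vx_{<k}x'_k)$ only on an $O(k/p)$ fraction of points $\vx_{<k}$. Only after the outer labels have been freed can~\eqref{eq:ww-t} be applied (via Cauchy--Schwarz) to strip the full sums at cost $O(\eta^{1/2})$, and $\inc(P,A)^{1/2}$ is paid at the very end to remove the tag and recognize $\cons(P,T)$ with the surviving middle factor $T_{y_k\vx_{>k}}^{g_{|y_k}}$. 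Your only appeal to Schwartz--Zippel (``to handle the $x_k=y_k$ diagonal'') addresses a non-issue, since $x_k\neq y_k$ is already enforced in the definition of $V$, and misses its essential role.

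A secondary confusion: your first step proposes to ``move the outer $\tilde{T}_{\vx_{>k}}$'s through $P$'' using~\eqref{eq:tilde-com}, but $P$ and $V$ act on different tensor factors of $\ket{\Psi}$, so there is nothing to commute through $P$; the $\tilde{T}$'s are already packaged into the $W$'s, and~\eqref{eq:tilde-com} is not used in this claim --- it enters only through Claim~\ref{claim:t-s-cons}, whose bound $\inc(W,A)=O((\delta/\eta)^{c_5})$ is absorbed into $\delta_W$. Note also that even granting your middle collapse, you would be left with a product of \emph{two} outer outcome-dependent factors $W_{x_k\vx_{>k}}^{g_{|x_k}}W_{x'_k\vx_{>k}}^{g_{|x'_k}}$ at independent points, not a single $W_{x_k\vx_{>k}}^{g_{|x_k}}$, so a label-to-full-sum replacement of the kind described above would still be unavoidable. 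As it stands the proposal does not constitute a proof of the claim.
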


\begin{proof} Let $P$ be an arbitrary family of sub-measurements of arity $\ell \geq k+1$. We prove the claim in case $\ell = k+1$, the other cases being exactly similar. Then $P = \{P_{\vx_{>k}}^g\}_{g\in\ML(\Fp^{k},\Fp)}$, and by definition
\begin{align}
\cons(P,V) &= \Es{\vx} \sum_{g} \Trho\big(P_\vx^g \otimes V_\vx^g \big)\notag\\
&= \Es{\vx_{\neg k},x_k,x'_k\neq y_k} \sum_g \Trho\big( P_{\vx_{>k}}^g \otimes W_{x_k \vx_{>k}}^{g_{|x_k}} T_{y_k\vx_{>k}}^{g_{|y_k}} W_{x'_k \vx_{>k}}^{g_{|x'_k}} \big)\notag\\
&= \Es{\vx,x'_k,x''_k\neq y_k} \sum_g \Trho\big( P_{\vx_{>k}}^g \otimes W_{x'_k \vx_{>k}}^{g_{|x'_k}} T_{y_k\vx_{>k}}^{g_{|y_k}} W_{x''_k \vx_{>k}}^{g_{|x''_k}} \otimes A_{\vx_{\neg k} x'_k}^{g(\vx_{<k}x'_k)}  \big) + O\big(\delta_W^{1/2}\big),\label{eq:item2-1}
\end{align}
where the last equality follows from Lemma~\ref{lem:cons1} and the definition of $\delta_W$. We can then write
\begin{align}
&\Es{\vx,x'_k,x''_k\neq y_k} \sum_{g,h\neq g_{|x'_k}} \Trho\big( P_{\vx_{>k}}^g \otimes W_{x'_k \vx_{>k}}^{h} T_{y_k\vx_{>k}}^{g_{|y_k}} W_{x''_k \vx_{>k}}^{g_{|x''_k}} \otimes A_{\vx_{\neg k} x'_k}^{g(\vx_{<k}x'_k)}  \big) \notag\\
&= \Es{\vx,x'_k,x''_k\neq y_k} \sum_{g,h\neq g_{|x'_k}} \Trho\big( A_{\vx_{\neg k} x'_k}^{h(\vx_{<k}x'_k)} P_{\vx_{>k}}^g A_{\vx_{\neg k} x'_k}^{h(\vx_{<k}x'_k)} \otimes W_{x'_k \vx_{>k}}^{h} T_{y_k\vx_{>k}}^{g_{|y_k}} W_{x''_k \vx_{>k}}^{g_{|x''_k}} \otimes A_{\vx_{\neg k} x'_k}^{g(\vx_{<k}x'_k)}  \big)  + O\big(\delta_W^{1/2}\big)\notag\\
&= \Es{\vx,x'_k,x''_k\neq y_k} \sum_{\substack{h\neq g_{|x'_k}\\ h(\vx_{<k})= g(\vx_{<k}x'_k)}} \Trho\big( P_{\vx_{>k}}^g \otimes W_{x'_k \vx_{>k}}^{h} T_{y_k\vx_{>k}}^{g_{|y_k}} W_{x''_k \vx_{>k}}^{g_{|x''_k}} \otimes A_{\vx_{\neg k} x'_k}^{g(\vx_{<k}x'_k)}  \big)  + O\big(\eps^{1/2}+\delta_W^{1/2}\big),\label{eq:item2-2}
\end{align}
where the first equality again uses Lemma~\ref{lem:cons1} and the definition of $\delta_W$, and the last follows from an application of the Cauchy-Schwarz inequality and self-consistency of $A$. In the last expression, $h$ and $g_{|x'_k}$ are two distinct $(k-1)$-linear functions over $\Fp$: by the Schwartz-Zippel lemma (see Lemma~\ref{lem:sz} for a statement) they intersect in a fraction at most $O(k/|\Fp|)= O(k/p)$ points. Hence, applying the Cauchy-Schwarz inequality to recover a non-negative expression, we can upper bound~\eqref{eq:item2-2} by $O(\sqrt{n/p}+\eps^{1/2}+\delta_W^{1/2}) = O(\delta_W^{1/2})$ since $\delta_W \geq\delta\geq np^{-1}$. Together with~\eqref{eq:item2-1}, this shows that
\begin{align*}
\cons(P,V) &= \Es{\vx,x'_k,x''_k\neq y_k} \sum_g \Trho\big( P_{\vx_{>k}}^g \otimes W_{x'_k \vx_{>k}} T_{y_k\vx_{>k}}^{g_{|y_k}} W_{x''_k \vx_{>k}}^{g_{|x''_k}} \otimes A_{\vx_{\neg k} x'_k}^{g(\vx_{<k}x'_k)}   \big)  + O\big(\delta_W^{1/2}\big)\\
 &= \Es{\vx,x''_k\neq y_k} \sum_g \Trho\big( P_{\vx_{>k}}^g \otimes T_{y_k\vx_{>k}}^{g_{|y_k}} W_{x''_k \vx_{>k}}^{g_{|x''_k}} \otimes A_{\vx_{\neg k} x'_k}^{g(\vx_{<k}x'_k)}  \big)  + O\big(\delta_W^{1/2}+\eta^{1/2}\big),
\end{align*}
where the second equality follows from the Cauchy-Schwarz inequality and~\eqref{eq:ww-t}. Repeating the same steps for the remaining term $W_{x''_k \vx_{>k}}^{g_{|x''_k}} $, and using consistency of $P$ and $A$ to conclude, proves the claim.  
\end{proof}

\begin{claim}\label{claim:item3}
 For any sub-measurement $P$, of arbitrary arity, 
$$\big|\cons(P,V)-\cons(P,T)\big| \,=\, O\big(\big|\cons(T,T)-\Tr_\rho(T)\big|^{1/2}+\delta_W^{1/2}+\eta^{1/2}\big).$$
\end{claim}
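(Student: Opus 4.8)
\textbf{Proof plan for Claim~\ref{claim:item3}.} The goal is the same inequality as in Claim~\ref{claim:item2}, but now without the hypothesis $\inc(P,A)^{1/2}$ being available: $P$ has arbitrary arity, so it may be ``supported'' on coordinates below~$k$ where we have no consistency with $A$. The approach is to run through the same sequence of manipulations as in the proof of Claim~\ref{claim:item2}, tracking each place where the bound $\inc(P,A)^{1/2}$ was used and replacing the appeal to consistency of $P$ with $A$ by an appeal to the self-consistency defect of $T$, i.e.\ the quantity $|\cons(T,T)-\Tr_\rho(T)|$.

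Concretely, first I would expand $\cons(P,V)$ as in~\eqref{eq:item2-1}: write $V_{\vx_{>k}}^g$ in terms of the $W$'s and $T$'s, insert the extra factor $A_{\vx_{\neg k}x'_k}^{g(\vx_{<k}x'_k)}$ via Lemma~\ref{lem:cons1} (at cost $O(\delta_W^{1/2})$, which is legitimate because the inner $W$-operators are consistent with $A$ by Claim~\ref{claim:t-s-cons}). The point where Claim~\ref{claim:item2} invoked $\inc(P,A)^{1/2}$ was in pulling an $A$-projector around $P_{\vx_{>k}}^g$ and then observing that ``wrong'' outcomes $h\neq g_{|x'_k}$ with $h(\vx_{<k})\neq g(\vx_{<k}x'_k)$ contribute little by Cauchy--Schwarz plus self-consistency of $A$. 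Here $P$ need not be consistent with $A$, so instead I would keep $P$ untouched and regroup the term as $\Trho(P_{\vx_{>k}}^g\otimes W_{x'_k\vx_{>k}}^{h} T_{y_k\vx_{>k}}^{g_{|y_k}} W_{x''_k\vx_{>k}}^{g_{|x''_k}}\otimes A^{g(\vx_{<k}x'_k)}_{\vx_{\neg k}x'_k})$ and apply Cauchy--Schwarz to split off a factor that is a sum of the form $\Es{}\sum_h \Trho(W_{x'_k\vx_{>k}}^h T_{y_k\vx_{>k}}^{h'}\cdots)$ over \emph{inconsistent} pairs; this is controlled by $\inc(W,T)$, which in turn is $O(\delta_W^{1/2})$ using Claim~\ref{claim:tm-cons} (the $B$-commutation) together with the definition~\eqref{eq:deltadef} of $\delta$ and~\eqref{eq:tilde-com}. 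What remains is the ``diagonal'' contribution, which after removing the inner $W$'s using~\eqref{eq:ww-t} (each substitution $W\to\Id$ costs $O(\eta^{1/2})$ by Cauchy--Schwarz applied to~\eqref{eq:ww-t}) collapses to $\Es{\vx,x''_k\neq y_k}\sum_g\Trho(P_{\vx_{>k}}^g\otimes T_{y_k\vx_{>k}}^{g_{|y_k}} T_{x''_k\vx_{>k}}^{g_{|x''_k}}\otimes\cdots)$; here the two copies of $T$ on adjacent registers, evaluated at restrictions of the same multilinear $g$, are related to a single $T_{\vx_{\geq\ell}}^{\cdot}$ up to an error governed precisely by $|\cons(T,T)-\Trho(T)|^{1/2}$ via Cauchy--Schwarz. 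Finally, comparing with the analogous expansion of $\cons(P,T)$ — which is just $\Es{\vx}\sum_g\Trho(P_\vx^g\otimes T_{\vx_{\geq k}}^{\cdot})$, again reachable from the diagonal term by the same $T$-self-consistency estimate — yields the claimed bound $O(|\cons(T,T)-\Trho(T)|^{1/2}+\delta_W^{1/2}+\eta^{1/2})$.

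The step I expect to be the main obstacle is handling the ``wrong outcome'' terms without the crutch of $\inc(P,A)$: in Claim~\ref{claim:item2} one could sandwich $P$ between $A$-projectors and use that $A$ is a complete projective measurement consistent with itself, forcing wrong outcomes to be rare; here $P$ is opaque. The resolution is that one does \emph{not} need $P$ to be consistent with anything — it suffices that the \emph{other} tensor factor (the product of $W$'s and $T$'s on the remaining registers) is ``almost diagonal'' in the outcome label, which is exactly what Claim~\ref{claim:tm-cons}(3) and~\eqref{eq:tilde-com} buy us, so Cauchy--Schwarz against the trivially-bounded $P_{\vx_{>k}}^g\leq\Id$ does the job. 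The second delicate point is making sure the $T$-on-$T$ collapse genuinely produces $|\cons(T,T)-\Trho(T)|^{1/2}$ and not something larger: this requires writing $\cons(T,T)-\Trho(T)$ (with $T$ of arity $k$ against itself) as the negative of $\inc(T,T)$ plus the total-weight defect, and using that the adjacent-coordinate restriction structure of the multilinearity game makes $\Es{x''_k\neq y_k}\sum_g T^{g_{|y_k}}T^{g_{|x''_k}}$ comparable, via one more Cauchy--Schwarz and Lemma~\ref{lem:consmu}, to $\Es{}\sum_h (T^h)^2 \approx \Trho(T)$ precisely when the self-consistency defect is small. Since $\eta$ will be set to $\delta^{c'}$ and $c_5$ is a small universal constant, all error terms coalesce into the stated form for a suitable universal $c_2$, completing the proof of Lemma~\ref{lem:lemma2}.
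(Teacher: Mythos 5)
Your proposal follows the route the paper itself intends: its proof of Claim~\ref{claim:item3} is a one-line deferral to Claim~\ref{claim:item2}, and you correctly identify the essential substitution, namely that the only place where the argument for Claim~\ref{claim:item2} genuinely uses $P$ is at the very end, where the constraint tying the auxiliary outcomes to $g$ is discharged via $\inc(P,A)$, and that for arbitrary $P$ this must instead be discharged against a second application of $T$, at a cost governed by the self-consistency defect $\big|\cons(T,T)-\Trho(T)\big|$.

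Two cautions, one minor and one substantive. The minor one: you misplace where $\inc(P,A)$ enters Claim~\ref{claim:item2}. The sandwiching of $P_{\vx_{>k}}^g$ by $A$-projectors and the wrong-outcome estimate in~\eqref{eq:item2-2} are charged only to $\inc(W,A)\le\delta_W$, self-consistency of $A$, Schwartz--Zippel, and $P\le\Id$; they never invoke consistency of $P$ with $A$, so they can be reused verbatim here. Your proposed replacement (splitting off an ``$\inc(W,T)$'' factor by Cauchy--Schwarz) is therefore unnecessary, and as stated it conflates a product of operators on a single register with the tensor-product quantity that $\inc(\cdot,\cdot)$ controls. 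The substantive one concerns the crux: the one-line claim that the two constrained applications $T_{y_k\vx_{>k}}^{g_{|y_k}}T_{x''_k\vx_{>k}}^{g_{|x''_k}}$ collapse to a single $T$ ``via Cauchy--Schwarz'' at cost $\big|\cons(T,T)-\Trho(T)\big|^{1/2}$ hides the real work. You cannot bound the difference by taking $P^g\le\Id$ and summing freely over the second outcome (that sum is unbounded), and the same-point quantity $\Trho(T)-\cons(T,T)$ does not by itself control agreement of $T$-outcomes at two \emph{different} values of $x_k$ linked through $g$. The workable order of operations mirrors Claim~\ref{claim:item2}: first transfer the second constrained application to its own register (using $\inc(W,A)\le\delta_W$, or the fact that $W$ and $T$ are each consistent with the complete measurement $A$, together with Schwartz--Zippel to upgrade pointwise agreement to agreement of the $(k-1)$-linear outcomes), then remove the unresolved total $W_{\vx_{>k}}$ via~\eqref{eq:ww-t} at cost $O(\eta^{1/2})$, and only then remove the redundant same-point factor, which is the step where $\big|\cons(T,T)-\Trho(T)\big|^{1/2}$ legitimately appears. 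Your closing paragraph shows awareness of most of these ingredients, so I read the proposal as a correct outline in the paper's spirit, with this collapse step being the part that still needs to be written out carefully.
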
 

\begin{proof}
The proof closely follows that of Claim~\ref{claim:item2}, and we omit the details. 
\end{proof}

This concludes the proof of Lemma~\ref{lem:lemma2} provided $c_2$ is chosen to be a sufficiently small constant.

\appendix

\section{Auxiliary lemmas}\label{sec:aux-lemmas}

We first recall a key lemma in the analysis of low-degree polynomials over a finite field, the Schwartz-Zippel lemma~\cite{Schwartz80JACM,Zippel79}, which we state in a form that will be useful to us. 

\begin{lemma}[Schwartz-Zippel]\label{lem:sz}
Let $\Fp$ be a finite field, $n$ an integer, and $f:\Fp^n\to\Fp$ a non-zero multilinear function. Then $f$ has at most $sn|\Fp|^{n-1}$ zeros. 
\end{lemma}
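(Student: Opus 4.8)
The plan is to prove, by induction on $n$, the slightly more general (and standard) statement: any nonzero $f\colon\Fp^n\to\Fp$ of degree at most $s$ in each variable has at most $sn\,|\Fp|^{n-1}$ zeros; the lemma is the multilinear case $s=1$. First I would dispose of the degenerate regime: if $|\Fp|\le s$ the bound is immediate because $|\Fp|^n\le s\,|\Fp|^{n-1}\le sn\,|\Fp|^{n-1}$, so from now on one may assume $|\Fp|>s$. The base case $n=1$ is the elementary fact that a nonzero univariate polynomial of degree at most $s$ over a field has at most $s$ roots.

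For the inductive step I would write $f(x_1,\dots,x_n)=\sum_{j=0}^{s}x_n^{\,j}\,f_j(x_1,\dots,x_{n-1})$, where each $f_j$ has degree at most $s$ in each of its $n-1$ variables, and let $j^\ast$ be the largest index with $f_{j^\ast}\not\equiv 0$ (it exists since $f\not\equiv0$). Applying the induction hypothesis to $f_{j^\ast}$ shows it has at most $N:=s(n-1)\,|\Fp|^{n-2}$ zeros in $\Fp^{n-1}$. I would then split $\Fp^{n-1}$ according to whether $f_{j^\ast}$ vanishes: for each of the at least $|\Fp|^{n-1}-N$ tuples $\vx'$ with $f_{j^\ast}(\vx')\ne0$ the univariate polynomial $x_n\mapsto f(\vx',x_n)$ is nonzero of degree at most $s$, hence contributes at most $s$ zeros, while for each of the remaining (at most $N$) tuples the number of zeros is bounded trivially by $|\Fp|$. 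Hence the number of zeros of $f$ is at most
\[
 \big(|\Fp|^{n-1}-N\big)\,s \;+\; N\,|\Fp| \;=\; s\,|\Fp|^{n-1} \;+\; N\,(|\Fp|-s),
\]
and since $|\Fp|>s$ and $N\le s(n-1)|\Fp|^{n-2}$ this is at most $s\,|\Fp|^{n-1}+s(n-1)|\Fp|^{n-1}=sn\,|\Fp|^{n-1}$, closing the induction.

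There is no substantive obstacle: the whole point is simply to induct on the number of variables rather than on the degree. The only items deserving a moment's care are the separate (trivial) treatment of the small-field regime $|\Fp|\le s$ --- needed precisely so that the factor $|\Fp|-s$ appearing in the last display is non-negative --- and keeping straight that the ``bad'' tuples, where the top coefficient $f_{j^\ast}$ vanishes, are exactly those controlled by the inductive hypothesis. Taking $s=1$ recovers the multilinear bound $n\,|\Fp|^{n-1}$ used elsewhere in the paper.
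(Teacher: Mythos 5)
Your proof is correct. Note that the paper does not actually prove Lemma~\ref{lem:sz}: it is stated as a recalled fact with citations to Schwartz and Zippel, so there is no internal proof to compare against; your self-contained induction on the number of variables (peel off the top nonzero coefficient $f_{j^\ast}$ in $x_n$, apply the inductive bound to it, and count zeros fiberwise, with the trivial case $|\Fp|\le s$ handled separately so that the factor $|\Fp|-s$ is nonnegative) is exactly the standard argument behind the cited result, and all the steps check out. A side benefit of your more general degree-$s$-per-variable formulation is that it explains the otherwise puzzling factor $s$ in the paper's statement of the lemma: as written, for a multilinear $f$ the bound should simply be $n|\Fp|^{n-1}$ (i.e.\ $s=1$), and the $s$ appears to be a leftover from the general statement you prove; this is consistent with how the lemma is used elsewhere in the paper, e.g.\ the bound $n/\abs{\Fp}$ in the soundness analysis of Section~\ref{sec:3sat-sound}.
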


The next series of claims are all based on variants of the Cauchy-Schwarz inequality. The first follows from Eq.~(3) of Bhatia and Davis~\cite{BhaDav95LAA} (see also~\cite{Bhatia88JOT}), substituting the norm~$\triplevert{\cdot}\triplevert$ by~$\|\cdot\|_1$.

\begin{theorem} \label{thm:matrixcsnorm}
  Let~$A$ and~$B$ be arbitrary matrices such that the product $A^\dagger B $ is well-defined. 
  Then,
  \[
    \big\| A^\dagger B\big\|_1 \leq \big\|A\big\|_F\,\big\|B\big\|_F.
  \]
\end{theorem}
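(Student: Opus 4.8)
The statement is: for arbitrary matrices $A,B$ with $A^\dagger B$ well-defined, $\|A^\dagger B\|_1 \le \|A\|_F\,\|B\|_F$. The excerpt explicitly says this follows from Eq.~(3) of Bhatia–Davis~\cite{BhaDav95LAA} by substituting the norm $\triplevert\cdot\triplevert$ by $\|\cdot\|_1$, so a full self-contained proof is not strictly required. Nonetheless, here is the route I would take to give a clean standalone argument.

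First I would reduce to a singular-value computation. Write the singular value decomposition $A^\dagger B = U\Sigma W^\dagger$ with $\Sigma = \mathrm{diag}(\sigma_1,\ldots,\sigma_r)$, so that $\|A^\dagger B\|_1 = \sum_i \sigma_i = \Tr(\Sigma) = \Tr(U^\dagger A^\dagger B W)$. Let $u_i$ be the $i$-th column of $U$ and $w_i$ the $i$-th column of $W$; then $\|A^\dagger B\|_1 = \sum_i u_i^\dagger A^\dagger B w_i = \sum_i \langle A u_i, B w_i\rangle$. Now apply the ordinary (vector) Cauchy–Schwarz inequality to each term and then Cauchy–Schwarz again over the index $i$:
\[
\sum_i \langle A u_i, B w_i\rangle \;\le\; \sum_i \|A u_i\|_2\,\|B w_i\|_2 \;\le\; \Big(\sum_i \|A u_i\|_2^2\Big)^{1/2}\Big(\sum_i \|B w_i\|_2^2\Big)^{1/2}.
\]
Since $\{u_i\}$ and $\{w_i\}$ are orthonormal families, $\sum_i \|A u_i\|_2^2 = \sum_i u_i^\dagger A^\dagger A u_i \le \Tr(A^\dagger A) = \|A\|_F^2$, and similarly $\sum_i \|B w_i\|_2^2 \le \|B\|_F^2$. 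Combining gives $\|A^\dagger B\|_1 \le \|A\|_F\|B\|_F$, as desired.

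The only mild subtlety — and the one place a careless argument could go wrong — is handling the orthonormality correctly: the columns of $U$ and of $W$ are each orthonormal within their own set, but $u_i$ and $w_i$ need not be related, so one genuinely needs the two-stage application of Cauchy–Schwarz rather than trying to bound $\sum_i \langle Au_i, Bw_i\rangle$ in one shot. I expect this to be entirely routine; alternatively one can simply cite the Bhatia–Davis inequality as the excerpt suggests, noting that $\triplevert\cdot\triplevert$ there denotes any unitarily invariant norm and $\|\cdot\|_1$ is a particular instance. I would present the SVD argument since it is short and keeps the paper self-contained.
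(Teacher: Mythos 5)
Your argument is correct. Note, however, that the paper does not prove this statement from scratch at all: it simply invokes Eq.~(3) of Bhatia and Davis, which establishes the inequality $\triplevert A^\dagger B\triplevert \le \triplevert A^\dagger A\triplevert^{1/2}\,\triplevert B^\dagger B\triplevert^{1/2}$ for every unitarily invariant norm, and specializes $\triplevert\cdot\triplevert$ to the trace norm (for which $\|A^\dagger A\|_1^{1/2}=\|A\|_F$). Your route is genuinely different and entirely sound: the singular value decomposition $A^\dagger B=U\Sigma W^\dagger$ gives $\|A^\dagger B\|_1=\sum_i \langle Au_i,Bw_i\rangle$, and two applications of Cauchy--Schwarz together with $\sum_i\|Au_i\|^2\le \Tr(A^\dagger A)=\|A\|_F^2$ (valid since the $u_i$ form an orthonormal family, so the sum is a partial sum of nonnegative terms bounded by the trace) yield the claim; you also correctly flag that $u_i$ and $w_i$ come from different orthonormal families, which is why the two-stage Cauchy--Schwarz is needed. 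The trade-off is the usual one: the citation buys brevity and the full generality of unitarily invariant norms, while your SVD argument buys a short, elementary, self-contained proof of exactly the trace-norm case the paper uses (e.g.\ in Lemma~\ref{lemma:gentle}), at no real cost.
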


Winter's gentle measurement lemma~\cite[Lemma~9]{Winter99IEEEIT} (see also Aaronson's ``almost as good as new'' lemma~\cite[Lemma~2.2]{Aar05}) is a key lemma formalizing the intuitive fact that if a measurement produces a certain outcome with near-certainty when performed on a specific state, then the post-measurement state is close to the original state. The following is a variant of that lemma, and we give a proof following Ogawa and Nagaoka~\cite[Appendix~C]{OgaNag07IEEEIT}.

\begin{lemma}\label{lemma:gentle}
  Let $\rho$ be a density operator on a Hilbert space~$\calH$,
  and $X$ and~$Y$ be linear operators from~$\calH$ to a Hilbert space~$\calK$
  such that~$X^*X\preceq I$ and~$Y^*Y\preceq I$.
  Then,
  \[
    \norm{X\rho X^*-Y\rho Y^*}_1
    \le
    2\sqrt{\Tr(X-Y)\rho(X-Y)^*}.
  \]
\end{lemma}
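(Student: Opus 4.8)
The plan is to follow the Ogawa--Nagaoka argument, reducing the trace-norm bound to an application of the Cauchy--Schwarz inequality for the Hilbert--Schmidt inner product (i.e.\ Theorem~\ref{thm:matrixcsnorm}). First I would write the difference as a sum of two ``one-sided'' differences by inserting the mixed term $X\rho Y^*$:
\[
  X\rho X^* - Y\rho Y^* \,=\, X\rho (X-Y)^* \,+\, (X-Y)\rho Y^*.
\]
By the triangle inequality for $\|\cdot\|_1$ it then suffices to bound each of $\|X\rho(X-Y)^*\|_1$ and $\|(X-Y)\rho Y^*\|_1$ by $\sqrt{\Tr(X-Y)\rho(X-Y)^*}$; the two are symmetric, so I focus on the first.

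Next I would factor $\rho = \rho^{1/2}\rho^{1/2}$ and write $X\rho(X-Y)^* = (X\rho^{1/2})(\rho^{1/2}(X-Y)^*) = (X\rho^{1/2})\big((X-Y)\rho^{1/2}\big)^*$, so that Theorem~\ref{thm:matrixcsnorm} (with $A = (X\rho^{1/2})^*$... more precisely applying it in the form $\|A^\dagger B\|_1 \le \|A\|_F\|B\|_F$ to $A = \rho^{1/2}X^*$ and $B = (X-Y)\rho^{1/2}$, noting $A^\dagger B$ is exactly $X\rho(X-Y)^*$ up to the dagger convention) gives
\[
  \big\|X\rho(X-Y)^*\big\|_1 \,\le\, \big\|X\rho^{1/2}\big\|_F \,\big\|(X-Y)\rho^{1/2}\big\|_F.
\]
Now $\|X\rho^{1/2}\|_F^2 = \Tr(\rho^{1/2}X^*X\rho^{1/2}) = \Tr(X^*X\rho) \le \Tr(\rho) = 1$ using the hypothesis $X^*X \preceq I$ and $\rho\succeq 0$, and similarly $\|(X-Y)\rho^{1/2}\|_F^2 = \Tr\big((X-Y)\rho(X-Y)^*\big)$. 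Hence $\|X\rho(X-Y)^*\|_1 \le \sqrt{\Tr(X-Y)\rho(X-Y)^*}$, and the symmetric bound holds for the other term using $Y^*Y \preceq I$. Adding the two contributions gives the factor $2$ and proves the lemma.

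The only genuinely delicate point is bookkeeping with the adjoint/transpose conventions in Theorem~\ref{thm:matrixcsnorm}: one must make sure the product appearing there matches the product we want to bound, which just amounts to choosing $A$ and $B$ (or their adjoints) appropriately, and using that $\|M\|_1 = \|M^*\|_1$. Everything else is a routine computation; there is no real obstacle.
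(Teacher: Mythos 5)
Your proposal is correct and follows essentially the same route as the paper: split $X\rho X^*-Y\rho Y^*$ into two cross terms, apply the triangle inequality, and bound each term via Theorem~\ref{thm:matrixcsnorm} together with $X^*X\preceq I$ (resp.\ $Y^*Y\preceq I$); the only cosmetic difference is that you use the decomposition $X\rho(X-Y)^*+(X-Y)\rho Y^*$ while the paper uses $(X-Y)\rho X^*+Y\rho(X-Y)^*$, and your noted adjoint-bookkeeping issue is fixed by taking $B=\rho^{1/2}(X-Y)^*$ so that $A^\dagger B=X\rho(X-Y)^*$ exactly.
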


\begin{proof}
  By the triangle inequality,
  \[
    \norm{X\rho X^*-Y\rho Y^*}_1 \le \norm{(X-Y)\rho X^*}_1 + \norm{Y\rho(X-Y)^*}_1.
  \]
  By Theorem~\ref{thm:matrixcsnorm},
  \begin{align*}
    \norm{(X-Y)\rho X^*}_1
    &\le
    \norm{(X-Y)\sqrt{\rho}}_2 \norm{\sqrt{\rho}\,X^*}_2 \\
    &=
    \sqrt{\Tr(X-Y)\rho(X-Y)^*}\sqrt{\Tr X\rho X^*} \\
    &\le
    \sqrt{\Tr(X-Y)\rho(X-Y)^*}.
  \end{align*}
  Similarly, $\norm{Y\rho(X-Y)^*}_1\le\sqrt{\Tr(X-Y)\rho(X-Y)^*}$,
  and the lemma follows.
\end{proof}

We state the following two corollaries of Lemma~\ref{lemma:gentle}. 

\begin{claim}\label{claim:gentle2} Let $\{A_i\}$ and $\{B_i\}$ be two sets of positive matrices of the same dimension, and $\rho\geq 0$. Then
$$ \Big\| \sum_i \sqrt{A_i} \,\rho \,\sqrt{A_i} - \sqrt{B_i} \,\rho \,\sqrt{B_i} \Big\|_1 \,\leq\, 2 \Bigl( \sum_i \Tr\bigl( \bigl(\sqrt{A_i} - \sqrt{B_i}\bigr)^2 \rho \bigr)\Bigr)^{1/2}.$$
\end{claim}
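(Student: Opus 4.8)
The plan is to deduce the claim from the gentle-measurement inequality (Lemma~\ref{lemma:gentle}) by realizing the map $\rho\mapsto\sum_i\sqrt{A_i}\,\rho\,\sqrt{A_i}$ as a partial trace of a single conjugation $\rho\mapsto X\rho X^{*}$. First I would fix the (finite) common index set $I$ together with an orthonormal basis $\{\ket i\}_{i\in I}$ of $\C^{I}$, and define the dilations $X,Y\colon\calH\to\calH\otimes\C^{I}$ by $X:=\sum_i\sqrt{A_i}\otimes\ket i$ and $Y:=\sum_i\sqrt{B_i}\otimes\ket i$. Then $X^{*}X=\sum_i A_i$ and $Y^{*}Y=\sum_i B_i$, so in the regime where this corollary is applied --- the families being sub-measurements, $\sum_i A_i\preceq\Id$ and $\sum_i B_i\preceq\Id$, and $\rho$ a density operator --- the hypotheses $X^{*}X\preceq\Id$ and $Y^{*}Y\preceq\Id$ of Lemma~\ref{lemma:gentle} are satisfied.

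Next I would record two elementary identities. Since $X\rho X^{*}=\sum_{i,j}\sqrt{A_i}\,\rho\,\sqrt{A_j}\otimes\ketbra i j$, the partial trace over $\C^{I}$ retains only the diagonal blocks, giving
\[
 \Tr_{\C^{I}}\big(X\rho X^{*}\big)=\sum_i\sqrt{A_i}\,\rho\,\sqrt{A_i},\qquad\Tr_{\C^{I}}\big(Y\rho Y^{*}\big)=\sum_i\sqrt{B_i}\,\rho\,\sqrt{B_i},
\]
so the operator inside the trace norm on the left of the claim is exactly $\Tr_{\C^{I}}\big(X\rho X^{*}-Y\rho Y^{*}\big)$. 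Moreover $(X-Y)^{*}(X-Y)=\sum_i(\sqrt{A_i}-\sqrt{B_i})^{2}$, so by cyclicity of the trace $\Tr\big((X-Y)\rho(X-Y)^{*}\big)=\sum_i\Tr\big((\sqrt{A_i}-\sqrt{B_i})^{2}\rho\big)$, which is precisely the quantity under the square root on the right-hand side of the claim.

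Finally I would assemble the pieces: the operator $X\rho X^{*}-Y\rho Y^{*}$ is Hermitian, so splitting it into its positive and negative parts and using $\|\Tr_{\C^{I}}(M)\|_{1}\le\|M\|_{1}$ for $M\succeq 0$ shows the trace norm does not increase under $\Tr_{\C^{I}}$; hence the left-hand side of the claim is at most $\|X\rho X^{*}-Y\rho Y^{*}\|_{1}$, which Lemma~\ref{lemma:gentle} bounds by $2\sqrt{\Tr\big((X-Y)\rho(X-Y)^{*}\big)}$, and the second identity rewrites this as the stated bound. I expect the only genuine subtlety to be bookkeeping about normalization: for arbitrary positive $A_i,B_i$ the inequality is actually false (already a $1\times1$ instance with $\rho=1$, $A=100$, $B=99$ violates it), so the statement must be read under the paper's standing conventions ($\{A_i\},\{B_i\}$ sub-measurements and $\rho$ a density operator); once that is in place, the partial-trace monotonicity of $\|\cdot\|_{1}$ on Hermitian operators is the only non-immediate ingredient and it is standard, the rest being the two routine identities above.
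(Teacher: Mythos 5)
Your proposal is correct and takes essentially the same route as the paper: the paper also dilates to a block-column operator $X$ with blocks $\sqrt{A_i}$ (your $\sum_i\sqrt{A_i}\otimes\ket{i}$), applies Lemma~\ref{lemma:gentle}, and reduces the left-hand side to $\|X\rho X^\dagger-Y\rho Y^\dagger\|_1$ --- the paper via the triangle inequality and the bound on diagonal blocks, you via partial-trace contractivity of $\|\cdot\|_1$, which amounts to the same standard fact. Your normalization caveat is apt but not a discrepancy: the paper's own proof likewise needs $\sum_i A_i\leq\Id$, $\sum_i B_i\leq\Id$ and $\Tr\rho\leq1$ to invoke Lemma~\ref{lemma:gentle}, hypotheses that are implicit in every use of the claim.
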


\begin{proof} Let $X$ be a block-column matrix with blocks the $\sqrt{A_i}$, and similarly for $Y$ and the $\sqrt{B_i}$. Then
$$  \Big\| \sum_i \sqrt{A_i} \,\rho \,\sqrt{A_i} - \sqrt{B_i} \,\rho \,\sqrt{B_i} \Big\|_1 \,\leq \, \sum_i \Big\|  \sqrt{A_i} \,\rho \,\sqrt{A_i} - \sqrt{B_i} \,\rho \,\sqrt{B_i} \Big\|_1 \,\leq\, \big\| X\rho X^\dagger - Y\rho Y^\dagger \big\|_1,$$
and
$$\Tr\bigl((X-Y)\rho(X-Y)^\dagger\bigr) \,=\,\sum_i \Tr\bigl( \bigl(\sqrt{A_i} - \sqrt{B_i}\bigr)^2 \rho \bigr),$$
so that the claim follows from Lemma~\ref{lemma:gentle}.
\end{proof}

\begin{claim}\label{claim:gentle}
Let $\sigma\geq 0$ be a (possibly un-normalized) density matrix on $3$ registers, and suppose that $\sigma$ is invariant with respect to permutation of the first two registers. Let $\{A_i\}_i$ be a POVM on either of the first two registers, and let
$$ \delta:= \sum_{i\neq j} \Tr\bigl( (A_i\otimes A_j\otimes \Id) \sigma \bigr).$$
Then
$$ \big\| \sum_i \big(\sqrt{A_i}\otimes \Id\big)\,\Tr_2(\sigma)\,\big(\sqrt{A_i} \otimes \Id \big)- \Tr_2(\sigma) \big\|_1 \,=\,O(\sqrt{\delta}), $$
where here $\sqrt{A_i}$ acts on the first register of $\sigma$, and the identity on the third. 
\end{claim}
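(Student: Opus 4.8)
The plan is to reduce the statement to a single application of the gentle measurement lemma in the form of Claim~\ref{claim:gentle2}, after introducing an explicit \emph{classical} register that records the outcome of measuring the second register of $\sigma$ with $\{A_i\}$. The conceptual point is that one cannot use the ``quantum'' copy of the measurement on register~$2$ directly: measuring register~$1$ may fully decohere it even when the two outcomes agree with certainty (take $\ket\psi=\ket{00}+\ket{11}$ on the first two registers). Once register~$2$ has been measured and discarded, however, the information ``which outcome'' survives only in a classical register, and a classical witness cannot create such interference; after tracing out register~$2$ the first register is effectively already decohered in the eigenbasis of the $A_i$'s, so measuring it does almost nothing.

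Concretely, first I would form the state $\tilde\sigma := \sum_j \Tr_2\big[(\Id\otimes\sqrt{A_j}\otimes\Id)\,\sigma\,(\Id\otimes\sqrt{A_j}\otimes\Id)\big]\otimes\ketbra{j}{j}_\calC$ on registers $1,3,\calC$. Two easy observations: $\Tr_\calC\tilde\sigma=\Tr_2\sigma=:\tau$ (a measure-and-forget channel on register~$2$ followed by discarding register~$2$ is just discarding register~$2$, since $\sum_j A_j=\Id$), and $\tilde\sigma$ is block-diagonal with respect to the basis $\{\ket{j}_\calC\}$. Moreover, unwinding the definition, the consistency hypothesis translates into
\[
 \sum_i \Tr\big[(A_i\otimes\Id_3\otimes\ketbra{i}{i}_\calC)\,\tilde\sigma\big]\;=\;\sum_i\Tr\big[(A_i\otimes A_i\otimes\Id_3)\,\sigma\big]\;=\;\Tr\sigma-\delta,
\]
i.e.\ measuring register~$1$ of $\tilde\sigma$ with $\{A_i\}$ reproduces the value stored in $\calC$ up to total weight $\delta$.

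Next I would invoke Claim~\ref{claim:gentle2} with $\rho=\tilde\sigma$ applied to the two families of positive operators $\{A_i\otimes\Id_{3\calC}\}_i$ (measuring register~$1$) and $\{\Id_{1,3}\otimes\ketbra{i}{i}_\calC\}_i$ (reading $\calC$). Since $\tilde\sigma$ is block-diagonal in $\calC$, dephasing $\calC$ leaves it unchanged, so the left-hand side of Claim~\ref{claim:gentle2} is exactly $\big\|\sum_i(\sqrt{A_i}\otimes\Id_{3\calC})\,\tilde\sigma\,(\sqrt{A_i}\otimes\Id_{3\calC})-\tilde\sigma\big\|_1$. For the right-hand side, expanding $(\sqrt{A_i}\otimes\Id-\Id\otimes\ketbra{i}{i})^2$, using $\sum_i A_i=\Id$, $\sum_i\ketbra{i}{i}=\Id$ and the elementary operator inequality $\sqrt{A_i}\succeq A_i$ (valid as $0\preceq A_i\preceq\Id$), one gets $\sum_i\Tr[(\sqrt{A_i}\otimes\Id-\Id\otimes\ketbra{i}{i})^2\tilde\sigma]\le 2\Tr\sigma-2\sum_i\Tr[(\sqrt{A_i}\otimes\ketbra{i}{i})\tilde\sigma]\le 2\Tr\sigma-2(\Tr\sigma-\delta)=2\delta$, whence the displayed trace norm is at most $2\sqrt{2\delta}$. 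Finally I would trace out $\calC$: since the register-$1$ measurement commutes with $\Tr_\calC$, one has $\Tr_\calC$ of the measured $\tilde\sigma$ equal to $\sum_i(\sqrt{A_i}\otimes\Id_3)\,\tau\,(\sqrt{A_i}\otimes\Id_3)$, and monotonicity of the trace norm under $\Tr_\calC$ gives $\big\|\sum_i(\sqrt{A_i}\otimes\Id_3)\,\tau\,(\sqrt{A_i}\otimes\Id_3)-\tau\big\|_1\le 2\sqrt{2\delta}=O(\sqrt\delta)$, which is the claim.

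The only step that needs care is the first: arranging $\tilde\sigma$ so that, simultaneously, (i) its $\calC$-marginal is literally $\Tr_2\sigma$, (ii) it is classical on $\calC$, and (iii) the hypothesis on $\delta$ becomes a statement comparing the register-$1$ measurement of $\tilde\sigma$ with the content of $\calC$. Everything afterwards is a mechanical application of Claim~\ref{claim:gentle2}, the inequality $\sqrt{A_i}\succeq A_i$, and partial-trace monotonicity; in particular only $\sigma\succeq 0$ and permutation invariance of its first two registers are used, and the resulting bound is independent of $\Tr\sigma$.
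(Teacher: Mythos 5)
Your proof is correct. It rests on the same key lemma as the paper's argument (Claim~\ref{claim:gentle2}, together with $\sqrt{A_i}\succeq A_i$ and monotonicity of the trace norm under partial trace), but the decomposition is different: the paper never introduces a classical register. Instead it lifts $\Tr_2(\sigma)$ to the state obtained by measuring register~$2$ with $\{A_j\}$ (which leaves $\Tr_2(\sigma)$ unchanged since $\sum_j A_j=\Id$), uses monotonicity under $\Tr_2$, splits off the cross terms $i\neq j$ of the double measurement $\sqrt{A_i}\otimes\sqrt{A_j}$ by the triangle inequality at an additive cost of exactly $\delta$, and then applies Claim~\ref{claim:gentle2} to the two families $\{A_i\otimes A_i\otimes\Id\}$ and $\{\Id\otimes A_i\otimes\Id\}$ acting on $\sigma$ itself, obtaining $2\sqrt{\delta}+\delta$. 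Your dephasing of the outcome into the register $\calC$ makes the comparison state exactly invariant under reading $\calC$ (block-diagonality replaces the triangle-inequality step and removes the additive $\delta$, giving $2\sqrt{2\delta}$), at the price of the ancillary construction of $\tilde\sigma$ and the bookkeeping identities (i)--(iii) you highlight; the paper's route stays with the original three registers and is slightly shorter. Both arguments in fact use only the definition of $\delta$ and $\sigma\succeq 0$, so, as you note, permutation invariance of the first two registers plays no essential role in either proof beyond the context in which the claim is applied.
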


\begin{proof} First note that, $\{A_i\}_i$ being a POVM, 
$$\Tr_2 \bigl(\sum_i \big({\Id} \otimes \sqrt{A_i} \otimes \Id\big)\, \sigma \,\big({\Id} \otimes \sqrt{A_i} \otimes \Id\big) \bigr) \,=\, \Tr_2 \bigl(\sigma\bigr).$$
Hence by monotonicity of the trace norm 
\begin{align*}
 \big\| \sum_i \sqrt{A_i}\otimes \Id &(\Tr_2(\sigma)) \sqrt{A_i} \otimes \Id - \Tr_2(\sigma) \big\|_1\\
& \leq \big\| \sum_{i,j} \sqrt{A_i} \otimes \sqrt{A_j}\otimes \Id \sigma \sqrt{A_i} \otimes \sqrt{A_j}\otimes \Id - \sum_j {\Id}\otimes \sqrt{A_j}\otimes \Id \sigma {\Id} \otimes \sqrt{A_j}\otimes \Id \big\|_1\\
&\leq \big\| \sum_{i} \sqrt{A_i} \otimes \sqrt{A_i}\otimes \Id \sigma \sqrt{A_i} \otimes \sqrt{A_i}\otimes \Id - \sum_i {\Id}\otimes \sqrt{A_i} \otimes \Id\sigma {\Id} \otimes \sqrt{A_i}\otimes \Id \big\|_1 \\
&\qquad\qquad+ \sum_{i\neq j}\Tr\bigl( A_i\otimes A_j \otimes \Id\sigma\bigr)\\
&\leq 2 \sqrt{\sum_i \Tr\bigl( (\sqrt{A_i} \otimes \sqrt{A_i}\otimes \Id - {\Id} \otimes \sqrt{A_i}\otimes \Id)^2\sigma\bigr)} + \delta\\
&\leq  2\sqrt{\delta} + \delta
\end{align*}
where the second inequality is the triangle inequality, the third is by Claim~\ref{claim:gentle2}, and for the last we expanded
\begin{align*}
\sum_i \Tr\bigl(& (\sqrt{A_i} \otimes \sqrt{A_i}\otimes \Id - {\Id} \otimes \sqrt{A_i}\otimes \Id)^2\sigma\bigr)\\
 &= \sum_i \Big(\Tr\big(A_i\otimes A_i \otimes \Id\sigma\big) + \Tr\big(\Id\otimes A_i \otimes \Id\sigma\big) -2 \Tr\big( \sqrt{A_i} \otimes A_i\otimes \Id \sigma\big)\Big)\\
&\leq  \sum_i \Big(\Tr\big(A_i\otimes A_i \otimes \Id\sigma\big) + \Tr\big(\Id\otimes A_i \otimes \Id\sigma\big) -2 \Tr\big( A_i \otimes A_i \otimes \Id\sigma\big)\Big)\\
&= \delta,
\end{align*}
where for the inequality $\sqrt{A_i} \geq A_i$ follows from $0\leq A_i \leq \Id$ for every $i$, and the last equality uses the definition of $\delta$ and $\sum_i A_i=\Id$.  
\end{proof}

The following lemma follows from the standard expansion properties of the hypercube. Recall that for $\rho\geq 0$ and any $A$, $\|A\|_\rho^2 = \Tr\big(AA^\dagger \rho)$.

\begin{claim}[Expansion lemma]\label{claim:expand} Let $\eps>0$, $S$ a finite set of size $|S|=p$, $n,d$ integers and $A:S^n \to \C^{d\times d}$ such that for every $\vx\in S^n$, $0\leq A_\vx \leq {\Id}$, and 
$$\Es{i,\vx_{\neg i},x_i,x'_i} \big\| A_{\vx}-A_{\vx'}\big\|_\rho^2 \,\leq\, \eps,$$
where the expectation is taken with respect to the uniform distribution on $[n]\times S^{n-1}\times S\times S$. 
Then
$$\Es{\vx} \Big\| A_{\vx}-\Es{\vx} A_\vx\Big\|_\rho^2 \,\leq\, 2n\eps,$$
where both expectations are taken under the uniform distribution over $S^n$. 
\end{claim}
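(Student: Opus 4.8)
The plan is to reduce the matrix-valued statement to a scalar statement about expansion in the hypercube graph $S^n$, applied coordinate-wise to the entries of $A_\vx$ after conjugating by a square root of $\rho$. First I would observe that $\|A\|_\rho^2 = \Tr(AA^\dagger\rho) = \|A\sqrt{\rho}\|_F^2$ if we let $\sqrt{\rho}$ be a fixed PSD square root, so setting $\hat A_\vx := A_\vx\sqrt{\rho}$ we are given $\Es{i,\vx_{\neg i},x_i,x'_i}\|\hat A_\vx - \hat A_{\vx'}\|_F^2 \le \eps$ and we want $\Es{\vx}\|\hat A_\vx - \Es{\vy}\hat A_\vy\|_F^2 \le 2n\eps$. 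Since $\|\cdot\|_F^2$ is just the sum of squared magnitudes of the entries, it suffices to prove the following scalar fact: for any $f:S^n\to\C$, if $\Es{i,\vx_{\neg i},x_i,x'_i}|f(\vx)-f(\vx')|^2 \le \eta$ then $\Es{\vx}|f(\vx)-\Es{\vy}f(\vy)|^2 \le 2n\eta$, and then sum this over the (finitely many) matrix entries with $\eta$ replaced by the per-entry contributions, which add up to the given global bound $\eps$.

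For the scalar fact, the cleanest route is a hybrid / telescoping argument. Write $\vx = (x_1,\dots,x_n)$ and $\vy=(y_1,\dots,y_n)$ two independent uniform points of $S^n$, and for $0\le k\le n$ let $\vz^{(k)} = (y_1,\dots,y_k,x_{k+1},\dots,x_n)$ be the hybrid that agrees with $\vy$ on the first $k$ coordinates and with $\vx$ on the rest, so $\vz^{(0)}=\vx$ and $\vz^{(n)}=\vy$. Then $f(\vx) - \Es{\vy}f(\vy) = \Es{\vy}\sum_{k=1}^n \big(f(\vz^{(k-1)}) - f(\vz^{(k)})\big)$, and $\vz^{(k-1)}$ and $\vz^{(k)}$ differ only in the $k$-th coordinate (with the other coordinates jointly uniform and the two $k$-th coordinates $x_k,y_k$ independent uniform). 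Applying the Cauchy--Schwarz inequality to the sum of $n$ terms, $\Es{\vx}|f(\vx)-\Es{\vy}f(\vy)|^2 \le n\sum_{k=1}^n \Es{}|f(\vz^{(k-1)})-f(\vz^{(k)})|^2$, and each summand is exactly the "single-coordinate resampling" quantity, which by the hypothesis (averaged over $i=k$) is at most $\eta$ — actually, being slightly careful: the hypothesis averages over $i$ as well, so $\frac1n\sum_k (\text{summand in direction }k) \le \eta$, giving $\sum_k(\cdots)\le n\eta$ and hence the bound $n\cdot n\eta = n^2\eta$. That is worse than the claimed $2n\eta$, so the hybrid argument as stated loses a factor; the stated $2n\eps$ suggests a spectral-gap argument instead.

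The \emph{spectral} approach, which gives the sharp constant, is the one I expect to actually use: expand $f$ in the eigenbasis of the averaging operator. Let $E_i$ be the operator that replaces the $i$-th coordinate by a fresh uniform sample, and $E = \Es{i}E_i = \frac1n\sum_i E_i$. The left-hand side of the hypothesis is $\Es{i}\,2\,\big(\langle f,f\rangle - \langle f, E_i f\rangle\big) = 2\langle f, (I-E)f\rangle$, while the right-hand side we want to bound is $\langle f,f\rangle - |\Es{}f|^2 = \langle f, (I-E_0)f\rangle$ where $E_0$ is the projection onto constants. The decoupling $f = \sum_{T\subseteq[n]}\hat f_T$ into its "Fourier levels" (the $T$-th piece depending on exactly the coordinates in $T$ in the ANOVA/Efron--Stein sense) diagonalizes both: $E_i$ acts on the level-$T$ piece by multiplication by $\mathbf{1}[i\notin T]$, so $(I-E)$ multiplies $\hat f_T$ by $|T|/n$, and $(I-E_0)$ multiplies it by $\mathbf{1}[T\ne\emptyset]$. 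Since $|T|/n \ge 1/n$ for every nonempty $T$, we get $\langle f,(I-E_0)f\rangle = \sum_{T\ne\emptyset}\|\hat f_T\|^2 \le n\sum_{T\ne\emptyset}\frac{|T|}{n}\|\hat f_T\|^2 = n\langle f,(I-E)f\rangle = \tfrac{n}{2}\cdot\big(2\langle f,(I-E)f\rangle\big)$, i.e. exactly $\le n\eta$ — wait, that gives $n\eta$, even better than $2n\eta$, so the factor $2$ in the statement is just slack and the bound holds comfortably. Summing over matrix entries, $\Es{\vx}\|A_\vx - \Es{}A_\vy\|_\rho^2 \le n\cdot\eps \le 2n\eps$, proving the claim.

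\textbf{Main obstacle.} The only subtlety is purely bookkeeping: justifying the orthogonal (ANOVA) decomposition $f=\sum_T\hat f_T$ over an arbitrary finite set $S$ (not just $\{0,1\}$) and checking that $E_i$ is simultaneously diagonalized by it with the stated eigenvalues. This is standard (it is the Efron--Stein/Hoeffding decomposition for a product probability space), and once it is in hand the inequality is immediate from $|T|/n\ge 1/n$; no analytic estimates or approximations are needed. A reader-facing proof can therefore be two or three lines: "This is a direct consequence of the standard spectral gap $1/n$ of the averaging operator $\Es{i}E_i$ on the product space $S^n$; apply it entrywise to $A_\vx\sqrt{\rho}$ and sum."
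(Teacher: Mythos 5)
Your spectral argument is correct and is essentially the paper's own proof: the ANOVA/Efron--Stein decomposition you invoke is exactly the eigendecomposition of the Laplacian $L = np\,\Id - M$ of the Hamming graph on $S^n$ used in the paper, your identity equating the hypothesis with $2\langle f,(\Id-E)f\rangle$ is the paper's Eq.~\eqref{eq:lapl1}, and the normalized spectral gap ($\geq 1/n$; the paper uses the weaker bound $1/(2n)$, whence its constant $2n$) gives the conclusion. The only nit is that with the paper's convention $\|A\|_\rho^2 = \Tr(AA^\dagger\rho)$ one has $\|A\|_\rho = \|\sqrt{\rho}\,A\|_F$ rather than $\|A\sqrt{\rho}\|_F$, which is immaterial here since the operators $A_\vx$ and their differences are Hermitian.
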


\begin{proof} Let $M:= \sum_{\vx,i,x'_i} \ket{\vx}\bra{\vx'}$ be the adjacency matrix of the hypercube $S^n$, $L:= {np\Id} - M$ the Laplacian, and $\tilde{L} = L\otimes \rho$. Let $A = \sum_{\vx} \ket{\vx}\otimes A_x$. Then 
\beq\label{eq:lapl1}
 A^\dagger \tilde{L}\cdot A = \frac{1}{2}\sum_{\vx,i,x'_i} (A_\vx - A_{\vx'})^\dagger \rho(A_\vx - A_{\vx'}).
\eeq
The normalized Laplacian $L/(np)$ has smallest eigenvalue $0$, and second smallest $\lambda_1 \geq 1/(2n)$. Let the smallest eigenvector of $L$ be $\ket{v_0} = p^{-n/2} \sum_\vx \ket{\vx}$, and write $A = \ket{v_0}\otimes A_0 + \ket{v_1} \otimes A_1$, where $\ket{v_1}$ is orthogonal to $\ket{v_0}$, and $A_0 = p^{-n/2}\sum_\vx A_\vx$. Then 
$$ A^\dagger \tilde{L} A \,=\,  \lambda_1 A_1^\dagger \rho A_1\, \geq \,\frac{1}{2n} A_1^\dagger \rho A_1.$$
Taking the trace and using the assumption made in the claim's statement together with~\eqref{eq:lapl1}, we get $\|A_1\|_\rho^2 \leq 2n\eps p^n$, and hence by definition of $A$,
$$ \Tr\bigl( (A-\ket{v_0}\otimes A_0)^\dagger ({\Id}\otimes\rho) (A-\ket{v_0}\otimes A_0)\bigr) \,=\, \|A_1\|_\rho^2\,\leq\, 2n\eps p^n,$$
which proves the claim. 
\end{proof}

%-----------------------
%		Consistency
%----------------------

\section{Lemmas about consistency}\label{sec:consistency-lemmas}

The following useful lemma relates the consistency of a measurement when performed on two separate subsystems of a permutation-invariant state with the possibility of exchanging the sub-system on which the measurement is performed. 
 Here $\rho$ is the reduced density of a permutation-invariant state.

\begin{lemma}\label{lem:cons1} Let $k\geq \ell \geq 1$ be two integers, $T$ a family of sub-measurements of arity $k$, and $V$ a family of sub-measurements of arity $\ell$. Let $\{Z_{\vx_{\geq k}}^h\}$ be such that $\Es{\vx} \sum_h Z_{\vx_{\geq k}}^h\big(Z_{\vx_{\geq k}}^h\big)^\dagger \leq \Id$. Then it holds that 
$$
\Big|\Es{\vx}\sum_h \Trho\big(Z_{\vx_{\geq k}}^h T_{\vx_{\geq k}}^h \otimes V_{\vx_{\geq \ell}} \big) - \Es{\vx} \sum_{g,h:\,h_{|x_{\ell},\ldots,x_{k-1}}=g} \Trho\big(Z_{\vx_{\geq k}}^h T_{\vx_{\geq k}}^h \otimes V_{\vx_{\geq \ell}}^g \big)\Big| \,\leq\, \sqrt{\inc(T,V)}.
$$
\end{lemma}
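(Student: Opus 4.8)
The plan is to recognize the expression inside the absolute value as the ``inconsistent'' part of a natural sum, and then to bound it by a single application of Cauchy--Schwarz. First I would expand $V_{\vx_{\geq \ell}} = \sum_g V_{\vx_{\geq \ell}}^g$ in the first trace, so that the difference to be estimated becomes
$$ D \;:=\; \Es{\vx}\sum_{g,h:\;h_{|x_{\ell},\ldots,x_{k-1}}\neq g} \Trho\big(Z_{\vx_{\geq k}}^h\, T_{\vx_{\geq k}}^h \otimes V_{\vx_{\geq \ell}}^g\big), $$
and I would observe, using permutation-invariance of $\ket{\Psi}$ together with the convention $\inc(T,V)=\inc(V,T)$ and the definition of $\inc$, that $\inc(T,V)$ is exactly the same sum with $Z_{\vx_{\geq k}}^h$ deleted, i.e.\ $\inc(T,V) = \Es{\vx}\sum_{g,h:\;h_{|x_{\ell},\ldots,x_{k-1}}\neq g}\Trho\big(T_{\vx_{\geq k}}^h\otimes V_{\vx_{\geq \ell}}^g\big)$.

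Next I would rewrite each summand of $D$ as an inner product in the full Hilbert space. Since $T_{\vx_{\geq k}}^h\geq 0$ it has a positive square root, and I can factor $Z^h T^h = (Z^h\sqrt{T^h})\,\sqrt{T^h}$; letting subscripts $1,2$ denote the two registers on which copies of $T$ and $V$ act and using that operators on distinct registers commute (and that $\sqrt{T^h}$ is Hermitian), I get $\Trho(Z^h T^h\otimes V^g) = \langle\phi_{\vx,g,h}\,|\,\psi_{\vx,g,h}\rangle$, where $\ket{\psi_{\vx,g,h}} = (\sqrt{T^h})_1(\sqrt{V^g})_2\ket{\Psi}$ and $\ket{\phi_{\vx,g,h}} = (\sqrt{T^h})_1(Z^h)^\dagger_1(\sqrt{V^g})_2\ket{\Psi}$. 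Applying the triangle inequality and then Cauchy--Schwarz over the restricted index set $\{(\vx,g,h):\;h_{|x_{\ell},\ldots,x_{k-1}}\neq g\}$ yields
$$ |D|\;\leq\; \Big(\Es{\vx}\!\!\sum_{g,h:\;h_{|x_{\ell},\ldots,x_{k-1}}\neq g}\!\!\!\|\phi_{\vx,g,h}\|^2\Big)^{1/2}\Big(\Es{\vx}\!\!\sum_{g,h:\;h_{|x_{\ell},\ldots,x_{k-1}}\neq g}\!\!\!\|\psi_{\vx,g,h}\|^2\Big)^{1/2}. $$

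The final step is to evaluate the two factors. The $\psi$-factor equals $\inc(T,V)^{1/2}$ on the nose, because $\|\psi_{\vx,g,h}\|^2 = \Trho(T^h\otimes V^g)$ and I have already identified the restricted sum of these quantities with $\inc(T,V)$. For the $\phi$-factor, $\|\phi_{\vx,g,h}\|^2 = \Trho\big(Z^h T^h (Z^h)^\dagger\otimes V^g\big)$, and here I would drop the inconsistency restriction and chain three elementary operator inequalities: $\sum_g V_{\vx_{\geq \ell}}^g\leq\Id$ (since $V$ is a sub-measurement) together with positivity of $Z^h T^h (Z^h)^\dagger$; then $T_{\vx_{\geq k}}^h\leq\Id$ (since $T$ is a sub-measurement); then the hypothesis $\Es{\vx}\sum_h Z_{\vx_{\geq k}}^h(Z_{\vx_{\geq k}}^h)^\dagger\leq\Id$. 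This bounds the $\phi$-sum by $\Trho\big(\Es{\vx}\sum_h Z^h(Z^h)^\dagger\big)\leq\Tr\rho\leq 1$. Multiplying the two factors gives $|D|\leq\sqrt{\inc(T,V)}$, which is the claim. The only delicate point, and the one I would take care with, is the bookkeeping hidden in the $\Trho(\cdot)$ notation: tracking correctly which registers $T$, $V$, and $Z$ act on, checking that the ``commute on distinct registers'' manipulation is legitimate, and noting that it is permutation-invariance of $\ket{\Psi}$ that lets one identify the inconsistent $\Trho(T^h\otimes V^g)$ sum with $\inc(T,V)$. Beyond this there is no substantive obstacle.
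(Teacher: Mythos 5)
Your proposal is correct and follows essentially the same route as the paper's proof: rewrite the difference as the sum over inconsistent pairs $(g,h)$, apply Cauchy--Schwarz to the factorization $Z^h T^h = (Z^h\sqrt{T^h})\sqrt{T^h}$, identify one factor with $\inc(T,V)$ (using permutation-invariance of $\ket{\Psi}$), and bound the other by $1$ via $\sum_g V^g\leq\Id$, $T^h\leq\Id$, and the hypothesis on $Z$. The only difference is cosmetic: you spell out the Cauchy--Schwarz step at the level of vectors, whereas the paper performs it directly on the traces.
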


\begin{proof}
The proof is a direct consequence of the Cauchy-Schwarz inequality: write 
\begin{align*}
\Big|\Es{\vx}& \sum_h \Trho\big(Z_{\vx_{\geq k}}^h T_{\vx_{\geq k}}^h \otimes V_{\vx_{\geq \ell}} \big) - \Es{\vx} \sum_{g,h:\,h_{|x_{\ell},\ldots,x_{k-1}}=g} \Trho\big(Z_{\vx_{\geq k}}^h T_{\vx_{\geq k}}^h \otimes V_{\vx_{\geq \ell}}^g \big)\Big| \\
&= \Big|\Es{\vx}\sum_{g,h:\,h_{|x_{\ell},\ldots,x_{k-1}}\neq g}  \Trho\big(Z_{\vx_{\geq k}}^h T_{\vx_{\geq k}}^h \otimes V_{\vx_{\geq \ell}}^g \big)\Big|\\
&\leq \Big(\Es{\vx}\sum_{g,h:\,h_{|x_{\ell},\ldots,x_{k-1}}\neq g}  \Trho\big(T_{\vx_{\geq k}}^h \otimes V_{\vx_{\geq \ell}}^g \big)\Big)^{1/2}\Big(\Es{\vx}\sum_{g,h}  \Trho\big(Z_{\vx_{\geq k}}^h(Z_{\vx_{\geq k}}^h)^\dagger \otimes V_{\vx_{\geq \ell}}^g \big)\Big)^{1/2}\\
&\leq \sqrt{\inc(T,V)},
\end{align*}
where the last inequality follows from the definition of $\inc(T,V)$ and our assumption on $Z_{\vx_{\geq k}}^h$.
\end{proof}

\begin{lemma}\label{lem:consmu} Let $T$ be a family of sub-measurements of arity $k$, $X$ such that $X^\dagger X\leq \Id$, and $\{Z_{\vx_{\geq k}}^h\}$ such that $\Es{\vx} \sum_h Z_{\vx_{\geq k}}^h\big(Z_{\vx_{\geq k}}^h\big)^\dagger \leq \Id$ (for instance, a family of sub-measurements of arity $\ell$, for any $\ell$). Then\footnote{A special case of interest is when the measurements are \emph{complete}, in which case the statements simplify.}
\begin{align}
\Big| \Es{\vx}\sum_{h} \Trho(Z_{\vx_{\geq k}}^h T_{\vx_{\geq k}}^h \otimes T_{\vx_{\geq k}}) - \Es{\vx} \sum_h \Trho(Z_{\vx_{\geq k}}^h T_{\vx_{\geq k}}\otimes T_{\vx_{\geq k}}^h) \Big|\,\leq\,\sqrt{\inc(T,T)} \label{eq:consmu-1}\\
\Big| \Es{\vx}\sum_{h\neq h'} \Trho(T_{\vx_{\geq k}}^h X T_{\vx_{\geq k}}^{h'}\otimes T_{\vx_{\geq k}}) \Big|\,\leq\,2\,\sqrt{\inc(T,T)}\label{eq:consmu-2}
\end{align}
\end{lemma}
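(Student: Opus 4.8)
The plan is to derive both estimates from one basic form of the Cauchy–Schwarz inequality: for matrices $A_j,B_j$ one has
\[
 \Bigl|\Es{\vx}\textstyle\sum_j\Trho(A_j B_j)\Bigr|\le\Bigl(\Es{\vx}\textstyle\sum_j\Trho(A_j A_j^\dagger)\Bigr)^{1/2}\Bigl(\Es{\vx}\textstyle\sum_j\Trho(B_j^\dagger B_j)\Bigr)^{1/2},
\]
obtained by applying Cauchy–Schwarz to the vectors $A_j^\dagger\ket{\Psi}$ and $B_j\ket{\Psi}$. The other ingredients are: that $T$ is a family of sub-measurements, so $T_{\vx_{\geq k}}^h\le T_{\vx_{\geq k}}:=\sum_h T_{\vx_{\geq k}}^h\le\Id$ for every $\vx_{\geq k}$ and $h$; the normalization hypotheses $\Es{\vx}\sum_h Z_{\vx_{\geq k}}^h(Z_{\vx_{\geq k}}^h)^\dagger\le\Id$ and $X^\dagger X\le\Id$ (which also gives $XX^\dagger\le\Id$); and permutation-invariance of $\ket{\Psi}$, which lets one exchange the two registers inside any $\Trho(\,\cdot\otimes\cdot\,)$. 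The one point that needs care is to arrange every Cauchy–Schwarz split so that neither factor blows up: summing $\sum_{h\ne h'}$ over a quantity independent of one of the two indices introduces a spurious factor $p=|\Fp|$, so such a summation must be carried out first, using $\sum_{h'}T_{\vx_{\geq k}}^{h'}\le\Id$, \emph{before} any absolute value or Cauchy–Schwarz. Below I abbreviate $T^h:=T_{\vx_{\geq k}}^h$, $T:=T_{\vx_{\geq k}}$, $Z^h:=Z_{\vx_{\geq k}}^h$, leaving $\vx$ and the average $\Es{\vx}$ implicit.

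For \eqref{eq:consmu-1} I would expand $T=\sum_{h'}T^{h'}$ in both sums; the diagonal terms $h=h'$ coincide by permutation-invariance and cancel, and one more use of permutation-invariance on the second sum rewrites the difference as $\Es{\vx}\sum_{h\ne h'}\Trho\bigl(M_h(T^h\otimes T^{h'})\bigr)$ with $M_h:=Z^h\otimes\Id-\Id\otimes Z^h$. Applying the Cauchy–Schwarz bound with $A_h=M_h(\sqrt{T^h}\otimes\sqrt{T^{h'}})$, $B_h=\sqrt{T^h}\otimes\sqrt{T^{h'}}$, the second factor is exactly $\inc(T,T)^{1/2}$ by definition. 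For the first factor, since $M_h$ does not depend on $h'$ one sums over $h'\ne h$ first, uses $\sum_{h'\ne h}T^h\otimes T^{h'}=T^h\otimes(T-T^h)\le\Id$ and monotonicity of $Y\mapsto M_h Y M_h^\dagger$ to bound it by $\bigl(\Es{\vx}\sum_h\Trho(M_hM_h^\dagger)\bigr)^{1/2}$, and finally $\Trho(M_hM_h^\dagger)=\bigl\|(Z^{h\dagger}\otimes\Id-\Id\otimes Z^{h\dagger})\ket{\Psi}\bigr\|^2\le 4\Tr\bigl(Z^h(Z^h)^\dagger\rho\bigr)$ by the triangle inequality, so that $\Es{\vx}\sum_h\Trho(M_hM_h^\dagger)\le 4$. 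This gives the bound $O\bigl(\sqrt{\inc(T,T)}\bigr)$.

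For \eqref{eq:consmu-2} I would start from the identity $\sum_{h\ne h'}T^hXT^{h'}=TXT-\sum_h T^hXT^h$, which rewrites the left-hand side as $\Es{\vx}\bigl[\Trho(TXT\otimes T)-\sum_h\Trho(T^hXT^h\otimes T)\bigr]$, and then show that \emph{both} $\Es{\vx}\Trho(TXT\otimes T)$ and $\Es{\vx}\sum_h\Trho(T^hXT^h\otimes T)$ equal $\Es{\vx}\sum_h\Trho(T^hXT^h\otimes T^h)+O\bigl(\sqrt{\inc(T,T)}\bigr)$, so that the two cancel on subtraction. For the second quantity one splits the register-$2$ factor $T=T^h+(T-T^h)$; the error $\Es{\vx}\sum_h\Trho\bigl((T^hXT^h)\otimes(T-T^h)\bigr)$ factors as $\bigl(T^hX\sqrt{T^h}\otimes\sqrt{T-T^h}\bigr)\bigl(\sqrt{T^h}\otimes\sqrt{T-T^h}\bigr)$, so Cauchy–Schwarz produces $\bigl(\Es{\vx}\sum_h\Trho(T^h\otimes(T-T^h))\bigr)^{1/2}=\inc(T,T)^{1/2}$ from one factor and an $O(1)$ bound on the other from $T^hXT^hX^\dagger T^h\le(T^h)^2\le T^h$. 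For the first quantity one expands the register-$2$ copy of $T$ and uses permutation-invariance to write $\Trho(TXT\otimes T)=\sum_h\Trho\bigl(T^h\otimes(TXT)\bigr)$, then expands the two occurrences of $T$ inside $TXT$ as $T^h+(T-T^h)$; each of the three resulting cross terms contains a factor $T-T^h$ that one places in the PSD middle of a three-operator Cauchy–Schwarz split, again producing $\inc(T,T)^{1/2}$ (from a $\Trho(T^h\otimes(T-T^h))$ sum) times an $O(1)$ factor coming from $X^\dagger X\le\Id$ and the sub-measurement bounds.

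I expect this last step — organizing the three cross terms in the expansion of $TXT\otimes T$, and making sure that in every Cauchy–Schwarz split the ``free'' summation index is paired with a sub-measurement inequality ($\sum_{h'}T^{h'}\le\Id$) rather than left inside a quantity to be summed blindly — to be the only genuinely delicate part; absent that precaution the estimates degrade by a factor of $p$. Everything else is repeated use of permutation-invariance and the two norm bounds $T^h\le\Id$ and $X^\dagger X\le\Id$. (The argument as sketched yields the stated estimates up to a constant factor; recovering the precise constants $1$ and $2$ requires a more careful bookkeeping of the cross terms, but this is immaterial for every application in the paper.)
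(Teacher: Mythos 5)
Your proposal is correct and follows essentially the same route as the paper: both estimates come from Cauchy--Schwarz in the $\rho$-seminorm combined with permutation-invariance of $\ket{\Psi}$ and the sub-measurement bounds $T_{\vx_{\geq k}}^h\le T_{\vx_{\geq k}}\le\Id$, $X^\dagger X\le\Id$, $\Es{\vx}\sum_h Z_{\vx_{\geq k}}^h(Z_{\vx_{\geq k}}^h)^\dagger\le\Id$, exactly as in the paper's proof. The only (cosmetic) differences are that for~\eqref{eq:consmu-2} the paper pivots through the intermediate quantity $\Es{\vx}\sum_h\Trho(T_{\vx_{\geq k}}^hXT_{\vx_{\geq k}}\otimes T_{\vx_{\geq k}}^h)$ by reusing~\eqref{eq:consmu-1} with $Z^h=T^hX$, whereas you pivot both terms toward $\Es{\vx}\sum_h\Trho(T_{\vx_{\geq k}}^hXT_{\vx_{\geq k}}^h\otimes T_{\vx_{\geq k}}^h)$ with fresh Cauchy--Schwarz splits, and that you only claim the bounds up to a constant factor --- which you flag and which is indeed immaterial for every use of the lemma in the paper.
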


\begin{proof}
We first prove~\eqref{eq:consmu-1}. We have
\begin{align*}
\Big| \Es{\vx}\sum_{h} \Trho(Z_{\vx_{\geq \ell}}^h T_{\vx_{\geq k}}^h \otimes T_{\vx_{\geq k}})&- \Es{\vx} \sum_{h} \Trho(Z_{\vx_{\ell}}^h \otimes T_{\vx_{\geq k}}^h) \Big|\\
 &= \Big| \Es{\vx}\sum_{h} \Trho\big(Z_{\vx_{\geq \ell}}^h  ( T_{\vx_{\geq k}}^h\otimes T_{\vx_{\geq k}} - T_{\vx_{\geq k}}\otimes  T_{\vx_{\geq k}}^h) \big)\Big|\\
&\leq \Big( \sum_i \Trho\big( Z_{\vx_{k}}^h \big(Z_{\vx_{k}}^h\big)^\dagger \big) \Big)^{1/2} \Big( \sum_{h\neq h'} \Trho\big( ( T_{\vx_{\geq k}}^h\otimes  T_{\vx_{\geq k}}^{h'})^2  \big)\Big)^{1/2}\\
&\leq \sqrt{\inc(T,T)}, 
\end{align*}
where the second inequality follows from Cauchy-Schwarz.
Regarding~\eqref{eq:consmu-2}, we have 
\begin{align*} 
\Big| \Es{\vx}\sum_{h\neq h'} \Trho( T_{\vx_{\geq k}}^h X T_{\vx_{\geq k}}^{h'}\otimes T_{\vx_{\geq k}})  \Big| &= \Big|  \Es{\vx} \Trho\big( T_{\vx_{\geq k}}XT_{\vx_{\geq k}}\otimes T_{\vx_{\geq k}} \big) - \Es{\vx}\sum_h \Trho\big(T_{\vx_{\geq k}}^{h}X T_{\vx_{\geq k}}^{h}\otimes T_{\vx_{\geq k}}\big) \Big|
\end{align*}
From~\eqref{eq:consmu-1} we know that 
\begin{align*}
\Big| \Es{\vx}\sum_h \Trho\big(T_{\vx_{\geq k}}^hX T_{\vx_{\geq k}}^h\big)\otimes T_{\vx_{\geq k}} - \Es{\vx}\sum_h\Trho\big(T_{\vx_{\geq k}}^h X T_{\vx_{\geq k}}\otimes T_{\vx_{\geq k}}^h\big)\Big| &\leq \sqrt{\inc(T,T)}.
\end{align*}
The second term on the left-hand side satisfies
\begin{align*}
\Big|\Es{\vx}\sum_h \Trho&\big(T_{\vx_{\geq k}}^h X T_{\vx_{\geq k}}\otimes T_{\vx_{\geq k}}^h\big) - \Es{\vx}\sum_h  \Trho\big(T_{\vx_{\geq k}}XT_{\vx_{\geq k}} \otimes T_{\vx_{\geq k}}^h\big) \Big| \\
& \leq \Big( \Es{\vx}\sum_h \Trho\big(T_{\vx_{\geq k}}X^\dagger XT_{\vx_{\geq k}}\otimes  T_{\vx_{\geq k}}^h\big) \Big)^{1/2} \Big( \Es{\vx}\sum_h \Trho\big( (T_{\vx_{\geq k}}-T_{\vx_{\geq k}}^h)^2 \otimes T_{\vx_{\geq k}}^h \big) \Big)^{1/2}\\
&\leq \sqrt{\inc(T,T)},
\end{align*}
and this concludes the proof. 
\end{proof}

\section{Proof of Corollary~\ref{corollary:and-test}}
  \label{appendix:and-test}

In this section we give the proof of Corollary~\ref{corollary:and-test}. 
A standard method to convert multiple constraints to a single constraint involving
an exponential sum is by using small-bias probability spaces.

\begin{definition}[Small-bias probability space]
  Let~$n\in\N$.
  A set~$S\subseteq\Ftwo^n$ is called an \emph{$\varepsilon$-bias probability space}
  if for every~$\vct{c}\in\Ftwo^n\setminus\{0\}$,
  it holds that
  \[
    \big|
      \Pr_{\vct{\zeta}\in S}[\vct{c}\cdot\vct{\zeta}=0]
      -
      \Pr_{\vct{\zeta}\in S}[\vct{c}\cdot\vct{\zeta}=1]
    \big| \le \varepsilon.
  \]
\end{definition}

\begin{proposition} \label{proposition:small-bias}
  Let~$n\in\N$,
  and let~$S\subset\Ftwo^n$ be an~$\varepsilon$-bias probability space.
  Let~$\Fp$ be a finite field of characteristic two.
  If~$\vct{c}\in \Fp^n\setminus\{0\}$, then
  \[
    \Pr_{\vct{\zeta}\in S} \left[ \sum_{i=1}^n \zeta_i c_i = 0 \right]
    \le
    \frac{1+\varepsilon}{2}.
  \]
\end{proposition}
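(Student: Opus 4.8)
\textbf{Proof plan for Proposition~\ref{proposition:small-bias}.}

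The plan is to reduce the statement about a sum over $\Fp$ (a field of characteristic two and hence a vector space over $\Ftwo$) to the defining property of an $\varepsilon$-bias space over $\Ftwo$, by expanding each coordinate in a fixed $\Ftwo$-basis of $\Fp$. Write $|\Fp| = 2^k$ and fix an $\Ftwo$-basis $e_1,\dots,e_k$ of $\Fp$. First I would observe that $\sum_{i=1}^n \zeta_i c_i = 0$ in $\Fp$ is equivalent to the vanishing of all $k$ coordinates of this element with respect to $e_1,\dots,e_k$; writing $c_i = \sum_{j=1}^k c_{ij} e_j$ with $c_{ij}\in\Ftwo$, and using $\zeta_i\in\Ftwo$, the $j$-th coordinate of $\sum_i \zeta_i c_i$ is exactly $\sum_{i=1}^n \zeta_i c_{ij}$, a linear form over $\Ftwo$ in $\zeta$. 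So $\sum_i \zeta_i c_i = 0$ holds iff $M\vct{\zeta} = 0$ over $\Ftwo$, where $M\in\Ftwo^{k\times n}$ has entries $c_{ij}$ (transposed appropriately). Since $\vct{c}\neq 0$ in $\Fp^n$, at least one $c_i\neq 0$, hence at least one row-combination is nonzero; more precisely, $M$ has at least one nonzero row, say row $j_0$ corresponding to the vector $\vct{c}^{(j_0)} := (c_{1j_0},\dots,c_{nj_0})\in\Ftwo^n\setminus\{0\}$.

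Next I would bound the probability directly: $\Pr_{\vct\zeta\in S}[M\vct\zeta = 0] \le \Pr_{\vct\zeta\in S}[\vct{c}^{(j_0)}\cdot\vct\zeta = 0]$, since the event on the left is contained in the event on the right (vanishing of all coordinates implies vanishing of coordinate $j_0$). Now I apply the definition of an $\varepsilon$-bias space to the nonzero vector $\vct{c}^{(j_0)}\in\Ftwo^n\setminus\{0\}$: the two probabilities $\Pr[\vct{c}^{(j_0)}\cdot\vct\zeta = 0]$ and $\Pr[\vct{c}^{(j_0)}\cdot\vct\zeta = 1]$ sum to $1$ and differ by at most $\varepsilon$ in absolute value, hence $\Pr[\vct{c}^{(j_0)}\cdot\vct\zeta = 0] \le (1+\varepsilon)/2$. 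Combining the two displayed inequalities gives $\Pr_{\vct\zeta\in S}[\sum_i \zeta_i c_i = 0] \le (1+\varepsilon)/2$, as claimed.

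There is essentially no serious obstacle here; the proposition is a routine translation of the $\Ftwo$-bias condition through the $\Ftwo$-linear structure of $\Fp$. The only point requiring a moment's care is the reduction step: one must note that the map $\vct\zeta \mapsto \sum_i \zeta_i c_i$ is $\Ftwo$-linear \emph{because the $\zeta_i$ lie in $\Ftwo\subseteq\Fp$}, so scalar multiplication by $\zeta_i$ commutes with the $\Ftwo$-coordinate decomposition, and that $\vct{c}\neq 0$ guarantees some $\Ftwo$-coordinate vector $\vct{c}^{(j_0)}$ is nonzero so that the $\varepsilon$-bias hypothesis applies. Everything else is the containment of events and the elementary fact that two complementary probabilities differing by at most $\varepsilon$ each lie in $[(1-\varepsilon)/2, (1+\varepsilon)/2]$.
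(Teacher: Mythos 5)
Your proposal is correct and follows essentially the same route as the paper's proof: decompose over an $\Ftwo$-basis of $\Fp$ (your coordinate vectors $\vct{c}^{(j)}$ are exactly the paper's basis components of $\vct{c}$), note that vanishing of the $\Fp$-sum forces vanishing of each $\Ftwo$-coordinate linear form, and apply the $\varepsilon$-bias condition to a nonzero coordinate vector. The only cosmetic difference is that the paper isolates the $\Fp=\Ftwo$ case as a separate first step, whereas you perform the complementary-probability estimate inline.
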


\begin{proof}
  If~$\Fp=\Ftwo$, then the proposition holds because
  \begin{align*}
    \Pr_{\vct{\zeta}\in S} \left[ \sum_{i=1}^n c_i\zeta_i = 0 \right]
    &=
    \frac12+\frac12\left(
      \Pr_{\vct{\zeta}\in S} \left[ \sum_{i=1}^n c_i\zeta_i = 0 \right]
      -
      \Pr_{\vct{\zeta}\in S} \left[ \sum_{i=1}^n c_i\zeta_i = 1 \right]
    \right)
    \\
    &\le
    \frac{1+\varepsilon}{2}.
  \end{align*}

  For general~$\Fp$, regard~$\Fp$ as a vector space over~$\Ftwo$,
  and let~$\{\alpha_1,\dots,\alpha_k\}$ be a basis of~$\Fp$ over~$\Ftwo$.
  Write~$\vct{c}$ as~$\vct{c}=\alpha_1\vct{c}^{(1)}+\dots+\alpha_k\vct{c}^{(k)}$,
  where~$\vct{c}^{(1)},\dots,\vct{c}^{(k)}\in\Ftwo^n$.
  Because~$\vct{c}\ne0$, we have that~$\vct{c}^{(j^*)}\ne0$ for some~$j^*$.
  By using the case of~$\Ftwo$, it holds that
  \[
    \Pr_{\vct{\zeta}\in S} \left[ \sum_{i=1}^n c^{(j^*)}_i\zeta_i = 0 \right]
    \le
    \frac{1+\varepsilon}{2}.
  \]
  Since~$\alpha_1,\dots,\alpha_k$ are linearly independent over~$\Ftwo$,
  $\sum_{i=1}^n c_i\zeta_i = 0$ implies $\sum_{i=1}^n c^{(j)}_i\zeta_i = 0$
  for all~$j$,
  and therefore in particular $\sum_{i=1}^n c^{(j^*)}_i\zeta_i = 0$.
  Therefore,
  \[
    \Pr_{\vct{\zeta}\in S} \left[ \sum_{i=1}^n c_i\zeta_i = 0 \right]
    \le
    \Pr_{\vct{\zeta}\in S} \left[ \sum_{i=1}^n c^{(j^*)}_i\zeta_i = 0 \right]
    \le
    \frac{1+\varepsilon}{2}.
    \qedhere
  \]
\end{proof}

\begin{theorem}[Alon, Goldreich, H\aa stad, and Peralta~\cite{AloGolHasPer92RSA}] \label{theorem:small-bias}
  There exist a constant~$c>0$ and a polynomial-time algorithm~$C$ which,
  given~$K,M\in\N$, $i\in\{1,\dots,K\}$ and~$j\in\{1,\dots,M\}$,
  outputs a~$C(K,M,i,j)\in\Ftwo$
  such that the set~$\{\vct{\zeta}^{(j)}\colon 1\le j\le M\}$
  defined by~$\vct{\zeta}^{(j)}=(C(K,M,1,j),\dots,\allowbreak C(K,M,K,j))$
  is an $(K/M^c)$-bias probability space in~$\Ftwo^K$.
\end{theorem}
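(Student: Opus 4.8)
The plan is to give the standard ``polynomial evaluation'' construction of small-bias spaces --- essentially one of the constructions of Alon, Goldreich, H\aa stad and Peralta --- to verify its bias by a root-counting argument, and then to normalize the size of the sample space to be exactly $M$. I fix a small constant $c>0$ (concretely $c=1/10$ works) and first dispose of the degenerate case: if $K\ge M^{c}$ the target bias $K/M^{c}$ is at least $1$ and holds vacuously for any sample space, so there I simply set $C\equiv0$. Assume from now on that $K<M^{c}$.

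Set $m:=\lceil 2c\log_{2}M\rceil$, so $2^{m}\ge M^{2c}>M^{c}>K$, and let $\F:=\F_{2^{m}}$, represented by an irreducible polynomial of degree $m$ over $\Ftwo$; such a polynomial can be produced deterministically in time $\poly(m)$ by Shoup's algorithm~\cite{Shoup90MCOM} (and is given explicitly by $t^{2\cdot3^{e}}+t^{3^{e}}+1$ when $m=2\cdot3^{e}$). Fixing an $\Ftwo$-basis of $\F$ makes arithmetic in $\F$ and the trace map $\Tr=\Tr_{\F/\Ftwo}$ computable in time $\poly(\log M)$. Index a preliminary sample space $\mathcal{S}$ by the $N:=2^{2m}$ pairs $(x,y)\in\F\times\F$, and for $1\le i\le K$ set $\eta^{(x,y)}_{i}:=\Tr(y\,x^{i})\in\Ftwo$. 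To bound the bias of $\mathcal{S}$, fix $\vct{c}\in\Ftwo^{K}\setminus\{0\}$ and let $p(T):=\sum_{i=1}^{K}c_{i}T^{i}\in\Ftwo[T]$, a nonzero polynomial of degree at most $K$; then $\vct{c}\cdot\vct{\eta}^{(x,y)}=\Tr\big(y\,p(x)\big)$. Conditioning on $x$: if $p(x)=0$ this equals $0$, while if $p(x)\ne0$ then $y\mapsto\Tr(y\,p(x))$ is a nonzero $\Ftwo$-linear functional on $\F$ (multiplication by $p(x)$ is a bijection of $\F$ and $\Tr$ is onto), hence equidistributed on $\Ftwo$. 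Therefore $\Pr_{x,y}[\vct{c}\cdot\vct{\eta}^{(x,y)}=0]-\Pr_{x,y}[\vct{c}\cdot\vct{\eta}^{(x,y)}=1]=\Pr_{x}[p(x)=0]\le K/2^{m}\le K/M^{2c}$, so $\mathcal{S}$ is a $(K/M^{2c})$-bias space, of size $N=2^{2m}\le 4M^{4c}$, which is at most $M$ for all $M$ larger than an absolute constant.

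The final step is to pass from $\mathcal{S}$, of size $N\le M$, to a space of size exactly $M$; here the one genuinely delicate point is that one cannot simply discard points, since a sub-collection of a small-bias space need not be small-biased. Instead, write $M=qN+r$ with $q\ge1$, $0\le r<N$, and let $\mathcal{T}$ consist of $q$ copies of $\mathcal{S}$ together with one extra copy of each of $r$ distinct points of $\mathcal{S}$, so $|\mathcal{T}|=M$. The uniform distribution over the index set of $\mathcal{T}$ equals $\tfrac{qN}{M}\mu_{\mathcal{S}}+\tfrac{r}{M}\nu$, where $\mu_{\mathcal{S}}$ is uniform on $\mathcal{S}$ and $\nu$ is uniform on the $r$ repeated points; hence for every $\vct{c}\ne0$ its bias is at most $\tfrac{qN}{M}\cdot\tfrac{K}{M^{2c}}+\tfrac{r}{M}\le \tfrac{K}{M^{2c}}+\tfrac{N}{M}\le \tfrac{K}{M^{2c}}+4M^{4c-1}\le \tfrac{2K}{M^{2c}}\le \tfrac{K}{M^{c}}$, where the last inequalities use $K\ge1$, $c<1/6$, and $M$ larger than an absolute constant. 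The finitely many remaining pairs $(K,M)$ --- those with $K<M^{c}$ and $M$ bounded --- are handled by hard-coding a suitable small-bias space into $C$. Finally, on input $(K,M,i,j)$ the algorithm computes $m$, constructs $\F$, decodes $j\in\{1,\dots,M\}$ into the corresponding pair $(x,y)\in\F^{2}$, and outputs $\Tr(y\,x^{i})$, all in time $\poly(\log M,\log K)$; so $C$ runs in polynomial time. The bias bound and the efficiency claim are then routine, the size normalization being the only place that requires care.
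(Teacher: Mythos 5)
The paper does not prove this statement at all: Theorem~\ref{theorem:small-bias} is quoted verbatim from Alon, Goldreich, H\aa stad, and Peralta as an external result, so there is no internal proof to compare against. Your argument is a correct, self-contained reconstruction of (one of) the AGHP constructions: the ``powering'' space $\zeta^{(x,y)}_i=\Tr(y\,x^i)$ over $\F_{2^m}$, with the bias bounded by the probability that the nonzero polynomial $p(T)=\sum_i c_iT^i$ vanishes at a random field element, i.e.\ at most $K/2^m$. The one place where you genuinely add something beyond quoting AGHP is the normalization to a sample space of size \emph{exactly} $M$ (AGHP's spaces have sizes of a special form), and you handle it correctly: you note that discarding points can destroy small bias, and instead pad by repetition and bound the bias of the resulting index distribution as a mixture, $\frac{qN}{M}\cdot\frac{K}{M^{2c}}+\frac{r}{M}$, absorbing the error by taking the preliminary bias quadratically better ($K/M^{2c}$) than the target $K/M^{c}$. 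This multiset/indexed reading is also the right one for how the theorem is used in Corollary~\ref{corollary:small-bias-arith}, where the probability is over a uniform index $\vct{j}$, not over distinct vectors. Two cosmetic points you may wish to tighten: the ``hard-coded'' residual cases need the one-line observation that they force $K=1$ (since $K<M^{c}$ and $M$ is bounded), where a balanced multiset of bits trivially has bias at most $1/M\le K/M^{c}$; and for the downstream arithmetization it is worth stating explicitly, as you do in passing, that the running time is polynomial in $\log K$ and $\log M$, which is what makes the Boolean circuit for $C$ of size $\poly(k,m)$ when $K=2^{k}$, $M=2^{m}$.
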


By arithmetizing the Boolean circuit for~$C$
by using a similar idea to the proof of Proposition~4.2 of Ref.~\cite{BabForLun91CC},
we obtain the following corollary.

\begin{corollary} \label{corollary:small-bias-arith}
  There exist a constant~$c>0$ and a polynomial-time algorithm~$A$ which,
  given~$1^k$ and~$1^m$,
  outputs~$1^t$ and an arithmetic expression~$f(\vct{i},\vct{j},\vct{l})$
  in~$k+m+t$ variables
  such that the set~$\{\vct{\zeta}^{(\vct{j})}\colon\vct{j}\in\{0,1\}^m\}$
  defined by~$\vct{\zeta}^{(\vct{j})}=(\sum_{\vct{l}\in\{0,1\}^t}f(\vct{i},\vct{j},\vct{l}))_{\vct{i}\in\{0,1\}^k}$
  is an $2^{k-cm}$-bias probability space in~$\Ftwo^{2^k}$.
\end{corollary}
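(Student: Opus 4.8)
The plan is to apply Theorem~\ref{theorem:small-bias} with $K=2^k$ and $M=2^m$, and then replace the Boolean circuit that computes $C$ on these inputs by a short arithmetic expression, following the arithmetization technique of Proposition~4.2 of~\cite{BabForLun91CC}.

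First I would set up the circuit. By Theorem~\ref{theorem:small-bias} there is a polynomial-time algorithm $C$ such that, writing $K=2^k$ and $M=2^m$, the vectors $\vct{\zeta}^{(j)}:=(C(2^k,2^m,1,j),\dots,C(2^k,2^m,2^k,j))$ for $j\in\{1,\dots,2^m\}$ form a $(2^{k-cm})$-bias probability space in $\Ftwo^{2^k}$. Since $C$ runs in time polynomial in the bit-lengths of its inputs, which here is $O(k+m)$, the usual uniform circuit construction produces, from $1^k$ and $1^m$ in time $\poly(k,m)$, a Boolean circuit $D$ of size $t=\poly(k,m)$ that, on the $k$-bit string $\vct{i}$ and the $m$-bit string $\vct{j}$ encoding $i$ and $j$, outputs the bit $C(2^k,2^m,i,j)$. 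From now on I identify $i$ with $\vct{i}\in\{0,1\}^k$ and $j$ with $\vct{j}\in\{0,1\}^m$, and it remains to arithmetize $D$.

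This is where the auxiliary summation variables enter: a naive tree representation of $D$ could be exponentially large because of fan-out. Introduce one variable $\ell_g$ per gate $g$ of $D$, so that $\vct{l}=(\ell_g)_g\in\{0,1\}^t$. For each $g$, let $\chi_g$ be the constant-size integer-coefficient polynomial, in $\ell_g$ and the two input wires of $g$ (each of which is an input variable $i_s$ or $j_s$ or another gate variable $\ell_{g'}$), that equals $1$ when $\ell_g$ matches the value the gate assigns to those inputs and $0$ otherwise, whenever all its arguments lie in $\{0,1\}$; for instance $\chi_g=1-(\ell_g-\ell_a\ell_b)^2$ for an AND gate, with $a-b$ realized as $a+(-1)\cdot b$, and analogously for the other gate types. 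Writing $g^\star$ for the output gate, set
\[
  f(\vct{i},\vct{j},\vct{l})\,:=\,\ell_{g^\star}\cdot\prod_{g}\chi_g.
\]
As a tree this is a product of $t+1$ constant-size subtrees, hence an arithmetic expression of size $O(t)=\poly(k,m)$ whose only constants are $0,1,-1$; the algorithm $A$ outputs $1^t$ together with an encoding of this tree, all in time $\poly(k,m)$.

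Finally I would check correctness. For fixed $\vct{i},\vct{j}$ there is exactly one assignment $\vct{l}^\star$ consistent with every gate constraint, namely the actual wire values of $D(\vct{i},\vct{j})$; for it $\prod_g\chi_g=1$ and $\ell_{g^\star}$ takes the value $C(2^k,2^m,i,j)$, while every other $\vct{l}$ makes some $\chi_g$ vanish, so $f=0$. Hence $f$ is $\{0,1\}$-valued on Boolean inputs and $\sum_{\vct{l}\in\{0,1\}^t}f(\vct{i},\vct{j},\vct{l})$, whether read over $\Z$ or over $\Ftwo$, equals $C(2^k,2^m,i,j)$. Therefore $\vct{\zeta}^{(\vct{j})}=\bigl(\sum_{\vct{l}}f(\vct{i},\vct{j},\vct{l})\bigr)_{\vct{i}\in\{0,1\}^k}$ is precisely $\vct{\zeta}^{(j)}$, so $\{\vct{\zeta}^{(\vct{j})}\colon\vct{j}\in\{0,1\}^m\}$ is a $2^{k-cm}$-bias probability space in $\Ftwo^{2^k}$. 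The one point that needs (routine) care is the arithmetization itself: making the $\chi_g$ honest $\{0,1\}$-valued indicators on Boolean inputs, and keeping the product form of $f$ small rather than unfolding the circuit --- which is exactly the content of~\cite[Prop.~4.2]{BabForLun91CC}.
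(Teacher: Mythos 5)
Your proof is correct and follows essentially the same route the paper intends: instantiate Theorem~\ref{theorem:small-bias} with $K=2^k$, $M=2^m$, and arithmetize the resulting Boolean circuit in the style of Proposition~4.2 of~\cite{BabForLun91CC}, introducing one summation variable per gate so the expression stays of size $\poly(k,m)$ despite fan-out. The paper delegates exactly these details to that reference, and your filled-in argument (including the key check that for each $(\vct{i},\vct{j})$ exactly one Boolean $\vct{l}$ contributes a $0/1$ term, so the sum is the same over $\Z$, $\Ftwo$, or any field of characteristic two) is the intended one.
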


\begin{proof}[Proof of Corollary~\ref{corollary:and-test}]
  The protocol works as follows.
  The verifier first computes~$m=\ceil{(k+2)/c}$,
  where~$c$ is the constant in Corollary~\ref{corollary:small-bias-arith}.
  He runs the algorithm of Corollary~\ref{corollary:small-bias-arith}
  with parameters~$k$ and~$m$
  to obtain~$t\in\N$ and an arithmetic expression~$f(\vct{i},\vct{j},\vct{l})$
  in~$k+m+t$ variables.
  Let~$d'$ be the maximum degree of~$f$ in single variables.
  He chooses~$\vct{j}\in\{0,1\}^m$ uniformly at random,
  and sends~$\vct{j}$ to the prover.
  Then he simulates the protocol in Lemma~\ref{lemma:summation-test}
  with explicit inputs~$k+t$ and~$d+d'$
  and implicit input~$h_{\vct{j}}(\vct{i},\vct{l}):=f(\vct{i},\vct{j},\vct{l})h(\vct{i})$.

  For~$\vct{i}\in \Fp^k$, $\vct{j}\in \Fp^m$, and~$\vct{l}\in \Fp^t$,
  let~$\zeta^{(\vct{j})}_{\vct{i}}=\sum_{\vct{l}}f(\vct{i},\vct{j},\vct{l})\in \Fp$
  and~$\vct{\zeta}^{(\vct{j})}=(\zeta^{(\vct{j})}_{\vct{i}})_{\vct{i}\in\{0,1\}^k}\in \Fp^{2^k}$.
  Because~$m\ge(k+2)/c$,
  Corollary~\ref{corollary:small-bias-arith} guarantees
  that~$\{\vct{\zeta}^{(\vct{j})}\colon\vct{j}\in\{0,1\}^m\}$
  is a~$1/4$-bias probability space.

  Let~$c_{\vct{i}}=h(\vct{i})$.
  Then for all~$\vct{j}\in\{0,1\}^m$, it holds that
  \begin{equation}
    \sum_{\vct{i}\in\{0,1\}^k,\vct{l}\in\{0,1\}^t}h_{\vct{j}}(\vct{i},\vct{l})
    =
    \sum_{\vct{i}\in\{0,1\}^k}\zeta^{(\vct{j})}_{\vct{i}}c_{\vct{i}}.
    \label{eq:summation-test-multiple-1}
  \end{equation}

  Completeness:
  Suppose that~$c_{\vct{i}}=0$ for all~$\vct{i}\in\{0,1\}^k$.
  Then, by Eq.~(\ref{eq:summation-test-multiple-1}), it holds that
  \[
    \sum_{\vct{i}\in\{0,1\}^k,\vct{l}\in\{0,1\}^t}h_{\vct{j}}(\vct{i},\vct{l})=0
  \]
  for all~$\vct{j}\in\{0,1\}^m$.
  Therefore, the completeness of the protocol in Lemma~\ref{lemma:summation-test}
  implies that the protocol constructed above also has perfect completeness.

  Soundness:
  Suppose that~$\vct{c}\ne0$.
  By Proposition~\ref{proposition:small-bias},
  it holds that
  \[
    \Pr_{\vct{j}\in\{0,1\}^m} \Big[ \sum_{\vct{i}\in\{0,1\}^k} \zeta^{(\vct{j})}_{\vct{i}} c_{\vct{i}} = 0 \Big]
    \le
    \frac{1+1/4}{2}=\frac58.
  \]
  Eq.~(\ref{eq:summation-test-multiple-1})
  and the soundness in Lemma~\ref{lemma:summation-test}
  imply that
  for any~$\vct{j}\in\{0,1\}^m$ such that~$\sum_{\vct{i}\in\{0,1\}^k} \zeta^{(\vct{j})}_{\vct{i}} c_{\vct{i}} \ne 0$,
  the acceptance probability conditioned on the choice of~$\vct{j}$
  is at most~$(d+d')(k+t)/\abs{\Fp}$.
  Therefore, the overall acceptance probability is at most~$5/8+(d+d')(k+t)/\abs{\Fp}$.
  The corollary follows
  because~$d'$ and~$t$ are polynomially bounded in~$k$.
\end{proof}

\bibliography{mipnexp}

\end{document}